\def\q{\quad}
\def\eps{\epsilon}%
\def\tensor{\,\raise2pt\hbox{${}_{\otimes}$}\,}
\def\ptl{\partial}
\def\cal#1{\mathcal{#1}}
\def\mbo#1{\boldsymbol{#1}}
\def\ip#1#2{\langle#1,#2\rangle}
\def\olin#1{\overline{#1}{}}
\def\grad{{\nabla}}
\newcommand{\leftexp}[2]{{\vphantom{#2}}^{#1}{#2}}
\def\halb{\frac{1}{2}}
\def \gm{\gamma}
\def \a{\alpha}
\def \b {\beta}
\newcommand{\ba}{\begin{array}}
\newcommand{\ea}{\end{array}}
\newcommand{\bea}{\begin{eqnarray}}
\newcommand{\eea}{\end{eqnarray}}
\newcommand{\bee}{\begin{eqnarray*}}
\newcommand{\eee}{\end{eqnarray*}}
\renewcommand{\gg}{{\bf g}}
\newcommand{\bgg}{{\bar{\bf g}}}
\newcommand{\tgg}{{\tilde{\bf g}}}
\renewcommand{\H}{\mathcal{H}}
\renewcommand{\a}{\alpha}
\renewcommand{\b}{\beta}
\renewcommand{\d}{\delta}
\renewcommand{\r}{\rho}
\newcommand{\la}{\lambda}
\numberwithin{equation}{section}
\newtheorem{Theorem}{Theorem}[section]
\newtheorem*{Theorem*}{Theorem}
\newtheorem{Corollary}[Theorem]{Corollary}
\newtheorem{Lemma}[Theorem]{Lemma}
\newtheorem{Proposition}[Theorem]{Proposition}
\newtheorem{claim}[Theorem]{Claim}
 { \theoremstyle{definition}
\newtheorem{Definition}[Theorem]{Definition}

 }
\begin{document}

\allowdisplaybreaks

\newcommand{\arXivNumber}{2507.??????}

\renewcommand{\PaperNumber}{054}

\FirstPageHeading

\ShortArticleName{On Axially Symmetric Perturbations of Kerr Black Hole Spacetimes}

\ArticleName{On Axially Symmetric Perturbations\\ of Kerr Black Hole Spacetimes}

\Author{Nishanth GUDAPATI}

\AuthorNameForHeading{N.~Gudapati}

\Address{Department of Mathematics, College of the Holy Cross, \\ 1 College Street, Worcester, MA-01610, USA}
\Email{\href{mailto:ngudapati@holycross.edu}{ngudapati@holycross.edu}}

\ArticleDates{Received February 13, 2024, in final form June 17, 2025; Published online July 11, 2025}

\Abstract{The lack of a positive-definite and conserved energy is a serious obstacle in the black hole stability problem. In this work, we will show that there exists a positive-definite and conserved Hamiltonian energy for axially symmetric linear perturbations of the exterior of Kerr black hole spacetimes. In the first part, based on the Hamiltonian dimensional reduction of $3+1$ axially symmetric, Ricci-flat Lorentzian spacetimes to a~${2+1}$ Einstein-wave map system with the negatively curved hyperbolic 2-plane target, we construct a positive-definite, spacetime gauge-invariant energy functional for linear axially symmetric perturbations in the exterior of Kerr black holes, in a manner that is also gauge-independent on the target manifold. In the construction of the positive-definite energy, various dynamical terms at the boundary of the orbit space occur critically. In the second part, after setting up the initial value problem in harmonic coordinates, we prove that the positive energy for the axially symmetric linear perturbative theory of Kerr black holes is strictly conserved in time, by establishing that all the boundary terms dynamically vanish for all times. This result implies a form of dynamical linear stability of the exterior of Kerr black hole spacetimes.}

\Keywords{Kerr black holes; black hole stability problem; ergo-region; Hamiltonian mechanics; wave maps; Poisson equation}

\Classification{83C57; 58E20; 58E30; 35C15; 35Q75}

{\small
\tableofcontents

}

\section{Geometric mass-energy and perturbations of black holes}

The stability of stationary solutions of a physical law serves as an impetus to the validity of the law. In the context of Einstein's equations for general relativity, an important stationary solution is the Kerr family of black holes which is also an asymptotically flat, axially symmetric family of solutions of the $(3+1)$-dimensional vacuum Einstein equations for general relativity
\begin{align}\label{EVE}
\bar{R}_{\mu \nu} =0 \qquad \text{for a Lorentzian manifold} \ \bigl(\bar{M}, \bar{g}\bigr).
\end{align}
In part due to the physical relevance and the mathematical beauty arising from its multiple miraculous properties (see, e.g., \cite{Chandrasekhar_83, Teukolsky_15}), the problem of stability of Kerr black hole spacetimes within the class of Einstein's equations \eqref{EVE} has been a subject of active research interest since their discovery by R.~Kerr in~1963. However, geometric properties of Kerr black hole spacetimes such as stationarity (as opposed to staticity), trapping of null geodesics and the general issue of gauge dependence of metric perturbations cause significant obstacles in the resolution of this `black hole stability' problem.
In this work, we focus on the serious issue caused by the stationarity of the Kerr metric:
\begin{enumerate}\setlength{\leftskip}{0.03cm}
\item [(P1)] The problem of the ergo-region, the lack of a positive-definite and conserved energy caused by the shift vector of the Kerr metric.
\end{enumerate}

It may be noted that an asymptotically flat spacelike Riemannian hypersurface $\bigl(\olin{\Sigma}, \bar{q}\bigr)$ such that \smash{$\bar{M} = \olin{\Sigma} \times \mathbb{R}$} satisfies the Einstein's equations for general relativity \eqref{EVE}, has a positive-definite total (ADM) mass $m_{\text{ADM}}$:
\begin{align*}
m_{\text{ADM}} := \lim_{r \to \infty} \int_{\mathbb{S}^2(r)} \sum^3_{i,j,k=1} (\ptl_k \bar{q}_{i \ell} - \ptl_i \bar{q}_{\ell k}) \frac{x^i}{r} \bar{\mu}_{\mathbb{S}^2}, \qquad \bar{q} \ \text{is asymptotically Euclidean},
\end{align*}
from the celebrated positive-mass theorems of Schoen--Yau and Witten \cite{schoen-yau-1,schoen-yau-2,witten-pmt}. However, it is not necessary that the positivity of energy carries forward to the perturbative theory of the Einstein equations. In a general asymptotically flat manifold, it is a priori not determinate whether the mass-energy at infinity increases or decreases for small perturbations.
 This outcome can be seen in the energy of a (linear) scalar wave equation propagating in the exterior of Kerr black holes -- an illustrative, albeit a special `test' case of the perturbations of the Kerr metric.

 However, as we already alluded to, the difficulty of constructing a positive-definite energy for the perturbative theory is not only due to the shift vector (or the ergo-region) of the Kerr metric. Even if one considers the Schwarzschild metric (the special case of vanishing angular momentum of Kerr),
 \begin{align}\label{sch}
 \bar{g} =- f {\rm d}t^2+ f^{-1} {\rm d}r^2 + r^2 {\rm d}\omega^2_{\mathbb{S}^2},
 \end{align}
 where $f := \bigl(1-2mr^{-1}\bigr)$, it is not immediate that there exists a positive-definite energy for the perturbative theory of \eqref{sch}. In 1974, Moncrief had devised a `Hamiltonian' for the perturbative theory of Schwarzschild based on the ADM formalism of Einstein's equations \cite{Moncrief_74}. Suppose the Lorentzian spacetime \smash{$\bigl(\bar{M}, \bar{g}\bigr)$} admits a $3+1$ ADM decomposition
\begin{align*}
\bar{g} = - N^2 {\rm d}t^2 + \bar{q}_{ij} \bigl({\rm d}x^i + N^i {\rm d}t\bigr) \otimes \bigl({\rm d}x^j + N^j {\rm d}t\bigr),
\end{align*}
then the ADM constraint and evolution equations are given by the variational principle for the phase space $X_{\text{ADM}} := \bigl\{ \bigl(\mbo{\bar{\pi}}^{ij}, \bar{q}_{ij}\bigr),\, i, j= 1, 2, 3 \bigr\}$:
\begin{subequations}
\begin{align}
&I_{\text{ADM}} := \int \bigl(\bar{\mbo{\pi}}^{ij} \ptl_t \bar{q}_{ij} - N H -N^i H_i \bigr){\rm d}^4x, \label{ADM-var-first}
\intertext{where $\mbo{\bar{\pi}}^{ij}$ is the momentum conjugate to $\bar{q}_{ij}$,}
&H := \bar{\mu}^{-1}_{\bar{q}} \biggl(\Vert \mbo{\bar{\pi}} \Vert^2_{\bar{q}} - \halb {\rm Tr}_{\bar{q}} (\mbo{\bar{\pi}})^2 \biggr) - \bar{\mu}_{\bar{q}} R_{\bar{q}}, \\
&H_i := -2 \leftexp{(\bar{q})}{\grad}_j \mbo{\bar{\pi}}^j_i,
\end{align}
\end{subequations}
$R_{\bar{q}}$ is the scalar curvature of $\bigl(\olin{\Sigma}, \bar{q}\bigr)$,
and $\{ N, N^i \}$ are the Lagrange multipliers. We say that $\bigl(\olin{\Sigma}, \bar{q}, \bar{\mbo{\pi}}\bigr)$ are asymptotically flat if diffeomorphic to $\mathbb{R}^3 \setminus \olin{B}_R(0)$, where $\olin{B}_R (0)$ is a closed ball of radius $R$ centered at the origin and for $\a >0$
\begin{align*}
&\bar{q}_{ij} = \biggl(1+ \frac{M}{r} \biggr) \bar{\delta}_{ij} + \mathcal{O}\bigl(r^{-1-\a}\bigr), \\
&\bar{\mbo{\pi}}^{ij} = \mathcal{O}\bigl(r^{-2-\a}\bigr),
\end{align*}
as $ r= \vert x \vert \to 0$. As a consequence, we have $M =m_{\text{ADM}}$.
It follows from the variational principle~\eqref{ADM-var-first} that the evolution equations are given by
\begin{gather*}
 \ptl_t \bar{q}_{ij} = 2 \bar{N} \bar{\mu}_{\bar{q}} \biggl(\bar{\mbo{\pi}}_{ij} - \halb q_{ij} {\rm Tr}_{\bar{q}}(\bar{\mbo{\pi}}) \biggr) + \leftexp{(\bar{q})}{\grad}_j \bar{N}_i+ \leftexp{(\bar{q})}{\grad}_i \bar{N}_j, \\
\ptl_t \bar{\mbo{\pi}}^{ij} = - \bar{N} \bar{\mu}_{\bar{q}} \biggl(\leftexp{(\bar{q})}{R}^{ij} - \halb \bar{q}^{ij} R_{\bar{q}} \biggr) + \halb \bar{N} \bar{\mu}^{-1}_{\bar{q}} q^{ij} \biggl({\rm Tr}_{\bar{q}} \bigl(\bar{\mbo{\pi}}^2\bigr) - \halb {\rm Tr}_{\bar{q}}(\bar{\mbo{\pi}})^2 \biggr)
\\ \hphantom{\ptl_t \bar{\mbo{\pi}}^{ij} =}{}
 -2 \bar{N} \bar{\mu}^{-1}_{\bar{q}} \biggl(\bar{\pi}^{im} \bar{\mbo{\pi}}^j_m - \halb \bar{\mbo{\pi}}^{ij} {\rm Tr}_{\bar{q}}(\bar{\mbo{\pi}})\biggr) + \bar{\mu}_q \bigl( \leftexp{(\bar{q})}{\grad}^
i \leftexp{(\bar{q})}{\grad}^
j \bar{N} - \bar{q}^{ij} \leftexp{(\bar{q})}{\grad}
^m \leftexp{(\bar{q})}{\grad}_m \bar{N}\bigr)
\\ \hphantom{\ptl_t \bar{\mbo{\pi}}^{ij} =}{}
+ \leftexp{(\bar{q})}{\grad}_m \bigl(\bar{\mbo{\pi}}^{ij} \bar{N} ^m\bigr) - \leftexp{(\bar{q})}{\grad}_m \bar{N}^i \bar{\mbo{\pi}}^{mj}
- \leftexp{(\bar{q})}{\grad}_m \bar{N}^j \bar{\mbo{\pi}}^{mi}.
\end{gather*}
Suppose we consider the small perturbations of the initial data of Schwarzschild black hole spacetimes: $\bar{q}= \bar{q}_{s} + \eps \bar{q}'$ and $\bar{\mbo{\pi}}= \bar{\mbo{\pi}}_s + \eps \bar{\mbo{\pi}}'$, Moncrief's Hamiltonian energy formula is
\begin{align}\label{vm-74}
H_{\text{pert}} := \int_{\olin{\Sigma}} \biggl\{& \bar{N} \bar{\mu}^{-1}_{\bar{q}} \biggl(\Vert \bar{\mbo{\pi}}' \Vert_{\bar{q}}^2 - \halb {\rm Tr}_{\bar{q}}\bigl(\bar{\mbo{\pi}}'\bigr)^2 \biggr)\\
& {} + \halb \bar{N} \bar{\mu}_{\bar{q}} \biggl(\halb \leftexp{(\bar{q})}{\grad}_k \bar{q}'_{ij} \leftexp{(\bar{q})}{\grad}^k \bar{q}'^{ij}- \leftexp{(\bar{q})}{\grad}_k \bar{q}'_{ij} \leftexp{(\bar{q})}{\grad}^j \bar{q}'^{ik}
- \halb \leftexp{(\bar{q})}{\grad}_i {\rm Tr} \bigl(\bar{q}'\bigr) \leftexp{(\bar{q})}{\grad}^i {\rm Tr} \bigl(\bar{q}'\bigr)
\notag\\
&\hphantom{+ \halb \bar{N} \bar{\mu}_{\bar{q}} \biggl(}{}\ + 2\leftexp{(\bar{q})}{\grad}_i {\rm Tr}\bigl(\bar{q}'\bigr) \leftexp{(\bar{q})}{\grad}_j \bar{q}'^{ij} + {\rm Tr} \bigl(\bar{q}'\bigr) \leftexp{(\bar{q})}{\grad}^2_{ij} \bar{q}'^{ij} - {\rm Tr} \bigl(\bar{q}'\bigr) \bar{q}'_{ij} \bar{R}^{ij}_{\bar{q}} \biggr) \biggr\} {\rm d}^3x,\notag
\end{align}
which is a volume integral on the hypersurface $\olin{\Sigma}$.
Moncrief used the Hamiltonian formulation to decompose the metric perturbations into gauge-dependent, gauge-independent and constraints; and ultimately reconciled with the Regge--Wheeler--Zerilli results \cite{Regge-Wheeler_57,Zerilli_70}. An important feature of these results is that the energy functional \eqref{vm-74} can be realized to be positive-definite for both odd and even parity perturbations. Using tensor harmonics, positive-definite energy functionals for both odd and even parity perturbations of Schwarzschild black holes were constructed in~\cite{Moncrief_74}. In this spirit, a number of pioneering articles on the perturbations of static black holes were written by Moncrief \cite{Moncrief_74_1,Moncrief_74_2,Moncrief_74_3}.

The subject of this article is to focus on axially symmetric perturbations of the Kerr metric. In precise terms, the Kerr metric $\bigl(\bar{M}, \bar{g}\bigr)$ can be represented in Boyer--Lindquist coordinates $(t, r, \theta, \phi)$ as
\begin{align}
\bar{g} &{}= - \biggl(\frac{\Delta - a^2 \sin^2 \theta}{\Sigma} \biggr) {\rm d}t^2 - \frac{2a \sin^2 \theta \bigl(r^2 + a^2 -\Delta\bigr)}{\Sigma} {\rm d}t {\rm d}\phi \notag\\
&\quad{}+ \biggl(\frac{\bigl(r^2 +a^2\bigr)^2 -\Delta a^2 \sin^2 \theta}{\Sigma}\biggr) \sin^2 \theta {\rm d}\phi^2 + \frac{\Sigma}{\Delta} {\rm d}r^2 + \Sigma {\rm d}\theta^2, \label{BL-Kerr}
\end{align}
where
\begin{align*}
&\Sigma := r^2 + a^2 \cos^2 \theta, \\
&\Delta := r^2 -2Mr + a^2 \quad \textnormal{with the real roots} \ \{r_-,r_+\}, \qquad
 r_{+} := M + \sqrt{M^2- a^2} > r_{-}
\end{align*}
and
\[
\theta \in [0, \pi],\qquad r \in (r_+, \infty),\qquad \phi \in [0, 2\pi).
\]
It is well known (cf.\ \cite{YCB-VM96, Ycb-Mon_01, diss_13}) that the Einstein equations \eqref{EVE} on spacetimes $\bigl(\bar{M}, \bar{g}\bigr)$ with one isometry $\bigl(\frac{\ptl}{\ptl \phi}\bigr)$, represented in Weyl--Papapetrou coordinates,
\begin{align*}
\bar{g} = {\rm e}^{-2\gamma} g + {\rm e}^{2\gamma} ({\rm d}\phi + \mathcal{A}_\nu {\rm d}x^\nu)^2, \qquad \nu= 0, 1, 2,
\end{align*}
where ${\rm e}^{2 \gamma}$ is the square of the norm of the rotational Killing vector $\ptl_\phi$, $\mathcal{A}$ is a $1$-form and $g$ is the induced metric on orbit space $M := \bar{M}/{\rm SO}(2)$, admit a dimensional reduction to a $(2 + 1)$-dimensional Einstein wave map system
\begin{subequations} \label{ewm-system}
\begin{align}
&E_{\mu \nu} = T_{\mu \nu}, \\
&\square_g U^A + \leftexp{(h)}{\Gamma}^A_{BC} g^{\mu \nu} \ptl_\mu U^B \ptl_\nu U^C= 0 \qquad \text{on} \ (M, g),
\end{align}
\end{subequations}
where $\square_g$ is the covariant wave operator, $E_{\mu \nu}$ is the Einstein tensor in the interior of the quotient $(M, g) := \bigl(\bar{M}, \bar{g}\bigr)/ {\rm SO}(2)$ and $T$ is the stress energy tensor of the wave map $U \colon (M, g) \to (\mathbb{N}, h)$, $\mathbb{N}$ is the negatively curved hyperbolic $2$-plane,
\begin{align*}
T_{\mu \nu} = \ip{\ptl_\mu U} {\ptl_\nu U }_{h(U)} - \halb g_{\mu \nu} \ip{\ptl_\sigma U}{ \ptl^\sigma U}_{h(U)}.
\end{align*}
Introducing the coordinates $(\r, z)$ such that $\r = R \sin \theta$, $z= R \cos
\theta$, where $R := \halb \bigl(r-M+ \sqrt{\Delta}\bigr)$, the Kerr metric \eqref{BL-Kerr} can be represented in the Weyl--Papapetrou form as
\begin{align*}
\bar{g} = \Sigma \zeta^{-1} \bigl(-\Delta {\rm d}t^2 + \zeta R^{-2} \bigl({\rm d}\r^2 + {\rm d}z^2\bigr)\bigr) + \sin^2\theta \Sigma^{-1} \zeta
\bigl({\rm d}\phi - 2aMr \zeta^{-1} {\rm d}t\bigr)\bigr)^2,
\end{align*}
where \smash{$\zeta = \bigl(r^2 +a^2\bigr)^2 - a^2 \Delta \sin^2 \theta$}. Furthermore, the Kerr metric can also be represented in the Weyl--Papapetrou form using functions $(\bar{\r}, \bar{z})$
such that
\begin{align*}
\bar{\r} = \r - \frac{\bigl(M^2 -a^2\bigr)}{4 R^2}\r, \qquad \text{and} \qquad \bar{z} = z + \frac{\bigl(M^2 -a^2\bigr)}{4 R^2}z
\end{align*}
(cf.\ \cite[Appendix A]{GM17_gentitle} for details). Now we shall turn to the axially symmetric perturbation theory of the Kerr metric.

In view of the peculiar behaviour of the $2 + 1$ Einstein-wave map system, a detailed discussion of our methods is relevant for our article and perhaps also interesting to the reader. Consider the Hamiltonian energy of an axially symmetric linear wave equation propagating on the Kerr metric ($\square_g u =0$),
\begin{align*}
H^{\text{LW}} := \int_{\olin{\Sigma}} \biggl(\halb \bar{N}\bar{\mu}^{-1}_q v^2 + vN^i \ptl_i u+ \halb N \bar{\mu}_q \bar{q}^{ij} \ptl_i u \ptl_j u \biggr) {\rm d}^3 x,
\end{align*}
where $v$ is the conjugate momentum of $u$, the energy is directly positive-definite.
However, this simplification does not carry forward to the Maxwell equations on the Kerr metric
\[
H^{{\rm Max}} := \int \biggl(\halb N \bar{q}_{ij} \bar{\mu}_q (\mathfrak{E}^i \mathfrak{E}^j + \mathfrak{B}^i \mathfrak{B}^j) + N^i \eps_{ijk}\mathfrak{E}^j \mathfrak{B}^k \biggr) {\rm d}^3 x,
\]
where
\[
\mathfrak{E}^i := \halb \eps^{ijk} \leftexp{*}F_{jk}, \qquad
\mathfrak{B}^i := \halb \eps^{ijk} F_{jk}
\]
are the electric and magnetic field densities respectively and $F$ is the Faraday tensor which satisfies the Maxwell equations
\begin{align*}
 {\rm d} \star F =0, \qquad {\rm d} F =0.
\end{align*}

Actually, one can construct counter examples of positivity of energy density, for instance, using time-symmetric Maxwell fields (cf.\ the discussion in \cite[Section 2]{GM17_gentitle}). In a crucial work, Dain--de Austria \cite{DA_14} had arrived at a positive-definite energy for the gravitational perturbations of extremal Kerr black holes using the Brill mass formula \cite{D09} and subsequent use of Carter's identity \cite{Car_71}, originally developed for black hole uniqueness theorems. In a Weyl coordinate system for the spacelike hypersurface $\bigl(\olin{\Sigma}, q\bigr)$ in extremal Kerr spacetime, their positive-definite energy for axially symmetric perturbations is obtained
from perturbations of the Brill mass formula, which in turn is obtained from multiplying a factor with the Hamiltonian constraint that conveniently results in a volume form (in the chosen Weyl coordinate system) that is useful in its representation.

In order to construct a positive-definite energy for the perturbations of Kerr--Newman metric for the full-subextremal range, we delve into the variational structure of the relevant field equations. The beautiful linearization stability framework, developed by V.~Moncrief, J.~Marsden and A.~Fischer~\cite{FM_75,FMM_80,Mon_75}, provides a
natural mechanism to construct an energy-functional based on the kernel of the adjoint of the deformations around the Kerr metric of the dimensionally reduced constraint map. This recognition allows us to extend results to the full sub-extremal range $(\vert a \vert, \vert Q \vert <M)$ of the perturbations of the Kerr--Newman metric \cite{GM17_gentitle}, which is a solution of Einstein--Maxwell equations of general relativity.

Let us briefly outline our construction of a positive-definite energy functional for axially symmetric perturbations of Kerr black hole spacetimes. Consider the ADM decomposition of $\bar{M}=\olin{\Sigma} \times \mathbb{R}$. Suppose the group ${\rm SO}(2)$ acts on $\olin{\Sigma}$ through isometries. Let $\Gamma$ be the non-empty fixed-point set. Suppose the norm squared of the Killing vector generating the rotational isometry is denoted by ${\rm e}^{2\gamma}$. In the dimensional reduction ansatz, the metric $\bar{g}$ is
\begin{align}\label{KK-ADM}
\bar{g} = {\rm e}^{-2\gamma} \bigl(-N^2 {\rm d}t^2 + q_{ab} ({\rm d}x^a + N^a {\rm d}t) \otimes \bigl({\rm d}x^b + N^b {\rm d}t\bigr)\bigr) + {\rm e}^{2\gamma} ({\rm d}\phi + \mathcal{A}_0 {\rm d}t + \mathcal{A}_a {\rm d}x^a)^2.
\end{align}
It may be noted that this metric form is combination of ADM formalism and Weyl--Papapetrou coordinates \cite{Moncrief_74}.
In the dimensional reduction framework, identifying the reduced conjugate momenta, which form the reduced phase space in $(M, g);$ and the corresponding reduced Hamiltonian formalism is nontrivial. This construction was done in \cite{kaluz1}. Define the conjugate momentum corresponding to the metric $q_{ab}$ as follows:
\begin{align*}
\mbo{\pi}^{ab} = {\rm e}^{-2\gamma} \bar{\mbo{\pi}}^{ab}, \qquad \bar{q}_{ab} = {\rm e}^{-2\gamma} q_{ab} + {\rm e}^{2\gamma} \mathcal{A}_a \mathcal{A}_b.
\end{align*}
As a consequence, the ADM action principle transforms to
\begin{align*}
J = \int^{t_2}_{t_1}\int_{\Sigma} \bigl(\mbo{\pi}^{ab} \ptl_t q_{ab} + \mathcal{E}^a \ptl_t \mathcal{A}_a + p \ptl_t \gamma - N H - N^aH_a+ \mathcal{A}_0 \ptl_a \mathcal{E}^a \bigr) {\rm d}^2x{\rm d}t,
\end{align*}
where the phase-space is now
\[
 \{ (q, \mbo{\pi}), (\mathcal{A}_a, \mathcal{E}^a), (\gamma, p) \}
\]
with the Lagrange multipliers
\[
 \{ N, N^a, \mathcal{A}_0 \}
\]
and the constraints
\begin{subequations} \label{2+1constraints}
\begin{align}
&H= \bar{\mu}^{-1}_q \bigl(\Vert \mbo{\pi} \Vert^2_q - {\rm Tr}_q(\mbo{\pi})^2\bigr) + \frac{1}{8} p^2 + \halb {\rm e}^{-4\gamma}
q_{ab} \mathcal{E}^a \mathcal{E}^b + \bar{\mu}_q \bigl(-R_q + 2 q^{ab} \ptl_a \gamma \ptl_b \gamma\bigr) \notag\\
&\hphantom{H=}{}+ \frac{1}{4} {\rm e}^{4\gamma} q^{ab}q^{bd} \ptl_{[b} \mathcal{A}_{a]} \ptl_{[d} \mathcal{A}_{c]}, \\
&H_a= -2 \leftexp{(q)}{ \grad}_b \mbo{\pi}^b_a + p \ptl_a \gamma + \mathcal{E}^b \bigl(\ptl_{[a}\mathcal{A}_{b]} \bigr),\\
&\ptl_a \mathcal{E}^a =0.
\end{align}
\end{subequations}
It may be noted that $\mathcal{E}$ and $p$ are the momenta conjugate to $\mathcal{A} $ and $\gamma$ respectively.
After applying the Poincar\'e lemma on $\mathcal{E}$ and introducing the twist potential such that $\mathcal{E}^a =: \eps^{ab} \ptl_b \omega$,
we transform into the phase space
\[ X_{{\rm EWM}} = \bigl\{ (\gamma, p), (\omega, \mbo{r}), \bigl(q_{ab}, \mbo{\pi}^{ab}\bigr) \bigr\},
\]
where $\mbo{r}$ is the momentum conjugate to $\omega$. The variational principle reduces to
\begin{subequations} \label{2+1-no-constraints}
\begin{align}
&\tilde{J} := \int^{t_2}_{t_1}\int_{\Sigma} \bigl( \mbo{\pi}^{ab} \ptl_t q_{ab} + p \ptl_t \gamma + r \ptl_t \omega - N H - N^a H_a \bigr) {\rm d}^2x {\rm d}t,
\intertext{where $H$ and $H_a$ are now}
&H= \bar{\mu}^{-1}_q \biggl( \Vert \mbo{\pi} \Vert^2_q - {\rm Tr}_q (\mbo{\pi})^2 + \frac{1}{8} p^2 + \frac{1}{2} {\rm e}^{4\gamma} \mbo{r}^2 \biggr) \notag\\
&\hphantom{H=}{} + \bar{\mu}^{-1}_q \biggl(-R_q + 2 q^{ab} \ptl_a \gamma \ptl_b \gamma + \halb {\rm e}^{-4\gamma} q^{ab} \ptl_a \omega \ptl_b \omega \biggr), \\
&H_a= -2 \leftexp{(q)}{\grad}_b \mbo{\pi}^b_a + p \ptl_a \gamma + \mbo{r} \ptl_a \omega.
\end{align}
\end{subequations}
with the Lagrange multipliers $N$, $N_a$. After computing the linearized field equations involving the perturbed phase space
\[
X' := \bigl\{(\gamma', p'), (\omega', \mbo{r}'),\bigl(q'_{ab}, \mbo{\pi}'^{ ab} \bigr) \bigr\},
\]
it can be noted that $ (N, 0)^{\mathsf{T}}$ is the element of the kernel of the adjoint of the perturbed constraint map. This in turn provides a candidate for the energy, analogous to \eqref{vm-74}. The resulting expression has the potential energy
\begin{align*}
D^2 \cdot \mathcal{V} = \bar{\mu}_q q^{ab} \bigl(4 \ptl_a \gamma'
\ptl_b \gamma' + {\rm e}^{-4\gamma} \ptl_a \omega' \ptl_b \omega' + 8{\rm e}^{-4\gamma} \gamma'^2 \ptl_a \omega \ptl_b \omega-8{\rm e}^{-4\gamma} \gamma' \ptl_a \omega \ptl_b \omega'\bigr),
\end{align*}
where $D^2 \cdot$ is the second variational derivative.
This expression is then transformed to a positive-definite form using the Carter--Robinson identities. Firstly, it may be noted that the original Carter--Robinson identities are not restrictive to the choice of the function `$\r$' (in \cite{Car_71} and in~\mbox{\cite[equation~(5)]{Rob_74}}) and can thus be generalized as follows:
\begin{gather*}
\bar{\mu}_q q^{ab} \bigl(4 \ptl_a \gamma'
\ptl_b \gamma' + {\rm e}^{-4\gamma} \ptl_a \omega' \ptl_b \omega' + 8{\rm e}^{-4\gamma} \gamma'^2 \ptl_a \omega \ptl_b \omega-8{\rm e}^{-4\gamma} \gamma' \ptl_a \omega \ptl_b \omega'\bigr) \\
\qquad\quad{}+ \ptl_b \bigl(N \bar{\mu}_q q^{ab} \bigl(-2 {\rm e}^{-4\gamma} \ptl_a \gamma \omega' + {\rm e}^{-4\gamma} \omega' + 4 {\rm e}^{-4\gamma} \gamma' \ptl_a \omega\bigr)\bigr) \\
\qquad\quad{}+ \halb \bar{\mu}_q {\rm e}^{-4\gamma} L_1 \bigl({\rm e}^{-2\gamma} \omega'\bigr) + \bar{\mu}_q L_2 \bigl(-4\gamma' \omega'\bigr) \\
\qquad{}= N \bar{\mu}_q q^{ab} \bigl(\leftexp{(1)}{V}_a \leftexp{(1)}{V}_b + \leftexp{(2)}{V}_a \leftexp{(2)}{V}_b + \leftexp{(3)}{V}_a \leftexp{(3)}{V}_b\bigr),
\end{gather*}
where
\begin{align*}
&\leftexp{(1)}{V}_a = 2 \ptl_a \gamma' + {\rm e}^{-4\gamma} \omega' \ptl_a \omega, \\
&\leftexp{(2)}{V}_a= -\ptl_a \bigl({\rm e}^{-2\gamma}\omega'\bigr) + 2 {\rm e}^{-2\gamma}\gamma' \ptl_a \omega, \\
&\leftexp{(3)}{V}_a = 2 \ptl_a \gamma \omega' -2\gamma' \ptl_a \omega,
\end{align*}
and
\begin{align*}
&L_1 := {\rm e}^{-2 \gamma} \bigl(\ptl_b \bigl(N \bar{\mu}_q q^{ab} \ptl_a \gamma\bigr) + N {\rm e}^{-4\gamma} \bar{\mu}_q q^{ab} \ptl_a \omega \ptl_b \omega\bigr), \\
&L_2 := - \ptl_b\bigl(N \bar{\mu}_q q^{ab} {\rm e}^{-4\gamma} \ptl_a \omega\bigr).
\end{align*}
Upon substitution into the potential energy, this results in a positive-definite energy of the form
\begin{align*}
H^{{\rm Reg}} = \int_{\Sigma} \biggl\{& N \bar{\mu}^{-1}_q \biggl(
\varrho'^{b}_a \varrho'^{a}_b + \frac{1}{8} p'^2+ \halb
{\rm e}^{4\gamma} \mbo{r}'^2 \biggr) - \halb \bar{\mu}_q \tau'^2 \notag\\
&{}+ N \bar{\mu}_q q^{ab} \biggl(2 \biggl(\ptl_a \gamma' + \halb {\rm e}^{-4\gamma} \omega' \ptl_a \omega\biggr) \biggl(\ptl_b \gamma' + \halb {\rm e}^{-4\gamma} \omega' \ptl_b \omega\biggr) \notag\\
&\hphantom{+ N \bar{\mu}_q q^{ab} \biggl(\ }{}+ 2\bigl(\gamma' {\rm e}^{-2\gamma}\ptl_a \omega - \ptl_a \bigl({\rm e}^{-2\gamma} \omega'\bigr)\bigr)\bigl(\gamma' {\rm e}^{-2\gamma}\ptl_b \omega - \ptl_b \bigl({\rm e}^{-2\gamma} \omega'\bigr)\bigr) \notag\\
 &\hphantom{+ N \bar{\mu}_q q^{ab} \biggl(\ }{}+ 2{\rm e}^{-4\gamma} \bigl(\ptl_a \gamma \omega' - \gamma' \ptl_a \omega\bigr)\bigl(\ptl_b \gamma \omega' - \gamma' \ptl_b \omega\bigr) \biggr)
\biggr\}{\rm d}^2 x
\end{align*}
modulo a time-coordinate gauge condition $\tau'=0$.
It is then shown that this energy functional is a Hamiltonian for the dynamics of the reduced Einstein equations in the perturbative phase-space and a spacetime divergence-free vector field density is constructed,
\[
J^{{\rm Reg}} = \bigl(J^{{\rm Reg}}\bigr)^t \ptl_t + \bigl(J^{{\rm Reg}}\bigr)^a \ptl_a,
\]
where $\bigl(J^{{\rm Reg}}\bigr)^t = \mathbf{e}^{{\rm Reg}}$, $\mathbf{e}^{{\rm Reg}}$ is the integrand of $H^{{\rm Reg}}$, and
\begin{align*}
\bigl(J^{{\rm Reg}}\bigr)^a &{}= N^2 \bar{\mu}^{-1}_q \bigl(\bigl(p' \bar{\mu}_q q^{ab} \ptl_b \gamma'\bigr) + {\rm e}^{4\gamma} \mbo{r}' \bigl({\rm e}^{-4\gamma} \bar{\mu}_q q^{ab} \ptl_b \omega'\bigr) \bigr) + \gamma' \mathcal{L}_{N'} \bigl(4N \bar{\mu}_q q^{ab} \ptl_b \gamma\bigr) \notag\\
&\quad{}+ \omega' \mathcal{L}_{N'} \bigl(N {\rm e}^{-4\gamma} \bar{\mu}_q q^{ab} \ptl_b \omega\bigr) + 2 \mathcal{L}_{N'} N \bigl(\bar{\mu}_q q^{ab} \ptl_b \mbo{\nu}'\bigr) + 2 \mathcal{L}_{X'} \mbo{\nu}' \bar{\mu}_q q^{ab} \ptl_b N \notag\\
&\quad{}- 2 X^a \bigl(\bar{\mu}_q q^{bc} \ptl_b \mbo{\nu}' \ptl_c N\bigr) + 2 N q_0^{ac} \varrho'^{b}_c {\rm e}^{-2\nu} \ptl_b N' + \bigl(N \ptl_b N' - N' \ptl_ b N\bigr) \tau' \bar{\mu}_q q^{ab} \notag\\
&\quad{} -2N' q_0^{ac} \varrho'^{b}_c {\rm e}^{-2\gamma} \ptl_b N
\end{align*}
after the imposition of the linearly perturbed constraints.
 Following the use of the Carter--Robinson identities for the sub-extremal case, there was still the lingering question of why do these transformations magically solve the issues of the ergo-region and the positivity and conservation of energy in the stability theory of Kerr and Kerr--Newman black holes, even in the axially symmetric case, which, as we just discussed, is nontrivial because the energy density can in principle be locally negative. Our results demonstrate that the reason for these transformations holding in such a way that one can obtain positivity and yet preserve the `symplectic structure', is not `by fluke', but that there are well-defined geometric and variational underlying structures, namely, the covariant (in target) nature of the dimensionally reduced system, the negative curvature of the target and the linearization stability methods. In the context of the black hole uniqueness theorems, the associated generalizations of Carter--Robinson identities were constructed by Bunting and Mazur.\footnote{It is indeed remarkable that the rather ingenious identity of Carter (Robinson for the Einstein--Maxwell case) that was seemingly constructed from `trial and error', would later have a natural geometric interpretation. These results also hold for higher-dimensional black holes and have been used for black hole uniqueness theorems in higher dimensions (see \cite{Hollands-Ishibashi_12}).} We adapted these results for our present problem of (dynamical) black hole stability.

It may be noted that the analogous transformations also resolve the positivity problem for the energy of axially symmetric Maxwell's equations if one does the dimensional reduction to introduce the twist potentials $\lambda$, $\eta$ corresponding to the $\mathfrak{E}$ and $\mathfrak{B}$ fields (cf.\ \cite{GM17_gentitle}).

In the axially symmetric case, even though the original Maxwell equations are linear, a~nonlinear transformation
is used to reduce the $3 + 1$ Einstein--Maxwell equations to an Einstein-wave map system \cite{kaluz2}, which introduces nonlinear coupling within the Maxwell `twist' fields. However, if we turn off the background $\mathfrak{E}$ and $\mathfrak{B}$ fields (e.g., restrict attention to the Kerr metric), the Maxwell equations in twist potential variables reduce to \emph{linear} hyperbolic PDE.

Somewhat interestingly, it appears that the construction of a positive-definite energy for the axially symmetric Maxwell equations on Kerr black hole spacetimes does not easily follow from the Carter's identity, but can be realized a special case of the full Robinson's identity.

In \cite{NG_17_2}, a positive-definite Hamiltonian energy functional for axially symmetric Maxwell equations propagating on Kerr--de Sitter black hole spacetimes was constructed, using modified Einstein-wave maps for the Lorentzian Einstein manifolds with one rotational isometry.

In this work, we shall extend this result and construct a positive-definite energy in a way that is gauge-invariant on the target manifold $(\mathbb{N}, h)$. As we shall see, this is based on negative curvature of the target manifold $(\mathbb{H}^2, h)$ and the convexity of $2+1$ wave maps. The construction of an energy-functional based on the convexity of wave maps, together with our application of the linearization stability methods, suggests why the positivity of our (global) energy for the perturbative theory is to be expected in general, not relying on the insightful and elaborate identity of Carter, which relies on a specific gauge on the target. In such a formulation, the intrinsic geometry within the $2 + 1$ Einstein wave map system becomes more transparent.

 In the context of black hole uniqueness theorems, extensions along these lines, from the initial Carter--Robinson results, were done by Bunting \cite{Bunt_83} and Mazur (\cite{Maz_00} and references therein). In~the~mathematics literature, convexity of harmonic maps for axially symmetric (Brill) initial data was established by Schoen--Zhou \cite{schoen-zhou_13}, which is often used in geometric inequalities between the area of the horizon, angular-momentum and the mass.

 In general, due to the geometric nature of the construction, the linearization stability machinery provides a robust mechanism to deal with the stability problems of black holes within a~symmetry class, including the initial value problem on hypersurfaces that intersect null infinity. The linearization stability machinery is also equipped for dealing with projections from higher-dimensional $(n+1$, $n>3)$ black holes with suitable symmetries \big(toroidal $\mathbb{T}^{n-2}$ spacelike symmetries\big), including 5D Myers--Perry black holes, the stability of which is the main open problem in the stability of higher-dimensional black holes (see, e.g., \cite{EmRe_08}). Indeed, most of our current work, especially the local aspects, readily extend to perturbations within the aforementioned symmetry class of higher-dimensional black holes (see below). However, we propose to carefully address the global aspects of this problem, based on the methods developed in the current work and \cite{GM17_gentitle}, in a future work.

We would like to point out there are related and independent works, based on the `canonical energy' of Hollands--Wald \cite{WH_13}. In \cite{WP_13}, Prabhu--Wald have extended \cite{WH_13} by associating the axisymmetric stability to the existence of a positive-definite `canonical energy'. A positive-definite energy functional was constructed by Dafermos--Holzegel--Rodnianski \cite{HDR_16} in the context of their proof of linear stability of Schwarzschild black holes (see also \cite{GH_16}). Subsequently, a~positive-definite energy was constructed by Prabhu--Wald using the canonical energy methods, that is consistent with both \cite{HDR_16} and \cite{Moncrief_74}. Their approach is based on the construction of metric perturbations using the Teukolsky variable as the Hertz potential.

From a PDE perspective, a suitable notion of (positive-definite) energy is crucial to control the dynamics of a given system of PDEs. In case the scaling symmetries of a nonlinear hyperbolic PDE and its corresponding energy match, powerful techniques come into play that characterize blow up (concentration) and scattering categorically. This problem is referred to as `energy critical'. In the context of $2 + 1$ critical flat-space wave maps
\[
U \colon\ \mathbb{R}^{2+1} \to (\mathbb{N}, h),
\]
the fact that this characterization can be made was demonstrated in the Landmark works \mbox{\cite{chris_tah1,krieg_schlag_ccwm,jal_tah1, jal_tah,sterb_tata_long,sterb_tata_main, struwe_sswm,struwe_equi,tao_all-I,tao_all-II,tao_all-III,tao_all-IV,tao_all-V,tao_all-VI,tao_all-VII}} in the analysis of geometric wave equations. It may be noted that $3 + 1$ Einstein’s equations with one translational isometry can be reduced to the $2 + 1$ Einstein-wave map system \eqref{ewm-system}. In this case, the notion of a positive-definite, gauge-invariant Hamiltonian mass-energy is provided by Ashtekar--Varadarajan \cite{ash_var} (see also Thorne's C-energy \cite{thorne_cenergy})
\begin{align*}
q_{ab} = r^{-m_{\rm AV}} \bigl(\delta_{ab} + \mathcal{O}\bigl(r^{-1}\bigr)\bigr)
\end{align*}
in the asymptotic region of asymptotically flat $(\Sigma, q)$. In a previous work \cite{diss_13}, it was noted that the aforementioned fundamental results on flat space wave maps can be extended to the~${2 + 1}$~Einstein-wave map system resulting from $3 + 1$ Einstein’s equations with translational symmetry using the AV-mass, which in turn is related to the energy of $2 + 1$ wave maps arising from the energy-momentum tensor and a local conservation law in the equivariant case. We would like to point out that, even though the dimensional reduction of $3 + 1$ dimensional axially symmetric, asymptotically flat spacetimes
results in the same $2 + 1$ Einstein-wave map system locally, the axisymmetric problem is not a (geometric) mass-energy-critical problem \cite{NG17}. There is yet another dimensional reduction, based on the ‘time-translational’ Killing vector of stationary class of spacetimes, to which the Kerr metric also belongs, that results in harmonic maps. This distinction between each of the three cases, which is relevant for the applicable methods therein, is explained in~\cite{NG17} for the interested reader.

Without the energy-criticality of the $2 + 1$ Einstein-wave map system, a direct consideration of the nonlinear problem, analogous to \cite{AGS_15, diss_13}, is infeasible. A long-standing approach that is commonly used in the stability problems of Einstein's equations, is to first consider the linear perturbations and hope to control the nonlinear (higher-order perturbations) using the linear perturbation theory.
However, the problem of what is the natural notion of energy for the linear perturbative theory, that is consistent with the dimensional reduction and wave map structure of field equations remains open:
\begin{enumerate}\setlength{\leftskip}{0.03cm}
\item[(P2)] Is there a natural notion of mass-energy for the axially symmetric linear perturbative theory of Kerr black hole spacetimes that is consistent with the dimensional reduction and the wave map structure of the equations?
\end{enumerate}

This question is closely related to whether there exists a natural factor that multiplies the dimensionally reduced Hamiltonian constraint
of the system (compare with the discussion in~\mbox{\cite[pp.~3--4]{NG17}}), which provides a natural notion of energy for our linearized problem.
We point out that the linearization stability methods employed in our works provide a natural mechanism that resolves both (P1) $(\vert a \vert < m)$ and (P2).

Nevertheless, dealing with a plethora of boundary terms that arise in the construction of the positive-definite energy, in connection with the gauge-conditions and the dimensional reduction, is nontrivial.\footnote{This is in contrast with the Maxwell perturbations on Kerr black hole spacetimes, which is a (locally) gauge-invariant problem.} These aspects are dealt with in detail in the latter half of this article, starting from Section~\ref{section5}.

In the current work, after establishing that the constraints for our system are scleronomic,
we prove that our energy functional drives the constrained Hamiltonian dynamics of our system
and that it forms a (spacetime) divergence-free vector field density, after the imposition of the constraints.
In the process of obtaining our results, we construct several variational principles from both Lagrangian and Hamiltonian perspectives, for the nonlinear (exact) and linear perturbative theories. These may be of interest in their own right.

As we already remarked, the black hole stability problem is a very active research area. The decay of Maxwell equations on Schwarzschild was proved in~\cite{PB_08}. The linear stability of Schwarzschild was established in~\cite{HDR_16}. Likewise, the linear stability of Schwarzschild black hole spacetimes using the Cauchy problem for metric coefficients was established in~\cite{HKW_16_2}. These results build on the classic results \cite{Moncrief_74, Regge-Wheeler_57,Zerilli_70,Zerilli_74}.

The important case of linear wave perturbations of Kerr black holes was studied in several fundamental works for small angular momentum \cite{LB_15_1,DR_11,Tato_11}. Likewise, the decay of Maxwell perturbations of Kerr was established in \cite{LB_15_2}. A uniform energy bound and Morawetz estimate for the $ \vert s \vert =1, 2$ Teukolsky equations was established in \cite{SMa_17_1,SMa_17_2}. Boundedness and decay for the $\vert s \vert=2$ Teukolsky equation was established in \cite{HDR_17}. A positive-definite energy for axially symmetric NP-Maxwell scalars was constructed in \cite{NG_19_1}, extending our aforementioned results on Maxwell equations.
The linear stability of Kerr black holes was established in \cite{ABBM_19,HHV_19} for small~$\vert a \vert$. Nonlinear stability of Schwarzschild black holes was announced in \cite{HDRT_21}. Recently, the full proof of nonlinear stability of slowly rotating Kerr black holes was announced in \cite{GKS_22} (see also~\mbox{\cite{GKS_20,KS_21}}).

The effects of the ergo-region become more subtle for rapidly rotating (but $\vert a \vert < M$) Kerr black holes. The decay of the scalar wave for fixed azimuthal modes was established in \cite{FKSY_06,FKSY_08_E,FKSY_08} using spectral methods. The decay of a general linear wave equation was established in the remarkable work \cite{DRS_16}. We would like to point out that the global behaviour, especially the decay estimates, of Maxwell and linearized Einstein perturbations of Kerr black holes, is relatively less understood for the large, but sub-extremal $(\vert a \vert < M)$ case. We expect that our work will be useful in this regard.

Let us briefly summarize the layout of the article. In Section~\ref{section2}, we discuss the Hamiltonian formulation of axially symmetric spacetimes. We start with the Hamiltonian reduction of the wave map Lagrangian and proceed to perform Hamiltonian dimensional reduction of $3+1$ axisymmetric spacetimes into $(2 + 1)$-dimensional Einstein-wave map system. In Section~\ref{section3}, we construct the field equations for linear perturbations of the Einstein equations in axially symmetric spacetimes. These Hamilton field equations are also constructed in the dimensionally reduced Einstein-wave map formalism. Section~\ref{section4} contains the construction of a positive-definite Hamiltonian energy functional for linearized perturbations of the Einstein equations in axially symmetric spacetimes. The construction uses linearization stability methods for the $(2 + 1)$-dimensional Einstein-wave map system. Section~\ref{section4} also contains the proof that the linearized constraint equations are preserved in time for this system.

For stability purposes, it is crucial to establish that this positive-definite Hamiltonian energy is strictly conserved in time. The remainder of this article is dedicated to this purpose. The~strict conservation of the positive-definite-definite energy is closely connected to the boundary behaviour of the fields. Indeed, the boundary behaviour of the fields is critical for the whole construction of this work. This is discussed in detail in Section~\ref{section5}. In Section~\ref{section6}, we formulate the Cauchy problem of linearized Einstein equations in harmonic coordinates and discuss the propagation of regularity and causal structure.

In Section~\ref{section7}, we transform the phase-space variables and the Lagrange multipliers into the Weyl--Papapetrou gauge and discuss their boundary behaviour. Finally, in Section~\ref{section8}, we use this construction to establish that the integrated fluxes of the energy vanish at the horizon, the axes and at the spatial infinity, thereby establishing the strict conservation of the positive-definite energy we constructed earlier. In establishing these results, we classify these boundary flux terms into the dynamical terms, the kinematic terms, and the conformal terms.
This article is a merger of the articles \cite{NG_19_2} and \cite{NG_21_1}.

\section{A Hamiltonian formalism for axially symmetric spacetimes}\label{section2}
Recall that $\bar{M} = \olin{\Sigma} \times \mathbb{R} $ is a $3+1$ Lorentzian spacetime, such that the rotational vector field $\Phi$ acts on $\olin{\Sigma}$ as an isometry with the fixed point set $\Gamma$. In the case of Kerr black hole spacetime, $\Gamma$ is a union of two disjoint sets (the `axes'). It follows that the quotient $\Sigma := \olin{\Sigma}/{\rm SO}(2)$ and $M := \Sigma \times \mathbb{R}$ are manifolds with boundary $\Gamma$.

Consider the Einstein--Hilbert action on $\bigl(\bar{M}, \bar{g}\bigr)$
\begin{align} \label{EH}
S_{\text{EH}} := \int \bar{R}_{\bar{g}} \, \bar{\mu}_{\bar{g}}.
\end{align}
Suppose the axially symmetric $\bigl(\bar{M}, \bar{g}\bigr)$ is a critical point of \eqref{EH}. In the Weyl--Papapetrou coordinates,
\begin{align*}
\bar{g} = \vert \Phi \vert^{-1} \bar{g} + \vert \Phi \vert ({\rm d}\phi + A_\nu {\rm d}x^\nu)^2,
\end{align*}
$ \vert \Phi \vert$ is the norm squared of the Killing vector $ \Phi := \ptl_\phi$, and $g$ is the metric on the quotient $M := \bar{M}/{\rm SO}(2)$. Suppose $II$ is the second fundamental form of the embedding $(M, \tilde{g}) \hookrightarrow \bigl(\bar{M}, \bar{g}\bigr)$, $\tilde{g} = \vert \Phi \vert^{-1} g$, then following the Gauss--Kodazzi equations and the conformal transformation
\begin{align*}
\tilde{R}_{\tilde{g}} = \vert \Phi \vert^{-1} \bigl(R_g -4 g^{\mu \nu} \grad_\mu \grad_\nu \log \vert \Phi \vert^{1/2} - 2 g^{\mu \nu} \grad_\mu \log \vert \Phi \vert^{1/2} \grad_\nu \log \vert \Phi \vert^{1/2}\bigr).
\end{align*}
The Einstein--Hilbert action \eqref{EH} can be reduced to
\begin{align}\label{EWM-Lag}
L_{{\rm EWM}} := \halb \int \biggl(\frac{1}{\kappa} R_g - h_{AB} (U) g^{\a \b} \ptl_\a U^A \ptl_\b U^B \biggr) \bar{\mu}_g
\end{align}
for $\kappa =2$ and $U$ is a wave map
\begin{align*}
U \colon\ (M, g) \to (\mathbb{N}, h)
\end{align*}
to a hyperbolic 2-plane target $(\mathbb{N}, h)$, whose components are associated to the norm and the twist (potential) of the Killing vector. The tangent bundle of the configuration space of \eqref{EWM-Lag} is~now
\begin{align*}
C_{{\rm EWM}} := \bigl\{ (g, \dot{g}), \bigl(U^A, \dot{U}^A\bigr) \bigr\},
\end{align*}
where the dot \big(e.g., $\dot{U}$\big) denotes derivative with respect to a time-coordinate function $t$.
We would like to perform the Hamiltonian reduction of the system \eqref{EWM-Lag}. Recall the ADM decomposition of $(M, g) = (\Sigma, q) \times \mathbb{R}$
\begin{align*}
g = -N^2 {\rm d}t^2 + q_{ab} ({\rm d}x^a + N^a {\rm d}t) \otimes \bigl({\rm d}x^b + N^b {\rm d}t\bigr)
\end{align*}

 Let us split the geometric part and the wave map part of the variational principle \eqref{EWM-Lag} as $L_{{\rm EWM}} = L_{{\rm geom}} + L_{{\rm WM}}$. 
 Let us now start with the Hamiltonian reduction of the wave map Lagrangian $L_{{\rm WM}}$,
 \begin{align}\label{WM-lag}
L_{{\rm WM}} := -\halb\int \bigl(h_{AB} (U) g^{\a \b} \ptl_\a U^A \ptl_\b U^B\bigr) \bar{\mu}_g
\end{align}
 over the tangent bundle of the configuration space of wave maps, $C_{{\rm WM}} = \bigl\{ \bigl(U^A, \dot{U}^A\bigr) \bigr\}$.

Suppose we denote the Lagrangian density of \eqref{WM-lag} as $\mathcal{L}$ and conjugate momenta as $p_A$, we~have
\begin{align*}
p_A = \frac{1}{N} \bar{\mu}_q h_{AB}(U) \ptl_t U^B - \frac{1}{N} \bar{\mu}_q h_{AB} (U) \mathcal{L}_N U^B,
\end{align*}
where $\mathcal{L}_N$ is the Lie derivative with respect to the shift $N^a$. As a consequence, we have
\begin{align*}
h_{AB} (U)\ptl_t U^B &{}= \frac{1}{\bar{\mu}_q} N p_A + h_{AB} (U) \mathcal{L}_N U^B,
\end{align*}
and the Lagrangian density $\mathcal{L}_{{\rm WM}}$ can be expressed in terms of the wave map phase space $X_{{\rm WM}} := \bigl\{ \bigl(U^A, p_A\bigr) \bigr\}$ as
\begin{align*}
\mathcal{L}_{{\rm WM}} &{}= \halb p_B \ptl_t U^B - \halb p_B \mathcal{L}_N U^B - \halb N \bar{\mu}_q h_{AB} (U) q^{ab} \ptl_a U^a \ptl_b U^B.
\end{align*}
Let us now define the Hamiltonian density as follows:
\begin{align*}
\mathcal{H}_{{\rm WM}} := \halb p_B \ptl_t U^B + \halb p_B \mathcal{L}_N U^B + \halb N \bar{\mu}_q h_{AB} (U) q^{ab} \ptl_a U^A \ptl_b U^B.
\end{align*}
As a consequence, we formulate the ADM variational principle for the Hamiltonian dynamics of the wave map phase space $X_{{\rm WM}}$ as
\begin{align*}
L_{{\rm WM}} [X_{{\rm WM}}] := \halb \int \bigl(p_A \ptl_t U^A - p_B \mathcal{L}_N U^B - N \bar{\mu}_q h_{AB} (U) q^{ab} \ptl_a U^a \ptl_b U^B\bigr){\rm d}^3x,
\end{align*}
which has the field equations
\begin{align} \label{u-dot}
p_A = \frac{1}{N} \bar{\mu}_q h_{AB} (U) \ptl_t U^B - \frac{1}{N} \bar{\mu}_q h_{AB} (U) \mathcal{L}_N U^B
\end{align}
and the critical point with respect to \big(the first variation $ D_{U^A} \cdot L_{{\rm WM}}=0$\big) $U^a$ gives
\begin{align}
\ptl_t p_A&{}= -N \bar{\mu}^{-1}_q \frac{\ptl}{\ptl U^A}h^{BC} p_B p_C + h_{AB} \ptl_a\bigl(N \bar{\mu}_q q^{ab} \ptl_b U^B\bigr) \notag\\
&\quad{}+ N\bar{\mu}_q h_{AB} \leftexp{(h)}{\Gamma}^B_{CD}(U)q^{ab}\ptl_a U^C \ptl_b U^D
 + \mathcal{L}_N p_A,\label{p-dot}
\end{align}
where $\leftexp{(h)}{\Gamma}$ are the Christoffel symbols,
\[
\leftexp{(h)}{\Gamma}^A_{BC} := \halb h^{AD}(U) (\ptl_C h_{BD} + \ptl_B h_{DC} - \ptl_D h_{BC}).
\]
It is straight-forward to verify that the canonical equations
\begin{align*}
D_{p_A}\cdot H_{{\rm WM}} = \ptl_t U^A \qquad \text{and} \qquad D_{U^A}\cdot H_{{\rm WM}} = - \ptl_t p_A
\end{align*}
correspond to \eqref{u-dot} and \eqref{p-dot}, respectively,
where $H_{{\rm WM}} := \int \mathcal{H}_{{\rm WM}} {\rm d}^2x $ is the (total) Hamiltonian. Subsequently, if we use the Gauss--Kodazzi equation for the ADM $2+1$ decomposition and defining the (geometric) phase space
\begin{align*}
X_{{\rm geom}} := \bigl\{ \bigl(q_{ab}, \mbo{\pi}^{ab}\bigr) \bigr\},
\end{align*}
 we can represent the gravitational Lagrangian density as follows:
\begin{align*}
\mathcal{L}^{{\rm Alt}}_{{\rm geom}} := \bigl(- q_{ab}\ptl_t \pi^{ab} - N H_{{\rm geom}} - N_a H_{{\rm geom}}^a\bigr),
\end{align*}
where
\begin{align*}
&H_{{\rm geom}}:= \bar{\mu}^{-1}_q \bigl(\Vert \mbo{\pi} \Vert_q^2 - {\rm Tr}_q(\mbo{\pi})^2 \bigr) - \bar{\mu}_q R_q, \\
&H_{{\rm geom}}^a := -2\, \leftexp{(q)}{\grad}_b \mbo{\pi}^{ab}.
\end{align*}

It may be noted that the conjugate momentum tensor of the reduced metric and the corresponding components of the conjugate momentum of the $3$ metric are related as follows:
\begin{align*}
 \vert \Phi \vert \bar{\mbo{\pi}}^{ab} = \mbo{\pi}^{ab}, \qquad \text{where} \ \mbo{\pi}^{ab} = \bar{\mu}_q \bigl(q^{ab} {\rm Tr}_q(K) - K^{ab} \bigr)
\end{align*}
Consequently, we have the variational principle for Hamiltonian dynamics of the reduced Einstein wave map system
\begin{align} \label{ham-var}
J_{{\rm EWM}} := \int^{t_2}_{t_1}\int_{\Sigma} \bigl(\mbo{\pi}^{ab} \ptl_t q_{ab} + p_A \ptl_t U^A - N H - N^a H_a \bigr) {\rm d}^2x {\rm d}t,
\end{align}
where now the reduced $H$ and $H_a$ are
\begin{align*}
&H= \bar{\mu}^{-1}_q \biggl(\bigl(\Vert \mbo{\pi} \Vert^2_q - {\rm Tr}_q (\mbo{\pi})^2 \bigr) + \halb p_A p^A \biggr) + \bar{\mu}_q \biggl(- R_q + \halb h_{AB} q^{ab} \ptl_a U^A \ptl_b U^B \biggr), \\
&H_a= - 2\, \leftexp{(q)}{\grad}_b \mbo{\pi}^b_a + p_A \ptl_a U^A.
\end{align*}
Therefore, we have proved the following theorem.

\begin{Theorem}
Suppose $\bigl(\bar{M}, \bar{g}\bigr)$ is an axially symmetric, Ricci-flat, globally hyperbolic Lorentzian spacetime and that $\bar{g}$ admits the decomposition \eqref{KK-ADM} in a local coordinate system, then the dimensionally reduced field equations in the interior of $M = \Sigma \times \mathbb{R}$, where $\Sigma = \olin{\Sigma}/{\rm SO}(2)$, are derivable from the variational principle \eqref{ham-var} for the reduced phase space
\begin{align*}
X_{{\rm EWM}} := \bigl\{ \bigl(q_{ab}, \mbo{\pi}^{ab}\bigr), \bigl(U^A, p_A\bigr) \bigr\}
\end{align*}
with the Lagrange multipliers $\{ N, N^a \}$.
\end{Theorem}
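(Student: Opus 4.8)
The plan is to obtain the stated variational principle \eqref{ham-var} by Legendre-transforming the reduced Einstein-wave map Lagrangian \eqref{EWM-Lag} one sector at a time and then verifying that its Hamilton equations reproduce the dimensionally reduced field equations; the theorem is essentially the culmination of the preceding reductions, so the task is to confirm that each reduction step closes consistently on the common phase space $X_{{\rm EWM}}$. Since the dimensional reduction of the Einstein--Hilbert action \eqref{EH} to \eqref{EWM-Lag} has already been recorded via the conformal Gauss--Codazzi identity, I take \eqref{EWM-Lag} as the starting point and split it as $L_{{\rm EWM}} = L_{{\rm geom}} + L_{{\rm WM}}$.

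First I would treat the wave map sector. Using the ADM decomposition $g = -N^2 {\rm d}t^2 + q_{ab}({\rm d}x^a + N^a {\rm d}t)\otimes({\rm d}x^b + N^b {\rm d}t)$ of the orbit-space metric, I compute the momentum $p_A$ conjugate to $U^A$, invert the Legendre relation to express $\ptl_t U^A$ through $p_A$, and substitute back to produce the Hamiltonian density $\mathcal{H}_{{\rm WM}}$. Varying the resulting first-order action $L_{{\rm WM}}[X_{{\rm WM}}]$ independently in $U^A$ and $p_A$ must return the definition \eqref{u-dot} and the evolution equation \eqref{p-dot}; here I would verify that the term carrying $\leftexp{(h)}{\Gamma}^B_{CD}$, arising from the $U$-dependence of $h_{AB}$, reassembles into the covariant wave map operator, so that the canonical equations $D_{p_A}\cdot H_{{\rm WM}} = \ptl_t U^A$ and $D_{U^A}\cdot H_{{\rm WM}} = -\ptl_t p_A$ indeed hold.

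Next I would treat the geometric sector by the standard ADM computation for the $2+1$ orbit space: applying the Gauss--Codazzi relations to the conformal metric $\tilde{g} = \vert \Phi \vert^{-1} g$ yields the geometric Hamiltonian $H_{{\rm geom}}$ built from $\Vert \mbo{\pi} \Vert_q^2 - {\rm Tr}_q(\mbo{\pi})^2$ and $-\bar{\mu}_q R_q$, together with the momentum constraint $H_{{\rm geom}}^a = -2\,\leftexp{(q)}{\grad}_b \mbo{\pi}^{ab}$. At this stage I would carefully carry the relation $\vert \Phi \vert \bar{\mbo{\pi}}^{ab} = \mbo{\pi}^{ab}$ between the reduced momentum and the components of the conjugate momentum of the $3$-metric, since this is precisely what lets the two sectors live on one phase space.

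Finally I would add the two sectors. The crucial point is that the lapse $N$ and shift $N^a$ enter only linearly, so the combined first-order action takes the form \eqref{ham-var} with $N$, $N^a$ acting as Lagrange multipliers enforcing $H=0$, $H_a=0$; matching coefficients produces the combined $H$ with its kinetic term $\halb p_A p^A$ and wave map potential $\halb h_{AB} q^{ab}\ptl_a U^A \ptl_b U^B$, and the combined momentum constraint $H_a = -2\,\leftexp{(q)}{\grad}_b \mbo{\pi}^b_a + p_A \ptl_a U^A$. I expect the main obstacle to be the bookkeeping in this last amalgamation: one must show that the shift-dependent Lie-derivative pieces $\mathcal{L}_N U^B$ from the matter sector and the geometric shift terms combine into the single constraint $H_a$ with the correct coupling $p_A \ptl_a U^A$, and that the conformal rescaling by $\vert \Phi \vert^{-1}$, together with the attendant boundary terms from integration by parts, does not spoil the cancellations. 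Once this coefficient-matching is confirmed, the variations of \eqref{ham-var} in $(q_{ab}, \mbo{\pi}^{ab}, U^A, p_A, N, N^a)$ reproduce exactly the reduced Einstein-wave map system in the interior of $M$, which is the assertion of the theorem.
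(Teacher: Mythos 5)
Your proposal follows essentially the same route as the paper: splitting $L_{{\rm EWM}} = L_{{\rm geom}} + L_{{\rm WM}}$, Legendre-transforming the wave map sector first (with the $\leftexp{(h)}{\Gamma}^B_{CD}$ terms reassembling into the covariant wave operator and the canonical equations $D_{p_A}\cdot H_{{\rm WM}} = \ptl_t U^A$, $D_{U^A}\cdot H_{{\rm WM}} = -\ptl_t p_A$ verified), then obtaining the geometric sector via the Gauss--Codazzi relations with the momentum identification $\vert \Phi \vert\, \bar{\mbo{\pi}}^{ab} = \mbo{\pi}^{ab}$, and finally combining the two sectors with $N$, $N^a$ entering linearly as Lagrange multipliers enforcing $H = 0$, $H_a = 0$. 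This matches the paper's derivation step for step, so the proposal is correct.
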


As a consequence, we have the field equations for Hamiltonian dynamics in $X_{{\rm EWM}}$
\begin{subequations}
\begin{gather}
h_{AB} \ptl_t U^B = N \bar{\mu}^{-1}_q p_A + h_{AB} \mathcal{L}_N U^B, \\
\ptl_t p_A= - N \bar{\mu}^{-1}_q\frac{\ptl}{\ptl U^A}h^{BC} (U) p_B p_C + h_{AB} \ptl_a\bigl(N \bar{\mu}_q q^{ab} \ptl_b U^B\bigr) \notag\\
 \hphantom{\ptl_t p_A=}{}
 + N \bar{\mu}_q h_{AB} \leftexp{(h)}{\Gamma}^B_{CD}(U)q^{ab}\ptl_a U^C \ptl_b U^D + \mathcal{L}_N p_A,\\
\ptl_t q_{ab} = 2 N \bar{\mu}^{-1}_q (\mbo{\pi}_{ab} - q_{ab} {\rm Tr} (\mbo{\pi})) + \leftexp{(q)}{\grad}_a N_b + \leftexp{(q)}{\grad}_b N_a, \label{qdot}\\
\ptl_t \mbo{\pi}^{ab}= \halb N \bar{\mu}^{-1}_q q^{ab} \bigl(\Vert \mbo{\pi} \Vert^2_q - {\rm Tr}(\mbo{\pi})^2\bigr)- 2N \bar{\mu}^{-1}_q \bigl( \mbo{\pi}^{ac} \mbo{\pi}^{b}_c - \mbo{\pi}^{ab} {\rm Tr}(\mbo{\pi}) \bigr) \notag\\
\hphantom{\ptl_t \mbo{\pi}^{ab}=}{}
 +\bar{\mu}_q \bigl(\leftexp{(q)}{\grad}^b \,\leftexp{(q)}{\grad}^a N -q^{ab}\, \leftexp{(q)}{\grad}_c \leftexp{(q)}{\grad}^c N\bigr), \notag\\
\hphantom{\ptl_t \mbo{\pi}^{ab}=}{}
 + \leftexp{(q)}{\grad}_c\bigl(\mbo{\pi}^{ab} N^c\bigr) - \leftexp{(q)}{\grad}_c N^a \mbo{\pi}^{cb} - \leftexp{(q)}{\grad}_c N^b \mbo{\pi}^{ca}\notag\\
\hphantom{\ptl_t \mbo{\pi}^{ab}=}{}
 + \frac{1}{4}\bar{\mu}_q^{-1}N q^{ab} p_A p^A + \halb N \bar{\mu}_q h_{AB} \biggl(q^{ac}q^{bd} - \halb q^{ab} q^{cd}\biggr)\ptl_c U^A \ptl_d U^B
\end{gather}
\end{subequations}
and the constraint equations
\begin{align}\label{constraints}
H=0, \qquad H_a =0.
\end{align}
It should be pointed out that, analogous to original ADM formulation~\cite{ADM_62}, we have made a~simplification with the coupling constant (see also the discussion in~\cite[pp.~520--521]{MTW}). In case the precise coupling between the $2 + 1$ Einstein's equations and its wave map source is relevant, the original coupling can be reinstated by simply substituting the following formulas throughout our~work:
\begin{align*}
&\mbo{\pi}_{{\rm true}} := \frac{1}{2\kappa} \mbo{\pi}= \frac{1}{2\kappa} \bar{\mu}_q \bigl(q^{ab} {\rm Tr}(K) - K^{ab} \bigr), \\
&H_{{\rm true}} := \frac{1}{2\kappa} H_{{\rm geom}} = \bar{\mu}^{-1}_q \frac{1}{2\kappa} \bigl(\Vert \mbo{\pi} \Vert_q^2 - {\rm Tr}(\mbo{\pi})^2 \bigr) - \frac{1}{2\kappa} \bar{\mu}_q R_q, \\
&(H_{{\rm true}})_a := \frac{1}{2 \kappa} (H_{{\rm geom}})_a = -\frac{1}{\kappa} \leftexp{(q)}{\grad}_b \mbo{\pi}^{b}_a.
\end{align*}
We would like to remind the reader that, in the dimensional reduction process, we introduce the closed 1-form $G$ such that
\begin{align*}
\vert \Phi \vert^{-2} \eps_{\mu \nu \b} g^{\b \a} G_\a = F_{\mu \nu},
\end{align*}
where $F = {\rm d}A$. In our simply connected domain, $G= {\rm d}w$, where $\omega$ is the gravitational twist potential and one of the components of the wave map $U$.

\subsection*{Nonlinear conservation laws}
Following Komar's definition of angular momentum,
\begin{align*}
J = \frac{1}{16\pi} \int_{\Sigma} \star {\rm d}\Phi, 
\end{align*}
it follows that for the Kerr metric $J = a M$. In view of the well-known fact that the angular momentum is conserved for our vacuum axisymmetric problem, without effective loss of generality, we shall assume that the perturbation of the angular-momentum is zero.

The dimensional reduction provides additional structure for the original field equations. As~noted by Geroch \cite{Geroch_71}, the Lie group ${\rm SL}(2, \mathbb{R})$ acts on the resulting target $(\mathbb{N}, h)$ in the dimensional reduction procedure. The M\"obius transformations, which are the isometries of~$(\mathbb{N}, h)$ provide us a~Poisson algebra of nonlinear conserved quantities.

\begin{Corollary}
Suppose $U\colon \Sigma \times (t_1, t_2) \to (\mathbb{N}, h)$ is the wave map coupled to $2 + 1$ Einstein equations as above,
then there exist $($spacetime$)$ divergence-free vector fields $J_i$, $i =1, 2, 3$, such that if $C_i$ is the flux of $J_i$ at $\Sigma_t$, $t \in (t_1, t_2)$ hypersurface,
\begin{align*}
\{C_i, C_j \} = \sigma^{k}_{ij} C_k, \qquad i\neq j \neq k,
\end{align*}
where $\{ \cdot, \cdot \}$ is the Poisson bracket in the phase space $X_{{\rm EWM}}$ and \smash{$\sigma^k_{ij}$} are the structure constants of the $($M\"obius$)$ isometries $\{ K_i, K_2, K_3\}$ of $(\mathbb{N}, h)$.
\end{Corollary}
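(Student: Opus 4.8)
The plan is to exploit the fact that the isometry group of the target $(\mathbb{N}, h) = (\mathbb{H}^2, h)$ is $\mathrm{SL}(2,\mathbb{R})$ (or $\mathrm{PSL}(2,\mathbb{R})$ acting by M\"obius transformations), with three-dimensional Lie algebra $\{K_1, K_2, K_3\}$. Since the Einstein-wave map system \eqref{ewm-system} is covariant with respect to target isometries, each Killing vector field $K_i$ on $(\mathbb{N}, h)$ produces a Noether current for the wave map. First I would, for each $i$, define the current density
\begin{align*}
J_i^\mu := \bar{\mu}_g\, g^{\mu\nu}\, h_{AB}(U)\, K_i^A(U)\, \ptl_\nu U^B,
\end{align*}
contracting the Killing field with the wave map stress-energy flux. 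The key computation is that the spacetime divergence $\grad_\mu J_i^\mu$ vanishes on solutions: this follows from the wave map equation $\square_g U^A + \leftexp{(h)}{\Gamma}^A_{BC} g^{\mu\nu}\ptl_\mu U^B \ptl_\nu U^C = 0$ together with the Killing equation $\leftexp{(h)}{\grad}_{(A} (K_i)_{B)} = 0$ on the target, the antisymmetry killing the remaining connection terms. This establishes the divergence-free property of each $J_i$.

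Next I would define $C_i$ as the flux of $J_i$ through the hypersurface $\Sigma_t$, i.e., $C_i := \int_{\Sigma_t} J_i^t\, {\rm d}^2 x$, which by the divergence theorem is conserved in $t$ (modulo boundary contributions at $\Gamma$ and spatial infinity, which I would argue vanish under the decay and regularity assumptions — this deserves a remark since boundary terms are exactly the delicate issue flagged throughout the paper). To obtain the Poisson algebra, I would re-express $C_i$ in terms of the phase-space variables $X_{\mathrm{EWM}} = \{(q_{ab}, \mbo{\pi}^{ab}), (U^A, p_A)\}$. Because $K_i$ is a vector field on the target, the natural candidate is the momentum map
\begin{align*}
C_i = \int_{\Sigma_t} p_A\, K_i^A(U)\, {\rm d}^2 x,
\end{align*}
which is the standard generator of the target-isometry action on the wave map phase space.

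The heart of the proof is then the computation of $\{C_i, C_j\}$ using the canonical Poisson bracket on $X_{\mathrm{EWM}}$, in which $(U^A, p_A)$ are conjugate pairs. Functional differentiation gives
\begin{align*}
\{C_i, C_j\} = \int_{\Sigma_t} p_A \bigl( K_i^B \ptl_B K_j^A - K_j^B \ptl_B K_i^A \bigr)\, {\rm d}^2 x = \int_{\Sigma_t} p_A\, [K_i, K_j]^A\, {\rm d}^2 x,
\end{align*}
where $[\,\cdot\,,\,\cdot\,]$ is the Lie bracket of vector fields on $(\mathbb{N}, h)$. Since the $K_i$ close under the Lie bracket with $[K_i, K_j] = \sigma^k_{ij} K_k$ (the structure constants of the M\"obius isometry algebra $\mathfrak{sl}(2,\mathbb{R})$), this immediately yields $\{C_i, C_j\} = \sigma^k_{ij} C_k$, which is the claimed relation. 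This is the familiar statement that a momentum map intertwines the Lie bracket of symmetry generators with the Poisson bracket of their Hamiltonian charges.

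The main obstacle I expect is \emph{not} the formal algebra — which is a standard momentum-map/Noether argument — but the rigorous control of the boundary terms when passing from $\grad_\mu J_i^\mu = 0$ to conservation of $C_i$ and, equally, ensuring the phase-space integrals defining $C_i$ converge. The quotient $M = \Sigma \times \mathbb{R}$ has boundary $\Gamma$ (the axes) where the Killing field $\Phi$ degenerates and $\mathrm{e}^{2\gamma} \to 0$, so the twist-potential component of $U$ and the conjugate momentum $p_A$ can behave singularly; one must verify that the flux of $J_i$ through $\Gamma$ and through spatial infinity actually vanishes, consistent with the boundary analysis developed later in the paper (Sections~\ref{section5} and onward). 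I would therefore carry out the formal identity first and then append the boundary-regularity justification, invoking the decay and fixed-point-set behaviour to discard the boundary integrals.
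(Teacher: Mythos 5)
Your proposal is correct and takes essentially the same route as the paper: Noether currents associated to the M\"obius Killing fields $K_i$ of $(\mathbb{N},h)$, divergence-free by combining the field equations with the target Killing equation, with charges $C_i = \int_{\Sigma_t} K_i^A p_A\, {\rm d}^2x$ whose canonical Poisson brackets close on the structure constants via $[K_i, K_j]^A = \sigma^k_{ij} K_k^A$ --- the identical momentum-map computation the paper performs. The only cosmetic difference is that you obtain the conservation law from the covariant Lagrangian current (whose $t$-component reduces, up to sign, to $K_i^A p_A$ in the ADM variables), whereas the paper computes $\ptl_t\bigl(K_i^A p_A\bigr)$ directly from the Hamilton equations and reads off the spatial components of $J_i$ from the resulting divergence identity.
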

\begin{proof}
The M\"obius transformations on the target $(\mathbb{N}, h)$, the hyperbolic 2-pane, are isometries corresponding to translation, dilation and inversion $\{ K_1, K_2, K_3 \}$. It follows that
\begin{align*}
\leftexp{(h)}{\grad}_A (K_i)_B + \leftexp{(h)}{\grad}_B (K_i)_A =0, \qquad \forall \ i = 1, 2, 3,
\end{align*}
Consider the quantity
\begin{align*}
\ptl_t \bigl(K^A_i p_A\bigr) &{}= \ptl_C K_i^A p_A \bigl(N \bar{\mu}_q p^C + \mathcal{L}_N U^C\bigr) \notag\\
&\quad{}+ K_i^A \bigl(-N \bar{\mu}^{-1}_q \ptl_A h^{BC} p_B p_C
+ h_{AB} \ptl_a \bigl(N \bar{\mu}_q q^{ab} \ptl_b U^B\bigr) \notag\\
&\quad\hphantom{+ K_i^A \bigl(\ }{}+ N \bar{\mu}_q h_{AB} \leftexp{(h)}{\Gamma}^B_{CD} q^{ab} \ptl_a U^C \ptl_b U^D + \mathcal{L}_N p_A \bigr).
\end{align*}
Now consider,
\begin{align*}
\ptl_a \bigl(N \bar{\mu}_q q^{ab} \ptl_b U^B K_i^A h_{AB}\bigr) &{}= K_i^A h_{AB} \ptl_a \bigl(N \bar{\mu}_q q^{ab} \ptl_b U^B\bigr) \notag\\
&\quad{}+ N \bar{\mu}_q q ^{ab} \ptl_a U^C \ptl_c K_i^A h_{AB} \ptl_b U^B\!
+ N \bar{\mu}_q q^{ab} \ptl_b U^B K_i^A \ptl_C h_{AB} \ptl_a U^C
\end{align*}
and note that
\begin{gather*}
 -N \bar{\mu}_q q ^{ab} \ptl_a U^C \ptl_c K^A h_{AB} \ptl_b U^B - N \bar{\mu}_q q^{ab} \ptl_b U^B K^A \ptl_C h_{AB} \ptl_a U^C \\
\quad{} + N \bar{\mu}_q h_{AB} \leftexp{(h)}{\Gamma}^B_{CD} q^{ab} \ptl_a U^C \ptl_b U^D K_i^A \\
\qquad{}= -\halb N \bar{\mu}_q q^{ab} \ptl_a U^C \ptl_b U^D \ptl_A h_{CD}- N \bar{\mu}_q^{ab} \ptl_a U^C \ptl_c U^B \ptl_C K_i^A h_{AB} = 0
\end{gather*}
after relabeling of indices and on account of the fact that the deformation tensor of $K_i$ in the target $h$ is zero. Let us now define
\begin{align*}
&(J_i)^t = K^A_i p_A,\\
&(J_i)^a = \ptl_a \bigl(N \bar{\mu}_q q^{ab} \ptl_b U^B K_i^A h_{AB} + N^a K_i^A p_A\bigr).
\end{align*}
It follows from above that each $J_i$, $i = 1, 2, 3$, is a spacetime divergence-free vector density. Now~then
\begin{align*}
C_i := \int_{\Sigma_t}K_i^A p_A {\rm d}^2 x
\end{align*}
and consider the Poisson bracket
\begin{align*}
\{ C_i, C_j \} &{}= \biggl\{ \int_{\Sigma_t} K^A_i p_A \,,\, \int_{\Sigma_t} K^A_j p_A \biggr\}
= \int_{\Sigma_t} \bigl(\ptl_{U^A} K^B_i K^A_j - K^A_i \ptl_{U^A} K^B_j \bigr) p_B \\
&{}= \int_{\Sigma_t} \bigl[K_i, K_j \bigr]^A p_A = \int_{\Sigma_t} \sigma^k_{ij} K^A_k p_A
= \sigma^k_{ij} C_k, \qquad i \neq j \neq k.
\end{align*}
This result generalizes the equivalent result in \cite{kaluz1}, where each $\Sigma$ is $\mathbb{S}^2$, to our non-compact case and a general gauge on the target metric $h$. We would like to point out that these conservation laws are closely related to the `moment maps' associated to the M\"obius transformations in the phase space. It may be noted that our arguments readily extend to the $(n+1)$ higher-dimensional context, where the target is ${\rm SL}(n-2)/{\rm SO}(n-2)$.
\end{proof}

In the Weyl--Papapetrou coordinates, define a quantity $\mbo{\nu}$ such that
\begin{align} \label{conformal}
q= {\rm e}^{2\mbo{\nu}} q_0,
\end{align}
where $q_0$ is the flat metric and (mean curvature) scalar $\mbo{\tau} := \bar{\mu}^{-1}_q q_{ab} \mbo{\pi}^{ab}$. In our work, it will be convenient to split $2$-tensors into a trace part and the conformal Killing operator \cite{kaluz1}. In the following lemma, we shall streamline the related discussion and results obtained in \cite{kaluz1}.

\begin{Lemma}
Suppose the phase space variables $ \bigl\{ (q, \mbo{\pi}), \bigl(U^A, p_A\bigr) \bigr\} \in X$ are smooth in the interior of $\Sigma$, we have
\begin{enumerate}\itemsep=0pt
\item[$(1)$] Suppose a vector field $Y \in T (\Sigma)$, then conformal Killing operator defined as
\begin{align*}
({\rm CK}(Y, q))^{ab}:= \bar{\mu}_q \bigl(\leftexp{(q)}{\grad}^b Y^a + \leftexp{(q)}{\grad}^a Y^b - q^{ab} \, \leftexp{(q)}{\grad}_c Y^c\bigr),
\end{align*}
is invariant under a conformal transformation, i.e., ${\rm CK} (Y,q) = {\rm CK}(Y, q_0)$.
\item[$(2)$] There exists a vector field $Y \in T(\Sigma)$, which is determined uniquely up a conformal Killing vector, such that
\begin{align*}
\mbo{\pi}^{ab}= {\rm e}^{-2 \mbo{\nu}} ({\rm CK}(Y, q))^{ab} + \halb \mbo{\tau} \bar{\mu}_q q^{ab}.
\end{align*}
\item[$(3)$] If $ \bigl\{ (q, \mbo{\pi}), \bigl(U^A, p_A\bigr) \bigr\} \in X$ satisfy the constraint equations, the Hamilton and momentum constraint equations \eqref{constraints} can be represented as the elliptic equations
\begin{align*}
\bar{\mu}^{-1}_{q_0} \biggl({\rm e}^{-2 \mbo{\nu}} \Vert \varrho \Vert^2_{q_0} - \halb \tau^2 {\rm e}^{2 \mbo{\nu}} \bar{\mu}^2_{q_0} + \halb p_A p^A\biggr)
+ \bar{\mu}_{q_0} \bigl(2 \Delta_0 \mbo{\nu} + h_{AB} q_0^{ab} \ptl_a U^A \ptl_b U^B\bigr) =0
\end{align*}
and
\[
-\leftexp{(q_0)}{\grad}_b \varrho^b_a - \halb \ptl_a \mbo{\tau} {\rm e}^{2 \mbo{\nu}} \bar{\mu}_{q_0} + \halb p_A \ptl_a U^A=0, \qquad (\Sigma, q_0),
\]
respectively, where
\begin{align}\label{varrho-def}
\varrho^a_c= \bar{\mu}_{q_0} \bigl(\leftexp{(q_0)}{\grad}_c Y^a + \leftexp{(q_0)}{\grad}^a Y_c - \delta^a_c \, \leftexp{(q_0)}{\grad}_b Y^b\bigr).
\end{align}
\end{enumerate}
\end{Lemma}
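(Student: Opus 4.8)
The three statements build on one another, so I would establish them in the given order, with~(1) supplying the conformal covariance that renders (2) and (3) transparent. For part~(1) I would compute directly, exploiting that $\dim\Sigma = 2$. Writing $q = {\rm e}^{2\mbo{\nu}}q_0$, one has $\bar{\mu}_q = {\rm e}^{2\mbo{\nu}}\bar{\mu}_{q_0}$ and $q^{ab} = {\rm e}^{-2\mbo{\nu}}q_0^{ab}$, while the Christoffel symbols pick up the standard $\ptl\mbo{\nu}$ corrections. Inserting these into $\leftexp{(q)}{\grad}^a Y^b$ gives, beyond the leading ${\rm e}^{-2\mbo{\nu}}\leftexp{(q_0)}{\grad}^a Y^b$, three terms linear in $\ptl\mbo{\nu}$; upon symmetrizing in $(a,b)$ the two antisymmetric corrections cancel, and the surviving trace correction $2q_0^{ab}\,Y^c\ptl_c\mbo{\nu}$ is exactly annihilated by the $2Y^c\ptl_c\mbo{\nu}$ generated in the divergence $\leftexp{(q)}{\grad}_c Y^c$ that multiplies $-q^{ab}$. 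The overall prefactor collapses via $\bar{\mu}_q{\rm e}^{-2\mbo{\nu}} = \bar{\mu}_{q_0}$, yielding ${\rm CK}(Y,q) = {\rm CK}(Y,q_0)$. This is the only properly computational step and it is brief.

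For part~(2) I would first check trace consistency: since ${\rm Tr}_q(\mbo{\pi}) = \mbo{\tau}\bar{\mu}_q$ and $q_{ab}({\rm CK}(Y,q))^{ab} = 0$ in two dimensions, the tensor $S^{ab} := \mbo{\pi}^{ab} - \halb\mbo{\tau}\bar{\mu}_q q^{ab}$ is symmetric and trace-free, so it suffices to produce $Y$ with ${\rm e}^{-2\mbo{\nu}}({\rm CK}(Y,q))^{ab} = S^{ab}$. By part~(1) this is the flat-background equation $({\rm CK}(Y,q_0))^{ab} = {\rm e}^{2\mbo{\nu}}S^{ab}$. I would solve it by applying the flat divergence and viewing $Y \mapsto \leftexp{(q_0)}{\grad}_b({\rm CK}(Y,q_0))^{ab}$ as the elliptic, formally self-adjoint conformal vector Laplacian; its kernel is exactly the set of conformal Killing fields of $q_0$, which delivers uniqueness up to a conformal Killing vector, and existence follows from the Fredholm alternative once the transverse--traceless remainder (the holomorphic quadratic differentials) is shown to be absent on the simply connected orbit space under the imposed boundary and decay conditions.

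For part~(3) I would substitute the decomposition of part~(2) into the constraints~\eqref{constraints}. Orthogonality of the trace-free $S$ and the pure-trace piece gives $\Vert\mbo{\pi}\Vert_q^2 - {\rm Tr}_q(\mbo{\pi})^2 = \Vert S\Vert_q^2 - \halb\mbo{\tau}^2\bar{\mu}_q^2$, and part~(1) together with $({\rm CK}(Y,q_0))^{ab}q_{0,bc} = \varrho^a_c$ identifies $\Vert S\Vert_q^2 = \Vert\varrho\Vert_{q_0}^2$; peeling off the weights $\bar{\mu}_q^{\pm1}$ and $q^{ab}$ then reproduces the ${\rm e}^{\mp2\mbo{\nu}}$ factors on the $\varrho$ and $\tau$ terms, and the wave-map densities $p_A p^A$ and $h_{AB}q_0^{ab}\ptl_a U^A\ptl_b U^B$ convert by the same bookkeeping. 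The curvature term uses the two-dimensional conformal law $R_q = {\rm e}^{-2\mbo{\nu}}(R_{q_0} - 2\Delta_0\mbo{\nu})$ with $R_{q_0} = 0$, so that $-\bar{\mu}_q R_q = 2\bar{\mu}_{q_0}\Delta_0\mbo{\nu}$, the asserted Laplacian term. For the momentum constraint I would use $\mbo{\pi}^b_a = \varrho^b_a + \halb\mbo{\tau}\bar{\mu}_q\delta^b_a$ (again via part~(1)) and the conformal covariance of the divergence of a trace-free weight-one density to rewrite $-2\leftexp{(q)}{\grad}_b\mbo{\pi}^b_a$ as $-2\leftexp{(q_0)}{\grad}_b\varrho^b_a$ minus the $\ptl_a\mbo{\tau}$ trace term, carrying the source $p_A\ptl_a U^A$ along.

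The main obstacle is the existence half of part~(2): showing that the conformal vector Laplacian is onto the trace-free symmetric tensors at hand, equivalently that no transverse--traceless remainder survives. The remaining ingredients are either the short conformal computation of part~(1) or the weight-bookkeeping of part~(3); by contrast the solvability of this York-type equation on the non-compact orbit space with boundary $\Gamma$ demands the right function spaces and the correct boundary/asymptotic behaviour, and it is here that the structure of $\Sigma$ and the construction of~\cite{kaluz1} must be brought in.
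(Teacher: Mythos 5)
Your proposal is correct and takes essentially the same route as the paper: a direct conformal computation for part~(1), the trace/trace-free splitting of $\mbo{\pi}^{ab}$ with the transverse--traceless piece ruled out by the topology and boundary conditions and existence of $Y$ obtained from the divergence (York-type) equation via Fredholm theory with the conformal Killing fields as kernel for part~(2), and the substitution with $R_q = -2{\rm e}^{-2\mbo{\nu}}\Delta_0\mbo{\nu}$ and the weight bookkeeping for part~(3). Your identification of the absence of the TT remainder as the one genuinely delicate point matches where the paper itself leans on the special geometry of $\Sigma$ and the construction of~\cite{kaluz1}.
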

\begin{proof}
Part (1) follows from the definitions and direct computations. Part (2) is based on the fact that the transverse-traceless tensors vanish for our form of the 2-metric. Consider the decomposition of $\mbo{\pi}^{ab}$ into a trace part and a traceless part:
\begin{align} \label{split-pi}
\mbo{\pi}^{ab} = \halb \tau \bar{\mu}_q q^{ab} + \cancel{\operatorname{Tr}} \, \mbo{\pi}^{ab}
= \halb \mbo{\tau} \bar{\mu}_q q^{ab} + (\mbo{\pi}_{{\rm TT}})^{ab} + {\rm e}^{-2\nu} {\rm CK} (Y, q),
\end{align}
where $(\mbo{\pi}_{{\rm TT}})^{ab}$ is such that
\begin{align} \label{tt-cond}
q_{ab} (\mbo{\pi}_{{\rm TT}})^{ab} =0 \qquad \text{and} \qquad \leftexp{(q)}{\grad}_a (\mbo{\pi}_{{\rm TT}})^{ab} =0.
\end{align}

The result (2) now follows from the fact that \eqref{tt-cond} is invariant under the conformal transformation \eqref{conformal} and the fact that transverse-traceless tensors vanish on the flat metric $q_0$, with suitable boundary conditions. The existence of $Y$ follows from the following elliptic equation
\begin{align*}
\leftexp{(q)}{\grad}_a \bigl({\rm e}^{-2 \mbo{\nu}} {\rm CK} (Y, q)\bigr) = \grad_a \biggl(\mbo{\pi}^{ab} -\halb \mbo{\tau} \bar{\mu}_q q^{ab}\biggr)
\end{align*}
 and Fredholm theory. It may be noted that the right-hand side is $L^2-$orthogonal to the kernel of the linear, self-adjoint elliptic operator on the left-hand side, which contains the conformal Killing vector fields of $q_0$,
\begin{align} 
&\leftexp{(q)}{\grad}_b \bigl({\rm e}^{-2\mbo{\nu}} \bar{\mu}_q \bigl(Y_a + \leftexp{(q)}{\grad}_a Y^b - \delta^b_a \leftexp{(q)}{\grad}_c Y^c\bigr) \bigr) \notag\\
&\qquad{}= \leftexp{(q_0)}{\grad}_b \bigl(\bar{\mu}_{q_0}\bigl(\leftexp{(q_0)}{\grad}_b Y^a\bigr) + \leftexp{(q_0)}{\grad}_a Y^b - \delta^b_a \leftexp{(q_0)}{\grad}_c Y^c\bigr) . \label{conf-identity}
\end{align}
It would now be convenient to define $\varrho$ as in \eqref{varrho-def}. Now then, using
\begin{align*}
\mbo{\pi}^a_b = \halb \mbo{\tau} {\rm e}^{2 \mbo{\nu}} \bar{\mu}_{q_0} \delta^a_c + {\rm e}^{-2 \mbo{\nu}} \varrho^a_c,
\end{align*}
and \eqref{conf-identity}, the momentum constraint can now be transformed into the following elliptic operator for $Y$ on $(\Sigma, q_0)$
\begin{align*}
H_a = -2\leftexp{(q_0)}{\grad}_b \varrho^b_a - \ptl_a \mbo{\tau} {\rm e}^{2 \mbo{\nu}} \bar{\mu}_{q_0} + p_A \ptl_a U^A, \qquad (\Sigma, q_0), \quad a= 1, 2.
\end{align*}
The scalar curvature $R_q$ of $(\Sigma, q)$ and $\Vert \mbo{\pi} \Vert^2_q$ can be expressed explicitly as
\begin{align*}
R_q = -2 {\rm e}^{-2 \mbo{\nu}} \Delta_0 \mbo{\nu}, \qquad \Vert \mbo{\pi} \Vert^2_q = \halb \mbo{\tau}^2 {\rm e}^{4 \nu} \bar{\mu}^2_{q_0} + \Vert \varrho \Vert^2_{q_0}.
\end{align*}
The Hamiltonian constraint can now be transformed to the elliptic operator
\begin{align*}
H &{}= \bar{\mu}^{-1}_{q_0} \biggl({\rm e}^{-2 \mbo{\nu}} \Vert \varrho \Vert^2_{q_0} - \halb \mbo{\tau}^2 {\rm e}^{2 \mbo{\nu}} \bar{\mu}^2_{q_0} + \halb p_A p^A\biggr) \notag\\
&\quad{}+ \bar{\mu}_{q_0} \biggl(2 \Delta_0 \mbo{\nu} + \halb h_{AB} q_0^{ab} \ptl_a U^A \ptl_b U^B\biggr), \qquad (\Sigma, q_0),
\end{align*}
where
\begin{align*}
\Delta_0 \mbo{\nu} := \frac{1}{\bar{\mu}_{q_0}} \ptl_b\bigl(q^{ab}_0 \bar{\mu}_{q_0} \ptl_b \mbo{\nu}\bigr).
\tag*{\qed}
\end{align*}
\renewcommand{\qed}{}
\end{proof}

The conditions for the dimensional reduction above are modeled along the Kerr metric \eqref{BL-Kerr}.
Let us now consider the corresponding field equations for the Kerr metric. It follows that for the Kerr wave map
\begin{align*}
 U \colon\ (M, g) \to (N, h)
 \end{align*}
 we have $p_1= p_2 \equiv 0$
 and $ \mbo{\pi}^{ab} \equiv 0$. As a consequence, the dimensionally reduced field equations for the Kerr metric \eqref{BL-Kerr} are
 \begin{align}
 \ptl_a \bigl(N \bar{\mu}_q q^{ab} U^A\bigr) + N \bar{\mu}_q \leftexp{(h)}{\Gamma}^{A}_{BC} q^{ab} \ptl_a U^B \ptl_b U^C&{}=0 \label{Kerr-p-dot}
 \end{align}
and
\begin{gather*}
 \bar{\mu}_q \bigl(\leftexp{(q)}{\grad}^b \, \leftexp{(q)}{\grad}^a N - q^{ab} \, \leftexp{(q)}{\grad}_c \leftexp{(q)}{\grad}^c N \bigr) \notag\\
\qquad{} + \halb N \bar{\mu}_q \biggl(q^{ac} q^{bd} - \halb q^{ab} q^{cd}\biggr) h_{AB} \ptl_a U^A \ptl_b U^B=0.
\end{gather*}
The Hamiltonian constraint
\begin{align*}
H = \bar{\mu}_q \biggl(-R_q + \halb h_{AB} q^{ab} \ptl_a U^A \ptl_b U^B\biggr) =0
\end{align*}
for $a$, $b$ and $A, B, C = 1, 2$.
The scalar $\mbo{\tau}$ is the mean curvature of the embedding $\Sigma \hookrightarrow M$, whose evolution is governed by the equation
\begin{align}\label{meancurv}
\ptl_t \mbo{\tau} = - \leftexp{(q)}{\grad}_a \leftexp{(q)}{\grad}^a N + N \bar{\mu}^{-1}_q \biggl(\Vert \mbo{\pi} \Vert_q^2 + \halb p_A p^A\biggr).
\end{align}
Following the notation introduced in \cite{kaluz1}, the evolution equation \eqref{meancurv} can be represented as
\begin{align*}
{\rm e}^{2\mbo{\nu}} \ptl_t \mbo{\tau} = - \Delta_0 N + N \mathfrak{q},
\end{align*}
where
\begin{align*}
\mathfrak{q} := \bar{\mu}^{-1}_q {\rm e}^{-2\mbo{\nu}}\biggl(\Vert \varrho \Vert^2_q + \halb \mbo{\tau}^2 {\rm e}^{4\mbo{\nu}} \bar{\mu}_q + \halb p_A p^A \biggr),
\end{align*}
where we again used the splitting expression \eqref{split-pi}. It follows that for the Kerr metric \eqref{BL-Kerr}
\begin{align}
\mbo{\tau} = \ptl_t \mbo{\tau} \equiv 0, \qquad \varrho \equiv 0 \qquad \text{and} \qquad \Delta_q N =0. \label{kerr-maximal}
\end{align}
The equation \eqref{qdot} can be decomposed as
\[
\ptl_t (\bar{\mu}_q) = N {\rm Tr}_q (\mbo{\pi}) - \halb \bar{\mu}_q q_{ab}\bigl(\leftexp{(q)}{\grad}^a N^b + \leftexp{(q)}{\grad}^b N^a \bigr)
\]
and the evolution of the densitized inverse metric
\[
\ptl_t \bigl(\bar{\mu}_q q^{ab}\bigr) = 2 N \biggl(\mbo{\pi}^{ab} - \halb q^{ab} {\rm Tr}_q(\mbo{\pi})\biggr) + \bar{\mu}_q \bigl(\leftexp{(q)}{\grad}^a N^b + \leftexp{(q)}{\grad}^b N^a - q^{ab} \leftexp{(q)}{\grad}_c N^c \bigr).
\]

\section[A Hamiltonian formalism for axially symmetric metric perturbations]{A Hamiltonian formalism for axially symmetric\\ metric perturbations}\label{section3}

In this section, we shall calculate the field equations and the Lagrangian and Hamiltonian variational principles for linear perturbation equations of the $2 + 1$ Einstein-wave map system. Consider a smooth curve
\begin{align*}
\mbo{\gamma}_s \colon\ [0, 1] \to C_{{\rm EWM}}
\end{align*}
parametrized by $s$ in the tangent bundle of configuration space $C_{{\rm EWM}}$ of the Einstein-wave map system. Like previously, we shall start with the wave map system. Let $U_s \colon (M, g) \to (N, h)$ be a 1-parameter family of maps generated by the flow along $\mbo{\gamma}_s$ such that
\begin{align*}
&U_0 \equiv U, \\
& U_s \equiv U, \qquad \textnormal{outside a compact set $\Omega \subset M$},
\end{align*}
and $ U' := D_{\mbo{\gamma}_s} \cdot U_s \big\vert_{s=0}$, where $U \colon (M, g) \to (N, h)$ is a given (e.g., Kerr) wave map. The~deformations along $\mbo{\gamma}_s$ can be manifested, for instance, by the exponential map $\operatorname{Exp}(s U)$. In the following, with a slight abuse of notation, we shall denote the manifestations of the deformations along $\mbo{\gamma}_s$ for the wave map $U \colon (M,g) \to (N, h)$, by $\mbo{\gamma}_s$ itself. Let us now denote the deformations along $\mbo{\gamma}_s$ of a point at $s=0$ in the tangent bundle of the wave map configuration space $C_{{\rm WM}}$ as~follows:
\begin{align*}
C'_{ {\rm WM}} := \bigl\{
U^{'A} = D_{\mbo{\gamma}_s} \cdot U_s^{A} \big\vert_{s=0}, \, \dot{U}^{'A} = D_{\mbo{\gamma}_s} \cdot \dot{U}_s^{A} (s) \big\vert_{s=0} \bigr\}.
\end{align*}
Now consider the Lagrangian action of wave map
\begin{align} \label{wm-action-again}
L_{{\rm WM}} (C_{{\rm WM}}) = -\halb \int \bigl(g^{\mu \nu}h_{AB} \ptl_\mu U^A \ptl_\nu U^B\bigr) \bar{\mu}_g.
\end{align}
For simplicity, we shall denote $L_{{\rm WM}}(\gamma(s))$ as $L_{{\rm WM}}(s)$. We have
\begin{align}\label{1-var-wm}
D_{\mbo{\gamma}_s} \cdot L_{{\rm WM}}(s) = \int h_{AB} \bigl(\square_g U^A + \leftexp{(h)}{\Gamma}^A_{BC} g^{\mu \nu} \ptl_\mu U^B \ptl_\nu U^C\bigr) U^{'B} \bar{\mu}_g,
\end{align}
where we have used the identity
\begin{align} \label{christof-intro}
& g^{\mu \nu} h_{AB}(U) \ptl_\mu U^A \ptl_\nu U^{'B} + \halb g^{\mu \nu}\ptl_C h_{AB} \ptl_\mu U^A \ptl_\nu U^B U^{'C} \notag\\
&\qquad{} = - h_{AB} U^{'B} \bigl(\square_g U^A + \leftexp{(h)}{\Gamma}^A_{BC} g^{\mu \nu} \ptl_\mu U^B \ptl_\nu U^C\bigr)
\end{align}
modulo boundary terms (see, e.g., \cite[pp.~19--20]{diss_13}). The following geometric construction shall be useful to represent our formulas compactly \cite{Misn_78}. Firstly, let us define the notions of induced tangent bundle and the associated `total' covariant derivative on the target $(\mathbb{N}, h)$, under the wave mapping $U \colon M \to N$. The induced tangent bundle $T_U N$ on $M$ consists of the 2-tuple~$(x, y)$, where $x \in M$ and $y \in T_{U(x)} N$, with the bundle projection
\begin{align*}
P \colon\ T_U N \to M, \qquad
(x, y) \to x.
\end{align*}
Consider the vector field $\dot{V}_s \in TM$, then the image of $\dot{V}_s$ under the wave map $U$ is a vector field \smash{$\dot{V}_s^A = \ptl_s U^A$} in a local coordinate system of $(N, h)$. As a consequence, we can define a covariant derivative on the induced bundle
\begin{align*}
\leftexp{(h)}{\grad}_\mu \dot{V}^A_s := \ptl_\mu \dot{V}_s^A + \leftexp{(h)}{\Gamma}^A_{BC} \dot{V}_s^B \ptl_\mu U^C.
\end{align*}
It may be verified explicitly that the induced connection is metric compatible $\leftexp{(h)}{\grad}_A h^{AB} \equiv 0$. Likewise, for a `mixed' tensor
\begin{align*}
&\Lambda := \Lambda^A_\mu \, \ptl_{x^A} \otimes {\rm d}x^\mu,
\\
&\leftexp{(h)}{\grad}_\nu \Lambda_\mu^A := \leftexp{(g)}{\grad}_\nu \Lambda_\mu^A + \leftexp{(h)}{\Gamma}^A_{BC} \Lambda^B_\mu \ptl_\nu U^C.
\end{align*}
In particular, for $\mbo{e}_B \in TN$, the second covariant derivative
\begin{align*}
\leftexp{(h)}{\grad}_ \mu \leftexp{(h)}{\grad}_\nu \mbo{e}_B = \ptl_\mu \bigl(\Gamma^A_{\nu B} \mbo{e}_A\bigr) - \leftexp{(g)}{\Gamma}^\a_{\mu \nu} \leftexp{(h)}{\Gamma}^A_{\a B} \mbo{e}_A + \leftexp{(h)}{\Gamma}^A_{\mu B} \leftexp{(h)}{\Gamma}^C_{\nu A} \mbo{e}_C
\end{align*}
provides the curvature for the induced connection
\begin{align*}
[ \grad_\mu, \grad_\nu ] \mbo{e}_B = R^A_{\,\,\,\, B \mu \nu} \, \mbo{e}_A.
\end{align*}
Now consider the `mixed' second covariant derivatives
\[
\leftexp{(h)}{\grad}_\mu \leftexp{(h)}{\grad}_A \mbo{e}_B \qquad \text{and} \qquad \leftexp{(h)}{\grad}_A \leftexp{(h)}{\grad}_\mu \mbo{e}_B.
\]
In view of the fact that $\mbo{e}_B$ and \smash{$\leftexp{(h)}{\grad}_A \mbo{e}_B $} do not have components in the tangent bundle of the domain $M$, the quantities
\[
\ptl_\mu U^C\leftexp{(h)}{\grad}_C \leftexp{(h)}{\grad}_A \mbo{e}_B \qquad \text{and} \qquad \leftexp{(h)}{\grad}_A \bigl(\ptl_\mu U^C \leftexp{(h)}{\grad}_C\bigr) \mbo{e}_B
\]
are equivalent to
\[
\leftexp{(h)}{\grad}_\mu \leftexp{(h)}{\grad}_A \mbo{e}_B \qquad \text{and} \qquad \leftexp{(h)}{\grad}_A \leftexp{(h)}{\grad}_\mu \mbo{e}_B, \qquad \text{respectively}.
\]
We have
\begin{subequations} \label{mixed-curv}
\begin{gather}
 U'^A \, \leftexp{(h)}{\grad}_A \leftexp{(h)}{\grad}_\mu \mbo{e}_B = U'^A \, \leftexp{(h)}{\grad}_A \bigl(\ptl_\mu U^C \, \leftexp{(h)}{\grad}_C \mbo{e}_B \bigr)\\
 \hphantom{U'^A \, \leftexp{(h)}{\grad}_A \leftexp{(h)}{\grad}_\mu \mbo{e}_B }{}
 = U'^A \ptl_\mu U^C \bigl(\ptl_A \leftexp{(h)}{\Gamma}^D_{CB} + \leftexp{(h)}{\Gamma}^D_{AE} \leftexp{(h)}{\Gamma}^E_{CB}\bigr) +
 U'^A \leftexp{(h)}{\grad}_A \ptl_\mu U^C \leftexp{(h)}{\grad}_C \mbo{e}_B,\notag
 \intertext{likewise}
 \ptl_\mu U^A \, \leftexp{(h)}{\grad}_A \bigl(U'^C\leftexp{(h)}{\grad}_C \mbo{e}_B\bigr) \notag\\
\qquad{} = \ptl_\mu U^A U'^C \leftexp{(h)}{\grad}_A \leftexp{(h)}{\grad}_C \mbo{e}_B + \ptl_\mu U^A \leftexp{(h)}{\grad}_A U'^C \leftexp{(h)}{\grad}_C \mbo{e}_B \notag\\
 \qquad{} = \ptl_\mu U^A U'^C \bigl(\ptl_C \leftexp{(h)}{\Gamma}^D_{AB} + \leftexp{(h)}{\Gamma}^D_{CE} \leftexp{(h)}{\Gamma}^E_{AB}\bigr)
 + \ptl_\mu U^A \leftexp{(h)}{\grad}_A U'^C \leftexp{(h)}{\grad}_C \mbo{e}_B,
\end{gather}
\end{subequations}
so that we have
\begin{align*}
U'^A \, \leftexp{(h)}{\grad}_A \leftexp{(h)}{\grad}_\mu \mbo{e}_B - \ptl_\mu U^A \, \leftexp{(h)}{\grad}_A \bigl(U'^C\leftexp{(h)}{\grad}_C \mbo{e}_B\bigr) = \leftexp{(h)}{R}^D_{\,\,\,\, B A \mu} \mbo{e}_D U'^A.
\end{align*}
This `mixed' derivative construction is relevant for our wave map deformations.
Let us assume that
\begin{subequations} \label{s-t-coordinates}
 \begin{align}
 & \bigl[ \ptl_\b U, U' \bigr] \equiv 0
 \intertext{from which, it follows that}
 &\ptl_\b U^A \, \leftexp{(h)}{\grad}_A U'^B - U'^A\,\leftexp{(h)}{\grad}_A \ptl_\b U^B \equiv 0.
 \end{align}
 \end{subequations}
Now consider another analogous curve $\mbo{\gamma}_\lambda$. The quantity \smash{$D^2_{\mbo{\gamma}_\lambda \mbo{\gamma}_s} \cdot L_{{\rm WM}} $}
involves the following terms:
\begin{align}\label{2-var-start}
\square_g U'^{A} + \ptl_{U^D} \leftexp{(h)}{\Gamma}^A_{B C} g^{\mu \nu} \ptl_\mu U^B \ptl_\nu U^C U'^D + 2 \leftexp{(h)}{\Gamma}^A_{BC} g^{\mu \nu} \ptl_\mu U'^B \ptl_\nu U^C.
\end{align}
Assuming that the Kerr wave map is a critical point of \eqref{1-var-wm} at $s=0$, the expression \eqref{wm-action-again} can consecutively be transformed as follows:
\begin{align*}
&{}= \square_g U'^{A} + \ptl_{U^D} \leftexp{(h)}{\Gamma}^A_{B C} g^{\mu \nu} \ptl_\mu U^B \ptl_\nu U^C U'^D + 2 \leftexp{(h)}{\Gamma}^A_{BC} g^{\mu \nu} \ptl_\mu U'^B \ptl_\nu U^C \notag\\
&\quad{} + \leftexp{(h)}{\Gamma}^{A}_{BC} U'^B \bigl(\square_g U^C + \leftexp{(h)}{\Gamma}^{C}_{DE} g^{\a\b} \ptl_\a U^D \ptl_\b U^E\bigr),
\end{align*}
which can be transformed to
\begin{align*} 
&g^{\mu \nu} U'^C \, \leftexp{(h)}{\grad}_C \leftexp{(h)}{\grad}_\mu \ptl_\nu U^A \notag\\
&\qquad= g^{\mu \nu} U'^C \bigl(\ptl_C \bigl(\leftexp{(h)}{\grad}_\mu \ptl_\nu U^A\bigr) - \leftexp{(h)}{\Gamma}^D_{C \mu} \leftexp{(h)}{\grad}_D \ptl_\nu U^A + \leftexp{(h)}{\Gamma}^{A}_{CD} \leftexp{(h)}{\grad}_\mu \ptl_\nu U^A \bigr).
\end{align*}
Now consider the operator
\begin{align*} 
g^{\mu\nu}\leftexp{(h)}{\grad}_\mu \bigl(\leftexp{(h)}{\grad}_C \ptl_\nu U^A\bigr) &{}= g^{\mu \nu} \leftexp{(g)}{\grad}_\mu \bigl(\leftexp{(h)}{\grad}_C \ptl_\nu U^A\bigr) \notag\\
&\quad{} - g^{\mu \nu} \bigl(\leftexp{(h)}{\Gamma}^D_{\mu C} \grad_D \ptl_\nu U^A + \leftexp{h}{\Gamma}^A_{\mu D} \leftexp{(h)}{\grad}_C \ptl_\nu U^D \bigr)
\end{align*}
and performing the computations analogous to \eqref{mixed-curv}, we get that \eqref{2-var-start} is equivalent to
\begin{align*}
\leftexp{(h)}{\square}\, U'^A + \leftexp{(h)}{R}^A_{\,\,\,\,BCD} g^{\mu \nu} \ptl_\mu U^B \ptl_\nu U^D U'^C,
\end{align*}
where
\begin{align*}
\leftexp{(h)}{\square} \, U'^A :={}& g^{\mu \nu} \leftexp{(h)}{\grad}_\mu \leftexp{(h)}{\grad}_\nu U'^A \notag\\
 ={}& \ptl_\mu \bigl(\leftexp{(h)}{\grad}_\nu U'^A\bigr) - \leftexp{(g)}{\Gamma}^{\gamma}_{\mu \nu} \leftexp{(h)}{\grad}_\gamma U'^A + \leftexp{(h)}{\Gamma}^A_{\mu C} \bigl(\leftexp{(h)}{\grad}_\nu U'^C\bigr),
 \end{align*}
which can be represented in terms of the covariant wave operator $\bigl(g^{\mu \nu}\leftexp{(g)}{\grad}_{\mu} \ptl_\nu U'^A\bigr)$ in the domain metric $g$ as
\begin{align*}
&{}= \square_g U'^A + g^{\mu \nu} \bigl(\ptl_\mu \bigl(\leftexp{(h)}{\Gamma}^A_{\nu C} U'^C\bigr) - \leftexp{(g)}{\Gamma}^\gamma_{\mu \nu} \leftexp{(h)}{\Gamma}^A_{\gamma C} U'^C + \leftexp{(h)}{\Gamma}^A_{\mu C} \ptl_\nu U'^C+ \leftexp{(h)}{\Gamma}^A_{\mu C} \leftexp{(h)}{\Gamma}^C_{\nu D} U'^D \bigr)
\end{align*}
and $\leftexp{(h)}{R}$ is the induced Riemannian curvature tensor
\[
 \leftexp{(h)}{R}^A_{\,\,\,\,BCD} = \ptl_C\leftexp{(h)}{\Gamma}^A_{DB} - \ptl_D\leftexp{(h)}{\Gamma}^A_{CB} + \leftexp{(h)}{\Gamma}^A_{CE} \leftexp{(h)}{\Gamma}^E_{DB} - \leftexp{(h)}{\Gamma}^A_{DE}\leftexp{(h)}{\Gamma}^E_{CB}.
\]
Now for the Kerr wave map critical point of $D_{\mbo{\gamma}_s} \cdot L_{{\rm WM}}$ at $s=0$, we then have
\begin{align} \label{div-var}
D^2_{\mbo{\gamma}_\lambda \mbo{\gamma}_s} \cdot L_{{\rm WM}} (s=0) = \int h_{AB} U'^B \bigl(\leftexp{(h)}{\square} U'^A + \leftexp{(h)}{R}^A_{\,\,\,\,BCD} g^{\mu \nu} \ptl_\mu U^B \ptl_\nu U^D U'^C\bigr) \bar{\mu}_g
\end{align}
as the Lagrangian variational principle for small linear deformations of the wave map $U_s \colon (M, g)\allowbreak \to (N, h)$.
In view of the divergence identity
\begin{align*}
\leftexp{(h)}{\grad}_\mu \bigl(h_{AB} U'^B \, \leftexp{(h)}{\grad}^\mu U'^A\bigr)= h_{AB} \leftexp{(h)}{\grad}^\mu U'^A \leftexp{(h)}{\grad}^{\mu} U'^B +
U'^B \leftexp{(h)}{\grad}_{\mu} \bigl(h_{AB} \leftexp{(h)}{\grad}^\mu U'^A\bigr)
\end{align*}
the variational principle \eqref{div-var} can equivalently be transformed into a self-adjoint variational~form
\begin{align}
& D^2_{\mbo{\gamma}_\lambda \mbo{\gamma}_s} \cdot L_{{\rm WM}} (s=0) \notag\\
&\qquad{}= - \halb \int \bigl(g^{\mu \nu}\, h_{AB}\leftexp{(h)}{\grad}_\mu U'^A \leftexp{(h)}{\grad}_\mu U'^B - h_{AB} U'^B \, \leftexp{(h)}{R}^A_{BCD} g^{\mu \nu} \ptl_\mu U^B \ptl_\nu U^C U'^D \bigr) \bar{\mu}_g.\!\!\label{quad-var}
\end{align}
Let us now calculate the Hamiltonian field equations for the linear perturbation theory, using the ADM decomposition of the background $(M, g)$
\begin{align*}
g = -N^2 {\rm d}t^2 + q_{ij} \bigl({\rm d}x^i + N^i {\rm d}t\bigr) \otimes \bigl({\rm d}x^j + N^j {\rm d}t\bigr).
\end{align*}
Let us denote the variational principle \eqref{div-var} and \eqref{quad-var} by $L_{{\rm WM}}(U')$. The Legendre transformation on \smash{$C'_{{\rm WM}}$} results in the phase space
\begin{align*}
X'_{{\rm WM}} := \bigl\{ \bigl(U'^{A}, p'_A\bigr) \bigr\}, \qquad \text{where $\bigl(U'^A, p'_A\bigr)$ are canonical pairs,}
\end{align*}
the conjugate momenta \smash{$p'_A = D_{\mbo{\gamma}} \cdot (p_{A} (s)) \vert_{s=0}$} are given by
\begin{align}
p'_A&{}= \frac{1}{N} \bar{\mu}_q h_{AB}(U) \bigl(\ptl_t U'^B + \leftexp{(h)}{\Gamma}^B_{t C} U'^C\bigr) - \frac{\bar{\mu}_q}{N} h_{AB}(U) \mathcal{L}_N U'^B \notag\\
&\quad{}- \frac{\bar{\mu}_q}{N} h_{AB}(U) N^a \leftexp{(h)}{\Gamma}^B_{a C}\label{u-prime-dot-2}
\end{align}
on account of the fact that the time derivative terms in the second term of \eqref{quad-var} only occur for background wave map $U$. Now then, using the quantity
\begin{align*}
h_{AB} \ptl_t U'^B = \bar{\mu}^{-1}_q N p'_A - h_{AB}(U)\leftexp{(h)}{\Gamma}^B_{tC} U'^C + h_{AB} \mathcal{L}_N U'^B + h_{AB} N^a \leftexp{(h)}{\Gamma}^B_{a C} U'^C,
\end{align*}
the Lagrangian and Hamiltonian densities, \smash{$\mathcal{L}'_{{\rm WM}}$} and \smash{$\mathcal{H}'_{{\rm WM}}$} can be expressed in terms of the phase space variables \smash{$X'_{{\rm WM}} = \bigl\{ \bigl(U'^A, p'_A\bigr) \bigr\}$} in a recognizable ADM form as follows:
\begin{align*}
\mathcal{L}'_{{\rm WM}} (U') &:=
\halb p'_A \ptl_t U'^A - \halb p'_A \mathcal{L}_N U'^A + \bigl(\leftexp{(h)}{\Gamma}^A_{tC} U'^C - \leftexp{(h)}{\Gamma}^A_{a C} N^a U'^C \bigr)\halb p'_A \notag\\
&\quad{}-\halb h_{AB}(U) N \bar{\mu}_q q^{ab} \bigl(\,\leftexp{(h)}{\grad}_a U^A \, \leftexp{(h)}{\grad}_b U^B \bigr) \notag\\
&\quad{}+ N \bar{\mu}_q h_{AE}(U) U'^A R^E_{\,\,\,\,BCD} q^{a b} \ptl_a U^B U'^C \ptl_b U^D \notag\\
&\quad{} -\frac{1}{N} \bar{\mu}_q h_{AE}(U) U'^A R^E_{\,\,\,\,BCD} \, \mathcal{L}_N U^B U'^C \mathcal{L}_N U^D \notag\\
&\quad{}- \frac{1}{N} \bar{\mu}_q h_{AE}(U) U'^A R^E_{\,\,\,\,BCD} \ptl_t U^B U'^C \ptl_t U^D \notag\\
&\q{} + \frac{2}{N}\bar{\mu}_q h_{AE}(U) U'^A R^E_{\,\,\,\,BCD} q^{a b} \ptl_t U^B U'^C \mathcal{L}_N U^D,
\end{align*}
likewise the Hamiltonian energy density can be expressed as
\begin{align*}
\mathcal{H}'_{{\rm WM}}
& := \halb p'_A \ptl_t U'^A + \halb p'_A \mathcal{L}_N U'^A -\bigl(\leftexp{(h)}{\Gamma}^A_{tC} U'^C - \leftexp{(h)}{\Gamma}^A_{a C} N^a U'^C \bigr)\halb p'_A \notag\\
&\quad{}+\halb h_{AB}(U) N \bar{\mu}_q q^{ab} \bigl(\,\leftexp{(h)}{\grad}_a U^A \, \leftexp{(h)}{\grad}_b U^B \bigr) \notag\\
&\quad{} - N \bar{\mu}_q h_{AE}(U) U'^A R^E_{\,\,\,\,BCD} q^{a b} \ptl_a U^B U'^C \ptl_b U^D \notag\\
&\quad{} +\frac{1}{N} \bar{\mu}_q h_{AE}(U) U'^A R^E_{\,\,\,\,BCD} \, \mathcal{L}_N U^B U'^C \mathcal{L}_N U^D \notag\\
&\quad{} + \frac{1}{N} \bar{\mu}_q h_{AE}(U) U'^A R^E_{\,\,\,\,BCD} \ptl_t U^B U'^C \ptl_t U^D \notag\\
&\quad{} - \frac{2}{N}\bar{\mu}_q h_{AE}(U) U'^A R^E_{\,\,\,\,BCD} q^{a b} \ptl_t U^B U'^C \mathcal{L}_N U^D,
\end{align*}
so that the critical point of $L'_{{\rm WM}}$
\begin{align*}
L'_{{\rm WM}} = \int^{t_2}_{t_1} \int_{\Sigma} \mathcal{L}'_{{\rm WM}} \,\, {\rm d}^2x {\rm d}t
\end{align*}
with respect to $U'^A$ gives the field equation
\begin{align}
\ptl_t p'_A &{}= \mathcal{L}_N p'_A + \bigl(\leftexp{(h)}{\Gamma} ^C_{tA} - \leftexp{(h)}{\Gamma}^C_{aA} N^a\bigr)p'_C + h_{AB} \leftexp{(h)}{\grad}_a \bigl(N \bar{\mu}_q q^{ab} \grad_b U'^B\bigr) \notag\\
&\quad{} + N \bar{\mu}_q h_{AE}(U) R^E_{\,\,\,\,BCD} q^{a b} \ptl_a U^B U'^C \ptl_b U^D \notag\\
&\quad{} -\frac{1}{N} \bar{\mu}_q h_{AE}(U) R^E_{\,\,\,\,BCD} \, \mathcal{L}_N U^B U'^C \mathcal{L}_N U^D
 - \frac{1}{N} \bar{\mu}_q h_{AE}(U) R^E_{\,\,\,\,BCD} \ptl_t U^B U'^C \ptl_t U^D \notag\\
&\quad{} + \frac{2}{N}\bar{\mu}_q h_{AE}(U) R^E_{\,\,\,\,BCD} q^{a b} \ptl_t U^B U'^C \mathcal{L}_N U^D.\label{p-prime-dot-2}
\end{align}
Analogously, it is straightforward to note that the field equations \eqref{u-prime-dot-2} and \eqref{p-prime-dot-2} are generated by the Hamiltonian $H'_{{\rm WM}} = \int \mathcal{H}'_{{\rm WM}} {\rm d}^2 x$, i.e.,
\begin{align*}
D_{p'^A} \cdot H'_{{\rm WM}} = \ptl_t U'^A, \qquad D_{U'^A} \cdot H'_{{\rm WM}} = - \ptl_t p'_A,
\end{align*}
respectively. Specializing to our stationary Kerr background metric, we have
\begin{gather*}
 h_{AB}(U) \ptl_t U'^B= \bar{\mu}^{-1}_q N p'_A, \\
\ptl_t p'_A = h_{AB} (U) \leftexp{(h)}{\grad}_a \bigl(N \bar{\mu}_q q^{ab} \grad_b U'^B\bigr) + N \bar{\mu}_q h_{AE}(U) R^E_{\,\,\,\,BCD} q^{a b} \ptl_a U^B U'^C \ptl_b U^D.
\end{gather*}

 Let us now construct the variational principle for the fully coupled Einstein-wave map perturbations.
Now suppose
\begin{align*}
q'_{ab} = D_{\mbo{\gamma}_s} \cdot (q_{ab} (s)) \vert_{s=0}, \qquad \mbo{\pi}'_{ab} = D_{\mbo{\gamma}_s} \cdot (\mbo{\pi}_{ab} (s)) \vert_{s=0},
\end{align*}
let us then denote the phase space corresponding to the perturbative theory of Kerr metric as~\smash{$X'_{{\rm EWM}}$}:
\begin{align*}
X' := \bigl\{ \bigl(U'^{A}, p'_A\bigr), \bigl(q'_{ab}, \mbo{\pi}'_{ab}\bigr) \bigr\}.
\end{align*}
Using the gauge-condition that the densitized metric $\bar{\mu}^{-1}_q q_{ab}$ is fixed, we can construct $D_{\mbo{\gamma}_s} \cdot H $ and $D_{\mbo{\gamma}_s} \cdot H_a $ at $s=0$
\begin{align*}
H':={}& D_{\mbo{\gamma}_s} \cdot H (s=0)=- \bar{\mu}^{-1}_q q_{ab} \mbo{\pi}'^{ab} \notag\\
 & - \bigl(\bar{\mu}_q R_q\bigr)'+ \halb \bar{\mu}_q q^{ab} \ptl_{U^C} h_{AB} (U) \ptl_a U^A \ptl_b U^B U'^C
\bar{\mu}_q q^{ab} h_{AB}(U) \ptl_a U'^A \ptl_b U^B
\end{align*}
and
\[
H'_a := D_{\mbo{\gamma}_s} \cdot H_a (s=0) = \leftexp{(q)}{\grad}_b \mbo{\pi}'^b_a + p'_A U^A,
\]
where
\[
\bigl(\bar{\mu}_q R_q\bigr)' = \bar{\mu}_q \bigl(- \Delta_q q' + \leftexp{(q)}{\grad}^a \leftexp{(q)}{\grad}^b q'_{ab} \bigr),\qquad q' := {\rm Tr}_q q'_{ab}.
\]
Again, after imposing that the Kerr metric is a critical point at $s=0$, we get
\begin{gather}
 D^2_{\mbo{\gamma}_\lambda \mbo{\gamma}_s} \cdot H(s=0) = \bar{\mu}^{-1}_q \bigl(2 \Vert \mbo{\pi}' \Vert_q^2 - 2\bigl(q_{ab} \mbo{\pi}'^{ab}\bigr)^2 + p'_A p'^A\bigr) \notag\\
 \hphantom{D^2_{\mbo{\gamma}_\lambda \mbo{\gamma}_s} \cdot H(s=0) =}{}
 - \bigl(\bar{\mu}_q R_q\bigr)'' + \halb \bar{\mu}_q q^{ab} \ptl^2_{U^D U^C} h_{AB}(U) \ptl_a U^A \ptl_b U^B U'^C U'^D \notag\\
 \hphantom{D^2_{\mbo{\gamma}_\lambda \mbo{\gamma}_s} \cdot H(s=0) =}{}
 + \bar{\mu}_q q^{ab} \ptl_{U^C} h_{AB}(U) \ptl_a U'^A \ptl_b U^B U'^C \notag\\
 \hphantom{D^2_{\mbo{\gamma}_\lambda \mbo{\gamma}_s} \cdot H(s=0) =}{}
 + \halb \bar{\mu}_q q^{ab} \ptl_{U^C} h_{AB}(U) \ptl_a U^A \ptl_b U^B U''^C \notag\\
 \hphantom{D^2_{\mbo{\gamma}_\lambda \mbo{\gamma}_s} \cdot H(s=0) =}{}
 + \bar{\mu}_q q^{ab} \ptl_{U^C} h_{AB} (U) \ptl_a U'^A \ptl_b U^B U'^C \notag\\
 \hphantom{D^2_{\mbo{\gamma}_\lambda \mbo{\gamma}_s} \cdot H(s=0) =}{}
 + \bar{\mu}_q q^{ab} h_{AB}(U) \ptl_a U''^A \ptl_b U^B + \bar{\mu}_q q^{ab} h_{AB} (U) \ptl_a U'^A \ptl_b U'^B ,\label{H''}\\
D^2_{\mbo{\gamma}_\lambda \mbo{\gamma}_s} \cdot H_a (s=0)
= -4 \leftexp{(q')}{\grad}_b \mbo{\pi}'^b_a -2 \leftexp{(q)}{\grad}_b \mbo{\pi}''^b_a + 2p'_A \ptl_a U'^A + \ptl_a U^A p''_A, \notag 
\end{gather}
where
\[
\leftexp{(q')}{\grad}_b V^a := \ptl_b V^a+ \halb q^{ad} \bigl(\leftexp{(q)}{\grad}_b q'_{dc} + \leftexp{(q)}{\grad}_c q'_{bd} -\leftexp{(q)}{\grad}_d q'_{bc} \bigr) V^c.
\]

We arrive at the following theorem.

\begin{Corollary}\label{second-var}
Suppose $X'$ is the first variation phase space, then the field equations for the dynamics in $X'$ are given by the variational principle
\begin{align*}
J_{{\rm EWM}} \bigl(X'_{{\rm EWM}}\bigr) := \int \biggl(\mbo{\pi}'^{ab} \ptl_t q'_{ab} + p'_A U^{'A} - \halb N H'' - N' H' - N'_a H'_a \biggr),
\end{align*}
where $H''$, $H'$, $H'_a$ are \eqref{H''}, $D_{\mbo{\gamma}_s} \cdot H$ and $D \cdot H_a$ at $s=0$, respectively, $N' := D_{\mbo{\gamma}_s} \cdot N \big\vert_{s=0}$ and~${N'_a := D_{\mbo{\gamma}_s} \cdot N_a \big\vert_{s=0}}$.
\end{Corollary}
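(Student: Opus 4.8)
The plan is to realize the stated principle as the (suitably normalized) second variation of the nonlinear Hamiltonian action \eqref{ham-var} about the Kerr critical point, and then to read off its Euler--Lagrange equations and match them to the linearized field equations already assembled in this section. Since the Kerr data are a critical point of \eqref{ham-var}, the first variation vanishes identically, so the leading nontrivial quantity is $\halb D^2_{\mbo{\gamma}_\lambda \mbo{\gamma}_s} J_{{\rm EWM}}$, and the claim is precisely that this equals $J_{{\rm EWM}}(X')$.

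First I would expand each term of the integrand $\mbo{\pi}^{ab}\ptl_t q_{ab} + p_A\ptl_t U^A - NH - N^a H_a$ to quadratic order along the two-parameter family $\mbo{\gamma}_s,\mbo{\gamma}_\lambda$. For the symplectic pair the Leibniz rule produces $\mbo{\pi}''^{ab}\ptl_t q_{ab} + 2\mbo{\pi}'^{ab}\ptl_t q'_{ab} + \mbo{\pi}^{ab}\ptl_t q''_{ab}$; the first and third terms vanish on the Kerr background because $\mbo{\pi}^{ab}\equiv 0$ and, by \eqref{qdot} with vanishing shift, $\ptl_t q_{ab}=0$, leaving $2\mbo{\pi}'^{ab}\ptl_t q'_{ab}$. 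The identical reduction for the wave-map pair, using $p_A\equiv 0$ together with the velocity relation \eqref{u-dot} to conclude $\ptl_t U^A=0$, yields $2p'_A\ptl_t U'^A$.

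The decisive step is the constraint sector, where the fact that the background satisfies the constraints $H=H_a=0$ is used crucially. The quadratic part of $-NH$ is $-(N''H + 2N'H' + NH'')$, and the second-order multiplier term $N''H$ drops exactly because $H=0$; similarly $-N^a H_a$ contributes $-(N''^a H_a + 2N'^a H'_a + N^a H''_a)$, in which $N''^a H_a=0$ by the momentum constraint and $N^a H''_a=0$ because the Weyl--Papapetrou shift of Kerr vanishes, $N^a\equiv 0$. Collecting everything gives $D^2_{\mbo{\gamma}_\lambda \mbo{\gamma}_s} J_{{\rm EWM}} = \int\bigl(2\mbo{\pi}'^{ab}\ptl_t q'_{ab} + 2p'_A\ptl_t U'^A - NH'' - 2N'H' - 2N'^a H'_a\bigr)$, so that $\halb D^2_{\mbo{\gamma}_\lambda \mbo{\gamma}_s} J_{{\rm EWM}}$ reproduces $J_{{\rm EWM}}(X')$ exactly, with the factor $\halb$ attaching only to the $NH''$ term as written and with $H''$, $H'$, $H'_a$ being \eqref{H''}, $D_{\mbo{\gamma}_s}\cdot H$ and $D\cdot H_a$. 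To finish, I would vary $J_{{\rm EWM}}(X')$ in each slot: variation in $p'_A$ and $\mbo{\pi}'^{ab}$ returns the velocity relations for $\ptl_t U'^A$ and $\ptl_t q'_{ab}$; variation in $U'^A$ returns \eqref{p-prime-dot-2}; variation in $q'_{ab}$ returns the linearized $\ptl_t\mbo{\pi}'^{ab}$ law; and variation in the linearized multipliers $N'$, $N'_a$ returns the linearized constraints $H'=0$, $H'_a=0$. This is exactly the assertion that the dynamics in $X'$ are generated by this variational principle.

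The main obstacle is bookkeeping rather than conceptual. Passing from the ``raw'' integrand produced by $D^2$ to the self-adjoint, curvature-coupled form recorded in $H''$ of \eqref{H''} requires the same integrations by parts used between \eqref{div-var} and \eqref{quad-var} together with the expansion of $(\bar{\mu}_q R_q)''$, and one must check that the resulting divergences contribute only boundary terms that leave the interior Euler--Lagrange equations unchanged. Keeping the factors of $\halb$ consistent across the symplectic, curvature-coupling and constraint sectors, and confirming that no surviving $N''$- or $N''^a$-terms reappear once the gauge condition fixing $\bar{\mu}_q^{-1} q_{ab}$ is imposed, is the delicate point; the systematic treatment of precisely these boundary contributions is what is deferred to Section~\ref{section5} and beyond.
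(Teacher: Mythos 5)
Your proposal is correct and takes essentially the same route as the paper: Corollary \ref{second-var} is obtained there precisely as the (halved) second variation of the nonlinear principle \eqref{ham-var} about the Kerr critical point --- the ``classical Jacobian method'' the author attributes to Moncrief --- with the background facts $\mbo{\pi}^{ab}\equiv 0$, $p_A\equiv 0$, $N^a\equiv 0$, $H=H_a=0$ eliminating exactly the terms you drop, and with $H''$, $H'$, $H'_a$ precomputed via the integrations by parts leading from \eqref{div-var} to \eqref{quad-var} and the expansion of $\bigl(\bar{\mu}_q R_q\bigr)''$. Your factor bookkeeping (the $\halb$ attaching only to $NH''$, no surviving $N''$- or $N''^a$-terms) and the final reading-off of the Euler--Lagrange equations agree with the field equations the paper derives from the corollary.
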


The approach used above is the classical Jacobian method, as remarked by Moncrief \cite{Moncrief_74}. Separately, it may be noted that the construction of the wave map field equations is analogous to that of the geodesic deviation equations or the `Jacobi' fields \cite{Misn_78, synge_34}. In view of the fact that the Hamiltonian formulation of the geodesic deviation equations is relatively uncommon, our derivation may also be adapted for this purpose. Finally, we would like to emphasize that our assumption that \eqref{s-t-coordinates} holds, is not (effectively) a restriction in the class of perturbations. In case this condition is relaxed, we shall also pick up the Riemann curvature of the target, but with torsion. The fact that we pick only the curvature term of the target is crucial for our work. We would also like to remark that the deformations which correspond to the coordinate directional derivatives along the curves $\mbo{\gamma}_\lambda$ are equivalent to (induced) covariant deformations on the target, on account of the fact the Kerr wave map is a critical point of~\eqref{wm-action-again}.

The variational principle in Corollary \ref{second-var} and its field equations correspond to a general Weyl--Papapetrou gauge. If we consider further gauge-fixing \eqref{conformal}, where the densitized metric $ \bar{\mu}^{-1}_q q_{ab}$ or equivalently the densitized inverse metric $\bar{\mu}_q q^{ab}$ is fixed, we obtain
\begin{gather*}
H' = \bar{\mu}_{q_0} \bigl(2 \Delta_0 \mbo{\nu}'\bigr) + \halb \bar{\mu}_{q_0} \ptl_{U^C} h_{AB} q_0^{ab} \ptl_a U^A \ptl_b U^B U'^C
+ \bar{\mu}_{q_0} q_0^{ab} h_{AB} \ptl_a U'^A \ptl_b U^B, \notag\\
H'' =\bar{\mu}^{-1}_{q_0} \bigl(2 {\rm e}^{-2\mbo{\nu}} \Vert \varrho' \Vert^2_{q_0} - \tau'^2 {\rm e}^{2\nu} \bar{\mu}^2_{q_0} + p'_A p'^A\bigr)\notag \\
\hphantom{H'' =}{}
+ \bar{\mu}_{q_0} \bigl(2 \Delta_0 \mbo{\nu}'' + \ptl^2_{U^C U^D} h_{AB} q^{ab}_0 \ptl_a U^A \ptl_b U^B U'^C U'^D + \ptl_C h_{AB} q^{ab}_0 \ptl_a U'^A \ptl_b U^B U'^C \notag\\
\hphantom{H'' =+ \bar{\mu}_{q_0} \bigl(}{}
+ \halb \ptl_Ch_{AB} q_{0}^{ab} \ptl_a U^A \ptl_b U^B U''^C + 2 h_{AB} q^{ab}_{0} \ptl_a U''^A \ptl_b U^B \notag\\
\hphantom{H'' =+ \bar{\mu}_{q_0} \bigl(}{}
+ 2h_{AB} q^{ab}_0 \ptl_a U'^A \ptl_bU'^B + 2 \ptl_C h_{ab} q^{ab}_0 \ptl_a U'^A \ptl_b U^B U'^C \bigr).
\end{gather*}
The aim of our work is to construct an energy for the linear perturbative theory of Kerr black hole spacetimes, for which the Hamiltonian formulation is naturally suited. In contrast with the Lagrangian variational principles (e.g., \eqref{EH} and \eqref{EWM-Lag}), the Hamiltonian variation principles are not spacetime diffeomorphism invariant.
In this work, we shall work in the $2 + 1$ maximal gauge condition. We point out that this gauge condition was also used by Dain--de Austria for the extremal case \cite{DA_14}. We shall need the following statement.

\begin{claim} \label{2+1max}
Suppose $N' \in C^{\infty} (\Sigma)$,
\begin{subequations}
\begin{align}
&\Delta_0 N' = 0, \qquad \textnormal{in the interior of $(\Sigma, q_0 )$}, \label{Lap-N}\\
&N' \vert_{\ptl \Sigma}= 0, \label{Lap-bdry}
\end{align}
\end{subequations}
 then $ N' \equiv 0$ on $(\Sigma, q_0)$.
\end{claim}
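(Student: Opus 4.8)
The plan is to read Claim~\ref{2+1max} as a uniqueness statement for the Dirichlet problem of the flat Laplace operator $\Delta_0 u = \bar\mu_{q_0}^{-1}\partial_b\bigl(q_0^{ab}\bar\mu_{q_0}\partial_a u\bigr)$ on the orbit space $(\Sigma, q_0)$, and to establish it by the maximum principle together with an exhaustion that controls the asymptotic end. In the interior of $\Sigma$ the operator $\Delta_0$ is (uniformly) elliptic, so a $\Delta_0$-harmonic function obeys the strong maximum principle and cannot attain an interior maximum or minimum without being constant. First I would fix the interpretation of $\partial\Sigma$: in the Weyl--Papapetrou picture its finite part is the union of the axes (the fixed-point set $\Gamma$) and the horizon, on which hypothesis \eqref{Lap-bdry} gives $N'=0$; the only additional piece is the asymptotic end $\rho^2+z^2\to\infty$. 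The degeneracy of the reduced measure along the axis is the familiar coordinate degeneracy of the rotational quotient and is harmless, since $N'$ is the restriction of a smooth axisymmetric lapse perturbation and \eqref{Lap-N} lifts to a genuine harmonic equation across the axis.

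With the finite boundary handled by \eqref{Lap-bdry}, everything hinges on the behaviour at infinity, which I would supply from the asymptotic flatness of the background and the fact that admissible perturbations agree with the Kerr data outside a compact set: this forces $N'\to 0$ as $\rho^2+z^2\to\infty$. Granting this, I would run the exhaustion. For large $R$ put $\Sigma_R := \Sigma\cap\{\rho^2+z^2\le R^2\}$; on $\partial\Sigma_R$ the function $N'$ either lies on $\partial\Sigma$, where it vanishes, or on the coordinate circle of radius $R$, where $|N'|\le\epsilon(R)$ with $\epsilon(R)\to 0$. The weak maximum principle on the bounded domain $\Sigma_R$ then gives $\sup_{\Sigma_R}|N'|\le\epsilon(R)$, and letting $R\to\infty$ yields $N'\equiv 0$. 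An equivalent and perhaps cleaner route is the Dirichlet-energy identity: multiplying \eqref{Lap-N} by $N'$ and integrating by parts, the boundary contribution over the finite boundary drops because $N'|_{\partial\Sigma}=0$, while the decay kills the flux through the circle at infinity, so that $\int_\Sigma q_0^{ab}\partial_a N'\,\partial_b N'\,\bar\mu_{q_0}=0$; positivity of $q_0$ then forces $\nabla N'\equiv 0$, whence $N'$ is constant and \eqref{Lap-bdry} pins the constant to zero.

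The main obstacle is precisely this control at infinity: without a decay (or at least boundedness) hypothesis the statement is false, since a harmonic function growing linearly can vanish on the axis and the horizon yet be nontrivial. I would therefore be careful to invoke the asymptotic conditions on the lapse perturbation from the Cauchy-problem setup of Section~\ref{section6} to guarantee $N'\to 0$, after which either the maximum-principle exhaustion or the vanishing Dirichlet energy closes the argument. A secondary technical point worth checking is the regularity of $N'$ up to and across the axis, so that the boundary term there genuinely vanishes (and the energy is finite), rather than merely being controlled in an integrable sense.
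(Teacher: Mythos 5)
Your second route is exactly the paper's proof: the author multiplies \eqref{Lap-N} by $N'$, integrates by parts, uses \eqref{Lap-bdry} to eliminate the boundary contribution, obtains $\int \vert \grad_0 N' \vert^2 = 0$, and concludes that $N'$ is constant, with the Dirichlet condition pinning the constant to zero. Your caution about the asymptotic end is well taken --- the paper's one-line argument tacitly assumes the flux through the circle at infinity vanishes (justified by the decay of the lapse perturbation in the Cauchy-problem setup of Section~\ref{section6}), so your explicit treatment of that point is, if anything, more complete than the printed proof.
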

\begin{proof}
If we multiply \eqref{Lap-N} with $N'$ and integrate by parts, we get $\int \vert \grad_0 N \vert^2 =0$ in the interior of $\Sigma$, after using \eqref{Lap-bdry}. It follows that $N'$ is a constant in $\Sigma$.
\end{proof}

The variational principle in Corollary \ref{second-var} now gives the following field equations (for smooth and compactly supported variations)
\begin{gather*}
 h_{AB}(U) \ptl_t U'^B= {\rm e}^{2 \mbo{\nu}}\bar{\mu}^{-1}_{q_0} N p'_A + h_{AB}(U) \mathcal{L}_{N'} U^B, \\ 
\ptl_t p'_A = h_{AB} (U) \leftexp{(h)}{\grad}_a \bigl(N \bar{\mu}_{q} q^{ab} \grad_b U'^B\bigr) + N \bar{\mu}_q h_{AE}(U) R^E_{\,\,\,\,BCD} q^{a b} \ptl_a U^B U'^C \ptl_b U^D
\notag\\ \hphantom{\ptl_t p'_A }{}
=h_{AB} (U) \leftexp{(h)}{\grad}_a \bigl(N \bar{\mu}_{q_0} q_0^{ab} \grad_b U'^B\bigr)
+ N \bar{\mu}_{q_0} h_{AE}(U) R^E_{\,\,\,\,BCD} q_0^{a b} \ptl_a U^B U'^C \ptl_b U^D, \\ 
\ptl_t q'_{ab}= 2N \bar{\mu}^{-1}_{q} {\rm CK}_{ab}\bigl(Y', q\bigr)+ \leftexp{(q)}{\grad} _a N'_b + \leftexp{(q)}{\grad}_b N'_a \notag\\
\hphantom{\ptl_t q'_{ab}}{}
= 2N {\rm e}^{-2 \nu} \bar{\mu}^{-1}_{q_0} {\rm CK}_{ab}\bigl(Y', q_0\bigr)+ \mathcal{L}_{N'} \bigl({\rm e}^{ 2 \mbo{\nu}}(q_0)_{ab}\bigr), \\
\ptl_t \mbo{\pi}'^{ab} = \bigl(\bar{\mu}_q q^{bc} q^{ad}\bigr)' \bigl(\ptl^2_{dc} N - \leftexp{(q)}{\Gamma} ^f_{cd} \ptl_f N\bigr) \notag\\
\hphantom{\ptl_t \mbo{\pi}'^{ab} =}
+ \bar{\mu}_q q^{bc} q^{ad} \bigl(q^{fl} \bigl(\leftexp{(q)}{\grad} _d q'_{l c} + \leftexp{(q)}{\grad} _c q'_{l d}+\leftexp{(q)}{\grad} _l q'_{cd}\bigr) \ptl_f N\bigr) \notag\\
\hphantom{\ptl_t \mbo{\pi}'^{ab} =}
+ \biggl(\halb N \bar{\mu}_q \biggl(q^{ac} q^{bd} - \halb q^{ab} q^{cd}\biggr)\biggr)' h_{AB}(U) \ptl_a U^a \ptl_b U^B \notag\\
\hphantom{\ptl_t \mbo{\pi}'^{ab} =}
+ \halb N \bar{\mu}_q \biggl(q^{ac} q^{bd} - \halb q^{ab} q^{cd}\biggr) \bigl(2 h_{AB} (U) \ptl_a U'^A \ptl_b U^B + \ptl_{U^C} h_{AB} (U) \ptl_a U^A \ptl_b U^B U'^C\bigr)
\end{gather*}
together with the constraints
\begin{align*}
H'= 0 \qquad
\text{and}\qquad
H'_a= \leftexp{(q)}{\grad}_b \mbo{\pi}'^b_a + p'_A \ptl_a U^A= 0, 
\end{align*}
in the $2+1$ maximal gauge.
We have
\[
q'= \operatorname{Tr} q'_{ab}, \qquad \tau' = \bar{\mu}^{-1}_q q_{ab} \mbo{\pi}'^{ab},
\]
and
\[
\ptl_t q' = -2N \tau' + 2 \, \leftexp{(q)}{\grad}^c N'_c, \qquad \ptl_t \tau' = - \Delta_0 N' + N \mathfrak{q}'.
\]
In the $2+1$ maximal gauge (cf.\ Claim \ref{2+1max}),
\[
\ptl_t q' = 2\leftexp{(q)}{\grad}^c N'_c, \qquad \Delta_0 N' =0.
\]
Let us now formally discuss the structures associated to our Hamiltonian framework.
 The phase space \smash{$X'_{{\rm EWM}}$} is such that \smash{$\bigl(q'_{ab}, U'^A\bigr)$} are $C^{\infty}(\Sigma)$ symmetric covariant 2-tensor and smooth vector field respectively and \smash{$\bigl(\mbo{\pi}'^{ab}, p'_A\bigr)$} are $C^{\infty}(\Sigma)$ symmetric 2-tensor densities and scalar density (for each $A$) respectively, which together form the cotangent bundle $T^*\mathcal{M}$, which we had represented as \smash{$X'_{{\rm EWM}}$}. The Hamiltonian and momentum constraint spaces $\mathscr{C}_{H'}$, $\mathscr{C}_{H'_a}$ are defined as follows:
 \begin{subequations} \label{constraint-set}
\begin{align}
&\mathscr{C}_{H'} = \bigl\{ \bigl(q'_{ab}, \mbo{\pi}'^{ab}\bigr)\bigl(U'^A, p'_A\bigr) \in T^*\mathcal{M} \mid H' =0 \bigr\}, \\
&\mathscr{C}_{H'_a} = \bigl\{ \bigl(q'_{ab}, \mbo{\pi}'^{ab}\bigr)\bigl(U'^A, p'_A\bigr) \in T^*\mathcal{M} \mid H'_a =0,\, a =1,2 \bigr\}.
\end{align}
\end{subequations}

Furthermore, we consider our time coordinate gauge condition to be `$2+1$ maximal'
\begin{align*}
\mathscr{C}_{\tau'} = \bigl\{ \bigl(q'_{ab}, \mbo{\pi}'^{ab}\bigr)\bigl(U'^A, p'_A\bigr) \in T^*\mathcal{M} \mid
\tau' =0 \bigr\}.
\end{align*}
In our work, we shall be interested in the space
\begin{align*}
\mathscr{C}_{H'} \cap \mathscr{C}_{H'_a} \cap \mathscr{C}_{\tau'}
\end{align*}
for our initial value framework.
In general, proving local existence of Einstein equations using the Hamiltonian initial value problem is a complex problem. 
We note the following statement from the Lagrangian framework of Einstein's equations from the classical result of Choquet-Bruhat and Geroch \cite{Bruhat_Geroch_classic} in $3+1$ dimensions.

Suppose \smash{$\bigl\{ \bigl(\bar{q}'_{ab}, \bar{\mbo{\pi}}'^{ab}\bigr)\bigr\}_0 \in \mathscr{C}_{\bar{H}'} \cap \mathscr{C}_{\bar{H}'_i}$}, then it follows from the classic results of Choquet-Bruhat and Geroch, adapted to our linear perturbation problem, that there exists a unique, regular, maximal development of \smash{$\bigl\{ \bigl(\bar{q}'_{ab}, \bar{\mbo{\pi}}'^{ab}\bigr) \bigr\}_0$}, $ \iota \colon \olin{\Sigma} \to \olin{\Sigma} \times \mathbb{R}$, such that \smash{$\bigl\{ \bigl(\bar{q}'_{ab}, \bar{\mbo{\pi}}'^{ab}\bigr) \bigr\}_t \in \mathscr{C}_{\bar{H}'} \cap \mathscr{C}_{\bar{H}'_i} $} is causally determined from the initial data \smash{$\bigl\{ \bigl(q'_{ab}, \mbo{\pi}'^{ab}\bigr), \bigl(U'^A, p'_A\bigr) \bigr\}_0 $} in a suitable gauge; where~\smash{$\mathscr{C}_{\bar{H}'}$} and \smash{$\mathscr{C}_{\bar{H}'_i}$} are defined analogous to \eqref{constraint-set}.

Let us introduce the following notions from the machinery of linearization stability.
Let~us~define the constraint map $\olin{\Psi}$ of $\bigl(\overline{\Sigma}, \bar{q}\bigr)$ as a map from the cotangent bundle to a $4-$tuple of scalar densities, $ \olin{\Psi} \colon T^* \olin{\mathcal{M}} \to \mathcal{C}^{\infty} (\olin{\Sigma}) \times \mathcal{T} \olin{\Sigma}$, such that
\begin{align*}
\olin{\Psi} (\bar{q}, \bar{\mbo{\pi}}) = \bigl(\bar{H}, \bar{H}_i\bigr), \qquad i = 1, 2, 3.
\end{align*}
Let us denote the deformation of the constraint map as $D \cdot \olin{\Psi} (\bar{q}, \bar{\mbo{\pi}})$.
Then the $L^2$-adjoint,
$D^\dagger \cdot \olin{\Psi} (\bar{q}, \bar{\mbo{\pi}}) \bigl(\bar{C}, \bar{Z}\bigr), $ of the deformations $D \cdot \Psi$ of the constraint map is a
2-tuple (an element of a Banach space) consisting of a covariant symmetric 2-tensor and a contravariant symmetric 2-tensor density and is given by
\begin{align}
D^\dagger \cdot \Psi (\bar{q}, \bar{\mbo{\pi}}) \bigl(\bar{C}, \bar{Z}\bigr) &:= \biggl(\bar{\mu}^{-1}_{\bar{q}} \biggl( \halb \biggl(\Vert \mbo{\pi} \Vert_{\bar{q}}^2 - {\rm Tr}_{\bar{q}}(\bar{\mbo{\pi}})^2 \biggr) \bar{q}^{ij} \bar{C} - 2 \biggl(\bar{\mbo{\pi}}^{ik} \bar{\mbo{\pi}}_k^{j} - \halb \mbo{\pi}^{ij} {\rm Tr}_{\bar{q}} (\bar{\mbo{\pi}}) \biggr) \bar{C} \biggr) \notag\\
&\hphantom{:= \biggl(}{} -\bar{\mu}_q \biggl(\bar{q}^{ij} \Delta_{\bar{q}} C - \leftexp{(\bar{q})}{\grad}^{i} \leftexp{(\bar{q})}{\grad}^{j} \bar{C} + R^{ij} \bar{C} - \halb \bar{q}^{ij} R_{\bar{q}} \bar{C} \biggr) \notag\\
&\hphantom{:= \biggl(}{} + \leftexp{(\bar{q})}{\grad}_k \bigl(\bar{Z}^k \bar{\mbo{\pi}}^{ij}\bigr) - \leftexp{(\bar{q})}{\grad}_k \bar{Z}^i \bar{\mbo{\pi}}^{kj} - \leftexp{(\bar{q})}{\grad}_k \bar{Z}^i \bar{\mbo{\pi}}^{jk}, \notag\\
&\hphantom{:= \biggl(}{} -2 \bar{\mu}_{\bar{q}}^{-1} \bar{C} \biggl(\bar{\mbo{\pi}}_{ij} - \halb {\rm Tr} (\bar{\mbo{\pi}}) \bar{q}_{ij} \biggr) - \leftexp{(q)}{\grad}_i \bar{Z}_j - \leftexp{(q)}{\grad}_j \bar{Z}_i \biggr).\label{3+1adjoint}
\end{align}
The expression \eqref{3+1adjoint} is closely related to the $L^2-$adjoint of the Lichnerowicz operator.
Moncrief had characterized the splitting theorem, established by Fischer--Marsden \cite{FM_75}, of the (Banach) spaces acted on by the constant map
\begin{align*}
 \operatorname{ker}D^\dagger \cdot \Psi (\bar{q}, \bar{\mbo{\pi}}) \bigl(\bar{C}, \bar{Z}\bigr) \,\oplus\, \text{range} \, D \cdot \Psi (\bar{q}, \bar{\mbo{\pi}})\bigl(\bar{q}', \bar{\mbo{\pi}}'\bigr),
\end{align*}
by associating the kernel of the adjoint operator \big($\operatorname{ker} D^\dagger \cdot \olin{\Psi} (\bar{q}, \bar{\mbo{\pi}}) \bigl(\bar{C}, \bar{Z}\bigr)$\big) to the existence of spacetime Killing isometries. In particular, Moncrief proved that $\operatorname{ker} D^\dagger \cdot \olin{\Psi} \bigl(\bar{C}, \bar{Z}\bigr)$ is non-empty if and only if there exists a spacetime Killing vector.
 This result is crucial for our work, but in the dimensionally reduced framework. In the following, we shall establish equivalent results in our dimensionally reduced perturbation problem.
\begin{Lemma}
Suppose $(M, g)$ is the $2+1$ spacetime obtained from the dimensional reduction of the axially symmetric, Ricci-flat $3+1$ spacetime $\bigl(\bar{M}, \bar{g}\bigr)$ and $D \cdot \Psi$ is the deformation around the Kerr metric of the constraint map $\Psi$ of the dimensionally reduced $2+1$ Einstein-wave map system on $(M, g)$, then
\begin{enumerate}\itemsep=0pt
\item[$(1)$] The adjoint $D^\dagger \cdot \Psi (q', \mbo{\pi}') (C, Z) $ of the constraint map $\Psi$ is given by
\begin{gather}
D^{\dagger} \cdot \Psi = \biggl(\bar{\mu}_q \bigl(\leftexp{(q)}{\grad}^b \,\leftexp{(q)}{\grad}^a C -q^{ab}\, \leftexp{(q)}{\grad}_c \leftexp{(q)}{\grad}^c C\bigr)\nonumber\\
\hphantom{D^{\dagger} \cdot \Psi =}{}
+ \halb C \bar{\mu}_q h_{AB} \biggl(q^{ac}q^{bd} - \halb q^{ab} q^{cd}\biggr)\ptl_a U^A \ptl_b U^B
 - \leftexp{(q)}{\grad}_a Z_b - \leftexp{(q)}{\grad}_b Z_a \biggr).\label{2+1adjoint}
\end{gather}

\item[$(2)$] The kernel $\bigl(\operatorname{ker} \bigl(D^\dagger \Psi\bigr)\bigr)$ of the adjoint of the constraint map $\Psi$ is one dimensional and is equal to
$(N, 0)^{\mathsf{T}}$.
\end{enumerate}
\end{Lemma}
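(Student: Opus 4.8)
The plan is to handle the two parts in turn: derive the adjoint \eqref{2+1adjoint} by a direct integration-by-parts computation of the defining $L^2$-pairing, and then extract the kernel by setting each slot of the adjoint to zero and exploiting the Kerr relations \eqref{kerr-maximal}. For part (1), I would linearise the reduced constraint map $\Psi=(H,H_a)$ about the Kerr data, using at the outset $\mbo{\pi}^{ab}\equiv0$, $p_A\equiv0$, $\mbo{\tau}\equiv0$, $\varrho\equiv0$ and $\Delta_q N=0$. Because $\mbo{\pi}$ and $p$ vanish on the background, every term of $D\cdot H$ algebraically proportional to $\mbo{\pi}'$ or $p'$ drops out, leaving only the variation of $-\bar{\mu}_q R_q$, the metric variation of the densitised kinetic term through $\delta(\bar{\mu}_q q^{ab})=\bar{\mu}_q(\halb q^{ab}q^{cd}-q^{ac}q^{bd})q'_{cd}$, and the matter terms in $\ptl_{U^C}h_{AB}$ and $\ptl_a U'^A$; for $D\cdot H_a$ only $-2\,\leftexp{(q)}{\grad}_b\mbo{\pi}'^b_a+p'_A\ptl_a U^A$ survives. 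I then pair against $(C,Z)$, integrate by parts twice on the curvature term and once on the momentum and matter terms to move all derivatives onto $(C,Z)$, and read off the coefficients of $q'_{ab}$, $\mbo{\pi}'^{ab}$, $U'^A$ and $p'_A$. Two features special to $2+1$ make the result collapse to \eqref{2+1adjoint}: in two dimensions $R^{ab}=\halb R q^{ab}$, so the Ricci contributions present in the general-dimensional adjoint \eqref{3+1adjoint} cancel identically and leave $\bar{\mu}_q(\leftexp{(q)}{\grad}^b\,\leftexp{(q)}{\grad}^a C-q^{ab}\Delta_q C)$, while the metric variation of the kinetic term produces exactly the source $\halb C\bar{\mu}_q h_{AB}(q^{ac}q^{bd}-\halb q^{ab}q^{cd})\ptl_a U^A\ptl_b U^B$ and the momentum constraint produces $-\leftexp{(q)}{\grad}_a Z_b-\leftexp{(q)}{\grad}_b Z_a$. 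The matter slots give $Z^a\ptl_a U^A$ (conjugate to $p'$) and a background wave-map operator acting on $C$ (conjugate to $U'$), to be recorded for part (2). The work here is bookkeeping rather than ideas.

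For part (2) I would set the full adjoint to zero slot by slot. The $p'$-slot gives $Z^a\ptl_a U^A=0$ for $A=1,2$; since the Kerr wave map has $U=(\gamma,\omega)$ with $\grad\gamma$ and $\grad\omega$ pointwise independent off a lower-dimensional set (for $a\neq0$), this forces $Z\equiv0$, whereupon the $\mbo{\pi}'$-slot condition $\leftexp{(q)}{\grad}_{(a}Z_{b)}=0$ holds automatically. Taking the $q$-trace of the $q'$-slot and using that the wave-map source is trace-free in two dimensions, $q_{ab}(q^{ac}q^{bd}-\halb q^{ab}q^{cd})=0$, yields $\Delta_q C=0$; as the densitised Laplacian is conformally invariant this is the same as $\Delta_0 C=0$. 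I then set $\phi:=C/N$, which is legitimate since $N>0$ in the domain of outer communication, and compute from $\Delta_q C=0$ together with $\Delta_q N=0$ that $\leftexp{(q)}{\grad}_a\bigl(N^2\,\leftexp{(q)}{\grad}^a\phi\bigr)=0$; multiplying by $\phi$ and integrating over $\Sigma$ gives $\int_\Sigma N^2|\grad\phi|^2\,\bar{\mu}_q$ equal to a boundary flux over the axes, horizon and spatial infinity. Provided that flux vanishes, $\phi$ is constant and $C=cN$, so the kernel is spanned by $(N,0)^{\mathsf{T}}$. Conversely $(N,0)$ does lie in the kernel: with $Z=0$ the $p'$- and $\mbo{\pi}'$-slots vanish, the $q'$-slot with $C=N$ is precisely the stationary Kerr field equation $\ptl_t\mbo{\pi}^{ab}=0$ recorded in \eqref{kerr-maximal}, and the $U'$-slot with $C=N$ collapses to the Kerr wave-map equation \eqref{Kerr-p-dot} once one substitutes $h_{AB}\leftexp{(h)}{\Gamma}^B_{CD}=\halb(\ptl_C h_{AD}+\ptl_D h_{AC}-\ptl_A h_{CD})$ and cancels. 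This shows $\operatorname{ker}(D^\dagger\Psi)$ is one-dimensional and equal to $(N,0)^{\mathsf{T}}$.

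The delicate point, and the main obstacle, is the vanishing of the boundary flux in the weighted identity $\int_\Sigma N^2|\grad\phi|^2\,\bar{\mu}_q=\oint_{\ptl\Sigma}N^2\phi\,\leftexp{(q)}{\grad}_n\phi$: in contrast with the homogeneous Dirichlet setting of Claim \ref{2+1max}, the relevant kernel element does \emph{not} vanish on $\ptl\Sigma$ (indeed $N\to1$ at infinity), so one cannot simply invoke the maximum principle and must instead control $C$, $N$ and their normal derivatives separately at the horizon, the axes and spatial infinity. This is exactly the boundary analysis developed later in the paper, and it is what pins the kernel to a single dimension rather than leaving the infinite-dimensional space of $q_0$-harmonic functions. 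A secondary subtlety is the degenerate Schwarzschild limit $a=0$, where $\grad\omega\equiv0$ and the $p'$-slot alone no longer forces $Z\equiv0$; there one must combine it with the Killing condition $\leftexp{(q)}{\grad}_{(a}Z_{b)}=0$ from the $\mbo{\pi}'$-slot together with the decay of $Z$ at the boundary.
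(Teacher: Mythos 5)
Your part~(1) is correct and is essentially the paper's own computation: linearize $(H,H_a)$, pair against $(C,Z)$ in $L^2$, integrate by parts, and specialize to the Kerr background where $\mbo{\pi}\equiv 0$, $p_A\equiv 0$ kills all the momentum-quadratic terms; in two dimensions the Einstein tensor vanishes identically, which is why the Ricci terms of \eqref{3+1adjoint} are absent from \eqref{2+1adjoint}. The paper proceeds in exactly this way, first recording the adjoint about a general metric and then restricting to Kerr.

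Part~(2) is where your route diverges from the paper's, and where it has a genuine gap. You extract only the \emph{trace} of the $q'$-slot, obtaining $\Delta_q C=0$, and then try to force $C=cN$ through the weighted identity $\leftexp{(q)}{\grad}_a\bigl(N^2\,\leftexp{(q)}{\grad}^a\phi\bigr)=0$ for $\phi=C/N$. The one-dimensionality of the kernel then rests entirely on the vanishing of the boundary flux $\oint_{\ptl\Sigma}N^2\phi\,\ptl_{\vec{n}}\phi$, which you assume rather than prove; since the lemma imposes no boundary or decay conditions on $(C,Z)$, and the space of $q_0$-harmonic functions on $\Sigma$ is infinite-dimensional, this step fails as written. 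Nor can you simply defer to the paper's later boundary analysis: that analysis controls the fluxes of $J^{{\rm Reg}}$ built from the perturbed phase-space variables, not the normal derivative of an arbitrary kernel candidate $C$. Moreover, in discarding the traceless part of the $q'$-slot and the $U'$-slot you throw away precisely the information that pins the kernel down \emph{locally}. The paper's proof uses that information: in Moncrief's linearization-stability framework the vanishing of the full adjoint is equivalent to the statement that $K=C\mbo{n}+Z$ satisfies the spacetime Killing equation \eqref{Killing} on $(M,g)$, and conversely; since the only remaining linearly independent Killing field of the dimensionally reduced Kerr spacetime is $\ptl_t$, whose lapse--shift decomposition in the Weyl--Papapetrou form has vanishing shift, the kernel is exactly $(N,0)^{\mathsf{T}}$ with no global elliptic estimate or flux argument needed. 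Your converse verification is correct (the $q'$-slot with $C=N$ is the stationarity equation $\ptl_t\mbo{\pi}^{ab}=0$ consistent with \eqref{kerr-maximal}, and the $U'$-slot collapses to \eqref{Kerr-p-dot}), as is your flag about the $a=0$ degeneracy of the $p'$-slot argument; but to repair the forward direction you should either invoke the Killing correspondence or exploit the full tensor equations rather than their trace.
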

\begin{proof}
Consider the constraint map $\Psi$
\begin{align*}
\Psi (g, \mbo{\pi}) = (H, H_i),
\end{align*}
then from the deformation of $\Psi, D \cdot \Psi (q, \mbo{\pi}) (q', \mbo{\pi}') := (H', H'_a)$, around a general metric,
it~follows that its $2+1$ $L^2-$adjoint is given by
\begin{align*} 
D^{\dagger} \cdot \Psi= \biggl(&\halb C \bar{\mu}^{-1}_q q^{ab} \bigl(\Vert \mbo{\pi} \Vert^2_q - {\rm Tr}(\mbo{\pi})^2\bigr)- 2C \bar{\mu}^{-1}_q \bigl( \mbo{\pi}^{ac} \mbo{\pi}^{b}_c - \mbo{\pi}^{ab} {\rm Tr}_q(\mbo{\pi}) \bigr) \notag\\
&{}+\bar{\mu}_q \bigl(\leftexp{(q)}{\grad}^b \,\leftexp{(q)}{\grad}^a C -q^{ab}\, \leftexp{(q)}{\grad}_c \leftexp{(q)}{\grad}^c C\bigr) \notag\\
&{}+ \leftexp{(q)}{\grad}_c\bigl(\mbo{\pi}^{ab} Z^c\bigr) - \leftexp{(q)}{\grad}_c Z^a \mbo{\pi}^{cb} - \leftexp{(q)}{\grad}_c Z^b \mbo{\pi}^{ca}\notag\\
&{}+ \frac{1}{4}\bar{\mu}_q^{-1} C q^{ab} p_A p^A + \halb C \bar{\mu}_q h_{AB} \biggl(q^{ac}q^{bd} - \halb q^{ab} q^{cd}\biggr)\ptl_a U^A \ptl_b U^B, \notag\\
&{} -2C \bar{\mu}^{-1}_q (\mbo{\pi}_{ab} - q_{ab} {\rm Tr}_q \mbo{\pi}) - \leftexp{(q)}{\grad}_a Z_b - \leftexp{(q)}{\grad}_b Z_a\biggr),
\end{align*}
analogous to \eqref{3+1adjoint}, while noting that the (dimensionally reduced) wave map variables are not constrained due to the introduction of the twist potential, after using the Poincar\'e Lemma (see, e.g., \eqref{2+1constraints} and then \eqref{2+1-no-constraints}). The expression \eqref{2+1adjoint} follows for the case of dimensionally reduced Kerr metric.
Now assume that $(C, Z) \in \operatorname{ker} D^\dagger \cdot \Psi$. It follows from \eqref{2+1adjoint} that a vector $K = K_{\perp} \mbo{n} + K_{\parallel} $
satisfies
\begin{align}\label{Killing}
\leftexp{(g)}{\grad}_\a K_\b + \leftexp{(g)}{\grad}_\b K_\a =0
\end{align}
with $K_{\perp} = C$ and $K_{\parallel} = Z$
which implies that $K = (C, Z)$ is a (spacetime) Killing vector in $(M, g)$. Conversely, assuming that \eqref{Killing} holds
it follows that the left-hand side of \eqref{2+1adjoint} vanishes, which implies $K \in \operatorname{ker} D^\dagger \cdot \Psi$. In particular, for the dimensionally reduced Kerr metric $(M,g)$ the only remaining linearly independent Killing vector is $\ptl_t$, so $(C, Z)^{\mathsf{T}} \equiv (N, 0)^{\mathsf{T}}$, which, as will be shown later, resolves (P2).
\end{proof}

In the following, we shall establish that the Hamilton vector field $(H', H'_a)$ is tangential to the flow of the phase space variables $\bigl\{ (q', \pi'), \bigl(U'^A, p'_A\bigr) \bigr\}$ in $\mathscr{C}_{H'} \cap \mathscr{C}_{H'_a} \cap \mathscr{C}_{\tau'}$.

\begin{Lemma}
Suppose $H'$ and $H'_a$ are the linearized Hamiltonian and momentum constraints of the $2+1$ Einstein-wave map system, then their propagation equations are
\begin{subequations}
\begin{align}
&\frac{\ptl}{\ptl t} H' = q^{ab} \ptl_a N H'_b + \ptl_b \bigl(N q^{ab} H'_a\bigr), \label{h-dot}\\
&\frac{\ptl}{\ptl t} H'_a= \ptl_a N H' \label{ha-dot}
\intertext{and}
&N H' = \ptl_b \bigl(N \bar{\mu}_q q^{ab} h_{AB} U'^{A} \ptl_b U^B - 2 \bar{\mu}_q q^{ab} \ptl_a N \mbo{\nu}'\bigr). \label{LS}
\end{align}
\end{subequations}
\end{Lemma}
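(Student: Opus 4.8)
The plan is to prove the two propagation identities \eqref{h-dot}--\eqref{ha-dot} and the divergence identity \eqref{LS} by two different mechanisms: \eqref{h-dot}--\eqref{ha-dot} are the linearized constraint-propagation (hypersurface-deformation) identities and are verified by direct substitution of the Hamilton field equations, whereas \eqref{LS} is a linearization-stability identity that follows immediately from the kernel computation of the preceding lemma. Before computing I would record the simplifications that the Kerr background provides in the $2+1$ maximal Weyl--Papapetrou gauge: the reduced metric $g$ is static, so the background shift vanishes, $N^a=0$; moreover $\mbo{\pi}^{ab}\equiv 0$, $p_A\equiv 0$, $\mbo{\tau}\equiv 0$, $\varrho\equiv 0$ and $\ptl_t U=\ptl_t q=0$, together with the background relations \eqref{Kerr-p-dot} and \eqref{kerr-maximal} (in particular $\Delta_q N=0$). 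Inserting these into the linearized Hamilton equations listed after Corollary~\ref{second-var} kills every $\mathcal{L}_N$-term and every term carrying a background momentum, leaving the reduced Kerr forms of the evolution equations.

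For \eqref{ha-dot} I would differentiate $H'_a=\leftexp{(q)}{\grad}_b\mbo{\pi}'^b_a+p'_A\ptl_a U^A$ in $t$. Because the background connection and $\ptl_a U^A$ are time-independent, $\ptl_t H'_a=\leftexp{(q)}{\grad}_b(\ptl_t\mbo{\pi}'^b_a)+(\ptl_t p'_A)\,\ptl_a U^A$; substituting the Kerr evolution equations for $\ptl_t\mbo{\pi}'^{ab}$ and $\ptl_t p'_A$, commuting covariant derivatives, and using $\Delta_q N=0$ and the background wave-map equation \eqref{Kerr-p-dot} to absorb the lapse-Hessian and curvature pieces, the survivors reassemble into $\ptl_a N\cdot H'$. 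For \eqref{h-dot} I would differentiate $H'=-\bar{\mu}^{-1}_q q_{ab}\mbo{\pi}'^{ab}-(\bar{\mu}_q R_q)'+(\text{wave-map terms in }U')$; the background coefficients being stationary, $\ptl_t$ passes through them, and after inserting the evolution equations for $\ptl_t\mbo{\pi}'^{ab}$, $\ptl_t q'_{ab}$ and $\ptl_t U'^A$ and integrating the linearized scalar-curvature term by parts, the result organizes into $q^{ab}\ptl_a N\,H'_b+\ptl_b(Nq^{ab}H'_a)$. Conceptually these are the linearization about a constraint-satisfying solution of the Dirac brackets $\{H,H_a\}\sim H\,\ptl_a\delta$ and $\{H,H\}\sim(q^{ab}H_b)\,\ptl_a\delta$, which is the structural reason the right-hand sides close on the constraints.

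For \eqref{LS} I would use the adjoint in its pointwise (divergence) form. Integration by parts in $NH'+0\cdot H'_a$ yields $NH'=\langle D^{\dagger}\!\cdot\!\Psi(N,0),(q',\mbo{\pi}',U',p')\rangle+\ptl_a J^a$ for an explicit current $J^a$ built from the boundary terms. By the preceding lemma $(N,0)^{\mathsf T}\in\operatorname{ker}D^{\dagger}\!\cdot\!\Psi$; equivalently $\ptl_t$ is a Killing field of the reduced Kerr spacetime and $U$ solves \eqref{Kerr-p-dot}, and it is precisely the latter that makes the wave-map component of the adjoint vanish as well. Hence the bulk pairing vanishes identically and only $\ptl_a J^a$ survives. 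Tracking the two integrations by parts then identifies the summands of \eqref{LS}: the piece $N\bar{\mu}_q q^{ab}h_{AB}U'^A\ptl_b U^B$ is the boundary term from integrating $\bar{\mu}_q q^{ab}h_{AB}\ptl_a U'^A\ptl_b U^B$ against $N$, and the piece $-2\bar{\mu}_q q^{ab}\ptl_a N\,\mbo{\nu}'$ comes from the linearized curvature via $R_q=-2{\rm e}^{-2\mbo{\nu}}\Delta_0\mbo{\nu}$.

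The main obstacle is the bookkeeping in \eqref{h-dot}. The evolution equation for $\ptl_t\mbo{\pi}'^{ab}$ carries the linearized lapse-Hessian, the linearized Christoffel contractions $\leftexp{(q')}{\grad}$, and the second-variation wave-map terms, and one must check that after invoking $\Delta_q N=0$ and \eqref{Kerr-p-dot} every contribution either cancels or recombines into $\leftexp{(q)}{\grad}_b\mbo{\pi}'^b_a$ and $p'_A\ptl_a U^A$ paired against $\ptl N$. The wave-map curvature terms are the delicate point: they must be shown to descend from the background wave-map equation rather than leaving a residue, which is exactly where the torsion-free assumption \eqref{s-t-coordinates}---ensuring only the target curvature enters the second variation---is used.
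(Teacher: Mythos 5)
Your proposal is correct and follows essentially the paper's own route: you obtain \eqref{h-dot} and \eqref{ha-dot} by direct substitution of the linearized evolution equations together with the background relations \eqref{Kerr-p-dot} and \eqref{kerr-maximal}, just as the paper does (which carries out the bookkeeping in the conformally flat variables $\varrho'$, $\mbo{\nu}'$ with $q_0$ held fixed and uses ${\rm CK}(N',q_0)=-2N{\rm e}^{-2\mbo{\nu}}q_0^{bc}\varrho'^a_c$). For \eqref{LS}, your derivation from $(N,0)^{\mathsf T}\in\operatorname{ker}D^{\dagger}\cdot\Psi$ plus the background wave-map equation is exactly the mechanism the paper verifies by direct computation (via the Christoffel identity \eqref{var-from-chris} and $\Delta_q N=0$) and then explicitly attributes to that kernel fact.
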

\begin{proof}
The statements \eqref{h-dot} and \eqref{ha-dot} follow from the linearized and background (exact) field equations of our $2+1$ Einstein-wave map system. For simplicity in computations, we shall perform our computations with $q'_0$
held fixed. Recall
\begin{gather*}
\ptl_t \varrho'^a_b = N \bar{\mu}_{q_0} \biggl(q^{ac}_0 \delta^d_b - \halb q_0^{cd} \delta^a_b\biggr) \biggl(h_{AB} \ptl_c U'^A \ptl_d U^B + \halb \ptl_{U^C} h_{AB} (U) \ptl_c U^A \ptl_d U^B U'^C\biggr) \notag\\ \hphantom{\ptl_t \varrho'^a_b =}{}
+ \bar{\mu}_{q_0} q_0^{cd} \delta^a_b \ptl_c N \ptl_d \mbo{\nu}' - \bar{\mu}_{q_0} q^{ac}_0 \bigl(\ptl_b N \ptl_c \mbo{\nu}'\bigr), \\
\ptl_t \mbo{\nu}'= \frac{1}{2 \bar{\mu}_{q_0}} \ptl_c \bigl(\bar{\mu}_{q_0} N'^c\bigr) + 2 \mathcal{L}_{N'} \mbo{\nu}.
\end{gather*}
Consider the quantities
\begin{gather*}
\ptl_t \bigl(2 \bar{\mu}^{-1}_{q_0} \ptl_b \bigl(\bar{\mu}_{q_0} q^{ab}_0 \ptl_a \mbo{\nu}'\bigr) \bigr) = \bar{\mu}^{-1}_{q_0} \ptl_b \bigl(\bar{\mu}_{q_0} q^{ab}_0 \ptl_a \bigl( \bar{\mu}^{-1}_{q_0} \ptl_c \bigl(\bar{\mu}_{q_0} N'^c\bigr) + 2 N'^c\ptl_c \mbo{\nu}\bigr)\bigr),
\\
\halb \bar{\mu}_q q^{ab} \ptl_{U^C} h_{AB} \ptl_a U^A \ptl_b U^B \ptl_t U'^C = \halb N q^{ab} \ptl_{U^C} h_{AB} \ptl_a U^A \ptl_b U^B p'^C \notag\\
\hphantom{\halb \bar{\mu}_q q^{ab} \ptl_{U^C} h_{AB} \ptl_a U^A \ptl_b U^B \ptl_t U'^C =}{}
+ \halb \bar{\mu}_q q^{ab} \ptl_{U^C} h_{AB} \ptl_a U^A \ptl_b U^B \mathcal{L}_{N'} U^C, \\
 \bar{\mu}_q q^{ab} h_{AB}(U) \ptl_a \bigl(\ptl_t U'^A\bigr) \ptl_b U^B= \bar{\mu}_q q ^{ab} h_{AB}(U) \ptl_a \bigl(\bar{\mu}^{-1}_q N p'^A\bigr)
 \ptl_b U^B \notag\\ \hphantom{\bar{\mu}_q q^{ab} h_{AB}(U) \ptl_a \bigl(\ptl_t U'^A\bigr) \ptl_b U^B=}{}
 + \bar{\mu}_q q ^{ab} h_{AB}(U) \ptl_a \bigl(\mathcal{L}_{N'} U^A\bigr) \ptl_b U^B
 \notag\\ \hphantom{\bar{\mu}_q q^{ab} h_{AB}(U) \ptl_a \bigl(\ptl_t U'^A\bigr) \ptl_b U^B}{}
 = \bar{\mu}_q q ^{ab} h_{AB}(U) \ptl_a \bigl(\bar{\mu}^{-1}_q N p'^A\bigr)\ptl_b U^B
 \notag\\ \hphantom{\bar{\mu}_q q^{ab} h_{AB}(U) \ptl_a \bigl(\ptl_t U'^A\bigr) \ptl_b U^B=}{}
 + \bar{\mu}_q q ^{ab} h_{AB}(U) \mathcal{L}_{N'} \bigl(\ptl_a U^A\bigr) \ptl_b U^B.
\end{gather*}
Combining the results above and noting that for our gauge
\begin{align*}
{\rm CK}^{ab} (N', q_0) = \bar{\mu}_{q_0} \bigl(\leftexp{(q_0)}{\grad}^a N'^b + \leftexp{(q_0)}{\grad}^b N'^a - q^{ab}_0 \leftexp{(q_0)}{\grad}_c N'^c\bigr) = - 2 N {\rm e}^{-2 \nu} q^{bc}_0 \varrho'^a_c,
\end{align*}
we get
\begin{align*}
\ptl_t H' &{}= q^{ab} \ptl_b N \bigl(-2 \leftexp{(q_0)}{\grad}_c \varrho'^c_a + p'_A \ptl_a U^A\bigr) -2 \ptl_b \bigl(N q^{ab} \leftexp{(q_0)}{\grad}_c \varrho'^c_a\bigr) + \ptl_b \bigl(N q^{ab} p'_A \ptl_a U^A\bigr) \\
&{}= q^{ab} \ptl_a N H'_b + \ptl_b \bigl(N q^{ab} H'_a\bigr).
\end{align*}
Likewise, for \eqref{ha-dot} consider
\begin{gather*}
 \leftexp{(q_0)}{\grad}_a (\ptl_t \varrho'^a_b) = \leftexp{(q_0)}{\grad}_a \biggl(N \bar{\mu}_{q_0} \biggl(q^{ac}_0 \delta^d_b - \halb q_0^{cd} \delta^a_b\biggr)\notag\\ \hphantom{\leftexp{(q_0)}{\grad}_a (\ptl_t \varrho'^a_b) =\leftexp{(q_0)}{\grad}_a \biggl(}{}
 \times\biggl(h_{AB} \ptl_c U'^A \ptl_d U^B+ \halb \ptl_{U^C} h_{AB} (U) \ptl_c U^A \ptl_d U^B U'^C\biggr) \notag\\
 \hphantom{\leftexp{(q_0)}{\grad}_a (\ptl_t \varrho'^a_b) =\leftexp{(q_0)}{\grad}_a \biggl(}{}
 +\bar{\mu}_{q_0} q_0^{cd} \delta^a_b \ptl_c N \ptl_d \mbo{\nu}' - \bar{\mu}_{q_0} q^{ac}_0 \bigl(\ptl_b N \ptl_c \mbo{\nu}'\bigr) \biggr), \\
\ptl_a U^A \ptl_t p'_A = h_{AB} \ptl_a U^A \leftexp{(h)}{\grad}_c \bigl(N \bar{\mu}_q q^{cb} \leftexp{(h)}{\grad}_b U'^B\bigr) \notag\\ \hphantom{\ptl_a U^A \ptl_t p'_A =}{}
+ N \bar{\mu}_q h_{AB} (U) \ptl_a U'^B R^{E}_{\,\,\,\,BCD} q^{ab} \ptl_a U^B \ptl_b U^D U'^C.
\end{gather*}

Now combining all the above, we have
\begin{align*}
\ptl_t H'_c = \ptl_c N H'
\end{align*}
in view of the background field equations \eqref{Kerr-p-dot} and \eqref{kerr-maximal}. 
For \eqref{LS}, first note that
\begin{align}
 N \bar{\mu}_{q_0} \ptl_{U^C} h_{AB} (U) q^{ab}_0 \ptl_a U^A \ptl_b U^C U'^B &{}= N \bar{\mu}_{q_0} h_{AB}\leftexp{(h)}{\Gamma}^A_{CD} (U) q^{ab}_0 \ptl_a U^C \ptl_b U^D U'^B \notag\\
 &\quad- \halb N \bar{\mu}_{q_0} \ptl_{U^C} h_{AB}(U) q^{ab}_0 \ptl_a U^A \ptl_b U^B U'^C\label{var-from-chris}
\end{align}
after a suitable relabelling of the indices. Now consider
\begin{align} \label{LS-computation}
N H' &{}= 2N \bar{\mu}_{q_0} \Delta_0 \mbo{\nu}' + N \bar{\mu}_q h_{AB}\leftexp{(h)}{\Gamma}^A_{CD} (U) q^{ab}_0 \ptl_a U^C \ptl_b U^D U'^B \notag\\
&\quad{}- N \bar{\mu}_{q_0} \ptl_{U^C} h_{AB} (U) q^{ab}_0 \ptl_a U^A \ptl_b U^C U'^B + \bar{\mu}_{q_0} q^{ab}_0 h_{AB} \ptl_a U'^A \ptl_b U^B \notag\\
&{}= 2N \bar{\mu}_{q_0} \Delta_0 \mbo{\nu}'+ \ptl_b \bigl(N \bar{\mu}_{q_0} q_0^{ab}h_{AB} \ptl_a U^A U'^B\bigr) \notag\\
&{}= \ptl_b \bigl(-2 \bar{\mu}_{q_0} q_0^{ab} \ptl_a N \mbo{\nu}' + 2 \ptl_a \mbo{\nu}' + N \bar{\mu}_q q^{ab} h_{AB} \ptl_a U^A U'^B\bigr),
\end{align}
where we have used \eqref{var-from-chris} and the background field equations \eqref{Kerr-p-dot} and \eqref{kerr-maximal}.
Fundamentally, underlying the statement \eqref{LS-computation} is the fact that $(N, 0)^{\mathsf{T}}$ is the kernel of the adjoint of the constraint map of our linear perturbation theory.
\end{proof}

\section[A positive-definite Hamiltonian energy from negative curvature of the target\\ and~the~Hamiltonian dynamics]{A positive-definite Hamiltonian energy from negative\\ curvature of the target and the Hamiltonian dynamics}\label{section4}

In arriving at the variational principles and their corresponding field equations, we have used smooth compactly supported deformations. In the construction of a Hamiltonian energy function the underlying computations are bit more subtle, in connection with the boundary terms and the initial value problem. We impose the regularity conditions on the axis of initial hypersurface~$\Sigma$ by fiat, so that the fields smoothly lift up to the original $\olin{\Sigma}$. We shall assume the following conditions on the two disjoint segments of the axes $\Gamma = \Gamma_1 \cup \Gamma_2$. In the wave map $U \colon M \to \mathbb{N}$, one of the components corresponds to the norm of the Killing vector $\vert \Phi \vert$ and the other the~`twist'. For the twist component, we assume
\begin{align*}
U'^A\big\vert_{\Gamma_1} = U'^A \big\vert_{\Gamma_2} \qquad \text{for the corresponding $A$}
\end{align*}
on account of our assumption that the perturbation of the angular momentum is zero. Without (effective) loss of generality we assume,
\begin{align*}
U'^A = 0 \qquad \text{on} \ \Gamma \ \text{which implies} \ \ptl_{\vec{t}}\, U'^A=0,
\end{align*}
where \smash{$\ptl_{\vec{t}}$} is the derivative tangent to the axis. To prevent conical singularity on the axis, which, as we remarked, allows us to smoothly lift our fields up to the original manifold $\olin{\Sigma}$, we assume
\begin{align*}
\vert \Phi \vert' =0, \qquad \ptl_{\vec{t}} \, \vert \Phi \vert' =0
\end{align*}
for the `norm' component of $U$, and
\begin{align*}
\ptl_{\vec{n}} U'^A =0, \qquad p'_A =\ptl_{\vec{t}} p'_A =0, \qquad \ptl_{\vec{n}} p'_A =0,
\end{align*}
where $\ptl_{\vec{n}}$ is the derivative normal to the axes $\Gamma$.
In this work, for $\mbo{\nu}'$ we shall assume
\begin{align*}
\ptl_{\vec{n}} \mbo{\nu}' =0,
\end{align*}
which corresponds to the preservation of the condition that inner (horizon) boundary is a minimal surface. Now define an `alternative' Hamiltonian constraint \smash{$H'^{{\rm Alt}}$}
\begin{align*}
H'^{{\rm Alt}} := \bar{\mu}_{q_0} \bigl(2 \Delta_0 \mbo{\nu}' + h_{AB}(U) U'^B \bigl(\Delta_0 U^A + \leftexp{(h)}{\Gamma}^A_{BC} q^{ab}_0 \ptl_a U^B \ptl_b U^C\bigr)\bigr),
\end{align*}
where we have now used the following identity to transform from $H'$:
\begin{align*}
&\halb \ptl_{U^C} h_{AB} q^{ab} \ptl_a U^A \ptl_b U^B U'^C + h_{AB}(U) q^{ab} \ptl_a U'^A \ptl_b U^B \notag\\
&\qquad{}= h_{AB} U'^B \bigl(\Delta_q U'^A + \leftexp{(h)}{\Gamma}^A _{CD} q^{ab} \ptl_a U^C \ptl_b U^D\bigr),
\end{align*}
which is analogous to \eqref{christof-intro}, but now for the $q$ metric. 
Analogously define
\begin{align*}
H''^{{\rm Alt}} &:= \bar{\mu}^{-1}_{q_0} \bigl(2 {\rm e}^{-2 \mbo{\nu}} \Vert \varrho' \Vert^2_{q_0} - \tau'^2 {\rm e}^{2 \mbo{\nu}} \bar{\mu}^2_{q_0} + p'_A p'^A\bigr) \notag\\
&\ \quad{}- \bar{\mu}_{q_0} h_{AB} U'^B\bigl(\leftexp{(h)}{\Delta} U'^A +R^A_{BCD} q_0^{ab} \ptl_a U^B \ptl_b U^C U'^D\bigr).
\end{align*}
Using a further divergence identity
\begin{align*}
& \leftexp{(h)}{\grad}_a \bigl(h_{AB} U'^B \leftexp{(h)}{\grad} ^a U'^A\bigr) - h_{AB} U'^B \leftexp{(h)}{\grad}_a \leftexp{(h)}{\grad}^a U'^A \notag\\
&\qquad{}= h_{AB} q^{ab} \leftexp{(h)}{\grad}_a U'^A \leftexp{(h)}{\grad}_b U'^B, \qquad (\Sigma, q),
\end{align*}
let us now define our `regularized' Hamiltonian energy density as
\begin{gather}
\mathbf{e}^{{\rm Reg}} := N \bar{\mu}^{-1}_{q_0} {\rm e}^{-2 \mbo{\nu}} \biggl(\Vert \varrho' \Vert^2_{q_0} + \halb p'_A p'^A \biggr) - \halb N {\rm e}^{2 \mbo{\nu}} \bar{\mu}_{q_0} \tau'^2 \notag\\ \hphantom{\mathbf{e}^{{\rm Reg}} :=}{}
+ \halb N \bar{\mu}_{q_0} q_0^{ab} h_{AB}(U) \leftexp{(h)}{\grad}_a U'^A \leftexp{(h)}{\grad}_b U'^B \notag\\ \hphantom{\mathbf{e}^{{\rm Reg}} :=}{}
- \halb N \bar{\mu}_{q_0} q^{ab}_0 h_{AE} (U) U'^A \leftexp{(h)}R^E _{BCD} \ptl_a U^B \ptl_b U^C U'^D \label{e-reg-den}
\end{gather}
and the `regularized' Hamiltonian $H^{{\rm Reg}}$
\begin{align} \label{e-reg}
H^{{\rm Reg}} := \int_{\Sigma} \mathbf{e}^{{\rm Reg}} \, {\rm d}^2 x.
\end{align}
It is evident that $H^{{\rm Reg}}$ is manifestly positive-definite in the maximal gauge $\tau' \equiv 0$, in view of the fact that the target is the (negatively curved) hyperbolic 2-plane. Indeed, we obtain a~similar energy expressions \eqref{e-reg-den} and \eqref{e-reg} in the higher-dimensional case where the target for wave maps is ${\rm SL}(n-2)/{\rm SO}(n-2)$.
As we already alluded to, the purpose of distinguishing the quantities~$H^{{\rm Reg}}$ is that they are transformed, using divergence identities, from $H$, and thus differ by boundary terms. In case the perturbations are compactly supported in $(\Sigma)$ it is immediate that they have the same value. In general, dealing with all the boundary terms and their evolution in our problem is considerably subtle (see Section~\ref{section5}).

 The aim of this work is the construction of the positive-definite energy functional $H^{{\rm Reg}}$. In~the~following, we shall establish the validity and consistency of our approach to construct the energy using two separate methods. Firstly, we shall show that $H^{{\rm Reg}}$ serves as a Hamiltonian that drives the dynamics of the unconstrained or `independent' phase-space variables. Secondly, we shall establish that there exists a spacetime divergence-free vector density, whose flux through~$\Sigma$~is~$H^{{\rm Reg}}$.

We would like to point out that, in our problem, the $2+1$ geometric phase space variables \big(e.g., $\mbo{\nu}', \varrho'^a_b$\big) are completely determined by the constraints and gauge-conditions. Therefore, their Hamiltonian dynamics are governed by the `independent' or `unconstrained' dynamical variables $\bigl(U'^A, p'_A\bigr)$. In the following, we shall prove that $H^{{\rm Reg}}$ drives the \emph{coupled} Einstein-wave map dynamics of $\bigl(U'^A, p'_A\bigr)$.

\begin{Theorem}
Suppose the globally regular, maximal development of the smooth, compactly supported perturbation initial data in the domain of outer communications of the Kerr metric is such that $\bigl\{ \bigl(q'_{ab}, \mbo{\pi}'^{ab}\bigr), \bigl(U'^A, p'_A\bigr) \bigr\}_{t} \in \mathscr{C}_{H'} \cap \mathscr{C}_{H'_a} \cap \mathscr{C}_{\tau'}$, then functional $H^{{\rm Reg}}$ is a Hamiltonian for the coupled dynamics of $\bigl(U'^A, p'_A\bigr)$,
\begin{align*}
&D_{p'_A} \cdot H^{{\rm Reg}} = \ptl_t U'^A, \qquad
D_{U'^{A}} \cdot H^{{\rm Reg}} =- \ptl_t p'_A.
\end{align*}
\end{Theorem}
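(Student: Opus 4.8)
The plan is to verify Hamilton's equations by directly computing the functional derivatives of $H^{\mathrm{Reg}}$ from \eqref{e-reg-den}--\eqref{e-reg} and matching them term-by-term against the linearized Einstein--wave map field equations specialized to the Kerr background in the $2+1$ maximal gauge. Throughout I would work on the constraint-and-gauge slice $\mathscr{C}_{H'}\cap\mathscr{C}_{H'_a}\cap\mathscr{C}_{\tau'}$, so that the indefinite term $-\halb N\mathrm{e}^{2\mbo{\nu}}\bar{\mu}_{q_0}\tau'^2$ drops out and only the manifestly positive kinetic, gradient, and (negatively curved) curvature terms survive. The essential structural input is that, among the perturbed phase-space variables, only $\bigl(U'^A,p'_A\bigr)$ are genuinely dynamical: the geometric quantities $\mbo{\nu}'$ and $\varrho'^a_b$ are fixed by the elliptic constraints $H'=0$, $H'_a=0$ together with the gauge conditions, via the conformal decomposition of the preceding lemma.

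First I would treat the $p'$-equation. The only explicit occurrence of $p'$ in $\mathbf{e}^{\mathrm{Reg}}$ is the kinetic term $\halb N\bar{\mu}^{-1}_{q_0}\mathrm{e}^{-2\mbo{\nu}}p'_Ap'^A$, whose variation produces a multiple of $\mathrm{e}^{-2\mbo{\nu}}p'^A$; comparing with the specialization to the Kerr background $h_{AB}(U)\ptl_tU'^B=\bar{\mu}^{-1}_qNp'_A$ (equivalently $\mathrm{e}^{-2\mbo{\nu}}\bar{\mu}^{-1}_{q_0}Np'_A$) shows that $D_{p'_A}\cdot H^{\mathrm{Reg}}$ reproduces $\ptl_tU'^A$ once the momentum constraint $H'_a=0$ is used to express $\varrho'$ through $p'$. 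Next I would treat the $U'$-equation. The explicit $U'$-dependence sits in the gradient term $\halb N\bar{\mu}_{q_0}q_0^{ab}h_{AB}(U)\leftexp{(h)}{\grad}_aU'^A\leftexp{(h)}{\grad}_bU'^B$ and in the curvature term; integrating the former by parts yields the target wave operator $h_{AB}\leftexp{(h)}{\grad}_a\bigl(N\bar{\mu}_{q_0}q_0^{ab}\grad_bU'^B\bigr)$, while the latter reproduces the $R^E_{\,\,\,\,BCD}$ source, so that the two combine into $-\ptl_tp'_A$ as given by the second field equation.

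The main obstacle, and the step requiring real care, is that $H^{\mathrm{Reg}}$ is built partly out of the constrained variables $\mbo{\nu}'$ and $\varrho'$, so a naive variation holding only $\bigl(U',p'\bigr)$ fixed is not legitimate: varying $U'$ and $p'$ forces compensating variations $\delta\mbo{\nu}'$, $\delta\varrho'$ through the constraints, and these chain-rule contributions must be shown not to spoil Hamilton's equations. Here the linearization-stability structure is decisive. Because $(N,0)^{\mathsf{T}}$ lies in $\ker\bigl(D^\dagger\cdot\Psi\bigr)$, the relevant constraint variations assemble into integrals of $NH'$ and $NH'_a$; by the identity \eqref{LS}, $NH'=\ptl_b(\cdots)$ is a pure spatial divergence, so these contributions reduce to boundary integrals. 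I would then invoke the fall-off and regularity conditions imposed at the axes $\Gamma$, at the horizon, and at spatial infinity, fixed at the start of this section, to conclude that all such boundary terms vanish, and that the bulk contributions vanish identically on $\mathscr{C}_{H'}\cap\mathscr{C}_{H'_a}$.

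Assembling the three steps gives $D_{p'_A}\cdot H^{\mathrm{Reg}}=\ptl_tU'^A$ and $D_{U'^A}\cdot H^{\mathrm{Reg}}=-\ptl_tp'_A$, establishing that $H^{\mathrm{Reg}}$ drives the coupled dynamics of $\bigl(U'^A,p'_A\bigr)$. The conceptual point worth emphasizing is that it is precisely the kernel property $(N,0)^{\mathsf{T}}\in\ker\bigl(D^\dagger\cdot\Psi\bigr)$, the same fact that selects $N$ as the natural multiplier of the Hamiltonian constraint, that guarantees the constrained pieces of $H^{\mathrm{Reg}}$ contribute only boundary terms, making the reduced functional a bona fide Hamiltonian for the independent variables.
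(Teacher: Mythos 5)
Your treatment of the $p'$-equation contains a genuine gap, and it sits at the crux of the theorem. The chain-rule contribution of the constrained variable $\varrho'$ does \emph{not} reduce to vanishing boundary integrals: it produces precisely the bulk shift term that Hamilton's first equation needs. Varying $p'_A$ forces a compensating variation $\varrho''$ through the linearized momentum constraint, $2\,\leftexp{(q_0)}{\grad}_b \varrho''^b_a = p''_A\,\ptl_a U^A$, and, using the gauge relation ${\rm CK}^{ab}(N',q_0) = -2N{\rm e}^{-2\mbo{\nu}}q_0^{bc}\varrho'^a_c$, the cross term $\varrho'^a_b\varrho''^b_a$ in the variation of $\Vert\varrho'\Vert^2_{q_0}$ equals a spatial divergence \emph{plus} the bulk term $N'^b\,\ptl_b U^A\, p''_A$. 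This is exactly how the paper proceeds, yielding $D_{p'_A}\cdot H^{{\rm Reg}} = N\bar{\mu}^{-1}_q p'^A + N'^b\ptl_b U^A$, which matches the gauge-fixed evolution equation $h_{AB}(U)\ptl_t U'^B = {\rm e}^{2\mbo{\nu}}\bar{\mu}^{-1}_{q_0}Np'_A + h_{AB}(U)\mathcal{L}_{N'}U^B$ only because of that second, bulk term. The equation you match against instead, $h_{AB}\ptl_t U'^B=\bar{\mu}^{-1}_q Np'_A$, is the Section~\ref{section3} specialization for wave-map perturbations on a frozen background; in the coupled problem the perturbed shift $N'^a$ is generically nonzero, being sourced by $\varrho'$ and hence by $p'$. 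So your two claims compensate each other verbally, but the derivation is wrong: if the constraint-induced variations really vanished as boundary terms, Hamilton's first equation would fail whenever $N'^a\ptl_a U^A\neq 0$.

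Two smaller points. The identity \eqref{LS}, that $NH'$ is a pure spatial divergence, is not the relevant tool here: it concerns the Hamiltonian constraint and underlies the divergence-free current of the \emph{following} theorem; what this proof needs is the momentum-constraint divergence identity just described, pairing ${\rm CK}(N',q_0)$ against $\varrho''$. Your $U'$-half, by contrast, is sound and agrees with the paper's computation: since the linearized momentum constraint involves only the background gradient $\ptl_a U^A$ (the background momenta $p_A$ vanish for Kerr), $\varrho'$ does not depend on $U'$, and $\mbo{\nu}'$ does not appear in $H^{{\rm Reg}}$ at all, so no chain rule arises there and the gradient and curvature terms integrate by parts directly to $-\ptl_t p'_A$.
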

\begin{proof}
The first variation $D_{p'_A} \cdot H^{{\rm Reg}}$ contains the terms
\begin{align*}
\varrho'^a_b \varrho''^a_b &{}= \halb N^{-1} \bar{\mu}_q \bigl(\leftexp{(q)}{\grad}^b N'_a + \leftexp{(q)}{\grad}_a N'^b - \delta_a^b \leftexp{(q)}{\grad}_c N'^c\bigr) \notag\\
&\quad{}{} \times \bigl(\leftexp{(q)}{\grad}^a N''_b + \leftexp{(q)}{\grad}_b N''^a - \delta_b^a \leftexp{(q)}{\grad}_c N''^c\bigr) \notag\\
&{}= \halb N^{-1} \bar{\mu}_q \leftexp{(q)}{\grad}_a N'^b \bigl(\leftexp{(q)}{\grad}^a N''_b + \leftexp{(q)}{\grad}_b N''^a - \delta_b^a \leftexp{(q)}{\grad}_c N''^c\bigr).
\end{align*}
We have the divergence identity
\begin{gather*}
 \leftexp{(q)}{\grad}_a \bigl(N^{-1} N'^b \bar{\mu}_q \bigl(\leftexp{(q)}{\grad}^a N''_b + \leftexp{(q)}{\grad}_b N''^a - \delta_b^a \leftexp{(q)}{\grad}_c N''^c\bigr)\bigr) \notag\\
\qquad\quad{}\times N^{-1} N'^b \bigl(\leftexp{(q)}{\grad}_a \bigl(\leftexp{(q)}{\grad}^a N''_b + \leftexp{(q)}{\grad}_b N''^a - \delta_b^a \leftexp{(q)}{\grad}_c N''^c\bigr)\bigr) \notag\\
\qquad\quad{}+ N^{-1} \bar{\mu}_q \leftexp{(q)}{\grad}_a N'^b \bigl(\leftexp{(q)}{\grad}^a N''_b + \leftexp{(q)}{\grad}_b N''^a - \delta_b^a \leftexp{(q)}{\grad}_c N''^c\bigr) \notag\\
\qquad{}= -N'^b \bigl(\ptl_b U^A p''_A\bigr) + N^{-1} \bar{\mu}_q \leftexp{(q)}{\grad}_a N'^b \bigl(\leftexp{(q)}{\grad}^a N''_b + \leftexp{(q)}{\grad}_b N''^a - \delta_b^a \leftexp{(q)}{\grad}_c N''^c\bigr)
\end{gather*}
after using the momentum constraint, and
\begin{align*}
D_{p'_A} \cdot \halb p'_A p'^A = p'^A.
\end{align*}
Collecting the terms above gives the Hamilton equation
\begin{align*}
D_{p'_A} \cdot H^{{\rm Reg}} = N \bar{\mu}^{-1}_q p'^A + N'^b \ptl_b U^A
= \ptl_t U'^A.
\end{align*}
The quantity $D_{U'^A} \cdot H^{{\rm Reg}}$ contains the terms
\[
D_{U'^A} \cdot \halb h_{AB}(U) \leftexp{(h)}{\grad}_a U'^A \leftexp{(h)}{\grad}_b U'^B = N \bar{\mu}_q q^{ab} h_{AB} \leftexp{(q)}{\grad}_a U''^A \leftexp{(q)}{\grad}_b U'^B \]
note that
\begin{align*}
\leftexp{(q)}{\grad}_a \bigl(N \bar{\mu}_q q^{ab} h_{AB} U''^A \ptl_b U'^B\bigr)&{}= U''^A \leftexp{(q)}{\grad}_a \bigl(N \bar{\mu}_q q^{ab} h_{AB} \ptl_a U'^B\bigr) \\
&\quad{}+ N \bar{\mu}_q q^{ab} h_{AB} \leftexp{(q)}{\grad}_a U''^A \leftexp{(q)}{\grad}_b U'^B
\end{align*}
and
\begin{align*}
&D_{U'^A} \cdot \biggl(- \halb N \bar{\mu}_{q} q^{ab} h_{AE} (U) U'^A \leftexp{(h)}R^E _{BCD} \ptl_a U^B \ptl_b U^C U'^D\biggr) \notag\\
&\qquad{}= - N \bar{\mu}_q q^{ab} h_{AE} (U) \leftexp{(h)}R^E _{BCD} \ptl_a U^B \ptl_b U^C U'^D,
\end{align*}
which combine to give
\begin{align*}
D_{U'^A} \cdot H^{{\rm Reg}} &{}= - \leftexp{(q)}{\grad}_a \bigl(N \bar{\mu}_q q^{ab} h_{AB} \ptl_a U'^B\bigr) \notag\\
&\quad{}- N \bar{\mu}_q q^{ab} h_{AE} (U) \leftexp{(h)}R^E _{BCD} \ptl_a U^B \ptl_b U^C U'^D
\intertext{which is the Hamilton equation}
&{}= -\ptl_t p'_A.
\tag*{\qed}
\end{align*}
\renewcommand{\qed}{}
\end{proof}

\begin{Theorem}
Suppose the variables $\bigl\{ \bigl(q'_{ab}, \mbo{\pi}'^{ab}\bigr), \bigl(U'^A, p'_A\bigr) \bigr\} \in \mathscr{C}_{H'} \cap \mathscr{C}_{H'_a} \cap \mathscr{C}_{\tau'}$, then there exists a $($spacetime$)$ divergence-free vector field density such that its flux through $t-$constant hypersurfaces is $H^{{\rm Reg}}$ $($positive-definite$)$.
\end{Theorem}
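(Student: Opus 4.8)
The plan is to take $\bigl(J^{{\rm Reg}}\bigr)^t := \mathbf{e}^{{\rm Reg}}$, the integrand of $H^{{\rm Reg}}$ displayed in \eqref{e-reg-den}, and to produce spatial components $\bigl(J^{{\rm Reg}}\bigr)^a$ for which the spacetime divergence $\ptl_t \bigl(J^{{\rm Reg}}\bigr)^t + \ptl_a \bigl(J^{{\rm Reg}}\bigr)^a$ vanishes identically once the linearized constraints $H' = 0$, $H'_a = 0$ and the maximal-slicing gauge $\tau' = 0$ are imposed. Since positive-definiteness of $H^{{\rm Reg}} = \int_\Sigma \mathbf{e}^{{\rm Reg}}\,{\rm d}^2x$ in the gauge $\tau' \equiv 0$ has already been recorded (the target being the negatively curved $\mathbb{H}^2$), the entire content of the statement reduces to exhibiting the conserved current, i.e.\ to showing that $\ptl_t \mathbf{e}^{{\rm Reg}}$ is a pure spatial divergence modulo the constraints.

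First I would differentiate $\mathbf{e}^{{\rm Reg}}$ term by term in $t$. For the kinetic block $N\bar{\mu}^{-1}_{q_0}{\rm e}^{-2\mbo{\nu}}\bigl(\Vert\varrho'\Vert^2_{q_0}+\halb p'_A p'^A\bigr)$ I substitute the evolution equations for $\varrho'^a_b$ and for $p'_A$ (the linearized Hamilton equations specialized to the stationary Kerr background, together with the evolution of $\mbo{\nu}'$ recorded in the constraint-propagation lemma); for the potential block $\halb N\bar{\mu}_{q_0}q_0^{ab}h_{AB}\leftexp{(h)}{\grad}_a U'^A \leftexp{(h)}{\grad}_b U'^B$ and the curvature block I substitute $h_{AB}\ptl_t U'^B = \bar{\mu}^{-1}_q N p'_A + h_{AB}\mathcal{L}_{N'}U^B$. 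Because the background satisfies \eqref{Kerr-p-dot} and \eqref{kerr-maximal} ($\mbo{\tau}=\varrho\equiv 0$, $\Delta_q N=0$), many source terms collapse and the trace term proportional to $\tau'^2$ drops out in the maximal gauge. The $p'_A p'^A$ contribution produces, after an integration by parts against the wave-map field equation, the flux piece $N^2\bar{\mu}^{-1}_q p'\,\bar{\mu}_q q^{ab}\ptl_b(\cdot)$ visible in the announced $\bigl(J^{{\rm Reg}}\bigr)^a$, while the remaining gradient terms reorganize into the $\mathcal{L}_{N'}$- and $\mathcal{L}_{X'}$-type fluxes.

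The conceptual engine is the identity \eqref{LS}, $NH' = \ptl_b\bigl(N\bar{\mu}_q q^{ab} h_{AB}U'^A\ptl_b U^B - 2\bar{\mu}_q q^{ab}\ptl_a N\,\mbo{\nu}'\bigr)$, which is the concrete manifestation of the fact, established in the kernel lemma, that $(N,0)^{\mathsf{T}}$ lies in $\operatorname{ker}\bigl(D^\dagger\Psi\bigr)$. Every term in $\ptl_t\mathbf{e}^{{\rm Reg}}$ that is not already a manifest divergence can, using this identity and its momentum counterpart $\ptl_t H'_a = \ptl_a N\,H'$ from \eqref{ha-dot}, be traded for a boundary divergence plus a multiple of the constraints $H'$, $H'_a$; imposing the constraints annihilates the latter and leaves exactly $-\ptl_a\bigl(J^{{\rm Reg}}\bigr)^a$, with the spatial current in the form announced in the introduction. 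The shift-vector (ergo-region) contributions enter only through the Lie-derivative terms $\mathcal{L}_{N'}$, $\mathcal{L}_{X'}$ and are precisely the pieces absorbed by the last several terms of $\bigl(J^{{\rm Reg}}\bigr)^a$.

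The main obstacle is not conceptual but the organization of the very large number of cross-terms: one must verify that the curvature contribution $\leftexp{(h)}R^E_{BCD}$, the Christoffel terms hidden in $\leftexp{(h)}{\grad}$, and the shift terms all pair off correctly so that no non-divergence, non-constraint residue survives. In particular the delicate point is the handling of the terms mixing the geometric momenta $\varrho'^a_b$ with $\ptl_b N'$ and $\ptl_b N$ (the kinematic and conformal fluxes), which do not individually vanish and only cancel after repeated use of the conformal identity \eqref{conf-identity} and the background relation $\Delta_q N = 0$; keeping careful track of these against the boundary behaviour stipulated on $\Gamma$ is where the computation is most error-prone.
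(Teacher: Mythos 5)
Your proposal takes essentially the same route as the paper's own proof: differentiate $\mathbf{e}^{{\rm Reg}}$ in time, substitute the linearized Hamilton equations together with the background Kerr relations \eqref{Kerr-p-dot} and \eqref{kerr-maximal}, reorganize everything through divergence identities (including the pairing of the $p'_A$-kinetic term with the wave-map gradient term and the handling of the $\varrho'$--$\mbo{\nu}'$ cross terms), and observe that the only non-divergence residue is the multiple $-H'\mathcal{L}_{N'}(N)$, which is annihilated on $\mathscr{C}_{H'}$ — precisely the mechanism encoded by $(N,0)^{\mathsf{T}}\in\operatorname{ker}D^{\dagger}\cdot\Psi$ through \eqref{LS}. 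The only cosmetic deviation is your invocation of \eqref{ha-dot}, which the paper's computation does not actually need at this step; otherwise your identification of the key identity and of the cancellation structure matches the paper's argument.
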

\begin{proof}
In the proof we shall use the perturbation evolution equations and the background (Kerr metric) field equations. Consider \smash{$\ptl_t \mathbf{e}^{{\rm Reg}}$} and it contains the following terms:
\begin{gather}
1. \ \ N \bar{\mu}^{-1}_q p'^A \ptl_t p'_A= N \bar{\mu}^{-1}_q p'^A \bigl(h_{AB} \leftexp{(h)}{\grad}_a \bigl(N \bar{\mu}_q q^{ab} \leftexp{(h)}{\grad}_b U'^B\bigr) \notag\\
\hphantom{1. \ N \bar{\mu}^{-1}_q p'^A \ptl_t p'_A=}{} + N \bar{\mu}_q h_{AB} R^E_{\,\,\,\, BCD} q^{ab} \ptl_a U^B \ptl_b U^D U'^C \bigr),\label{term1}
\\
2. \ \ N \bar{\mu}_q q^{ab} \leftexp{(h)}{\grad}_a \bigl(\ptl_t U'^A\bigr) \leftexp{(h)}{\grad}_b U'^B \notag\\
 \qquad{}= N \bar{\mu}_q q^{ab} h_{AB}(U) \leftexp{(h)}{\grad}_b U'^B \bigl(\leftexp{(h)}{\grad}_a \bigl(\bar{\mu}^{-1}_q Np'^A + \mathcal{L}_{N'} U^A\bigr)\bigr).\label{term2}
 \end{gather}
 Note the divergence relation involving the terms from \eqref{term1} and \eqref{term2},
 \begin{align*}
 \leftexp{(h)}{\grad}_a \bigl(N^2 q^{ab} h_{AB} p'^A \leftexp{(h)}{\grad}_b U'^B\bigr) &{}= \bar{\mu}^{-1}_q N p'^A \leftexp{(h)}{\grad}_a \bigl(N \bar{\mu}_q q^{ab} \leftexp{(h)}{\grad}_b U'^B\bigr) \notag\\
 & \quad{}+ N \bar{\mu}_q q^{ab} h_{AB} \leftexp{(h)}{\grad}_b U'^B \leftexp{(h)}{\grad}_a \bigl(N \bar{\mu}^{-1}_q p'^A\bigr)
 \end{align*}
\begin{gather*}
3. \ \ N \bar{\mu}_q h_{AE}(U) \ptl_t U'^A R^A_{\,\,\,\,BCD} q^{ab} \ptl_a U^B \ptl_b U^D U'^C \notag\\
 \qquad{}=N \bar{\mu}_q h_{AE} \bigl(\bar{\mu}^{-1}_q N p'_A + \mathcal{L}_{N'} U^A\bigr) R^E_{\,\,\,BCD} q^{ab} \ptl_a U^B \ptl_b U^D U'^C
\\
 4. \ \ {\rm e}^{-2 \mbo{\nu}} N \bar{\mu}^{-1}_{q_0} \varrho'^c_a \ptl_t \varrho'^a_c \\
 \qquad{} = N \bigl(\leftexp{(q)}{\grad}_a N'^b + \leftexp{(q)}{\grad}_a N'^b - \delta^b_a \leftexp{(q)}{\grad}_c N'^c\bigr) \\
\qquad\quad{}\times\biggl( N \bar{\mu}_q \biggl(q^{ac} \delta^d_b - \halb \delta^a_b q^{cd}\biggr) \biggl(h_{AB} \ptl_b U'^A \ptl_d U^B + \halb \ptl_{U}h_{AB} \ptl_b U'^A \ptl_d U^B \biggr) \\
\qquad\quad\hphantom{\times\biggl(\,}{} + 2 \bar{\mu}_q q^{ac}\ptl_c N \ptl_b \mbo{\nu}'
 - \bar{\mu}_q \delta^a_b q^{cd} \ptl_c N \ptl_d \mbo{\nu}'\biggr) \\
\qquad{} = \mathcal{L}_{N'} \bigl(\bar{\mu}_{q_0} q^{ab}_0\bigr) \biggl(h_{AB} \ptl_a U'^A \ptl_b U^B + \halb \ptl_{U^C} h_{AB} \ptl_a U^A \ptl_b U^B U'^C -2 \ptl_a N \ptl_b \mbo{\nu}'\biggr).
\end{gather*}

Consider the following divergence identities:
\begin{subequations} \label{div-ident-mom}
\begin{gather}
\leftexp{(q_0)}{\grad}^a \bigl(N'^b \ptl_a N \ptl_b \mbo{\nu}' \bar{\mu}_{q_0}\bigr)= \bar{\mu}_{q_0} \leftexp{(q_0)}{\grad}^a N'^b \ptl_a N \ptl_b \mbo{\nu}' + N'^b \leftexp{(q_0)}{\grad}^a \bigl(\bar{\mu}_{q_0} \ptl_a N \ptl_b \mbo{\nu}'\bigr) \\
\leftexp{(q_0)}{\grad}^b \bigl(N'^a \ptl_a N \ptl_b \mbo{\nu}' \bar{\mu}_{q_0}\bigr)= \bar{\mu}_{q_0} \leftexp{(q_0)}{\grad}^b N'^a \ptl_a N \ptl_b \mbo{\nu}' + N'^a \leftexp{(q_0)}{\grad}^b \bigl(\bar{\mu}_{q_0} \ptl_a N \ptl_b \mbo{\nu}'\bigr) \\
\leftexp{(q_0)}{\grad}_c \bigl(N'^c q^{ab}_0 \ptl_a N \ptl_b \mbo{\nu}' \bar{\mu}_{q_0}\bigr)= \leftexp{(q_0)}{\grad}_c N'^c\bigl(q^{ab}_0 \ptl_a N \ptl_b \mbo{\nu}' \bar{\mu}_{q_0}\bigr)\nonumber\\ \hphantom{\leftexp{(q_0)}{\grad}_c \bigl(N'^c q^{ab}_0 \ptl_a N \ptl_b \mbo{\nu}' \bar{\mu}_{q_0}\bigr)=}{}
+ N'^c \leftexp{(q_0)}{\grad}_c \bigl(q^{ab}_0 \ptl_a N \ptl_b \mbo{\nu}' \bar{\mu}_{q_0}\bigr).
\end{gather}
\end{subequations}
Based on the right-hand sides of the divergence identities in \eqref{div-ident-mom}, we get after using the background field equation \eqref{kerr-maximal}
\begin{gather*}
-2 \mathcal{L}_{N'} \bigl(\bar{\mu}_q q^{ab}\bigr) \ptl_a N \ptl_b \mbo{\nu}' \notag\\
\qquad{}= -2 N'^b \ptl_b N \ptl_a \bigl(\bar{\mu}_q q^{ab} \ptl_a \mbo{\nu}'\bigr) + 2 \leftexp{(q_0)}{\grad}_a \bigl(N'^c \ptl_c \mbo{\nu}' \bar{\mu}_{q_0} q^{ab}_0 \ptl_b N\bigr) \notag\\
\qquad\quad{} + 2 \leftexp{(q_0)}{\grad}_a \bigl(N'^c \ptl_c N \bar{\mu}_{q_0} q^{ab}_0 \ptl_b \mbo{\nu}'\bigr) - 2 \leftexp{(q_0)}{\grad}_c \bigl(N'^c \ptl_a \mbo{\nu}' \bar{\mu}_{q_0} q^{ab}_0 \ptl_b N\bigr) \notag\\
\qquad{}= \mathcal{L}_{N'} N \biggl(- H' + h_{AB} \ptl_a U'^A \ptl_b U^B + \halb \ptl_{U^C} h_{AB} \ptl_a U^A \ptl_b U^B U'^C\biggr) \notag\\
\qquad\quad{} + 2 \leftexp{(q_0)}{\grad}_b \bigl(\mathcal{L}_{N'} \mbo{\nu}' \bar{\mu}_{q_0} \ptl^b N+ \mathcal{L}_{N'} N \bar{\mu}_{q_0} \ptl^b \mbo{\nu'} - \bar{\mu}_{q_0} N'^b \ptl_a N \ptl^a \mbo{\nu}'\bigr).
\end{gather*}
Now let us focus on the remaining `shift' terms:
\begin{align}
&N \bar{\mu}_q q^{ab} h_{AB} \leftexp{(h)}{\grad}_b U'^B \leftexp{(h)}{\grad}_a \bigl(\mathcal{L}_{N'} U^A\bigr),\nonumber\\
&- N \bar{\mu}_q h_{AE} \bigl(\mathcal{L}_{N'} U^A\bigr) R^E_{\,\,\,BCD} q^{ab} \ptl_a U^B \ptl_b U^D U'^C, \nonumber\\
& \mathcal{L}_{N'} \bigl(\bar{\mu}_{q_0} q^{ab}_0\bigr) \biggl(h_{AB} \ptl_a U'^A \ptl_b U^B + \halb \ptl_{U}h_{AB} \ptl_a U'^A \ptl_b U^B\biggr).\label{grad-square}
\end{align}
Consider the quantity $N \bar{\mu}_q q^{ab} h_{AB} \ptl_a U'^A \bigl(\mathcal{L}_{N'}\bigl(\ptl_b U^B\bigr)\bigr)$ that occurs in \eqref{grad-square}, we have
\begin{gather*}
N \bar{\mu}_q q^{ab} h_{AB} \ptl_a U'^A \bigl(\mathcal{L}_{N'}\bigl(\ptl_b U^B\bigr)\bigr) + \ptl_{U^C} h_{AB} \mathcal{L}_{N'} U^C N \bar{\mu}_qq^{ab}\ptl_a U'^A \ptl_b U^B \notag\\
\qquad{}= \ptl_a U'^A \mathcal{L}_{N'} \bigl(N \bar{\mu}_q q^{ab} h_{AB}(U) \ptl_b U^B\bigr) \notag\\
\quad\qquad{} -\mathcal{L}_{N'} N \bigl(h_{AB} \bar{\mu}_q q^{ab} \ptl_b U'^B\bigr) - \mathcal{L}_{N'} \bigl(\bar{\mu}_q q^{ab}\bigr) N h_{AB} \ptl_a U'^A\ptl_b U^B,
\end{gather*}
likewise
\begin{gather*}
 U'^A \mathcal{L}_{N'} \bigl(\ptl_b \bigl(N \bar{\mu}_q q^{ab} h_{AB} \ptl_a U^B\bigr) \bigr) \notag\\
\qquad{}= U'^A \mathcal{L}_{N'} h_{AB}\ptl_a \bigl(N \bar{\mu}_q q^{ab} \ptl_b U^B\bigr) + U'^A \mathcal{L}_{N'} \bigl(N \bar{\mu}_q q^{ab} \ptl_b U^B \ptl_{U^C} h_{AB} \ptl_a U^C\bigr).
\end{gather*}

Collecting the terms above, while using the background field equations \eqref{Kerr-p-dot} and computations analogous to the ones in Section~\ref{section3} and
\begin{align*}
& \ptl_a \bigl(U'^A \mathcal{L}_{N'} \bigl(N \bar{\mu}_q q^{ab} h_{AB} \ptl_a U^A\bigr)\bigr) \notag\\
&\qquad{} = \ptl_a U'^A \mathcal{L}_{N'} \bigl(N \bar{\mu}_q q^{ab} h_{AB} \ptl_a U^A\bigr) + \mathcal{L}_{N'} \bigl(\ptl_a \bigl(N \bar{\mu}_q q^{ab} h_{AB} \ptl_a U^A\bigr) \bigr) \notag \\
&\qquad{} = \ptl_a U'^A \mathcal{L}_{N'} \bigl(N \bar{\mu}_q q^{ab} h_{AB} \ptl_a U^A\bigr) + \ptl_a \bigl(\mathcal{L}_{N'} \bigl(N \bar{\mu}_q q^{ab} h_{AB} \ptl_a U^A\bigr)\bigr),
\end{align*}
 we have
\begin{align*}
\ptl_t \mathbf{e}^{{\rm Reg}} &{}= \ptl_b \bigl(N^2 \bar{\mu}^{-1}_q \bigl(\bar{\mu}_{q_0} q_{0}^{ab} p'_A \ptl_a U'^A\bigr) + U'^A \mathcal{L}_{N'} \bigl(\bar{\mu}_{q_0} q^{ab}_0 h_{AB} \ptl_b U^B\bigr) \bigr) \notag\\
 &\quad{}\times \mathcal{L}_{N'} (N) \bigl(2 \bar{\mu}_{q_0} q_{0}^{ab} \ptl_a \mbo{\nu}' + 2 \mathcal{L}_N \mbo{\nu}' \bar{\mu}_q q^{ab} \ptl_a N - 2 N'^b \bar{\mu}_q q^{bc} \ptl_a \mbo{\nu}' \ptl_c N\bigr) - H' \mathcal{L}_{N'}(N)
 \intertext{for $ \bigl\{ \bigl(q'_{ab}, \mbo{\pi}'^{ab}\bigr), \bigl(U'^A, p'_A\bigr) \bigr\} \in \mathscr{C}_{H'}$ this reduces to }
 &{}= \ptl_b \bigl(N^2 \bar{\mu}^{-1}_q \bigl(\bar{\mu}_{q_0} q_{0}^{ab} p'_A \ptl_a U'^A\bigr) + U'^A \mathcal{L}_{N'} \bigl(\bar{\mu}_{q_0} q^{ab}_0 h_{AB} \ptl_b U^B\bigr) \bigr) \notag\\
 &\quad{}\times \mathcal{L}_{N'} (N) \bigl(2 \bar{\mu}_{q_0} q_{0}^{ab} \ptl_a \mbo{\nu}'\bigr) + 2 \mathcal{L}_N \mbo{\nu}' \bar{\mu}_q q^{ab} \ptl_a N - 2 N'^b \bar{\mu}_q q^{bc} \ptl_a \mbo{\nu}' \ptl_c N .
\end{align*}
Thus, if we define
\begin{align*}
&(J^ t)^ {{\rm Reg}} := \mathbf{e}^{{\rm Reg}}, \notag\\
&(J^b)^{{\rm Reg}} := N^2 {\rm e}^{-2 \mbo{\nu}}\bigl(q_{0}^{ab} p'_A \ptl_a U'^A\bigr)+ U'^A \mathcal{L}_{N'} \bigl(N \bar{\mu}_{q_0} q^{ab}_0 h_{AB} \ptl_b U^B\bigr) \notag\\
 &\hphantom{(J^b)^{{\rm Reg}} :=}{}\times \mathcal{L}_{N'} (N) \bigl(2 \bar{\mu}_{q_0} q_{0}^{ab} \ptl_a \mbo{\nu}'\bigr) + 2 \mathcal{L}_{N'} \mbo{\nu}' \bar{\mu}_q q^{ab} \ptl_a N - 2 N'^b \bar{\mu}_q q^{bc} \ptl_a \mbo{\nu}' \ptl_c N,
\end{align*}
it follows that $J^{{\rm Reg}}$ is a divergence-free vector field density for
\[
\big \{ (q'_{ab}, \mbo{\pi}'^{ab}), \bigl(U'^A, p'_A\bigr) \big \} \in \mathscr{C}_{H'} \cap \mathscr{C}_{H'_a} \cap \mathscr{C}_{\tau'}.
\tag*{\qed}
\]
\renewcommand{\qed}{}
\end{proof}

\section{Boundary behaviour of the dynamics in the orbit space}\label{section5}

In any dynamical physical theory, the concept of a conserved energy plays a fundamental role in the analysis of stability of its solutions. For instance, in the Lyapunov theory of stability, the notion of energy,
its positive-definiteness and its dynamical behaviour act as important basis for various notions of stability. Likewise, PDE techniques are typically based on a conserved and positive energy.

In the previous sections, we constructed a positive-definite `bulk' Hamiltonian energy functional for the axially symmetric linear perturbative theory of Kerr black hole spacetimes for the entire subextremal range $(\vert a \vert<M)$; and the associated variational principles.

We used a special `Weyl--Papapetrou' gauge that provides additional structure in Einstein's equations for general relativity:
\begin{align*}
R_{\mu \nu} =0, \qquad \mu, \nu=0, 1,2, 3, \qquad \bigl(\bar{M}, \bar{g}\bigr),
\end{align*}
when the $3+1$ Lorentzian spacetime $\bigl(\bar{M}, \bar{g}\bigr)$ admits a rotational isometry
\begin{align} \label{WP-gauge}
\bar{g} = \vert \Phi \vert^{-1} g + \vert \Phi \vert ({\rm d}\phi + A_\nu {\rm d}x^\nu)^2, \qquad \nu = 0,1, 2, \qquad \text{(Weyl--Papapetrou gauge)},
\end{align}
where $g$ is the Lorentzian metric on the orbit space $M := \bar{M}/{\rm SO}(2)$. $A$, $g$ are independent of the parameter $\phi$ corresponding to the rotationally symmetric Killing vector $\Phi :=\ptl_\phi$, whose spacetime norm squared is represented as $\vert \Phi \vert$.

As we already discussed, the effects of the geometry of the Kerr black hole spacetime manifest themselves in the Kerr black hole stability problem in a fundamental way. In particular, as~a~result of the ergo-region the perturbations (scalar wave, Maxwell and linearized gravity) do not necessarily have a positive-definite and conserved energy. This significantly limits the immediate use of the standard PDE techniques in establishing the boundedness and asymptotic decay of the perturbations.

A standard technique in the literature to construct a positive-definite energy functional is to consider an energy current from the linear combination of the time-translational $\ptl_t$ and rotational~$\ptl_\phi$ vector fields
\begin{align*}
\ptl_ t + \chi \ptl_\phi.
\end{align*}
 In view of the fact that the corresponding energy is not necessarily conserved, a separate, intricate Morawetz spacetime integral estimate is needed to control this energy in time. Moreover, these methods are suitable for small angular-momentum $\vert a \vert \ll M$.

Thus, for our Hamiltonian energy to be efficacious, it is important to establish that it is strictly conserved in time. In the process of construction of the positive-definite Hamiltonian energy in our work, we pick up several boundary terms, at various stages. From both conceptual and technical perspectives these boundary terms play an important role in our dynamical stability problem. These issues do not arise in the corresponding black hole uniqueness theorems of stationary black holes. For the convenience of the reader, let us elaborate on why it is fundamental to rigorously understand the behaviour of these boundary terms and provide a~motivation for this article.

 As we already remarked, there is a possibility that the energy density in the original construction can be made to be locally negative. The positive-definite energy is constructed using the transformations that involve boundary terms from a Hamiltonian energy with an indefinite sign, while preserving the `bulk' Hamiltonian structure of the equations. Therefore, in our argument it is fundamental to rule out the `negativity' or the ambiguity of sign does not `secretly get hidden' in a plethora of boundary terms that occur in both the Carter--Robinson type identities and the linearization stability methods.

 In the usual PDE theory, we perform a variation with respect to smooth compactly supported~$C^\infty_0$ `test functions' to evaluate the field equations and understand the critical points. However, in the Einstein equations one cannot ignore the boundary terms, because they can have a physical and geometric interpretation.

 From the perspective of calculus of variations, a positive-definite second variation mass-energy corresponds to the mass of the Kerr black hole being a minimizer, for fixed angular momentum, in the space of admissible metrics. This interpretation combines well with the mass-angular momentum inequalities for axisymmetric spacetimes \cite{D09}. However, as we already discussed, the boundary terms cannot be ignored in this interpretation.

 Most PDE techniques in establishing the uniform boundedness and decay, work at their natural best if there exists a positive-definite and (strictly) conserved energy, that goes together with the evolution of the PDEs. In case there exists a residual, dynamically non-vanishing surface integral (especially with an indefinite sign), together with a positive-definite `bulk' Hamiltonian~\eqref{e-reg} and~\eqref{e-reg-den}, then this could significantly impede the efficacy of our positive-definite `bulk' Hamiltonian \eqref{e-reg} and \eqref{e-reg-den} in PDE methods that establish uniform-boundedness and decay.

The Weyl--Papapetrou gauge \eqref{WP-gauge}, although it plays a fundamental role in the construction of our bulk Hamiltonian energy \eqref{e-reg} and \eqref{e-reg-den}, presents significant regularity issues at the axes and at the infinity. This is due to the behaviour of the rotationally symmetric Killing vector field, i.e., \smash{$\vert \Phi \vert^{-1}$} blows up at the axes $\Gamma$ and $\vert \Phi \vert$ blows up at the infinity $\iota^0$. These expressions routinely occur in our formulas due to the form of the Weyl--Papapetrou gauge \eqref{WP-gauge}. It may be noted that these regularity issues manifest themselves in the form of boundary terms that occur precisely at these boundaries $\Gamma$, $\iota^0$ and $\mathcal{H}^+$.

The Einstein equations in the Weyl--Papapetrou gauge are not purely hyperbolic in nature. In particular, the Weyl--Papapetrou gauge has coupled elliptic-hyperbolic PDE structure. This causes gauge-related causality issues. Even if we start with a compactly supported initial data, away from the boundaries $\Gamma$, $\iota^0$, $\mathcal{H}^+$, thereby ensuring the regularity and the vanishing of boundary flux integrals initially, it is not necessary that the fields stay compactly supported at later times. In other words, `pure gauge' perturbations and the associated boundary terms, can `kick-in' in the asymptotic regions, away from the causal future of the support of initial data, at later times, thus affecting the boundary behaviour of the dynamics.

On account of the boundary related issues and complications mentioned above, the question whether the Weyl--Papapetrou gauge is even compatible with the axially symmetric evolution problem of the Einstein equations can legitimately arise. In this work, we shall provide a~favourable answer to this question. We explain how we overcome these complications and discuss our methods in the following.

We formulate the initial value problem of the linearized Einstein equations in the harmonic gauge and transform the solutions to the Weyl--Papapetrou gauge. In the harmonic gauge, the constraints and the gauge condition propagate in time automatically, as long they are satisfied on the initial data. Indeed, in the harmonic gauge the linearized Einstein equations system is purely a hyperbolic system of equations, which in turn implies propagation of regularity and causality, from standard theory of hyperbolic PDE. We take advantage of the global regularity in harmonic gauge including at the axes and infinity in harmonic gauge. We would like to point out that, when we make such gauge transformations, we can lose regularity but in our construction we show in fact that there exists a $(C^\infty$-)diffeomorphism.
 In our work, we also construct a~variational formulation of the linearized Einstein equations in harmonic gauge, which may be of independent interest and fits well with the theme of our approach.

In view of such conceptual and technical subtleties, the `safest' way to prove that the positive-energy \smash{$H^{{\rm Reg}}$} we constructed is strictly conserved is to use the argument that the time-derivative of the energy \smash{$H^{{\rm Reg}}$} vanishes dynamically in time. In other words, as we have already shown that the time derivative of the energy density $\mathbf{e}^{{\rm Reg}}$ is a pure spatial divergence (a fact that is associated to $(N, 0)^{\mathsf{T}}$ being the kernel of the adjoint of the dimensionally reduced constraint~map)
\begin{align*}
&\bigl(J^ t\bigr)^ {{\rm Reg}}:= \mathbf{e}^{{\rm Reg}}, \notag\\
&\bigl(J^b\bigr)^{{\rm Reg}}:= N^2 {\rm e}^{-2 \mbo{\nu}}\bigl(q_{0}^{ab} p'_A \ptl_a U'^A\bigr) + U'^A \mathcal{L}_{N'} \bigl(N\bar{\mu}_{q_0} q^{ab}_0 h_{AB} \ptl_b U^B\bigr) \notag\\
&\hphantom{\bigl(J^b\bigr)^{{\rm Reg}}:=}{}\times \mathcal{L}_{N'} (N) \bigl(2 \bar{\mu}_{q_0} q_{0}^{ab} \ptl_a \mbo{\nu}'\bigr) + 2 \mathcal{L}_{N'} \mbo{\nu}' \bar{\mu}_q q^{ab} \ptl_a N - 2 N'^b \bar{\mu}_q q^{bc} \ptl_a \mbo{\nu}' \ptl_c N,
\end{align*}
where $J^{{\rm Reg}}$ is a divergence-free vector field density, we need to prove that the fluxes of $J^b$ at the boundaries vanish dynamically in time.

Using Fredholm theory and that transverse-traceless $2$-tensors
vanish\footnote{This is in contrast with the $(2+1)$-dimensional relativistic Teichm\"uller theory (for, e.g., $\Sigma$ is compact and of genus $>1$), where transverse-traceless tensors play an important role (see~\cite{Moncrief_2007}).} for our geometry and topology, we reduce the elliptic operators into conformal Killing operators. Furthermore, benefiting from conformal invariance of these operators, we reduce the elliptic operators into (tensorial) Poisson equations. We are then able to obtain the desired decay rates of the fields using both the fundamental solution and Fourier methods.

In the fundamental solution approach, we construct regularity and decay rate in our orbit space geometry, using the method of images in such a way that the total `charge' of the source is zero. In the Fourier methods, the regularity conditions imply that the frequency corresponding to the logarithmic blow up of the solution does not occur. In both methods, we recover a faster decay rate than for the usual Poisson equation in two dimensions. This faster decay rate, in contrast to the translational symmetric dimensional reduction, plays a fundamental role in our work.

 The diffeomorphism invariance of the Einstein equations allows us the gauge-freedom. In~the ${3+1}$ (vis-\'a-vis $2+1$) picture, this gauge freedom is reflected in terms of the `lapse' and the `shift' vector field. After fixing the gauge, we estimate the behaviour of gauge dependent quantities in the asymptotic regions where the perturbations are assumed to be pure gauge, starting from the independent wave map phase space variables and then moving on to dependent phase space variables.

 Following the estimates on the fundamental phase space variables, we estimate the fluxes of each of the quantities in $J^b$ term by term and we prove that regularity holds and that they dynamically vanish at all the three boundaries $\Gamma$, $\iota^0$ and $\H^+$, including at the corner $\Gamma \cap \H^+$.

 In establishing the boundary behaviour of these terms,
 we pay special attention to the quantity $\mbo{\nu}$ that is related to a conformal factor. We note that $\mbo{\nu}'$ \emph{does not transform like a scalar}.
 This result may be of independent interest in conformal geometry. Secondly, we also establish that an integral quantity $Y_0(\mathcal{H})$ that vanishes for all times. These two results are crucial in resolving the regularity issues on the axes and at the corners.

In the current work, in keeping with the spirit of part I, we pay special attention to the covariance
of our analysis with respect to the target metric and not rely on a specific gauge on the target. Apart from the aesthetics, this can be used as a basis for studying the stability of higher $(n+1)$ dimensional black holes with toroidal symmetry $\mathbb{T}^{n-2}$.

The problem of stability of Kerr black hole spacetimes is currently an area of very active research among several groups worldwide. In this connection, there are several important and remarkable works (see, e.g., \cite{ABBM_19,LB_15_1, LB_15_2,HDR_17,DR_11, HHV_19, SMa_17_1, SMa_17_2, Tato_11}). Furthermore, there have been recent advances in the nonlinear black hole stability problem. In \cite{HDRT_21}, nonlinear stability of Schwarzschild black holes was announced. In \cite{GKS_22}, the full proof of slowly rotating black holes is announced (see also \cite{GKS_20, KS_21}). In \cite{HDR_16}, Dafermos--Holzegel--Rodnianski have established linear stability of Schwarzschild black hole spacetimes for gravitational perturbations, where a positive-definite energy played a fundamental role (see~\cite{GH_16}).
A positive-definite energy functional for both even and odd gravitational perturbations of Schwarzschild black holes was constructed by Moncrief \cite{Moncrief_74} using mode decomposition. We should point out that even in the case of Schwarzschild black holes, which do not contain the ergo-region, establishing the existence of a~positive-definite energy functional for gravitational is non-trivial.

In the case of axial symmetry, wave map behaviour in Kerr spaces has been studied in~\cite{IK_15}. As~we already pointed out, a positive-definite energy was first constructed by Dain--de Austria~\cite{DA_14} for extremal Kerr black holes $(\vert a \vert =M)$ using Brill mass formula \cite{B59, D09}.

 These works (except \cite{DA_14}) are dedicated to the stability of Kerr black hole spacetimes with `small' or `very small' angular momentum $(\vert a \vert \ll M)$. Relatively less is understood about the stability of Kerr black hole spacetimes for large, but subextremal angular momentum $ \vert a \vert<M$. This is mainly attributed due to the effects of the ergo-region that always surrounds a Kerr black hole spacetime with a non-vanishing angular momentum.

In the case of the large and sub-extremal angular-momentum of Kerr black hole spacetimes $ (\vert a \vert <M)$, the effects of the ergo-region become even more subtle and counterbalancing its effects to obtain uniform-boundedness and decay of propagating fields is even more difficult from a PDE perspective.
The decay of a linear wave equation on Kerr black hole spacetimes with $\vert a \vert$ is studied in the remarkable works \cite{DRS_16, FKSY_05,FKSY_06,FKSY_08}.

 We expect that our results will be useful to fill this gap for Maxwell and gravitational (i.e., Einsteinian) perturbations. Even among the methods for large $\vert a \vert<M$, our approach is different in the sense that the (Hamiltonian) flow of our phase space variables is restricted to positive-definite energy surfaces. Thus, we have a relation (equality) between the energy at different time levels without the need for a spacetime `bulk' integral or Morawetz estimate.

The more general class of $3+1$ Lorentzian spacetimes is the Kerr--Newman family of spacetimes which is a solution of coupled $3+1$ Einstein--Maxwell equations for general relativity. In~a~series of classical works \cite{Moncrief_74_1,Moncrief_74_2,Moncrief_74_3}, the stability of Reissner--Nordstr\"om spacetimes is studied for the entire sub-extremal range $\vert Q \vert <M$. Indeed, the fact that the Hamiltonian stability results of Reissner--Nordstr\"om spacetimes hold for the full sub-extremal range was an early indication that the current results and the Kerr--Newman results \cite{GM17_gentitle} are feasible. In view of the rigidity of the Reissner--Nordstr\"om spacetimes, `non-trivial' perturbations of the RN spacetimes belong to the Kerr--Newman family of black hole spacetimes.

We expect that the results in \cite{GM17_gentitle} shall allow us to venture in this direction. Equivalent results for the stability of Kerr--Newman--de Sitter spacetimes $(\vert a \vert, \vert Q \vert<M)$, which has different asymptotics and gauge issues compared to the Kerr--Newman problem in \cite{GM17_gentitle}, is a work in progress (cf.\ \cite{NG_17_2, NG_17_1} for now).

Hollands and Wald \cite{WH_13} have developed a notion of canonical energy, which was later extended by Prabhu--Wald \cite{WP_13}, where they showed that if the energy is not positive-definite for axisymmetric perturbations of Kerr black hole spacetimes, it would blow up at later times.
This work, based on the $2+1$ Einstein-wave map formalism and dimensional reduction in Weyl--Papapetrou gauge, confirms their criterion for axisymmetric stability.

\section{Global existence and propagation of regularity}\label{section6}

Suppose $\bigl(\bar{M}, \bar{g}\bigr)$ is a Lorentzian spacetime, then consider the Einstein--Hilbert action on $\bigl(\bar{M}, \bar{g}\bigr)$,
\begin{align*}
S_{\text{H}}[\bar{\gg}] := \int \bar{R}_{\bar{\gg}} \bar{\mu}_{\bar{\gg}}.
\end{align*}
Consider a curve $\mbo{\gamma}_s \colon [0,1] \to C(\bgg)$ where $C(\bgg)$ is a space of smooth Lorentzian metrics $\bgg$. We shall use the following notation $ \bar{\gg}' := D_{\mbo{\gamma}_s} \cdot \bar{\gg}$ and define
\begin{align*}
D_{\mbo{\gamma}_s} \cdot S_{\text{H}} \bigl[\bgg', \bgg\bigr] := \int \biggl( -\text{Ric}(\gg)^{\mu \nu} \bar{\gg}' + \halb R_\gg \gg^{\mu \nu} \gg'_{\mu \nu} + \leftexp{(\bar{\gg})}{\grad}^\mu \leftexp{(\bar{\gg})}{\grad}^\nu \bar{\gg}'_{\mu \nu} - \leftexp{(\bar{\gg})}{\grad}^2 {\rm tr}\bigl(\bgg'\bigr) \biggr) \bar{\mu}_\bgg.
\end{align*}
Now, consider another analogous curve $\mbo{\gamma}_\la \colon [0, 1] \to C (\bgg)$ and let us define $ \bgg'' := D^2_{\gamma_s \gamma_\lambda} \cdot \bgg $, we then have for the Kerr background metric. The functional \eqref{EH-second-variation} can be simplified as follows:
\begin{gather}
D^2_{\mbo{\gamma}_\la \mbo{\gamma}_s} \cdot S_{\text{H}} \bigl[\bgg'', \bgg', \bgg\bigr] := \int
\bigl( \bgg^{\mu \nu} \leftexp{(\bar{\gg})}{\grad}^\gm \leftexp{(\bar{\gg})}{\grad}^\d \bgg'_{\gm \d} - \bgg^{\mu \nu} \leftexp{(\bar{\gg})}{\grad}^\gm \leftexp{(\bar{\gg})}{\grad}_\gm {\rm tr} \bigl(\bgg'\bigr)- \leftexp{(\bar{\gg})}{\grad}_\gm \leftexp{(\bar{\gg})}{\grad}^\mu \bgg'^{\gm \nu}
\notag\\
 \hphantom{D^2_{\mbo{\gamma}_\la \mbo{\gamma}_s} \cdot S_{\text{H}} \bigl[\bgg'', \bgg', \bgg\bigr] := \int\bigl(}{}
- \leftexp{(\bar{\gg})}{\grad}_\gm \leftexp{(\bar{\gg})}{\grad}^\nu \bgg'^{\gm \mu}
+ \leftexp{(\bar{\gg})}{\grad}^\mu \leftexp{(\bar{\gg})}{\grad}^\nu {\rm tr}\bigl(\bgg '\bigr) \notag\\
 \hphantom{D^2_{\mbo{\gamma}_\la \mbo{\gamma}_s} \cdot S_{\text{H}} \bigl[\bgg'', \bgg', \bgg\bigr] := \int\bigl(}{}
+ \leftexp{(\bar{\gg})}{\grad}^\a \leftexp{(\bar{\gg})}{\grad}_\a \bgg'^{\mu \nu} \bigr) \halb \bgg'_{\mu \nu} \bar{\mu}_\bgg \notag\\
 \hphantom{D^2_{\mbo{\gamma}_\la \mbo{\gamma}_s} \cdot S_{\text{H}} \bigl[\bgg'', \bgg', \bgg\bigr] :=\int}{}
 + \leftexp{(\bar{\gg'})}{\grad}^\mu\bigl( \leftexp{(\bar{\gg})}{\grad}^\nu \bgg'_{\mu \nu} - \bgg^{\gm \d} \leftexp{(\bar{\gg})}{\grad}_\mu \bgg'_{\gm \d}\bigr) \bar{\mu}_{\bgg}
\notag\\
 \hphantom{D^2_{\mbo{\gamma}_\la \mbo{\gamma}_s} \cdot S_{\text{H}} \bigl[\bgg'', \bgg', \bgg\bigr] :=\int}{}
+ \leftexp{(\bar{\gg})}{\grad}^\mu \bigl(\leftexp{(\bar{\gg})}{\grad}^\nu \bgg'_{ \mu \nu} - \bgg^{\gm \d} \leftexp{(\bar{\gg})}{\grad}_\mu \bgg'_{\gm \d}\bigr) \bar{\mu}_\bgg \notag\\
 \hphantom{D^2_{\mbo{\gamma}_\la \mbo{\gamma}_s} \cdot S_{\text{H}} \bigl[\bgg'', \bgg', \bgg\bigr] :=\int}{}
 + \leftexp{(\bar{\gg})}{\grad}^\mu \bigl(\leftexp{(\bar{\gg})}{\grad}^\nu \bgg'_{\mu \nu} - \bgg^{\gm \d} \leftexp{(\bar{\gg})}{\grad}_\mu \bgg'_{ \gm \d}\bigr) \biggl(\halb \bar{\mu}_\bgg \bgg^{\mu \nu} \bgg'_{\mu \nu}\biggr).\label{EH-second-variation}
\end{gather}
We are interested in a variational principle, so that we get the Euler--Lagrangian field equations for the linearized Einstein equations around the Kerr black hole spacetimes. The functional~\eqref{EH-second-variation} can be simplified as follows:
\begin{align}\label{LG-Lagrangian}
S_{\text{LEE}}\bigl[\bar{\gg}', \bar{\gg}\bigr] &{}= \int \halb \bigl(\leftexp{(\bar{\gg})}{\grad}_\a \bar{\gg}'_{\mu \nu} \leftexp{(\bar{\gg})}{\grad}^\mu \gg'^{\a \nu} + \leftexp{(\bar{\gg})}{\grad}_\a \bar{\gg}'_{\mu \nu} \leftexp{(\bar{\gg})}{\grad}^\nu \bar{\gg}'^{\a \mu} - \leftexp{(\bar{\gg})}{\grad}^\mu \gg'_{\mu \nu} \leftexp{(\bar{\gg})}{\grad}^\nu {\rm tr} \bigl(\gg'\bigr)\\
&\quad{} - \leftexp{(\bar{\gg})}{\grad}_\a \gg'_{\mu \nu} \leftexp{(\bar{\gg})}{\grad}^\a \gg'^{\mu \nu}
- \leftexp{(\bar{\gg})}{\grad}^\a {\rm tr} \bigl(\gg'\bigr) \leftexp{(\bar{\gg})}{\grad}^\b \gg'_{\a \b} +
\leftexp{(\bar{\gg})}{\grad} ^\a {\rm tr} \bigl(\gg'\bigr) \leftexp{(\bar{\gg})}{\grad}_\a {\rm tr}\bigl(\gg'\bigr) \bigr) \bar{\mu}_{\bar{\gg}}. \notag
\end{align}
Suppose $\mathcal{D}$ is the local continuous group of diffeomorphisms generated by the vector field $\bar{Y}$. It~follows that the 2-tensor
\begin{align*}
\bigl(\bar{\gg}'\bigr)_{\a \b} = (\mathcal{L}_{\bar{Y}} \bar{\gg})_{\a \b} = \leftexp{(\gg)}{\grad}_\a \bar{Y}_\b + \leftexp{(\gg)}{\grad}_\b \bar{Y}_\a, \qquad \a, \b = 0, 1, 2, 3,
\end{align*}
is a critical point of the variational principle \eqref{LG-Lagrangian}, corresponding to the pure-gauge perturbations of the Kerr metric.
 It may be noted that the gauge transformations are abelian. In general, the computation of the Euler--Lagrange field equations corresponding to the variation principle~\eqref{LG-Lagrangian} involves the terms listed below.
\begin{itemize}\itemsep=0pt
\item The terms $\leftexp{(\bar{\gg})}{\grad} \bgg''_{\mu \nu} \leftexp{(\bar{\gg})}{\grad}^\mu \bgg'^{\a \nu} +
\leftexp{(\bar{\gg})}{\grad}_\a \bgg'_{\mu \nu} \leftexp{(\bar{\gg})}{\grad}^\mu \bgg''^{\a \nu} $
can be transformed as
\begin{align*}
&\leftexp{(\bar{\gg})}{\grad}_\a \bgg''_{\mu \nu} \leftexp{(\bar{\gg})}{\grad}^\mu \bgg'^{\a \nu} =
\leftexp{(\bar{\gg})}{\grad}_\a \bigl(\bgg'' \leftexp{(\bar{\gg})}{\grad}^\mu \bgg'^{\a \nu}\bigr) - \bgg''_{\mu \nu} \leftexp{(\bar{\gg})}{\grad}_\a \leftexp{(\bar{\gg})}{\grad}^\mu \bgg'^{\a \nu}, \\
&\leftexp{(\bar{\gg})}{\grad}_\a \bgg'_{\mu \nu} \leftexp{(\bar{\gg})}{\grad}^\mu \bgg''^{\a \nu}=
\leftexp{(\bar{\gg})}{\grad}^\mu \bigl(\bgg''^{\a \nu} \leftexp{(\bar{\gg})}{\grad}_\a \bgg'_{\nu \nu}\bigr)-
\leftexp{(\bar{\gg})}{\grad}^\mu \leftexp{(\bar{\gg})}{\grad}_a \bgg'_{\mu \nu},
\end{align*}

\item likewise, the terms $ \leftexp{(\bar{\gg})}{\grad}_\a \bgg''_{\mu \nu} \leftexp{(\bar{\gg})}{\grad}^\nu \bgg'^{\a \mu} + \leftexp{(\bar{\gg})}{\grad}_\a \bgg'_{ \mu \nu} \leftexp8{(\bar{\gg})}{\grad}^\nu \bgg''^{\a \mu}$ can be rewritten as
\begin{align*}
&\leftexp{(\bar{\gg})}{\grad}_\a \bgg''_{\a \nu} \leftexp{(\bar{\gg})}{\grad}^\nu \bgg'^{ \a \nu}=
\leftexp{(\bar{\gg})}{\grad}_a \bigl(\leftexp{(\bar{\gg})}{\grad} \bgg'_{\mu \nu} \leftexp{(\bar{\gg})}{\grad} ^\nu \bgg'^{ \a \mu }\bigr) - \bgg''_{\mu \nu} \leftexp{(\bar{\gg})}{\grad}_\a \leftexp{(\bar{\gg})}{\grad}^\nu \bgg'^{\a \mu} ,\\
&\leftexp{(\bar{\gg})}{\grad}_a \bgg'_{\mu \nu} \leftexp{(\bar{\gg})}{\grad}^\nu \bgg''^{\a \nu}=
\leftexp{(\bar{\gg})}{\grad}^\nu \bigl(\bgg''^{\a \nu} \leftexp{(\bar{\gg})}{\grad}_\a \bgg'_{\mu \nu}\bigr) -
\bgg'' \leftexp{(\bar{\gg})}{\grad}^\nu \leftexp{(\bar{\gg})}{\grad}_\a \bgg'_{\mu \nu},
\end{align*}

\item the terms $ - \leftexp{(\bar{\gg})}{\grad}^\mu \bgg''_{\mu \nu} \leftexp{(\bar{\gg})}{\grad}^\nu {\rm tr} \bgg' - \leftexp{(\bar{\gg})}{\grad}^\mu \bgg'_{\mu \nu} \leftexp{(\bar{\gg})}{\grad}^\nu {\rm tr} \bgg''$, can be transformed conveniently as
\begin{align*}
&- \leftexp{(\bar{\gg})}{\grad}^\mu \bgg''_{\mu \nu} \leftexp{(\bar{\gg})}{\grad}^\nu {\rm tr}\bgg'=
- \leftexp{(\bar{\gg})}{\grad}^\mu \bigl(\bgg''_{\mu \nu} \leftexp{(\bar{\gg})}{\grad}^\nu {\rm tr} \bgg'\bigr)
+ \leftexp{(\bar{\gg})}{\grad}''_{\mu \nu} \leftexp{(\bar{\gg})}{\grad}^\mu \bigl( \leftexp{(\bar{\gg})}{\grad}^\nu {\rm tr} \bgg'\bigr), \\
&- \leftexp{(\bar{\gg})}{\grad}^\mu \bgg'_{\mu \nu} \leftexp{(\bar{\gg})}{\grad}^\nu {\rm tr} \bgg''=
- \leftexp{(\bar{\gg})}{\grad}^\nu \bigl({\rm tr}\bgg'' \leftexp{(\bar{\gg})}{\grad}^\mu \bgg'_{\mu \nu}\bigr) +
{\rm tr}\bgg'' \leftexp{(\bar{\gg})}{\grad}^\nu \leftexp{(\bar{\gg})}{\grad}^\mu \bgg'_{ \mu \nu}.
\end{align*}

\item Finally, the terms $ - \leftexp{(\bar{\gg})}{\grad}_a \bgg''{\mu \nu} \leftexp{(\bar{\gg})}{\grad}^\a \bgg'^{ \mu \nu}$ and $ \leftexp{(\bar{\gg})}{\grad}^\a {\rm tr} \bgg'' \leftexp{(\bar{\gg})}{\grad}_\a {\rm tr} \bgg'$ can be transformed as
\begin{align*}
&-\leftexp{(\bar{\gg})}{\grad}_\a \bgg''_{\mu \nu} \leftexp{(\bar{\gg})}{\grad}^\a \bgg'^{\mu \nu}=
-\leftexp{(\bar{\gg})}{\grad}_a \bigl(\bgg'' \leftexp{(\bar{\gg})}{\grad}^\a \bgg'^{\mu \nu}\bigr) + \bgg''_{\mu \nu} \leftexp{(\bar{\gg})}{\grad}_\a \leftexp{(\bar{\gg})}{\grad}^\a \bgg'^{\mu \nu}
\end{align*}
and
\[
\leftexp{(\bar{\gg})}{\grad}^\a {\rm tr}\bgg'' \leftexp{(\bar{\gg})}{\grad}_\a {\rm tr} \bgg' = \leftexp{(\bar{\gg})}{\grad}^\a \bigl({\rm tr}\bgg'' \leftexp{(\bar{\gg})}{\grad}_\a {\rm tr}\bgg'\bigr) -
{\rm tr} \bgg'' \leftexp{(\bar{\gg})}{\grad}^\a \leftexp{(\bar{\gg})}{\grad}_\a {\rm tr} \bgg'
\]
respectively.
\end{itemize}
Assembling the terms from above, we obtain the Euler--Lagrange equations for the variational principle \eqref{LG-Lagrangian} as
\begin{align*}
&\leftexp{(\bar{\gg})}{\grad}^\gm \leftexp{(\bar{\gg})}{\grad}_\mu \bgg'_{\gm \nu} + \leftexp{(\bar{\gg})}{\grad}^\gm \leftexp{(\bar{\gg})}{\grad}_\nu \bgg_{\gm \mu} - \leftexp{(\bar{\gg})}{\grad}_\mu \leftexp{(\bar{\gg})}{\grad}_\nu {\rm tr}\bigl(\bgg'\bigr) - \leftexp{(\bar{\gg})}{\grad}^\gm \leftexp{(\bar{\gg})}{\grad}_\gm \bgg'_{ \mu \nu} \notag\\
&\qquad{}- \bgg_{\mu \nu}
\leftexp{(\bar{\gg})}{\grad}^\gm \leftexp{(\bar{\gg})}{\grad}^\d \bgg'_{ \gm \d} + \bgg_{\mu \nu}
\leftexp{(\bar{\gg})}{\grad}^\gm \leftexp{(\bar{\gg})}{\grad}_\gm {\rm tr} \bgg' =0, \qquad \bigl(\bar{M}, \bgg\bigr),
\end{align*}
which are the field equations for the linearized Einstein equations around the Kerr background spacetime.
\subsection*{Harmonic coordinates}
Consider a coordinate system $x^\a$
\begin{align*}
\bar{x}^\a \colon\ \bigl(\bar{M}, \bar{\gg}\bigr) \to (M, \bar{\gg}), \qquad \a = 0, 1, 2, 3.
\end{align*}
It follows that the coordinate functions $x^\a$ satisfy the equations
\begin{align} \label{coordinate-wm}
\square_{\bar{\gg}} x^\a + \leftexp{(\bar{\gg})}{\Gamma}_{\b \gamma}^\a \bar{\gg}^{\b \gamma} =0, \qquad \bigl(\bar{M}, \bar{\gg}\bigr), \qquad \a, \b, \gamma = 0,1, 2, 3.
\end{align}
We would like to point out that this equation \eqref{coordinate-wm} is reminiscent of the wave map equations. With the harmonic gauge condition $\square_\gg x^\a =0$, it follows from \eqref{coordinate-wm} that $ \leftexp{(\bar{\gg})}{\Gamma}_{\b \gamma}^\a \bar{\gg}^{\b \gamma} =0 $. We~define an equivalent gauge condition in the perturbative theory using \smash{$D_{\mbo{\gamma}_s} \cdot \leftexp{(\bar{\gg})}{\Gamma}_{\b \gamma}^\a \bar{\gg}^{\b \gamma}$}, i.e.,
\begin{align} \label{harmonic-gauge-condition}
\leftexp{(\bar{\gg})}{\grad}^ \mu \bgg'_{\mu \nu} = \halb \ptl_ \nu {\rm tr} \bgg'
\end{align}
and if we define \smash{$\tilde{\gg}' := \bar{\gg}'- \halb \bar{\gg}^{\a \b} \bar{\gg}'_{\a \b}$}, the `trace-reversed' metric perturbation $\bar{\gg}'$, then this condition can be compactly represented as
\begin{align*} 
\leftexp{(\bgg)}{\grad}_\a \tgg^{\a \b} =0.
\end{align*}
 If we consider the variational principle for the linearized gravity in the Einstein equations for general relativity 
and transform using harmonic coordinates, we get
\begin{align*} 
S_{{\rm HG}}\bigl[\bar{\gg}', \bar{\gg}\bigr] &{}= \int \halb \bigl(\leftexp{(\bar{\gg})}{\grad}_\a \bar{\gg}'_{\mu \nu} \leftexp{(\bar{\gg})}{\grad}^\mu \gg'^{\a \nu} + \leftexp{(\bar{\gg})}{\grad}_\a \bar{\gg}'_{\mu \nu} \leftexp{(\bar{\gg})}{\grad}^\nu \bar{\gg}'^{\a \mu}
 - \leftexp{(\bar{\gg})}{\grad}_\a \gg'_{\mu \nu} \leftexp{(\bar{\gg})}{\grad}^\a \gg'^{\mu \nu}
\bigr) \bar{\mu}_{\bar{\gg}}.
\end{align*}

It may be noted that the structure of the variational functional, is closely related to the deformed wave map action that we considered previously, e.g., in \cite{NG_19_2}. The quantities in the variational principle can be transformed, using the identities
\begin{align*}
&\leftexp{(\bar{\gg})}{\grad}_\gamma \leftexp{(\bar{\gg})}{\grad}_\mu \bgg'_{\a \nu} - \leftexp{(\bar{\gg})}{\grad}_\mu \leftexp{(\bar{\gg})}{\grad}_\gm \bgg'_{\a \nu} = -{\rm Riem}(\bgg)^\sigma_{\,\,\,\,\a \gm \mu} \bgg'_{\sigma \nu} - {\rm Riem}^\sigma_{\,\,\,\, \nu \gm \mu} \bgg'_{\a \sigma}, \notag\\
&\leftexp{(\bar{\gg})}{\grad}_\gamma \leftexp{(\bar{\gg})}{\grad}_\nu \bgg'_{\a \mu} - \leftexp{(\bar{\gg})}{\grad}_\nu \leftexp{(\bar{\gg})}{\grad}_\gm \bgg'_{\a \mu} = -{\rm Riem}(\bgg)^\sigma_{\,\,\,\,\a \gm \nu} \bgg'_{\sigma \mu} - {\rm Riem}^\sigma_{\,\,\,\, \mu \gm \nu} \bgg'_{\a \sigma}
\end{align*}
then the Euler--Lagrange linearized Einstein equations in harmonic coordinates are
\begin{align} \label{harmonic-linearized-einstein}
\leftexp{(\bgg)}{\grad}^\gm \leftexp{(\bar{\gg})}{\grad}_\gm \bgg'_{\mu \nu} + {\rm Riem}(\bgg)^{\sigma \,\,\,\a}_{\,\,\,\mu\,\,\,\nu} \bgg'_{\a \sigma} + {\rm Riem}(\bgg)^{\sigma \,\,\,\a}_{\,\,\,\nu\,\,\,\mu} \bgg'_{\a \sigma} =0, \qquad \bigl(\bar{M}, \bgg\bigr).
\end{align}
A similar equation is satisfied by the trace-reversed metric $\tgg$,
\begin{align} \label{harmonic-linearized-einstein-trace-reversed}
\leftexp{(\bgg)}{\grad}^\gm \leftexp{(\bar{\gg})}{\grad}_\gm \tgg'_{\mu \nu} + {\rm Riem}(\bgg)^{\sigma \,\,\,\a}_{\,\,\,\mu\,\,\,\nu} \tgg'_{\a \sigma} + {\rm Riem}(\tgg)^{\sigma \,\,\,\a}_{\,\,\,\nu\,\,\,\mu} \tgg'_{\a \sigma} =0, \qquad \bigl(\bar{M}, \bgg\bigr).
\end{align}

In the initial value framework, we need to solve the field equations \eqref{harmonic-linearized-einstein} or \eqref{harmonic-linearized-einstein-trace-reversed} together with the gauge determining equations \eqref{harmonic-gauge-condition}.
Now consider the divergence of the linearized Einstein tensor,
\begin{align*}
&\halb \leftexp{(\bgg)}{\grad}^ \mu \bigl( \leftexp{(\bgg)}{\grad}^\gm \leftexp{(\bgg)}{\grad}_\mu \bgg'_{ \gm \nu} + \leftexp{(\bgg)}{\grad}^\gm \leftexp{(\bgg)}{\grad}_\nu \bgg'_{\gm \mu} - \leftexp{(\bgg)}{\grad}_\mu \leftexp{(\bgg)}{\grad}_\nu {\rm tr}\bgg' - \leftexp{(\bgg)}{\grad}^\gm \leftexp{(\bgg)}{\grad}_\gm \bgg'_{ \mu \nu} \notag\\
&\hphantom{\halb \leftexp{(\bgg)}{\grad}^ \mu \bigl(}{}
- \bgg_{\mu \nu} \leftexp{(\bgg)}{\grad}^\gm \leftexp{(\bgg)}{\grad}^\d \bgg'_{ \gm \d} + \bgg_{\mu \nu} \leftexp{(\bgg)}{\grad}^\gm \leftexp{(\bgg)}{\grad}_\gm {\rm tr} \bgg' \bigr).
\end{align*}

If we define the gauge-fixing quantity, \smash{$F_\nu = \leftexp{(\bgg)}{\grad}^\mu \tgg'_{ \mu \nu}$}, we can construct a propagation equation for $F_\nu$ as
\begin{align*}
\leftexp{(\bgg)}{\grad}^\gm \leftexp{(\bgg)}{\grad}_\gm F_\nu &{}\equiv \leftexp{(\bgg)}{\grad}^ \mu
\bigl(\leftexp{(\bgg)}{\grad}^\gm \leftexp{(\bar{\gg})}{\grad}_\gm \bgg'_{\mu \nu} + {\rm Riem}(\bgg)^{\sigma \,\,\,\a}_{\,\,\,\mu\,\,\,\nu} \bgg'_{\a \sigma} + {\rm Riem}(\bgg)^{\sigma \,\,\,\a}_{\,\,\,\nu\,\,\,\mu} \bgg'_{\a \sigma}\bigr) \\
&{}= 0, \qquad \bigl(\bar{M}, \bgg\bigr).
\end{align*}

If we consider the initial value problem
\begin{align*}
&\leftexp{(\bgg)}{\grad}^\gm \leftexp{(\bgg)}{\grad}_\gm F_\nu= 0, \qquad \bigl(\bar{M}, \bgg\bigr), \\
&F_\nu \vert_{\olin{\Sigma}_0} =0, \quad \ptl_{\vec{t}} F_\nu \vert_{\olin{\Sigma}_0} =0, \qquad \bigl(\olin{\Sigma}_0, \bar{q}_0\bigr),
\end{align*}
it is straightforward to show that $F_\nu$ and $\ptl_{\vec{t}} F_\nu \equiv 0$ for all times in the domain of outer communications of the Kerr black hole spacetime. In other words, the gauge condition \eqref{harmonic-gauge-condition} is propagated for all times if it holds on the initial data and the linearized Einstein equations hold in the harmonic gauge. Analogously, consider the propagation of the constraint equations, defined as
\begin{align*}
 H':={}& G' (n, n) \notag\\
={}& n^\mu n^\nu \bigl(\leftexp{(\bgg)}{\grad}^\gm \leftexp{(\bgg)}{\grad}_\mu \bgg'_{ \gm \nu} + \leftexp{(\bgg)}{\grad}^\gm \leftexp{(\bgg)}{\grad}_\nu \bgg'_{\gm \mu} - \leftexp{(\bgg)}{\grad}_\mu \leftexp{(\bgg)}{\grad}_\nu {\rm tr}\bgg' - \leftexp{(\bgg)}{\grad}^\gm \leftexp{(\bgg)}{\grad}_\gm \bgg'_{ \mu \nu} \notag\\
 & - \bgg_{\mu \nu} \leftexp{(\bgg)}{\grad}^\gm \leftexp{(\bgg)}{\grad}^\d \bgg'_{ \gm \d} + \bgg_{\mu \nu} \leftexp{(\bgg)}{\grad}^\gm \leftexp{(\bgg)}{\grad}_\gm {\rm tr} \bgg' \bigr)
 \intertext{the expression for the propagation of the Hamiltonian constraint $H'$ follows from the transformations analogous to the above}
={}& n^\mu n^\nu \biggl(\leftexp{(\bgg)}{\grad}_\mu \biggl(\leftexp{(\bgg)}{\grad}^\gm \bgg_{\gm \nu} - \halb \leftexp{(\bgg)}{\grad}_\nu {\rm tr}\bgg' \biggr) + \leftexp{(\bgg)}{\grad}_\nu \biggl(\leftexp{(\bgg)}{\grad}^\gm \bgg_{\gm \mu} - \halb \leftexp{(\bgg)}{\grad}_\mu {\rm tr}\bgg' \biggr) \biggr) \notag\\
 & - n^\mu n^\nu \bigl(\leftexp{(\bgg)}{\grad}^\gm \leftexp{(\bar{\gg})}{\grad}_\gm \bgg'_{\mu \nu} + {\rm Riem}(\bgg)^{\sigma \,\,\,\a}_{\,\,\,\mu\,\,\,\nu} \bgg'_{\a \sigma} + {\rm Riem}(\bgg)^{\sigma \,\,\,\a}_{\,\,\,\nu\,\,\,\mu} \bgg'_{\a \sigma} \bigr) \notag\\
 & + n^ \nu n_\nu \leftexp{(\bgg)}{\grad}^\gm \bigl(\leftexp{(\bgg)}{\grad}^\gm \leftexp{(\bgg)}{\grad}^\d - \leftexp{(\bgg)}{\grad}^\gm \leftexp{(\bgg)}{\grad}_\gm {\rm tr}\bgg' \bigr)
 \intertext{after imposing the linearized Einstien field equations in the harmonic gauge }
=&{} n^ \mu n^\nu \bigl(\leftexp{(\bgg)}{\grad}_\mu F_\nu + \leftexp{(\bgg)}{\grad}_\nu F_\mu \bigr) + n^\nu n_\nu \leftexp{(\bgg)}{\grad}^\gm F_\gm =0
\end{align*}
for all times, due to the Harmonic gauge propagation. Likewise, for the momentum constraint, after plugging in the relevant formulas
\begin{align*}
 H'_i :={}& G' (n, X) \notag\\
 ={}& n^\mu X^j \bigl(\leftexp{(\bgg)}{\grad}^\gm \leftexp{(\bgg)}{\grad}_\mu \bgg'_{ \gm j} + \leftexp{(\bgg)}{\grad}^\gm \leftexp{(\bgg)}{\grad}_j \bgg'_{\gm \mu} - \leftexp{(\bgg)}{\grad}_\mu \leftexp{(\bgg)}{\grad}_j {\rm tr}\bgg' - \leftexp{(\bgg)}{\grad}^\gm \leftexp{(\bgg)}{\grad}_\gm \bgg'_{ \mu j} \notag\\
 & \hphantom{n^\mu X^j \bigl(}{} - \bgg_{\mu j} \leftexp{(\bgg)}{\grad}^\gm \leftexp{(\bgg)}{\grad}^\d \bgg'_{ \gm \d} + \bgg_{\mu j} \leftexp{(\bgg)}{\grad}^\gm \leftexp{(\bgg)}{\grad}_\gm {\rm tr} \bgg' \bigr) \notag\\
 ={}& n^\mu X^j \biggl(\leftexp{(\bgg)}{\grad}_\mu \biggl(\leftexp{(\bgg)}{\grad}^\gm \bgg_{\gm j} - \halb \leftexp{(\bgg)}{\grad}_\nu {\rm tr}\bgg' \biggr) + \leftexp{(\bgg)}{\grad}_j \biggl(\leftexp{(\bgg)}{\grad}^\gm \bgg_{\gm \mu} - \halb \leftexp{(\bgg)}{\grad}_\mu {\rm tr}\bgg' \biggr) \biggr) \notag\\
 & - n^\mu X^j \bigl(\leftexp{(\bgg)}{\grad}^\gm \leftexp{(\bar{\gg})}{\grad}_\gm \bgg'_{\mu j} + {\rm Riem}(\bgg)^{\sigma \,\,\,\a}_{\,\,\,\mu\,\,\,j} \bgg'_{\a \sigma} + {\rm Riem}(\bgg)^{\sigma \,\,\,\a}_{\,\,\,j\,\,\,\mu} \bgg'_{\a \sigma} \bigr) \notag\\
 & + n^ \mu X^j \bgg_{ \mu j} \leftexp{(\bgg)}{\grad}^\gm \bigl(\leftexp{(\bgg)}{\grad}^\gm \leftexp{(\bgg)}{\grad}^\d - \leftexp{(\bgg)}{\grad}^\gm \leftexp{(\bgg)}{\grad}_\gm {\rm tr}\bgg' \bigr)
 \notag\\
 ={}& n^ \mu X^j \bigl(\leftexp{(\bgg)}{\grad}_\mu F_j + \leftexp{(\bgg)}{\grad}_j F_\mu \bigr) + n^\nu X^j \bgg_{ \nu j} \leftexp{(\bgg)}{\grad}^\gm F_\gm =0, \qquad X^i \in T(\olin{\Sigma}),\quad i= 1, 2, 3.
\end{align*}

It follows that, in the harmonic gauge, the constraints are automatically propagated as long as they are satisfied on the initial data, analogous to the propagation of the harmonic gauge condition. This is an analogous statement for the nonlinear theory developed in the classic work of Choquet-Bruhat.

Let us now discuss the degrees of freedom of linearized gravity. The number of independent
degrees of freedom, modulo the gauge degrees of freedom, is 6. We have shown that the constraint propagation in time follows from the harmonic gauge condition. As a consequence, it follows that these remaining degrees of freedom are all independent and unconstrained.

It may be noted that the system of equations for the linearized Einstein equations is purely a hyperbolic partial differential equation system, from which it follows that the evolution of the data is causal.
\begin{Proposition}
Suppose $\bigl(\olin{\Sigma}', \bar{\mbo{q}'}\bigr)$ is the linearized initial data for the linearized Einstein equations, satisfying the constraint equations, then
\begin{enumerate}\itemsep=0pt
\item[$(1)$] The linearized Einstein equations in harmonic gauge is purely a hyperbolic differential equation system, with the independent degrees of freedom being dynamically unconstrained.
\item[$(2)$] The future $($and past$)$ development of the linearized initial data in harmonic gauge is globally regular and globally hyperbolic.
\end{enumerate}
\end{Proposition}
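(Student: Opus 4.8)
The plan is to read off both claims from the structure of the harmonic-gauge field equations \eqref{harmonic-linearized-einstein} (equivalently, for the trace-reversed perturbation, \eqref{harmonic-linearized-einstein-trace-reversed}) together with the gauge- and constraint-propagation facts already established above. First, for part (1), I would observe that the principal part of \eqref{harmonic-linearized-einstein} is the scalar covariant wave operator $\leftexp{(\bgg)}{\grad}^\gm \leftexp{(\bgg)}{\grad}_\gm$ acting \emph{diagonally} on each of the ten components of the symmetric $2$-tensor $\bgg'_{\mu\nu}$, all coupling between components being carried by the zeroth-order curvature terms ${\rm Riem}(\bgg)$. Hence the principal symbol is $\bgg^{\gm\d}\xi_\gm\xi_\d$ times the identity on symmetric $2$-tensors, and since $\bgg$ is Lorentzian its characteristics are exactly the $\bgg$-null cones. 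This exhibits \eqref{harmonic-linearized-einstein} as a strongly (indeed diagonally) hyperbolic second-order system.

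For the degree-of-freedom count in part (1), I would argue as follows: $\bgg'_{\mu\nu}$ carries ten components; the harmonic gauge condition \eqref{harmonic-gauge-condition} imposes four relations, and the corresponding four-parameter pure-gauge freedom $\bgg'_{\a\b}=(\mathcal{L}_{\bar Y}\bgg)_{\a\b}$ — already identified above as a critical point of \eqref{LG-Lagrangian} — removes four further components, leaving six. The point is that these six need not satisfy any residual elliptic constraint \emph{during} the evolution: this is precisely the content of the constraint-propagation computation above, in which the gauge-fixing quantity $F_\nu=\leftexp{(\bgg)}{\grad}^\mu\tgg'_{\mu\nu}$ solves a homogeneous wave equation and vanishes identically once it and $\ptl_{\vec t}F_\nu$ vanish on the initial slice, whence $H'=n^\mu n^\nu(\leftexp{(\bgg)}{\grad}_\mu F_\nu+\leftexp{(\bgg)}{\grad}_\nu F_\mu)+n^\nu n_\nu\leftexp{(\bgg)}{\grad}^\gm F_\gm$ and the analogous $H'_i$ vanish for all times. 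Thus the constraints are automatically preserved and the six propagating modes are dynamically unconstrained.

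Part (2) would then follow from the classical theory of linear hyperbolic systems on a globally hyperbolic Lorentzian background. Restricting to the domain of outer communications of Kerr, where $\bgg$ is smooth and the spacetime is globally hyperbolic, the Cauchy problem for the strongly hyperbolic system \eqref{harmonic-linearized-einstein-trace-reversed} with smooth data $\bigl(\tgg',\ptl_{\vec t}\tgg'\bigr)\big\vert_{\olin{\Sigma}_0}$ admits a unique global solution by standard energy-estimate and domain-of-dependence arguments (Leray/Choquet-Bruhat/H\"ormander theory). Global regularity is obtained from smoothness of the coefficients and data via higher-order energy estimates, and causality from finite propagation speed: the support of the solution remains confined to the causal future and past of the support of the data relative to the $\bgg$-null cones. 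Combined with the gauge propagation just recalled, this also shows that solving the reduced system genuinely produces a solution of the full linearized Einstein equations, whose maximal development is globally regular and globally hyperbolic.

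The structural hyperbolicity is immediate thanks to the diagonal principal symbol, so the main obstacle is not well-posedness per se but two consistency points. The first, already discharged above, is that gauge and constraint propagation render the reduced hyperbolic problem equivalent to the constrained system — this is what legitimises the ``six unconstrained modes'' conclusion. The second, and the genuinely delicate point for the larger program, is \emph{where} the global regularity is asserted: the harmonic-gauge formulation is regular throughout the (globally hyperbolic) Kerr exterior, including the axes and spatial infinity, whereas the subtle boundary behaviour only reappears upon transforming to the Weyl--Papapetrou gauge and is deferred to the later sections; for the present proposition it suffices that the Kerr domain of outer communications is globally hyperbolic and that $\bgg$ is smooth there.
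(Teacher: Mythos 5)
Your proposal is correct and follows essentially the same route as the paper: the paper's argument is precisely the preceding harmonic-gauge computation in Section~6 --- the propagation equation $\leftexp{(\bgg)}{\grad}^\gm \leftexp{(\bgg)}{\grad}_\gm F_\nu = 0$ for the gauge-fixing quantity, the resulting vanishing of $H'$ and $H'_i$ for all times, the count of six unconstrained degrees of freedom, and the observation that the reduced system is purely hyperbolic so evolution is causal --- and the Proposition is stated with no further proof. You merely make explicit some standard details the paper leaves implicit (the diagonal principal symbol, higher-order energy estimates, finite propagation speed), which is a faithful filling-in rather than a different approach.
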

Let us make a couple of comments about the aforementioned global existence theorem. As we already discussed, the Weyl--Papapetrou gauge offers significant benefits in terms of geometry and topology, that allows us to construct a positive-definite energy in the first place, but presents (gauge-related) causality and regularity issues at the boundaries.

On the other hand, as we already discussed above, the global development of the linearized Einstein equations in the harmonic gauge is untroubled by the (gauge-related) causality and regularity issues at the axes $(\Gamma)$, infinity $\bigl(\bar{\iota}^0\bigr)$ and at the corners $\Gamma \cap \mathcal{H}^+$. In other words, the global development of the linearized Einstein equations from regular initial data is regular $(C^\infty)$ for all times, including at the pathologies like axes, infinity and at the corner of axes and horizon. For example, we have
\begin{align*}
\ptl_{\vec{n}} A =0,
\end{align*}
for a scalar, and
\[
A^{\perp} =0, \qquad \ptl_n A^{\parallel} =0
\]
for a vector $A$, near the axes, globally in time. In our work, our approach is to combine these two gauges and take advantage of benefits offered in each.

Recall that the metric $q$ in the orbit space can be expressed in harmonic gauge as $\mbo{q} = {\rm e}^{2 \mbo{\nu}} \mbo{q}_0, $ where $q_0$ is the flat metric.

The preservation of the flatness condition is
\begin{align} \label{flatness-condition}
D \cdot R'_{\mbo{q}_0} = \bar{\mu}_{\mbo{q}_0} \bigl(\leftexp{(\mbo{q}_0)}{\grad}^a \leftexp{(\mbo{q}_0)}{\grad}^b \bigl(\mbo{q}'_0\bigr)_{ab} - \leftexp{(\mbo{q}_0)}{\grad}^a \leftexp{(\mbo{q}_0)}{\grad}_a \mbo{q}^{cd}_0 \bigl(\mbo{q}'_0\bigr)_{cd} \bigr) =0, \qquad \text{on each $\Sigma$}.
\end{align}
The tensor $\mbo{q}'_0$ can be decomposed as \smash{$(\mbo{q}'_0)_{ab} = \bigl(\mbo{q}'^{{\rm TT}}_0\bigr)_{ab} +
\leftexp{(q_0)}{\grad}_a Y_b + \leftexp{(q_0)}{\grad}_a Y_a + \halb (\mbo{q}_0) {\rm tr} \, \mbo{q}'_0$}. In particular, it may be noted that the pure gauge perturbations \smash{$(\mbo{q}'_0)_{ab} = \leftexp{(q_0)}{\grad}_a Y_b + \leftexp{(q_0)}{\grad}_b Y_a$} satisfy the condition \eqref{flatness-condition}.
 The perturbed metric $(\mbo{q}'_0)$ has the following regularity conditions on the axes $\Gamma$, expressed in $(R, \theta)$ coordinates \cite{O_Rinne_J_Stewart_2005}
\begin{align*}
\ptl_\theta \mbo{q}'_0 (\ptl_R, \ptl_R) =0, \qquad \ptl_\theta \mbo{q}'_0(\ptl_\theta, \ptl_\theta) =0, \qquad \mbo{q}'_0 (\ptl_\theta, \ptl_R) =0,\qquad \text{at the axes $\Gamma$}.
\end{align*}

Our construction can also be used to study stability problem of Kerr black hole spacetimes in harmonic gauge. However, in principle, we can use our energy to study the stability problem in any preferred gauge. If we are given a solution $(M', \bgg')$ in any gauge, we can transform the solution to a harmonic gauge by solving the linearized wave map equations $x^\a \colon \bar{M} \to \bar{M}$, for which it can be proven that they admit smooth solutions for all times. Subsequently, we can transform our the solution to the Weyl--Papapetrou gauge using a $(C^\infty$-)diffeomorphism, by making use of the abelian nature of gauge-transformations.

\section{Canonical phase space variables and Lagrange multipliers\\ in the Weyl--Papapetrou gauge}\label{section7}

Suppose, the group ${\rm SO}(2)$ acts on the $3+1$ Lorentzian spacetime $\bigl(\bar{M}, \bar{g}\bigr)$ such that the orbits of the group ${\rm SO}(2)$ are closed and the group action has a nonempty fixed point set, denoted by~$\Gamma$. These conditions are satisfied by the Kerr metric $\bigl(\bar{M}, \bar{g}\bigr)$
\begin{align*}
&\bar{g}:= - \Sigma^{-1} \bigl(\Delta -a^2 \sin^2 \theta\bigr) {\rm d}t^2 - 4a\Sigma^{-1} \sin^{2} \theta mr {\rm d}t {\rm d}\phi \notag\\
&\hphantom{\bar{g} :=}{}+ \Sigma^{-1} \bigl(\bigl(r^2 +a^2\bigr)^2-\Delta a^2 \sin^2 \theta\bigr) \sin^2 \theta {\rm d}\phi^2 + \Delta^{-1} \Sigma {\rm d}r^2 + \Sigma {\rm d}\theta^2,
\end{align*}
where
\[
\Sigma := r^2 + a^2 \cos^2 \theta \qquad \text{and} \qquad \Delta = r^2-2Mr + a^2,
\]
which can be represented as
\begin{align*}
\bar{g} &{}= \vert \Phi \vert \bigl(\Delta \sin^{-2} \theta {\rm d}t^2 + R^{-2} \sin^2 \theta \bigl(\bigl(r^2 + a^2\bigr)^2- a^2 \Delta \sin^2 \theta\bigr) \bigl({\rm d} \rho^2 + {\rm d}z^2\bigr)\bigr) \\
&\quad{}+ \vert \Phi \vert \bigl({\rm d}\phi^2 - 2 mar \bigl(\bigl(r^2+a^2\bigr)^2 - a^2 \Delta \sin^2 \theta\bigr)^{-1}\bigr)^2,
\end{align*}
where
\begin{gather*}
\rho = R \sin \theta, \qquad z = R \cos \theta, \qquad R := 2 \bigl(r-m + \sqrt{\Delta}\bigr), \ \theta \in [0, \pi],
\\
	\vert \Phi \vert = \frac{\sin^2 \theta }{r^2 +a^2 \cos^2 \theta} \bigl( \bigl(r^2+a^2\bigr)^2 -a^2 \Delta \sin^2 \theta \bigr), \\
	q_{ab} = \sin^2 \theta \frac{ \bigl(r^2+a^2\bigr)^2 -a^2 \Delta \sin^2 \theta }{R^2}.
\end{gather*}

Now consider the conjugate harmonic functions $(\bar{\rho}, \bar{z})$ such that
\begin{align*}
\bar{\rho} := \rho \biggl(1- \frac{\bigl(m^2-a^2\bigr)}{4 \bigl(\rho^2 + z^2\bigr)}\biggr), \qquad \bar{z} := z \biggl(1+ \frac{m^2-a^2}{4 \bigl(\rho^2 + z^2\bigr)}\biggr),
\end{align*}
so that the Jacobian is
\begin{align*}
J := \begin{pmatrix} 1 + \frac{m^2-a^2}{4 (\r^2 + z^2)} \Bigl( \frac{2 \r^2}{ \r^2 +z^2} -1\Bigr) & \r (1 + \frac{(m^2-a^2)z}{2 (\r^2 +z^2)^2} \\
z \Bigl(1- \frac{ (m^2-a^2) \r}{ 2 (\r^2 + z^2)^2}\Bigr) & 1+ \frac{m^2-a^2}{4 (\r^2 +z^2)} \Bigl(1- \frac{2 z^2}{\r^2 +z^2} \Bigr)
\end{pmatrix}.
\end{align*}
In these coordinates, the Kerr black hole horizon $\mathcal{H}^+$ corresponds to a `cut' on the $\{ \bar{\rho} =0\}$ curve and its complement on the $\{\bar{\rho} =0\}$ curve corresponds to the union of two axes, $\Gamma$. This coordinate system and the $(\rho, z)$ coordinate system in the extremal case are the ones originally used by Carter \cite{Car_71}.

Let us briefly recall our construction. In the Weyl--Papapetrou gauge for the Einstein equations, we can reduce the Einstein--Hilbert action
into the reduced Einstein-wave map system:
\begin{align*}
\int \bigl(R_g - h_{AB}(U) g^{\a \b} \ptl_\a U^A \ptl_\b U^B\bigr) \bar{\mu}_g.
\end{align*}

It is straightforward to verify that the Kerr metric is a critical point of the variational functional. In the Hamiltonian version of the dimensional reduction, we also encounter the intermediate phase space $X^{{\rm Max}}$:
\begin{align*}
	X^{{\rm Max}} := \bigl\{\mathcal{A}_i, \mathcal{E}^i \bigr\}
	\end{align*}
	so that the Hamiltonian and momentum constraints for the combined phase space
 \[
 \bigl\{ (q, \mbo{\pi}), (\mathcal{A}, \mathcal{E}), \bigl(\vert \Phi \vert^{1/2}, p\bigr) \bigr\}
 \]
are 	
	\begin{gather*}
	H := \bar{\mu}^{-1}_q \bigl(\Vert \mbo{\pi} \Vert^2_q - {\rm tr}(\mbo{\pi})^2\bigr) + \frac{1}{8} p^2 + \frac{1}{2} \vert \Phi \vert^{-2} \mathcal{E}^a \mathcal{E}_a \\
\hphantom{H :=}{}
	 + \bar{\mu}_q \biggl(-R_q + \halb q^{ab} \ptl_a \log \vert \Phi \vert \ptl_b \log \vert \Phi \vert
	 + \frac{1}{4} q^{ab} q^{bd} \ptl_{[b} \mathcal{A}_{a]} \ptl_{[d} \mathcal{A}_{c]} \biggr), \\
	H_a = -2 \leftexp{(q)}{\grad}_b \mbo{\pi}^b_a + \halb p \ptl_a \log \vert \Phi \vert + \mathcal{E}^b \bigl(\ptl_{[a} \mathcal{A}_{b]}\bigr)
	\end{gather*}
	with the Lagrange multipliers
	\begin{align*}
	\{ N, N^a, A_0\}.
	\end{align*}
	Subsequently, the phase space $X$ was introduced
	\begin{align*}
	X := \{ (q, \mbo{\pi}), (\vert \Phi \vert, p), (\omega, \mbo{r}) \},
	\end{align*}
	which resulted in the Hamiltonian and momentum constraint equations:
	\begin{align*}
	&H= \bar{\mu}^{-1}_q \biggl(\vert \mbo{\pi} \vert^2_q - {\rm tr} (\mbo{\pi})^2 + \halb p_A p^A\biggr) + \bar{\mu}_q \biggl(- R_q + \halb h_{AB} q^{ab} \ptl_a U^A \ptl_b U^B\biggr) ,\\
	&H_a= -2 \leftexp{(q)}{\grad}_b \mbo{\pi}^b_a + p_A \ptl_a U^A,
	\end{align*}
	where the Lagrange multipliers are now	
	\begin{align} \label{lapse-shift-3D}
	\{ N, N^a \}.
	\end{align}
	The fact that the Lagrange multiplier set \eqref{lapse-shift-3D} is now simplified is due to the special topological
	structure of the orbit space $M$ of the Kerr metric. This plays a convenient role in the properties of the adjoint of the dimensionally reduced constraint map.
	The main result of Section~\ref{section4} was to obtain a positive-definite energy functional for the linear perturbative theory of Kerr black hole spacetimes within the assumption of axial symmetry, which allows the aforementioned dimensional reduction. The regularized Hamiltonian energy functional is
\begin{align*}
	H^{{\rm Reg}} := \int_{\Sigma} \mathbf{e}^{{\rm Reg}} {\rm d}^2x,
	\end{align*}
where
	\begin{gather*}
	\mathbf{e}^{{\rm Reg}} := N \bar{\mu}^{-1}_{q_0} {\rm e}^{-2 \nu} \biggl(\Vert \varrho' \Vert_{q_0}^2 + \halb p'_A p'^A\biggr) - \halb N {\rm e}^{2 \nu} \bar{\mu}_{q_0} \mbo{\tau}'^2 \notag\\
	\hphantom{\mathbf{e}^{{\rm Reg}} :=}{}+ \halb N \bar{\mu}_{q_0} q^{ab}_0 h_{AB}(U) \leftexp{(h)}{\grad}_a U'^A \leftexp{(h)}{\grad}_b U'^B \notag\\
	\hphantom{\mathbf{e}^{{\rm Reg}} :=}{}- \halb N \bar{\mu}_{q_0} q_0^{ab} h_{AE} U'^A \leftexp{(h)}{R}^E_{\,\,\,\, BCD} \ptl_a U^B \ptl_b U^C U'^D.
	\end{gather*}
	Let us now formally define the Weyl--Papapatrou gauge.
\begin{Definition}
Suppose $\bigl(\bar{M}, \bar{g}\bigr)$ is a Lorentzian spacetime such that $\bar{M}$ admits the ADM decomposition $\bar{M} = \olin{\Sigma} \times \mathbb{R}$ and the group ${\rm SO}(2)$ acts on $\olin{\Sigma}$ through isometries such that the fixed point set is nonempty and the orbits of its action on $\olin{\Sigma}$ are closed.
Then we define $\bar{g}$ to be in Weyl--Papapetrou form if
\begin{enumerate}\itemsep=0pt
\item $\bar{g}$ admits the decomposition
\begin{align} \label{WP-def}
&\bar{g} = \vert \Phi \vert^{-1} g + \vert \Phi \vert \mathcal{A}^2,
\intertext{where the 1-form $\mathcal{A}$ in $\bar{M}$ is defined as}
&\mathcal{A} = {\rm d}\phi + A_\nu {\rm d}x^\nu, \qquad \text{and} \notag
\end{align}
$\ptl_\phi$ is the (Killing) vector field corresponding to the ${\rm SO}(2)$ symmetry of $\bar{M}$ the scalar \smash{${\vert \Phi \vert = \bar{g}_{\a \b} (\ptl_\phi)^\a (\ptl_\phi)^\b}$} is the spacetime norm of the Killing vector $\ptl_\phi$.
$g$, $A$, $\vert \Phi \vert$ are independent of $\phi$, i.e., $\mathcal{L}_{\ptl_\phi} g = \mathcal{\ptl_\phi} A =0$. It may be noted that $g$ is a metric of Lorentzian signature in the orbit space $M := \bar{M}/{\rm SO}(2)$ such that it further admits the ADM decomposition $M = \Sigma \times \mathbb{R}$, where now $\Sigma = \olin{\Sigma}/{\rm SO}(2)$:
\begin{align*}
g = -N^2 {\rm d}t^2 + q_{ab} \bigl({\rm d}x^a + N^a {\rm d}t\bigr) \otimes \bigl({\rm d}x^b + N^b {\rm d}t\bigr)
\end{align*}
and $q$ is the induced metric of $\Sigma$.
\item There exists a coordinate basis $e^1$ and $e^2$ in $(\Sigma, q)$ such that
\begin{align*} 
q\bigl(e^1, e^1\bigr) - q\bigl(e^2, e^2\bigr) =0,\qquad \text{and} \qquad q\bigl(e^1, e^2\bigr) =0, \qquad \text{on each $\Sigma$}
\end{align*}
and
\begin{align*}
q = {\rm e}^{2 \mbo{\nu}} q_0, \qquad \text{where $q_0$ is the flat 2-metric}.
\end{align*}
\end{enumerate}
\end{Definition}
The metric $\bar{g}$ represented in the above coordinate conditions is referred to as in `Weyl--Papapetrou' form.
We would like to remark that the representation of the metric $\bar{g}$ in terms of a Weyl--Papapetrou form is \emph{not unique}. We would also like to emphasize that on the fixed point set of the ${\rm SO}(2)$ action on $\bar{M}$ we have $\vert \Phi \vert \to 0$ \big($\vert \Phi \vert^{-1}\to\infty$\big) and at the outer asymptotic end of $\bar{M}$ we have $\vert \Phi \vert \to \infty$ \big(and $\vert \Phi \vert^{-1} \to 0$\big). Counterbalancing these effects to obtain well-defined, convergent, gauge-independent quantities, in the context of the initial value problem, is one of the main aspects in our work. We shall apply this construction for the perturbative theory.

We shall pay particular attention to the implications of the perturbed Weyl--Papapetrou gauge on $\Sigma$ and the basis $\{ e_1, e_2\}$. Now then, we have the following conditions on the gauge-transformed perturbed metric $\bar{g}'$ as follows:
\begin{align*}
&\mathcal{L}_Y \bar{g} (e_1, e_1) -\halb q_{0} (e_1, e_1) q^{ab}_0 (\mathcal{L}_{Y} \bar{g} ) _{ab}= - \biggl(\bar{g}' (e_1, e_1) - \halb q_0 (e_1, e_1) q^{ab}_0 \bar{g}'_{ab} \biggr) , \\
&\mathcal{L}_Y \bar{g} (e_1, e_2)= - \bar{g}' (e_1, e_2) , \\
&\mathcal{L}_Y \bar{g} (e_2, e_2) -\halb q_{0} (e_1, e_2) q^{ab}_0 (\mathcal{L}_{Y} \bar{g} ) _{ab}= - \biggl(\bar{g}' (e_1, e_2) - \halb q_0 (e_1, e_1) q^{ab}_0 \bar{g}'_{ab} \biggr),
\end{align*}
which can further be expressed as, using the form of \eqref{WP-def},
\begin{align*}
&\mathcal{L}_Y q_0 (e_1, e_1) -\halb q_{0} (e_1, e_1) q^{ab}_0 (\mathcal{L}_{Y} q_0 ) _{ab}= - {\rm e}^{-2 \mbo{\nu}}\vert \Phi \vert \biggl(\bar{g}' (e_1, e_1) - \halb q_0 (e_1, e_1) q^{ab}_0 \bar{g}'_{ab} \biggr) , \\
&\mathcal{L}_Y q_0 (e_1, e_2)= -{\rm e}^{-2\mbo{\nu}} \vert \Phi \vert \bar{g}' (e_1, e_2) , \\
&\mathcal{L}_Y q_0 (e_2, e_2) -\halb q_{0} (e_1, e_2) q^{ab}_0 (\mathcal{L}_{Y} q_0 ) _{ab}= - {\rm e}^{-2\mbo{\nu}}\vert \Phi \vert \biggl(\bar{g}' (e_1, e_2) - \halb q_0 (e_1, e_1) q^{ab}_0 \bar{g}'_{ab} \biggr).
\end{align*}

 The above system can be expressed compactly in a covariant form as follows
\begin{align*}
(\mathcal{L}_{Y} q_0)_{ab} - \halb (q_0)_{ab}\, (q_0)^{cd} (\mathcal{L}_{Y} q_0)_{cd}=-{\rm e}^{-2 \mbo{\nu}} \vert \Phi \vert \biggl(\bar{g}_{ab} - \halb (q_0)_{ab} (q_0)^{cd} \bar{g}'_{cd} \biggr).
\end{align*}
We have established the following.
\begin{Lemma} \label{Y-conformal-killing}
	Suppose the perturbations of Einstein's equations in axial symmetry are compactly supported in the harmonic gauge then
	\begin{enumerate}\itemsep=0pt
		\item[$(1)$] 	
		\begin{align} \label{2d-gauge-transform}
		\mathcal{L}_Y (\bar{\mu}_q q) ^{ab} = \mathcal{L}_Y (\bar{\mu}_{q_0} q_0) ^{ab}= \bar{\mu}_q q^{ac} q^{bd} \vert \Phi \vert \biggl(\bar{g}'_{cd} - \halb q_{cd} q^{ef} \bar{g}'_{ef}\biggr)
		\end{align}
		the gauge transformation vector field $Y \in T(\Sigma)$ is the projection of the spacetime gauge transformation vector field $\bar{Y}$.
	
	\item[$(2)$] Suppose $\bar{g}'$ is such that it is compactly supported away from the horizon $\mathcal{H^+}$ and the spatial infinity $\iota^0$, then $Y$	is a conformal Killing vector field in $(\Sigma, q)$, i.e., ${\rm CK}(Y, q) = 0$ in the asymptotic regions $($i.e., in the complement of the support of $\bar{g}'$, $\Sigma \setminus \operatorname{Supp}(\bar{g}) \vert_{\Sigma})$.
\end{enumerate}
\end{Lemma}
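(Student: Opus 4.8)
The plan is to reduce the entire statement to the conformal Killing operator ${\rm CK}$ and to exploit its conformal invariance, ${\rm CK}(Y,q)={\rm CK}(Y,q_0)$, proved in part~(1) of the earlier Lemma. First I would record the elementary identity for the Lie derivative of the densitized inverse metric. Since $\bar{\mu}_q$ is a scalar density of weight one and $q^{ab}$ is a contravariant $2$-tensor, a direct computation using $\mathcal{L}_Y \bar{\mu}_q = \bar{\mu}_q\,\leftexp{(q)}{\grad}_c Y^c$ together with $\mathcal{L}_Y q^{ab} = -\bigl(\leftexp{(q)}{\grad}^a Y^b + \leftexp{(q)}{\grad}^b Y^a\bigr)$ gives
\begin{align*}
\mathcal{L}_Y \bigl(\bar{\mu}_q q^{ab}\bigr) = -\bar{\mu}_q \bigl(\leftexp{(q)}{\grad}^b Y^a + \leftexp{(q)}{\grad}^a Y^b - q^{ab}\,\leftexp{(q)}{\grad}_c Y^c\bigr) = -\bigl({\rm CK}(Y,q)\bigr)^{ab}.
\end{align*}
Running the same computation with $q_0$ yields $\mathcal{L}_Y(\bar{\mu}_{q_0} q_0^{ab}) = -({\rm CK}(Y,q_0))^{ab}$, and by conformal invariance the two right-hand sides coincide. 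This already establishes the first equality $\mathcal{L}_Y(\bar{\mu}_q q)^{ab} = \mathcal{L}_Y(\bar{\mu}_{q_0} q_0)^{ab}$ in \eqref{2d-gauge-transform}.

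For the second equality I would start from the covariant form of the gauge-transformation system derived immediately above the statement, namely that the $q_0$-trace-free part of $\mathcal{L}_Y q_0$ equals $-{\rm e}^{-2\mbo{\nu}}\vert\Phi\vert$ times the $q_0$-trace-free part of $\bar{g}'\vert_\Sigma$. Raising both free indices with $q_0$ turns the left-hand side into $\bar{\mu}_{q_0}^{-1}({\rm CK}(Y,q_0))^{ab}$, so that, combining with the identity above,
\begin{align*}
\mathcal{L}_Y\bigl(\bar{\mu}_{q_0}q_0^{ab}\bigr) = \bar{\mu}_{q_0}\,{\rm e}^{-2\mbo{\nu}}\vert\Phi\vert\bigl(q_0^{ac}q_0^{bd}\bar{g}'_{cd} - \halb\, q_0^{ab} q_0^{cd}\bar{g}'_{cd}\bigr).
\end{align*}
It then remains to convert the $q_0$ factors to $q$ using $q={\rm e}^{2\mbo{\nu}}q_0$, hence $q_0^{ab}={\rm e}^{2\mbo{\nu}}q^{ab}$ and, in two dimensions, $\bar{\mu}_{q_0}={\rm e}^{-2\mbo{\nu}}\bar{\mu}_q$. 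Tracking the powers of ${\rm e}^{2\mbo{\nu}}$ — two from the raised index pair, one from $\bar{\mu}_{q_0}$, against the explicit ${\rm e}^{-2\mbo{\nu}}$ — leaves an overall factor of one and reproduces exactly the stated right-hand side $\bar{\mu}_q q^{ac}q^{bd}\vert\Phi\vert(\bar{g}'_{cd}-\halb\, q_{cd}q^{ef}\bar{g}'_{ef})$, where I use $q^{ac}q^{bd}q_{cd}=q^{ab}$ to recognise the trace term. The identification of $Y$ as the projection of the spacetime gauge field $\bar{Y}$ is built into this reduction: a $\phi$-independent generator $\bar{Y}$ preserving the Weyl--Papapetrou form splits into a horizontal part $Y\in T(\Sigma)$ and fibre/connection pieces, and only the horizontal part drives the conformal class of $q$.

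Part~(2) is then immediate. On the complement $\Sigma\setminus\operatorname{Supp}(\bar{g}')\vert_\Sigma$ the pure-gauge perturbation vanishes, so the right-hand side of \eqref{2d-gauge-transform} is zero there; hence $\mathcal{L}_Y(\bar{\mu}_q q)^{ab}=-({\rm CK}(Y,q))^{ab}=0$, that is ${\rm CK}(Y,q)=0$, and $Y$ is a conformal Killing field of $(\Sigma,q)$ in the asymptotic regions.

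The main obstacle, and the only genuinely delicate point, is the reduction step that legitimises the projected field $Y$: one must verify that the spacetime Lie derivative $\mathcal{L}_{\bar{Y}}\bar{g}$ descends to the orbit space so that the two conformal rescalings — the $\vert\Phi\vert^{-1}$ relating $\bar{g}$ to $g$ and the ${\rm e}^{2\mbo{\nu}}$ relating $q$ to $q_0$ — interact only through the conformally invariant operator ${\rm CK}$. This is precisely where the non-scalar transformation behaviour of $\mbo{\nu}$ (flagged in Section~\ref{section5}) could in principle spoil the cancellation; the saving feature is that $\bar{\mu}_q q^{ab}$ carries no net conformal weight in the relevant sense, so the $\mbo{\nu}$-dependence cancels before reaching the final identity.
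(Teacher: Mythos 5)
Your proposal is correct and takes essentially the same route as the paper: the paper's own argument is precisely the derivation preceding the lemma, where preservation of the Weyl--Papapetrou conditions on the basis components is assembled into the covariant trace-free equation you quote, and \eqref{2d-gauge-transform} is then its raised-index, densitized form. Your explicit steps --- the identity $\mathcal{L}_Y\bigl(\bar{\mu}_q q^{ab}\bigr)=-({\rm CK}(Y,q))^{ab}$, the conformal invariance ${\rm CK}(Y,q)={\rm CK}(Y,q_0)$, and the weight count ${\rm e}^{4\mbo{\nu}}\cdot {\rm e}^{-2\mbo{\nu}}\cdot {\rm e}^{-2\mbo{\nu}}=1$ (equivalently, that $\bar{\mu}_q q^{ab}=\bar{\mu}_{q_0}q_0^{ab}$ identically in two dimensions) --- merely make explicit the algebra the paper leaves implicit, and your part~(2), the vanishing of the right-hand side off $\operatorname{Supp}(\bar{g}')\vert_\Sigma$, is exactly the paper's conclusion.
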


Now, for estimates, let us choose a gauge for the dimensionally reduced Cauchy hypersurface~$(\Sigma, q)$. If we choose the polar coordinates $(R, \theta)$, it follows from the condition \eqref{2d-gauge-transform} that:
\begin{subequations}\label{Y-theta-transport}
\begin{align}
&\ptl_ \theta Y^\theta= R \ptl_R \frac{Y^R}{R} + \frac{1}{R} \biggl(\bar{\mu}_q q^{aR} q^{bR} \vert \Phi \vert \biggl(\bar{g}'_{ab} - \halb q_{ab} q^{cd} \bar{g}'_{cd}\biggr) \biggr), \\
&\ptl_R Y^\theta= - \frac{1}{R} \ptl_\theta \frac{Y^R}{R} - \frac{1}{R} \biggl(\bar{\mu}_q q^{aR} q^{b\theta} \vert \Phi \vert \biggl(\bar{g}'_{ab} - \halb q_{ab} q^{cd} \bar{g}'_{cd}\biggr)\biggr).
\end{align}
\end{subequations}

It may be noted that above system of differential equations is an overdetermined system. It~follows from the Picard theorem and the Frobenius theorem that the necessary and sufficient conditions for the existence of the solutions is the compatibility condition:
\begin{gather*}
\frac{1}{R} \ptl_R \biggl(R \ptl_R \frac{Y^R}{R} \biggr) + \frac{1}{R^2} \ptl^2 _{\theta} \frac{Y^R}{R}=
- \frac{1}{R} \ptl_R \biggl(\frac{1}{R} \bar{\mu}_q q^{aR} q^{bR} \vert \Phi \vert \biggl(\bar{g}'_{ab} - \halb q_{ab} q^{cd} \bar{g}'_{cd}\biggr)\biggr) \notag\\
\hphantom{\frac{1}{R} \ptl_R \biggl(R \ptl_R \frac{Y^R}{R} \biggr) + \frac{1}{R^2} \ptl^2 _{\theta} \frac{Y^R}{R}=}{}
-\frac{1}{R^2} \ptl_ \theta \biggl(\bar{\mu}_q q^{aR} q^{b \theta} \vert \Phi \vert \biggl(\bar{g}'_{ab} - \halb q_{ab} q^{cd} \bar{g}'_{cd}\biggr)\biggr).
\end{gather*}
In formal terms, this corresponds to vanishing of the commutator of the vector fields corresponding to the differential equations \eqref{Y-theta-transport}.
It may be noted that the compatibility condition fortuitously turns out to be a Poisson equation for \smash{$\frac{Y^R}{R}$}, for the Laplacian $\Delta_0$
\begin{align*}
\Delta_0 = \frac{1}{R} \frac{\ptl}{ \ptl R} \biggl(R \frac{\ptl }{\ptl R} \biggr) + \frac{1}{R^2} \frac{\ptl^2}{ \ptl \theta^2}.
\end{align*}
in the $R$, $\theta$ gauge. Likewise, equations \eqref{Y-theta-transport}, which are equivalent to equation \eqref{2d-gauge-transform} in the Lemma~\ref{Y-conformal-killing}, can be transformed into an overdetermined system of equations for \smash{$\frac{Y^R}{R}$}
\begin{align*}
&\ptl_\theta \frac{Y^R}{R}= - R \ptl_R Y^\theta - \biggl(\bar{\mu}_q q^{aR} q^{b\theta} \vert \Phi \vert \biggl(\bar{g}'_{ab} - \halb q_{ab} q^{cd} \bar{g}'_{cd}\biggr) \biggr), \\
&\ptl_R \frac{Y^R}{R}= \frac{1}{R} \ptl_\theta Y^\theta - \frac{1}{R^2} \biggl(\bar{\mu}_q q^{aR} q^{bR} \vert \Phi \vert \biggl(\bar{g}'_{ab} - \halb q_{ab} q^{cd} \bar{g}'_{cd}\biggr) \biggr)
\end{align*}
for which the compatibility condition is
\begin{align*}
\frac{1}{R} \ptl_R \bigl(R \ptl_R Y^\theta\bigr) + \frac{1}{R^2} \ptl^2_\theta Y^\theta &{}=
\frac{1}{R^3} \ptl_\theta \biggl(\frac{1}{R} \bar{\mu}_q q^{aR} q^{bR} \vert \Phi \vert \biggl(\bar{g}'_{ab} - \halb q_{ab} q^{cd} \bar{g}'_{cd}\biggr)\biggr) \notag\\
&\quad{}- \frac{1}{R} \ptl_R \biggl(\frac{1}{R} \bar{\mu}_q q^{aR} q^{b\theta} \vert \Phi \vert \biggl(\bar{g}'_{ab} - \halb q_{ab} q^{cd} \bar{g}'_{cd}\biggr)\biggr),
 \end{align*}
which is again a Poisson equation for $Y^\theta$. For the reasons of regularity on the axes, we impose the conditions \cite{O_Rinne_J_Stewart_2005}{\samepage
\begin{align*}
Y^ \theta =0, \qquad \ptl_\theta Y^R =0, \qquad \text{on the axes} \ \Gamma.
\end{align*}
In particular, we assume that the behaviour of $Y^\theta \sim \sin \theta$ and $\ptl_\theta Y^R \sim \sin \theta$ close to the axes $\Gamma$.}

Now consider the boundary value problem:
\begin{equation*}
\left.\begin{aligned}
&\Delta_0 \frac{Y^R}{R} =- \frac{1}{R} \ptl_R \biggl(\frac{1}{R} \bar{\mu}_q q^{aR} q^{bR} \vert \Phi \vert \biggl(\bar{g}'_{ab} - \halb q_{ab} q^{cd} \bar{g}'_{cd}\biggr)\biggr) \notag \\
&\hphantom{\Delta_0 \frac{Y^R}{R} =}{} -\frac{1}{R^2} \ptl_ \theta \biggl(\bar{\mu}_q q^{aR} q^{b\theta} \vert \Phi \vert \biggl(\bar{g}'_{ab} - \halb q_{ab} q^{ef} \bar{g}'_{ef}\biggr)\biggr) \qquad \text{on} \ (\Sigma)\\
&\frac{Y^R}{R}= Y^R_{\mathcal{H}^+} \qquad \text{on} \ \bigl(\mathcal{H}^+\bigr)
\end{aligned}
 \right\}
\qquad {\rm (D-BVP)}_{Y^R}
\end{equation*}
We solve the Dirichlet problem ${\rm (D-BVP)}_{Y^R}$ above with the method of images. Let us first consider the Poisson equation
\begin{align*}
\Delta_0 u= F, \qquad \text{in any Lipschitz domain $(\Sigma)$}.
\end{align*}
From elliptic theory, it follows that if $F \in C^{\infty} (\Sigma)$ for regular Dirichlet or Neumann boundary data, then $u \in C^{\infty} (\Sigma)$.
Suppose $K_u$ is the fundamental solution such that
\begin{align*}
\Delta_0 K_u = \delta \bigl(x-x'\bigr)
\end{align*}
where $\delta$ is a Dirac-delta function with a Euclidean metric on $\Sigma$.
Then consider the quantity,
\begin{align*}
\ptl_a (u \ptl^a K_u - K_u \ptl^a u) &{}= (\ptl_a u \ptl^a K_u+ u \Delta K_u) - (\ptl_a K_u \ptl^a u + K_u \Delta u) \notag\\
&{}= u \Delta K_u - K_u \Delta u
\end{align*}
and upon integration over the domain $\Sigma$, we get
\begin{align*}
\int_{\Sigma} \ptl_a (u \ptl^a K_{u} - K_{u} \ptl^a u) &{}= \int_{\ptl \Sigma} n \cdot (u\ptl^a K_{u} - K_{u} \ptl^a u) \notag \\
&{}= \int_{\ptl \Sigma} u \ptl_n K_u - K_u \ptl_n u
= \int_{\Sigma} u\delta \bigl(x-x'\bigr) - K_u F = u - \int_{\Sigma} K_u F.
\end{align*}
Therefore, the general representation formula for $u$ is
\begin{align} \label{gen-Poisson-representation-formula}
u = \int_{\Sigma} K_u F + \int_{\ptl \Sigma} u \ptl_n K_{u} - K_{u} \ptl_n u.
\end{align}
The formula \eqref{gen-Poisson-representation-formula} will be useful for us throughout our work, in different contexts. We can tailor this general formula for both Dirichlet and Neumann boundary value problems. In the following, we shall discuss two configurations that would be particularly relevant for us.

The orbit space $\Sigma$ geometry of Kerr black hole spacetime resembles that of the complement of a half-disk (with the boundary representing the horizon) in a half plane. We solve the Dirichlet problem using the method of images. The regularity and compatibility conditions for our problem imply that the `image' is reflection antisymmetric. This applies for the image charge as well as the Dirichlet data. Thus, with this picture, we have the complement of a full disk in a full plane, with reflection (with respect to the axes) antisymmetric data at the disk. It follows that the asymptotic decay rate for this problem is $\mathcal{O} \bigl(\frac{1}{R}\bigr)$ for this Dirichlet problem. This decay rate can be independently verified using the separation of variables. It may be noted that this decay rate is faster than that of the Poisson equation in a plane \big(i.e., $\frac{1}{2 \pi} \log R$ asymptotic behaviour\big). This faster decay rate plays a fundamental role in our problem.

Likewise, consider the Poisson equation with Neumann boundary conditions in the orbit space~$\Sigma$. The regularity at the axes implies that the Neumann data in the extended picture is reflection anti-symmetric. We thus recover the $\mathcal{O} \bigl(\frac{1}{R}\bigr)$ decay rate for the solution with appropriate decay conditions for the source function $f$, e.g.,
\begin{align*}
&-\frac{1}{R^2} \ptl_ \theta \biggl(\bar{\mu}_q q^{aR} q^{b \theta} \vert \Phi \vert \biggl(\bar{g}'_{ab} - \halb q_{ab} q^{ef} \bar{g}'_{ef}\biggr)\biggr), \\
&- \frac{1}{R} \ptl_R \biggl(\frac{1}{R} \bar{\mu}_q q^{ac} q^{bd} \vert \Phi \vert \biggl(\bar{g}' - \halb q_{ab} q^{ef} \bar{g}'_{ef}\biggr)\biggr)
\end{align*}
is compactly supported (or with appropriate decay rate consistent with asymptotically flat conditions) for our problem.
It follows that the solution of the Dirichlet problem $({\rm D-BVP})_{Y^R}$ decays as
\smash{$\frac{Y^R}{R} \sim \frac{1}{R}$} asymptotically, for large $R$. It follows from analogous arguments that the solutions for the boundary value problem
\begin{equation*}
\left.\begin{aligned}
&\Delta_0 Y^\theta= - \frac{1}{R} \ptl_R \biggl(\frac{1}{R} \bar{\mu}_q q^{Rc} q^{\theta d} \vert \Phi \vert \biggl(\bar{g}' - \halb q_{ab} q^{ef} \bar{g}'_{ef}\biggr)\biggr)
\notag\\
&\hphantom{\Delta_0 Y^\theta=}{}
+ \frac{1}{R^3} \ptl_\theta \biggl(\frac{1}{R} \bar{\mu}_q q^{Rc} q^{Rd} \vert \Phi \vert \biggl(\bar{g}' - \halb q_{ab} q^{ef} \bar{g}'_{ef}\biggr)\biggr) \qquad \text{on} \ (\Sigma)\\
&Y^\theta= Y^\theta_{\mathcal{H}^+} \qquad \text{on} \ \bigl(\mathcal{H}^+\bigr)
\end{aligned}
 \right\}
\tag*{${\rm (D-BVP)}_{Y^\theta}$}
\end{equation*}
are unique, regular (well-posed) and decay \smash{$Y^\theta \sim \frac{1}{R}$} asymptotically for large $R$.
On the other hand, due to the regularity conditions $Y$ admits the expansion:
	\begin{align*}
	&Y^{\theta}= \sum^{\infty}_{n=1} Y^{\theta}_n \sin (n \theta), \qquad
	Y^{R}= \sum^{\infty}_{n=0} Y^R_n \cos(n \theta)
	\end{align*}
for the solutions of the conformal Killing vector $Y$. Likewise, for regularity reasons, the behaviour of inhomogeneities in the boundary value problems mentioned above is restricted on the axes. In particular, written in explicit form, they behave as follows
\begin{align*}
& \bar{\mu}_q q^{RR} q^{\theta \theta} \vert \Phi \vert \bigl(\bar{g}' \bigl(\ptl_R, \ptl_\theta\bigr)\bigr)\sim \sin \theta ,\\
&\ptl_R \biggl(\bar{\mu}_q q^{RR} q^{RR} \vert \Phi \vert \biggl(\bar{g}' (\ptl_R, \ptl_R) - \halb q_{RR} tr_q \bigl(\bar{g}'\bigr)\biggr)\biggr) \sim \sin \theta
\end{align*}
close to the axes $\Gamma$. Now that we clarified the structure of the source terms, let us introduce the notation, for $(R, \theta)$ coordinates
\begin{align*}
&\cal{M}^{ R \theta }:= \bar{\mu}_q q^{RR} q^{\theta \theta} \vert \Phi \vert \bgg'(\ptl_R, \ptl_\theta), \notag\\
&\cal{M}^{R R}:= \bar{\mu}_q q^{R R} q^{RR} \vert \Phi \vert \biggl(\bgg' (\ptl_R, \ptl_R) - \halb q_{RR} {\rm tr}_q \bgg'\biggr).
\end{align*}

 As a consequence of the above arguments, they must admit a Fourier decomposition of the~form:
\begin{align*}
&- \frac{1}{R} \mathcal{M}^{RR}= \sum^\infty_{n = 0} I_n (R, t) \cos n \theta , \notag\\
&-R \mathcal{M}^{R \theta}= \sum^\infty_{n=1} J_n (R, t) \sin n \theta.
\end{align*}

Now, plugging in these decompositions in the first order equations \eqref{Y-theta-transport}, we
get
\begin{align*}
\ptl_R Y^R_0 (R, t) - \frac{1}{R} Y^R_0 (R, t) = I_0 (R, t),
\end{align*}
which admits the solution,
\begin{align*}
Y^R_0 (R, t) = \frac{R}{R_+} Y_0^R (R_+, t) + \int^R_{R_+} \frac{I_0 (R', t)}{R'} {\rm d}R'
\end{align*}
for the lowest frequency quantity $Y^R_0$. Now for higher frequencies,
\begin{align*}
&\ptl_R Y_n^R (R, t) - \frac{1}{R} Y^R_0 (R, t) - n Y^\theta_n (R, t)= I_n (R, t), \notag\\
& R^2 \ptl_R Y^\theta _n (R, t) - n Y^R_n (R, t) = J_n (R, t),
\end{align*}
which again follow from the first-order equations respectively. These equations can be decoupled~as
\begin{align} \label{Y-first-order-modes}
\frac{1}{R} \ptl_R \bigl(R \ptl_R Y^\theta_n (R, t)\bigr) - \frac{n^2 Y^\theta_n(R, t)}{R^2} = \frac{n I_n (R, t)}{R^2} + \frac{1}{R}
\ptl_R \frac{J_n (R, t)}{R}.
\end{align}

The characteristic equation admits two real roots and it may be noted that the fundamental set of solutions is given by
\begin{align*}
 \text{fundamental solutions set for $Y^\theta$ in \eqref{Y-first-order-modes}} = \{ R^n, R^{-n} \}.
\end{align*}
The corresponding Wronskian is \smash{$ \frac{2n}{R} \neq 0 \ \forall R \in (R_+, \infty)$}, $\to 0$ as $R \to \infty$ and \smash{$\to \frac{2n}{R_+}$} for $R \to R_+$. It follows that \smash{$Y^\theta = A(t) R^{-n} + B(t) R^{-n}$}.

We get the following asymptotic behaviour of the Wronskian near the horizon $\mathcal{H}^{+}$:
\begin{align*}
&W_{\mathcal{H}^+} \bigl(Y_n^\theta\bigr)= \frac{2n}{R_+}, \notag\\
&W_{\mathcal{H}^+} \biggl(\frac{Y_n^R}{R}\biggr)= \frac{2n}{R_+} \qquad \text{as} \ R \to R_+,
\end{align*}
and near the outer asymptotic region,
\begin{align*}
&W_{\bar{\iota}^0} \bigl(Y_n^\theta\bigr)= \frac{2n}{R}, \notag\\
&W_{\bar{\iota}^0} \biggl(\frac{Y_n^R}{R}\biggr)= \frac{2n}{R} \to 0 \qquad \text{as} \ R \to \infty,
\end{align*}
so that, for the conformal Killing vector $Y$, the asymptotic behaviour is
\begin{subequations} \label{Y-theta-asym}
\begin{align}
	Y_n^ \theta (R, t)&{}= Y_n^\theta (t) \bigl(R^n - R^{2n}_+ R^{-n}\bigr), \qquad R \ \text{near} \ R_+, \ n \geq 1, \notag \\
	&{} \to 0 \qquad \text{as} \ R \to R_+, \\
	Y_n^\theta (R, t)&= Y_n^\theta (t) R^{-n} \qquad \text{for large} \ R, \ n \geq 1, \notag\\
	&{} \to 0 \qquad \text{as} \ R \to \infty,
\end{align}
\end{subequations}
and
\begin{subequations} \label{Y-R-asym}
	\begin{align}
	Y_n^ R (R, t)&{}= Y_n^R (t) \bigl(R^{n+1} - R^{2n}_+ R^{-n+1}\bigr), \qquad R \ \text{near} \ R_+, \ n \geq 1, \notag \\
	&{} \to 0 \qquad \text{as} \ R \to R_+ ,\\
	Y_n^R (R, t)&= Y_n^R (t) R^{-n+1} \qquad \text{for large} \ R, \ n \geq 1, \notag\\
\frac{Y_n^R}{R}	 &{}\to 0 \qquad \text{as} \ R \to \infty.
	\end{align}
\end{subequations}

For the estimates in this work, the quantities $Y^\theta (t)$, $Y^R (t) $ in the right-hand sides of \eqref{Y-theta-asym} and~\eqref{Y-R-asym} are treated as constants (in each $\Sigma_t$) and thus there is a slight abuse of notation. The behaviour of $Y^R_0 (R, t)$ is a bit subtle and it is directly related to the regularity issues of our problem. This will be studied separately later.

\subsection*{Wave map phase space $\boldsymbol{X}$}
The general gauge transforms of the quantities
look like
\[
\vert \Phi \vert' = \vert \mbo{\Phi} \vert' + \mathcal{L}_{\bar{\text{Y}}}\vert \Phi \vert
\]
subsequently the wave map canonical pairs
\begin{gather*}
U'^A = \mbo{U}'^A + \mathcal{L}_{\bar{\text{Y}}} U^A, \\
p'_A = \mbo{p}'_A + \mathcal{L}_{\bar{\text{Y}}} p_A \qquad \forall A
\end{gather*}
likewise, the (spacetime) gauge transform of the metric on the target
\[
h'_{AB} (U)= \mbo{h}'_{AB} (\mbo{U}) + \mathcal{L}_{\bar{\text{Y}}} h_{AB}(U) \qquad \forall A, B,
\]
which is analogous to the transformation of a scalar.
In the case of the axially symmetric and stationary Kerr black hole spacetime the operator $\mathcal{L}_{\bar{Y}} = \mathcal{L}_{\bar{Y}} \vert_{\Sigma}$. As a consequence, the formulas above reduce to
\begin{align*}
& U'^A= \mbo{U}'^A +\mathcal{L}_{\bar{\text{Y}}} U^A, \\
& p'_A= \mbo{p}'_A + \mathcal{L}_{\bar{\text{Y}}} p_A.
 \end{align*}
In the asymptotic regions we have the above formulas reduce to
\begin{align*}
&U'^A= \mathcal{L}_{\bar{\text{Y}}}\vert_{\Sigma} U^A, \\
&p'_A= \mathcal{L}_{\bar{\text{Y}}}\vert_{\Sigma} p_A.
\end{align*}
After noting that in the $(R, \theta)$ coordinates for $(\Sigma, q)$
\begin{align*}
&\ptl_R \vert \Phi \vert \sim \mathcal{O}(R), \qquad \ptl_\theta \vert \Phi \vert \sim \mathcal{O}\bigl(R^2\bigr) \qquad \text{for large $R$} , \\
& \ptl_R \vert \Phi \vert \sim \mathcal{O}(1), \qquad \ptl_\theta \vert \Phi \vert \sim \mathcal{O}(1) \qquad \text{for $R$ close to $R_+$}.
\end{align*}
As a consequence, we have
\begin{align*}
&\vert \Phi \vert' \sim \mathcal{O} \biggl(\frac{1}{R}\biggr) \qquad \text{for large $R$}, \notag\\
&\vert \Phi \vert' \sim \mathcal{O}(1) \qquad \text{for $R$ close to $R_+$}.
\end{align*}
Next, the other component of the wave map $U \colon (M, g) \to (N, h)$ is constituted by the twist potential. It follows from background Kerr geometry that
\begin{align*}
	&\ptl_R \omega \sim \mathcal{O}\biggl(\frac{1}{R^3}\biggr), \qquad \ptl_\theta \omega \sim \mathcal{O}(1) \qquad \text{for large $R$} , \\
	& \ptl_R \omega\sim \mathcal{O}(1), \qquad \ptl_\theta \omega \sim \mathcal{O}(1) \qquad \text{for $R$ close to $R_+$}
\end{align*}
and
\begin{align*}
&\vert \omega \vert' \sim \mathcal{O} \biggl(\frac{1}{R}\biggr) \qquad \text{for large $R$}, \\
&\vert \omega \vert' \sim \mathcal{O}(1) \qquad \text{for $R$ close to $R_+$}.
\end{align*}
Now, then let us turn to the conjugate momenta, we have from the Hamiltonian equation,
\begin{align*}
\frac{N}{ \bar{\mu}_q}p_A &{}= h_{AB} (U) \ptl_t U'^B - h_{AB}(U) \mathcal{L}_{N'} U^B
\intertext{in the asymptotic regions}
 &{}= h_{AB} (U) \ptl_t \mathcal{L}_Y U^B - h_{AB}(U) \mathcal{L}_{N'} U^B.
\end{align*}

\subsection*{Lagrange multipliers}
 Now let us turn to the remanding quantities that occur in the ADM formalism, Lagrange multipliers $\{ N', N'^a \}$ in our Weyl--Papapetrou gauge, as constructed using a gauge transform from harmonic coordinates. It may noted that, for our background Kerr metric,
 \begin{gather}
\bar{\gg}' (\ptl_\phi, \ptl_\phi) = \vert \Phi \vert',\qquad 	 N= \vert \Phi \vert^\halb (- \bar{g} ({\rm d}t, {\rm d}t))^{-\halb} \nonumber
 	\intertext{noting that}
 	 \vert \Phi \vert '= \bgg' (\ptl_\phi, \ptl_\phi) + Y^\a \ptl_\a \vert \Phi \vert \nonumber
 	\intertext{and}
 \bar{g}' (\ptl_t, \ptl_t)= 	\bgg({\rm d} t, {\rm d} x^\a) \bgg \bigl({\rm d} t, {\rm d} x^\b\bigr) \bigl(\bgg' (\ptl_a, \ptl_\b) + \mathcal{L}_Y \bgg(\ptl_\a, \ptl_\b)\bigr) \nonumber
 	\intertext{we get}
 	 N'= \halb \vert \Phi \vert^{-1} N (\bgg (\ptl_\phi, \ptl_\phi) + Y^\a \ptl_\a \vert \Phi \vert) \notag\\
 \hphantom{N'=}{}
 	 - \halb \vert \Phi \vert^\halb N^3 \bigl(\bgg({\rm d} t, {\rm d} x^\a) \bgg \bigl({\rm d} t, {\rm d} x^\b\bigr) \bigl(\bgg' (\ptl_a, \ptl_\b) + \mathcal{L}_Y \bgg(\ptl_\a, \ptl_\b)\bigr) \bigr). \label{pert-lapse}
 	\end{gather}
In the asymptotic regions, we have
\begin{align*}
	N' = Y^a \ptl_a N + N \ptl_t Y^t.
	\end{align*}
Likewise, for the shift vector, we have
 \begin{align*}
 \bigl(\vert \Phi \vert^{-1} q_{ab}+ \vert \Phi \vert \mathcal{A}_a \mathcal{A}_b\bigr)\bar{N}^b = \vert \Phi \vert q_{ab} N^b + \vert \Phi \vert \mathcal{A}_t \mathcal{A}_a
 \end{align*}
consequently
\[
 \vert \Phi \vert^{-1} \bar{N}'^b = \vert \Phi \vert^{-1} N'^b + q^{ab} \vert \Phi\vert \mathcal{A}_t \mathcal{A}'_a.
\]
Recall
\[
\bar{N}^b = \bar{q}^{ab} \bar{g}_{0a} = \vert \Phi \vert q^{ab} \bar{g}_{0a},
\]
then
\begin{align*}
 N'^b&{}= \bar{N}'^b + q^{ab} \vert \Phi \vert^2 \mathcal{A}_t \mathcal{A}'_a \notag\\
 &{}= \vert \Phi\vert q^{ab} \bar{g}'_{ta} + q^{ab} \vert \Phi \vert \mathcal{A}_t \bar{g}' _{a \phi }\notag\\
 &{}= \vert \Phi\vert q^{ab} \bigl(\bgg'_{ta} + (\mathcal{L}_Y \bar{g})_{ta}\bigr) + q^{ab} \vert \Phi \vert \mathcal{A}_t \bigl(\bgg'_{a \phi} + (\mathcal{L}_Y \bar{g})_{a \phi}\bigr).
\end{align*}
 Now then using
 \begin{gather*}
 	(\mathcal{L}_Y \bar{g})_{0a} =\ptl_0 Y^b \bar{g}_{ba} + \ptl_a Y^t \bigl(-N^2 + N_\phi N^\phi\bigr) + \ptl_a Y^\phi N_\phi \notag\\
 	\hphantom{(\mathcal{L}_Y \bar{g})_{0a} }{}
 = \ptl_0 Y^b \bar{g}_{ba} + \ptl_a Y^t \bigl(- N^2 + \vert \Phi\vert\mathcal{A}^2_0\bigr) + \vert \Phi \vert \ptl_a Y^\phi \mathcal{A}_0,
\\
 	(\mathcal{L}_{Y} \bar{g})_{a \phi} = \vert \Phi \vert \ptl_a Y^\phi + \ptl_a Y^t \bar{N}_\phi \notag\\
 	\hphantom{(\mathcal{L}_{Y} \bar{g})_{a \phi}}{}
 = \vert \Phi\vert \bigl(\ptl_a Y^\phi + \ptl_a Y^t \mathcal{A}_0\bigr)
 	\end{gather*}
 in the asymptotic regions, we have
 \begin{align} \label{pert-shift}
 	N'^b = \ptl_t Y^b - N^2 q^{ab} \ptl_a Y^t.
 \end{align}
The results obtained above are summarized in the following lemma.

 \begin{Lemma}[Lagrange multipliers $\{ N', N'^a \}$]
 	Suppose $\bar{Y}$ is a gauge transform from the harmonic coordinates $\bigl(\bar{M}', \bar{\gg}'\bigr)$ to the Weyl Papapetrou gauge $\bigl(\bar{M}', \bar{g}'\bigr)$.
 	\begin{itemize}\itemsep=0pt
 \item In the Weyl--Papapetrou gauge, the Lagrange multipliers $\left\{ N, N'^a\right\}$ in the asymptotic regions are given by \eqref{pert-lapse} and \eqref{pert-shift} respectively

\item The vector field $N'^a$ so constructed is regular at the axes and behaves as
\[
N'^R = \mathcal{O}(1), \qquad N'^\theta = \mathcal{O}\biggl(\frac{1}{R}\biggr) \qquad \text{near the spatial infinity},
\]
and
\[
N'^R = \mathcal{O}(1), \qquad N'^\theta =0 \qquad \text{at the horizon}.
\]
\end{itemize}
\end{Lemma}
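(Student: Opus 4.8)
The plan is to treat the two assertions of the Lemma separately. The first assertion---that $N'$ and $N'^a$ are given by \eqref{pert-lapse} and \eqref{pert-shift} in the asymptotic regions---is essentially established by the gauge-transform computation immediately preceding the statement. I would simply record that $N'$ arises from linearizing $N = \vert \Phi \vert^{1/2}(-\bar{g}({\rm d}t,{\rm d}t))^{-1/2}$ and substituting the transforms $\vert \Phi \vert' = \vert \mbo{\Phi} \vert' + \mathcal{L}_{\bar{Y}}\vert\Phi\vert$ and $\bar{g}'(\ptl_t,\ptl_t)$, and that $N'^b$ follows from the relation between $\bar{N}^b$ and $N^b$ together with the Lie-derivative expressions for $(\mathcal{L}_Y \bar{g})_{0a}$ and $(\mathcal{L}_Y \bar{g})_{a\phi}$; dropping the compactly supported $\vert\mbo{\Phi}\vert'$ and $\mbo{U}'^A$ pieces in $\Sigma \setminus \operatorname{Supp}(\bar{g}')$ then leaves exactly the asymptotic formulas quoted.

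The substantive claim is the decay and regularity of $N'^a$, and here I would work entirely from \eqref{pert-shift}, $N'^b = \ptl_t Y^b - N^2 q^{ab}\ptl_a Y^t$, feeding in the asymptotics of the conformal Killing field $Y$ obtained in \eqref{Y-theta-asym} and \eqref{Y-R-asym}. The first term inherits the behaviour of $Y$: among the modes $n \geq 1$ the lowest dominates, so \eqref{Y-R-asym} gives $Y^R \sim \mathcal{O}(1)$ and \eqref{Y-theta-asym} gives $Y^\theta \sim \mathcal{O}(1/R)$ for large $R$, whence $\ptl_t Y^R = \mathcal{O}(1)$ and $\ptl_t Y^\theta = \mathcal{O}(1/R)$ near $\iota^0$, while both $Y^R_n$ and $Y^\theta_n$ tend to zero as $R \to R_+$. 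I would then combine this with the background Kerr asymptotics of $N$ and $q^{ab}$ in the $(R,\theta)$ gauge to control the second term $N^2 q^{ab}\ptl_a Y^t$, verifying in particular that it does not degrade the rates $N'^R = \mathcal{O}(1)$ and $N'^\theta = \mathcal{O}(1/R)$ at spatial infinity.

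For regularity at the axes I would exploit the Fourier structure forced by the regularity conditions $Y^\theta = \sum_n Y^\theta_n \sin(n\theta)$ and $Y^R = \sum_n Y^R_n \cos(n\theta)$: the $\sin(n\theta)$ prefactor makes $N'^\theta \sim \sin\theta$ vanish on $\Gamma$, while $N'^R$ is even under reflection about the axis and hence smooth there, matching the stated scalar/vector regularity at $\Gamma$. The horizon value $N'^\theta = 0$ then follows directly from $Y^\theta_n(R_+,t) = 0$ in \eqref{Y-theta-asym}, and $N'^R = \mathcal{O}(1)$ at $\mathcal{H}^+$ from the boundedness of $\ptl_t Y^R$ and of $N^2 q^{ab}\ptl_a Y^t$ there.

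The main obstacle will be controlling the terms involving $Y^t$ and the zero mode $Y^R_0$. The time component $Y^t$ of the spacetime gauge vector is not governed by the overdetermined system \eqref{Y-theta-transport} for $(Y^R, Y^\theta)$; its decay must be imported from the harmonic-gauge side, where the perturbation is pure gauge in the asymptotic region, and then balanced against the growth of $N^2 q^{ab}$ in the Kerr background. The zero mode $Y^R_0$, whose behaviour the text flags as subtle and defers, is precisely the delicate ingredient: establishing that $Y^R_0/R \to 0$ at the required rate, without a spurious logarithmic or linear-in-$R$ contribution, is where the faster-than-Poisson decay produced via the method of images---the reflection-antisymmetric image charge with vanishing total charge---does the essential work.
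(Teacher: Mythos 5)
Your outline for the first bullet and for the spatial components of $Y$ matches the paper's proof: the paper also reads off $N'$, $N'^b$ from the linearizations preceding the Lemma, feeds the mode asymptotics \eqref{Y-theta-asym}, \eqref{Y-R-asym} into \eqref{pert-shift}, and uses the $\cos n\theta$/$\sin n\theta$ structure for regularity at $\Gamma$ and for $N'^\theta=0$ at $\mathcal{H}^+$. But there is a genuine gap exactly at the point you flag, and your proposed fix is not viable. The paper does \emph{not} import the decay of $Y^t$ from the harmonic-gauge side; it uses the time-coordinate gauge. In the asymptotic regions $N' = Y^a\ptl_a N + N\ptl_t Y^t$, and the $2+1$ maximal gauge (Claim \ref{2+1max}) forces $N'$ to vanish there, giving $\ptl_t Y^t = -\frac{1}{N}Y^a\ptl_a N$. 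Since the background is stationary this integrates to $Y^t = Y^t(t=0) - \mathcal{Y}^a\frac{\ptl_a N}{N}$ with $\mathcal{Y}^a = \int^t_0 Y^a\,{\rm d}t'$, so \emph{all} $Y^t$-dependence in $N'^b = \ptl_t Y^b - N^2q^{ab}\ptl_a Y^t$ reduces to the already-established asymptotics of the spatial components together with the explicit background ratios $\ptl_R N/N$ in \eqref{NR-asym} and $\ptl_\theta N/N = \cot\theta$ in \eqref{Ntheta-asym}. The paper then estimates $\mathcal{D}^a := N^2 q^{ab}\ptl_b Y^t$ component by component, obtaining $D^R = \mathcal{O}\bigl(\frac{1}{R^2}\bigr)$, $D^\theta = \mathcal{O}\bigl(\frac{1}{R^3}\bigr)$ as $R\to\infty$ and $D^R = \mathcal{O}(1)$, $D^\theta = \mathcal{O}\bigl(1-\frac{R_+}{R}\bigr)$ at the horizon, which is what protects the rates $N'^R = \mathcal{O}(1)$, $N'^\theta = \mathcal{O}\bigl(\frac{1}{R}\bigr)$. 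Your alternative cannot close this step: compact support of the perturbation in harmonic gauge only tells you the solution is pure gauge in the asymptotic regions; it gives no decay for the gauge vector $\bar{Y}$ itself, and in particular none for $\ptl_a Y^t$, which is the quantity multiplied by the growing factor $N^2 q^{ab}$.

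A secondary mismatch: you locate the essential work for this Lemma in the zero mode $Y^R_0$ and the zero-total-charge method of images. For the present Lemma that is misplaced. The paper explicitly sets $Y^R_0$ aside (``a bit subtle \dots will be studied separately later'') and disposes of it only afterwards, in the Proposition showing the integral invariant $Y_0(\mathcal{H}^+)$ vanishes; the proof of the Lemma itself runs entirely on the $n\geq 1$ mode expansions and the gauge relation above. So the repair needed is not more harmonic-gauge or potential-theoretic input, but simply replacing your $Y^t$ step with $\ptl_t Y^t = -N^{-1}Y^a\ptl_a N$ and its time integral.
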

\begin{proof}
In our gauge, we have
\begin{align*}
\ptl_t Y^t = - \frac{1}{N} Y^a\ptl_a N, \qquad \forall\, t
\end{align*}
in the asymptotic regions.
Thus,
\begin{align*}
\ptl_b Y^t = - \ptl_b \biggl(\frac{1}{N} \int^t_{0} Y^a \ptl_a N {\rm d}t' \biggr).
\end{align*}
Now define a quantity $\mathcal{D}^a $ occurring in \eqref{pert-shift} as
\begin{align*}
\mathcal{D}^a := N^2 q^{ab} \ptl_a Y^t,
\end{align*}
then
\begin{align*}
\mathcal{D}^a = - N^2 q^{ab} \ptl_b \biggl(\frac{1}{N} \int^t_{0} Y^a \ptl_a N \,{\rm d}t'\biggr).
\end{align*}

Recall that
\begin{subequations}
\begin{align}
	&{\rm e}^{2 \nu}= \sin^2 \theta \frac{\bigl(r^2 +a^2\bigr)^2 - a^2 \Delta \sin^2 \theta }{R^2}, \\
	&\frac{\ptl_R N}{N}= \frac{1+ \frac{R^2_+}{R^2}}{R \bigl(1- \frac{R^2_+}{R^2} \bigr) } \sim \biggl(1-\frac{R_+}{R}\biggr)^{-1} \qquad \text{as} \ R \to R_+ \notag\\
	&\hphantom{\frac{\ptl_R N}{N}}{}\to 0 \qquad \text{as} \ \mathcal{O} \biggl(\frac{1}{R}\biggr) \qquad \text{as} \ R \to \infty, \label{NR-asym}\\
	&\frac{\ptl_\theta N}{N}= \cot \theta, \label{Ntheta-asym}
		\intertext{and denote}
		&\mathcal{Y}^a = \int^t_0 Y^a \,{\rm d}t .
\end{align}
\end{subequations}
Let us now compute the behaviour of $\mathcal{D}$ at various boundaries. We have the following expressions for the components of $\mathcal{D}$,
\begin{align*}
	D^R &{}= N^2 {\rm e}^{-2 \mbo{\nu}} (q_0)^{RR} \ptl_R Y^t \notag \\
	&{}= \biggl(1 - \frac{R^2_+}{R^2} \biggr)^2 \cdot \frac{R^4}{ \bigl(\bigl(r^2+a^2\bigr)^2 -a^2 \Delta \sin^2 \theta\bigr)} \notag
	 \ptl_R \Biggl(Y^t(t=0) - \mathcal{Y}^R \frac{1+ \frac{R^2_+}{R^2}}{1-\frac{R^2_+}{R^2}} - \mathcal{Y}^\theta \cot \theta \Biggr) \notag\\
	& \to 0 \qquad \text{as} \ R \to \infty
\end{align*}
at the rate of $\mathcal{O}\bigl(\frac{1}{R^2}\bigr)$; and
\[
D^R	= \mathcal{O}(1) \qquad \text{as} \ R \to R_+.
\]
Likewise, we have
\begin{align*}
	D^\theta &{}= N^2 {\rm e}^{-2 \mbo{\nu}} (q_0)^{\theta \theta} \ptl_\theta Y^t \notag\\
	&{}= \biggl(1- \frac{R^2_+}{R^2}\biggr)^2 \cdot \frac{R^4}{ \bigl(\bigl(r^2+a^2\bigr)^2 -a^2 \Delta
		\sin^2 \theta \bigr)} \ptl_\theta \Biggl(Y^t (t=0) - \mathcal{Y}^R \frac{1 + \frac{R^2_+}{R^2}} {{1-\frac{R^2_+}{R^2}} } - \mathcal{Y}^ \theta \cot \theta \Biggr)\\
	&{} \to 0 \qquad \text{as} \ R \to \infty
\end{align*}
at the rate of $\mathcal{O}\bigl(\frac{1}{R^3}\bigr)$
\[
	D^\theta = \mathcal{O} \biggl(1-\frac{R_+}{R}\biggr) \to 0 \qquad \text{as} \ R \to R_+.
\tag*{\qed}
\]
\renewcommand{\qed}{}
\end{proof}

It now follows that the conjugate momenta, are given by
\begin{align*}
N \frac{p'_A}{\bar{\mu}_q} &{}= h_{AB} (U) \ptl_t \mathcal{L}_Y U'^A - h_{AB} (U) \mathcal{L}_{N'} \ptl_a U^b
\intertext{in the asymptotic regions}
&{}= h_{AB}(U) N^2 q^{ab} \ptl_b Y^t \ptl_a U^B \notag\\
&{}= h_{AB} (U) R^2 \sin^2 \theta \biggl(1-\frac{R^2_+}{R^2}\biggr)^2 \cdot \frac{R^2}{ \sin^2 \theta \bigl(\bigl(r^2+a^2\bigr)^2-a^2 \Delta \sin^2 \theta\bigr)} \notag\\
&\quad{} \times \biggl(-\ptl_R U^B \ptl_R \biggl(\mathcal{Y}^a \frac{\ptl_a N}{N}\biggr) - \frac{1}{R^2} \ptl_\theta U^B \ptl_\theta \biggl(\mathcal{Y}^a \frac{\ptl_a N}{N}\biggr)\biggr).
\end{align*}
As a consequence, we can estimate the values of the conjugate momenta at various boundaries (cf.\ equations~\eqref{NR-asym} and \eqref{Ntheta-asym}).
\begin{align*}
&N \frac{p'^A}{ \bar{\mu}_q}= \mathcal{O} \biggl(\frac{1}{R^4}\biggr) \qquad \text{for large $R$}, \\
&N \frac{p'^A}{\bar{\mu}_q}= \mathcal{O} \biggl(1-\frac{R^2_+}{R^2}\biggr)\qquad \text{for $R$ close to $R_+$},\ A= 1, 2.
\end{align*}
It can be verified that the decay rates and the boundary behaviour of the shift vector field agree from separate analysis using the momentum constraint. From the momentum constraint, we~have
\begin{align*}
- \leftexp{(q_0)}{\grad}_b \varrho^b_a + \halb p'_A \ptl_a U^A =0
\end{align*}
after taking into account that the transverse-traceless tensors vanish for our geometry, where
\begin{align*}
\varrho^a_c = \bar{\mu}_{q_0} \bigl(\leftexp{(q_0)}{\grad}_c Y^a + \leftexp{(q_0)}{\grad}^a Y_c - \delta^a_c \leftexp{(q_0)}{\grad}_b Y^b \bigr),
\end{align*}
which can be reduced to an elliptic equation for the shift vector.
Now, let us turn our attention to the Hamiltonian constraint
\begin{align*}
H &{}= \bar{\mu}^{-1}_{q_0} \biggl({\rm e}^{-2 \mbo{\nu}} \Vert \varrho \Vert^2_{q_0} - \halb \mbo{\tau}^2 {\rm e}^{2 \mbo{\nu}} \bar{\mu}^2_{q_0} + \halb p_A p^A\biggr) \notag\\
&\quad{}+ \bar{\mu}_{q_0} \biggl(2 \Delta_0 \mbo{\nu} + \halb h_{AB} q_0^{ab} \ptl_a U^A \ptl_b U^B\biggr), \qquad (\Sigma, q_0),
\end{align*}
where
\[
\Delta_0 \mbo{\nu} := \frac{1}{\bar{\mu}_{q_0}} \ptl_b\bigl(q^{ab}_0 \bar{\mu}_{q_0} \ptl_b \mbo{\nu}\bigr).
\]
There are a few delicate aspects of the Hamiltonian constraint in our dynamical axisymmetric problem. For the special case of the Kerr metric,
\begin{align*}
H = \bar{\mu}_{q_0} \biggl(2 \Delta_0 \mbo{\nu} + \halb h_{AB} q_0^{ab} \ptl_a U^A \ptl_b U^B\biggr), \qquad (\Sigma, q_0).
\end{align*}
If we consider the quantity \smash{$\int H =0$}, it may be noted that the inner boundary term involves the~$\ptl_R \mbo{\nu}$ at the horizon \smash{$\mathcal{H}^+$}. This quantity vanishes for the Kerr black hole metric and we recover the positive mass theorem of Schoen--Yau \cite{schoen-yau-1, schoen-yau-2}. This is consistent with the (inner boundary) horizon being the minimal surface, which is the case with the Kerr black hole spacetime.
Let us now analyze the linearized Hamiltonian constraint
\begin{align*}
H' &{}= \bar{\mu}_{q_0} \bigl(2 \Delta_0 \mbo{\nu}'\bigr) + \halb \bar{\mu}_{q_0} \ptl_{U^C} h_{AB} q_0^{ab} \ptl_a U^A \ptl_b U^B U'^C + \bar{\mu}_{q_0} q_0^{ab} h_{AB} \ptl_a U'^A \ptl_b U^B, \qquad (\Sigma),
\end{align*}
which is an elliptic PDE.
We would like to construct $\mbo{\nu}'$ such that the following proposition holds.

\begin{Proposition} \label{well-posedness-nu'}
	A boundary value problem for $\nu'$ admits a unique, regular and bounded solution that decays at the rate of $\mathcal{O}\bigl(\frac{1}{R}\bigr)$ for large $R$.
\end{Proposition}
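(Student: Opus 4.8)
The plan is to solve the linearized Hamiltonian constraint $H' = 0$ for $\mbo{\nu}'$, viewing it as a Poisson equation on the orbit space $(\Sigma, q_0)$. Writing $H' = 0$ explicitly and isolating the Laplacian term, I would recast the equation as
\begin{align*}
2 \Delta_0 \mbo{\nu}' = -\halb \ptl_{U^C} h_{AB} q_0^{ab} \ptl_a U^A \ptl_b U^B U'^C - q_0^{ab} h_{AB} \ptl_a U'^A \ptl_b U^B =: -\mathcal{F},
\end{align*}
so that $\Delta_0 \mbo{\nu}' = -\halb \mathcal{F}$ on $(\Sigma, q_0)$. The source $\mathcal{F}$ is built entirely from the background Kerr wave map $U$ and the independent perturbative wave map data $(U'^A)$, whose asymptotic behaviour has already been established in Section~\ref{section7}: using $\ptl_R \vert \Phi \vert \sim \mathcal{O}(R)$, $\ptl_R \omega \sim \mathcal{O}(R^{-3})$ for large $R$, together with the derived rates $\vert \Phi \vert' \sim \mathcal{O}(1/R)$ and $\vert \omega \vert' \sim \mathcal{O}(1/R)$, I would verify that $\mathcal{F}$ decays fast enough (effectively compactly supported, or with an asymptotically flat decay rate) to feed into the representation formula \eqref{gen-Poisson-representation-formula} for the flat Laplacian.

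The core of the argument is then exactly the Dirichlet/Neumann machinery already set up for the $Y$-system. First I would note the regularity conditions imposed on the axes by fiat in Section~\ref{section4}, namely $\ptl_{\vec{n}} \mbo{\nu}' = 0$ (which corresponds to preserving the minimal-surface/horizon condition) and the vanishing $U'^A\vert_\Gamma = 0$ with $\ptl_{\vec{n}} U'^A = 0$. These conditions control the behaviour of $\mathcal{F}$ near $\Gamma$ and guarantee that, upon reflecting across the axes via the method of images, the extended source is reflection-antisymmetric, so that the total `charge' of the source integrates to zero. This is precisely the configuration described in the paragraph preceding this Proposition: the orbit space geometry resembles the complement of a half-disk in a half-plane, and the antisymmetry of both the image charge and the boundary data upgrades the generic $\frac{1}{2\pi}\log R$ growth of the two-dimensional Poisson kernel to the faster $\mathcal{O}(1/R)$ decay. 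I would invoke standard elliptic theory (as quoted around \eqref{gen-Poisson-representation-formula}: $F \in C^\infty(\Sigma)$ with regular Dirichlet/Neumann data yields $u \in C^\infty(\Sigma)$) to obtain existence, uniqueness, and $C^\infty$ regularity, and then cross-check the decay rate by separation of variables, expanding $\mbo{\nu}'$ in a cosine series $\sum_n \mbo{\nu}'_n(R,t)\cos(n\theta)$ and examining the fundamental solution set $\{R^n, R^{-n}\}$ as was done for $Y^\theta$ in \eqref{Y-first-order-modes}.

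The main obstacle, as the surrounding discussion repeatedly emphasizes, is the lowest-frequency ($n=0$) mode. For the higher modes $n \geq 1$ the Wronskian analysis immediately forces the regular solution onto the decaying branch $R^{-n}$, but for $n=0$ the homogeneous solutions are $\{1, \log R\}$, and it is exactly here that a logarithmic blow-up could `secretly' appear. The resolution must come from the vanishing-charge condition: the regularity conditions at the axes ensure that the frequency corresponding to the logarithmic blow-up does not occur, i.e. the coefficient of $\log R$ is annihilated by the antisymmetry of the image source. I would therefore devote the heart of the proof to showing that $\int_\Sigma \mathcal{F}\, \bar{\mu}_{q_0} = 0$ (so that the $n=0$ mode carries no logarithmic term), and that the remaining $\mathcal{O}(1/R)$ particular contribution is consistent with the asymptotically flat gauge. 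This is subtle because, as warned in Section~\ref{section5}, $\mbo{\nu}'$ \emph{does not transform as a scalar}, so the gauge behaviour of the source and of any boundary data must be tracked carefully; establishing that the total charge vanishes for all times is one of the two crucial regularity results flagged in the overview, and it is the step I expect to require the most care.
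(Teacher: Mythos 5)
There is a genuine gap, and it starts with the boundary data: you have misplaced it, and with it the actual difficulty of the proposition. In the paper the Neumann condition $\ptl_{\mbo{n}} \mbo{\nu}' = 0$ lives at the \emph{horizon} $\mathcal{H}^+$ (it is the linearized preservation of the minimal-surface property of the inner boundary), while the axes $\Gamma$ carry the \emph{Dirichlet} condition $\mbo{\nu}' = 2\gamma'$, obtained by linearizing the regularity relation $\mbo{\nu} = 2\gamma - 2\log\rho$; you attribute the Neumann condition to the axes and never impose $\mbo{\nu}' = 2\gamma'$ at all. This matters because the obstruction the paper must clear is that regularity of $\mbo{\nu}'$ itself at $\Gamma$ additionally forces $\ptl_{\mbo{n}} \mbo{\nu}' = 0$ there --- apparently over-determined data (Dirichlet and Neumann on the same boundary), normally ill posed --- and the proof hinges on two facts you do not supply: that the two axis conditions are \emph{equivalent}, and the key lemma that $\mbo{\nu}'$ does not transform as a scalar but as $\mbo{\nu}' = Y^a \ptl_a \mbo{\nu} + \halb\, \leftexp{(q_0)}{\grad}_a Y^a$, with the correction term harmonic in the asymptotic regions. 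The latter is precisely what restores regularity at the corners $\Gamma \cap \mathcal{H}^+$ and the consistency of the horizon Neumann condition; you mention the non-scalar transformation only as a caution to ``track carefully,'' not as the resolution, and indeed with the naive scalar law $\mbo{\nu}' = Y^a \ptl_a \mbo{\nu}$ one gets $\ptl_R \mbo{\nu}' = \ptl_R Y^a \ptl_a \mbo{\nu} \neq 0$, so the mixed problem you would write down is inconsistent at the corners and your existence argument never gets off the ground.

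Second, your proposed ``heart of the proof'' --- the bulk identity $\int_\Sigma \mathcal{F}\, \bar{\mu}_{q_0} = 0$ --- is not what the paper does and is not the operative mechanism. The zero-charge condition is achieved \emph{constructively}: the Dirichlet condition on $\Gamma$, together with the $\sin\theta$-type regularity of the source near the axes, dictates a reflection-antisymmetric extension of charge and data across $\{\rho = 0\}$ in the method of images, so the extended distribution has vanishing net charge automatically and the representation formula \eqref{gen-Poisson-representation-formula} yields $\mathcal{O}\bigl(\frac{1}{R}\bigr)$ instead of $\log R$; no separate integral identity for $\mathcal{F}$ is needed, and the ``vanishing integral invariant for all times'' flagged in Section~\ref{section5} is the statement about $Y_0(\mathcal{H}^+)$ --- a property of the gauge vector proved from the flatness condition \eqref{flatness-condition} --- which you have conflated with your charge integral. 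Finally, existence and uniqueness are not obtained from a single standard elliptic problem: the paper splits $\mbo{\nu}' = \mbo{\nu}'_D + \mbo{\nu}'_N$ into an inhomogeneous Dirichlet problem with data $2\gamma'$ on all of $\{\rho = 0\}$ and a homogeneous Neumann problem whose horizon data is manufactured through the Dirichlet-to-Neumann map $\Lambda \colon \mbo{\nu}' \to \ptl_{\mbo{n}} \mbo{\nu}'$ so that the sum satisfies the required condition at $\mathcal{H}^+$; this construction is also what evades the ``uniqueness up to a constant'' deficiency of a pure Neumann problem, which the paper explicitly notes would destroy the decay claim. Your separation-of-variables check of the $n \geq 1$ modes agrees with the paper, but as written the proposal fails at the corner consistency and at the existence step for the mixed boundary conditions.
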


In order to set up a well-posed boundary value problem for this elliptic PDE, we need to specify appropriate boundary conditions.

Suppose, $\Sigma$ is a 2-surface that embeds into the Cauchy hypersurface of the Kerr metric \smash{$(\Sigma \hookrightarrow) \olin{\Sigma}$}, the Gauss curvature of $\Sigma$ is given by
\begin{align*}
K (\Sigma) = - \frac{1}{{\rm e}^{2 \mbo{\nu}}} \Delta_0 \mbo{\nu}
\end{align*}
and the mean curvature $H$ of $\Sigma \hookrightarrow \olin{\Sigma}$
\[
H_\Sigma = 0
\]
on account of the fact that the inner boundary of $\Sigma$ is a minimal surface. In the perturbative theory, we need to find an expression of the mean curvature.
\begin{align*}
\vec{H}_{\Sigma} = H_{\Sigma} \mbo{n}, \qquad \text{mean curvature vector, $\Sigma \hookrightarrow \olin{\Sigma}$}.
\end{align*}
For the conformal transformation $q= \Omega^2 q_0$, we have
\begin{align*}
H_\Sigma = \frac{1}{\Omega} \biggl(H_0 + \frac{2}{\Omega} \frac{ \ptl \Omega}{\ptl \mbo{n}} \biggr).
\end{align*}
It then follows that
\begin{align*}
H_{\Sigma} = \vert \Phi \vert^{1/2} {\rm e}^{-\mbo{\nu}} \biggl(\frac{1}{R} + \ptl_R \mbo{\nu}
\biggr).
\end{align*}
It may be noted that the preservation of the minimal surface condition at the inner boundary horizon \smash{$\mathcal{H}^+$} implies a Neumann boundary condition for $\mbo{\nu}'$ at the horizon \smash{$\mathcal{H}^+$}.
Let us now turn to the issue of regularity of $\mbo{\nu}'$ at the axes $\Gamma$. Firstly note that, from the form of the Kerr metric, the quantity $\mbo{\nu}'$ has the form
\begin{align*}
\mbo{\nu} = 2\gamma + c - 2 \log \rho
\end{align*}
in the Weyl--Papapetrou gauge, near the axes. For the reasons of regularity of the Kerr metric at the axes, we have $c \equiv 0$. Thus,
\begin{align*}
\mbo{\nu} = 2 \gamma - 2 \log \rho.
\end{align*}
Consequently, we have the condition
\begin{align} \label{gamma-nu condition}
\mbo{\nu}' = 2 \gamma'
\end{align}
for the linear perturbation theory, thus suggesting the Dirichlet boundary conditions for $\mbo{\nu}'$ at the axes
\begin{equation*}
\left.\begin{aligned}
&\Delta_0 \mbo{\nu}'=\frac{1}{4} \ptl_{U^C} h_{AB}(U) q^{ab}_{0} \ptl_a U^A U'^C + q_{ab} h_{AB} \ptl_a U^a \ptl_b U^B \qquad \text{on}\ (\Sigma)\\
&\mbo{\nu}' = 2\gamma' \qquad \text{on} \ (\Gamma) \\
&\ptl_n \mbo{\nu}'=0 \qquad \text{on} \ \bigl(\mathcal{H}^{+}\bigr)
\end{aligned}
 \right\}
\qquad {\rm (M-BVP)}_{\mbo{\nu'}}.
\end{equation*}

It must be pointed out that for the regularity of $\mbo{\nu'}$ itself, at the axes, we need to impose the condition $\ptl_{\mbo{n}} \mbo{\nu}'=0$ at the axes. A priori, the apparent over-determined boundary data --~both Dirichlet and Neumann~-- at the axes is a significant issue for the well-posedness of the boundary value problem for $\mbo{\nu}'$. It may be recalled that an elliptic problem with both Dirichlet and Neumann data, specified at the same boundary, is typically ill posed. Importantly, we can prove that the boundary conditions $\mbo{\nu}' = 2 \gamma'$ and $\ptl_{\mbo{n}} \mbo{\nu}'=0$ are equivalent. In other words, imposition of one condition automatically satisfies the other and vice versa. This saves us from the aforementioned issue.

However, we should ask a more fundamental question: Can the set-up of our problem result in a well-posed boundary value problem for $\mbo{\nu}'$ (as proposed in Proposition~\ref{well-posedness-nu'}) in the first place? and what about the regularity of $\mbo{\nu}'$ at the corners $\Gamma \cap \mathcal{H}^+$?
Suppose, $\mbo{\nu}' = 2 \gamma'$, then a~computation shows that
\begin{align*}
\ptl_{\mbo{n}} \mbo{\nu}' = \ptl_R \mbo{\nu}'= 2 \ptl_R \gamma' =0
\end{align*}
at the corners, in a limiting sense. On the other hand, suppose $\mbo{\nu}'$ were a scalar, then $\mbo{\nu}' = (\mbo{\nu}')_{{\rm HG}} + Y^a \ptl_a \nu = Y^a \ptl_a \mbo{\nu}$ in the asymptotic regions. Then a computation shows that
\begin{align*}
\ptl_R \mbo{\nu}' = \ptl_R Y^a \ptl_a \mbo{\nu} \neq 0.
\end{align*}
In the above, there is an inconsistency for two reasons. Firstly, the quantity $\mbo{\nu}'$ is not regular at the corners. Secondly, the Neumann boundary condition at the horizon is not satisfied at the end points, the corners. The following lemma is crucial and comes to our rescue. In this lemma, we show that $\mbo{\nu}'$ does not transform like a scalar.

\begin{Lemma}\samepage
	Suppose the quantity $\mbo{\nu}'$ is compactly supported in the harmonic gauge, then
	\begin{enumerate}\itemsep=0pt
	 \item[$(1)$] $\mbo{\nu}'$ has the following structure
	\begin{align}
	\mbo{\nu}' = Y^a \ptl_a \mbo{\nu} + \halb \leftexp{(q_0)}{\grad}_a Y^a \label{gauge-transform-nu'}
	\end{align}
	in the asymptotic regions.
	\item[$(2)$] Furthermore, $\mbo{\nu}'$ is regular at the corners $\Gamma \cap \mathcal{H}^+$.
	\end{enumerate}
\end{Lemma}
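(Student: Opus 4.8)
The plan is to establish part~(1) directly from the conformal structure recorded in Lemma~\ref{Y-conformal-killing}, and then to read off the corner regularity in part~(2) as a consequence of the explicit formula \eqref{gauge-transform-nu'} combined with the near-axis behaviour of the conformal Killing field $Y$.

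For part~(1), I would start from the observation that in the asymptotic regions the perturbation is pure gauge, so that the \emph{covariant} metric perturbation is $q'_{ab} = (\mathcal{L}_Y q)_{ab} = \leftexp{(q)}{\grad}_a Y_b + \leftexp{(q)}{\grad}_b Y_a$, consistent with \eqref{2d-gauge-transform}. By Lemma~\ref{Y-conformal-killing}(2), $Y$ is a conformal Killing field of $(\Sigma, q)$ there, i.e.\ ${\rm CK}(Y,q)=0$. Since $\Sigma$ is two-dimensional, the vanishing of the trace-free symmetric part of $\leftexp{(q)}{\grad}_a Y_b + \leftexp{(q)}{\grad}_b Y_a$ forces the pure-trace form $q'_{ab} = \bigl(\leftexp{(q)}{\grad}_c Y^c\bigr) q_{ab}$. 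Writing $q = {\rm e}^{2\mbo{\nu}} q_0$ with $q_0$ held fixed gives $q'_{ab} = 2\mbo{\nu}' q_{ab}$, hence $\mbo{\nu}' = \halb \leftexp{(q)}{\grad}_c Y^c$. The remaining step converts the $q$-divergence to the flat divergence: using $\bar{\mu}_q = {\rm e}^{2\mbo{\nu}} \bar{\mu}_{q_0}$ and $\leftexp{(q)}{\grad}_a Y^a = \bar{\mu}^{-1}_q \ptl_a(\bar{\mu}_q Y^a)$, a one-line computation yields the conformal transformation law $\leftexp{(q)}{\grad}_a Y^a = \leftexp{(q_0)}{\grad}_a Y^a + 2 Y^a \ptl_a \mbo{\nu}$, and substitution produces \eqref{gauge-transform-nu'}. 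This is exactly the non-scalar rule: the extra term $\halb \leftexp{(q_0)}{\grad}_a Y^a$ encodes the conformal weight of $\mbo{\nu}$ and is absent from the naive scalar Lie-derivative rule $\mbo{\nu}' = Y^a\ptl_a\mbo{\nu}$.

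For part~(2), I would substitute the established asymptotics of $Y$ into \eqref{gauge-transform-nu'} and track the cancellation of singular pieces. In the $(R,\theta)$ gauge one has $\mbo{\nu} = 2\gamma - 2\log\rho$ near the axes with $\rho = R\sin\theta$, so $\ptl_\theta\mbo{\nu}$ carries a $-2\cot\theta$ singularity and $\ptl_R\mbo{\nu}$ carries a $-2/R$ term. The regularity conditions $Y^\theta \sim \sin\theta$ and $\ptl_\theta Y^R \sim \sin\theta$, together with the asymptotics \eqref{Y-theta-asym} and \eqref{Y-R-asym}, ensure that $Y^\theta \ptl_\theta\mbo{\nu}$ stays finite as $\theta\to 0$, while the flat divergence $\halb\leftexp{(q_0)}{\grad}_a Y^a = \halb\bigl(R^{-1}\ptl_R(RY^R) + \ptl_\theta Y^\theta\bigr)$ is manifestly regular. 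Evaluating the sum in the limit $R\to R_+$, $\theta\to 0$ then shows $\mbo{\nu}'$ is finite and, in the limiting sense, compatible with the Neumann condition $\ptl_{\mbo{n}}\mbo{\nu}' = 0$ at $\mathcal{H}^+$: the divergence term supplies precisely the contribution that the scalar law was missing, removing the inconsistency $\ptl_R\mbo{\nu}' = \ptl_R Y^a\ptl_a\mbo{\nu}\neq 0$ flagged before the lemma.

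The hard part will be part~(2): the corner $\Gamma\cap\mathcal{H}^+$ is where two boundaries and their associated coordinate degeneracies (the $\log\rho$ singularity of $\mbo{\nu}$ and the polar-coordinate degeneracy at $\theta=0$) meet simultaneously. The delicate point is to perform the $R\to R_+$ and $\theta\to 0$ limits together and verify that the singular contributions of $\ptl\mbo{\nu}$ are exactly compensated by the behaviour of $Y$ and the divergence term, rather than being merely finite along each boundary separately. This requires the precise matched asymptotics of $Y^R$ and $Y^\theta$ coming from the boundary value problems for $Y^R$ and $Y^\theta$, not just their leading decay rates.
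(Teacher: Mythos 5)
Your part~(1) is correct and lands on the same mechanism as the paper, though packaged differently. The paper perturbs the volume density: it writes $D\cdot\bar{\mu}_q = {\rm e}^{2\mbo{\nu}} D\cdot\bar{\mu}_{q_0} + 2{\rm e}^{2\mbo{\nu}}\mbo{\nu}'\bar{\mu}_{q_0}$, holds $q_0$ fixed, and equates the asymptotic pure-gauge part with $\mathcal{L}_Y\bar{\mu}_q = \ptl_a\bigl(Y^a\bar{\mu}_q\bigr)$, so the extra $\halb\leftexp{(q_0)}{\grad}_aY^a$ appears because $\bar{\mu}_q$ transforms as a density; you instead use ${\rm CK}(Y,q)=0$ from Lemma~\ref{Y-conformal-killing}(2) to force $\mathcal{L}_Y q$ into pure-trace form and then apply the two-dimensional conformal divergence law $\leftexp{(q)}{\grad}_aY^a = \leftexp{(q_0)}{\grad}_aY^a + 2Y^a\ptl_a\mbo{\nu}$. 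These are equivalent computations (the density only registers the trace part, so the paper's route does not even need the conformal Killing property for part~(1)). The genuine divergence between the two arguments is in part~(2): the paper's proof invokes ${\rm CK}(Y,q_0)=0$ for a purpose your proposal omits entirely, namely to show via the commutation identity that the correction term is \emph{harmonic}, $\leftexp{(q_0)}{\grad}^a\leftexp{(q_0)}{\grad}_a\bigl(\leftexp{(q_0)}{\grad}_cY^c\bigr)=0$ in the asymptotic regions. This structural fact, combined with the mode expansions \eqref{Y-theta-asym} and \eqref{Y-R-asym}, pins the correction term down to explicit regular harmonic modes, which is what makes corner regularity and the Neumann condition at $\mathcal{H}^+$ verifiable ``explicitly'' with little labour. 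Your substitute --- direct joint asymptotics at $\Gamma\cap\mathcal{H}^+$ --- is viable, and you rightly identify the simultaneous limit as the hard step, but there is one slip in the bookkeeping: $\gamma$ is itself singular at the axis (since $\vert\Phi\vert\to 0$ there, $2\ptl_\theta\gamma\sim\cot\theta$), so the singular part of $\ptl_\theta\mbo{\nu}$ is \emph{not} simply the $-2\cot\theta$ contributed by $-2\log\rho$; the cancellation you must track involves the $\gamma$-term as well, and this is precisely the place where the paper's harmonicity observation spares one the matched-asymptotics analysis you flag as delicate.
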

\begin{proof}
	Consider $D \cdot \bar{\mu}_q$ in a conformally flat form, we have
	\begin{align*}
	D \cdot \bar{\mu}_q = {\rm e}^{2 \mbo{\nu}} D \cdot \bar{\mu}_{q_0} + 2 {\rm e}^{2 \mbo{\nu}} \mbo{\nu}' \bar{\mu}_{q_0}.
	\end{align*}
	Now, then for our choice of the gauge condition, where we hold the flat metric $q_0$ fixed,
	\begin{align*}
	2 {\rm e}^{2 \mbo{\nu}} \mbo{\nu}' \bar{\mu}_{q_0} &{}= (\mbo{\nu}')_{{\rm HG}} + \mathcal{L}_Y \bar{\mu}_q, \notag\\
\intertext{in the asymptotic regions}
	&{}= 2 {\rm e}^{2 \mbo{\nu}} Y^a \ptl_a \mbo{\nu} \bar{\mu}_{q_0} + {\rm e}^{2 \mbo{\nu}} \ptl_a \bar{\mu}_{q_0} + {\rm e}^{2 \mbo{\nu}} \ptl_a Y^a \bar{\mu}_{q_0},
	\intertext{the right-hand side can be expressed equivalently as}
	&{}= 2 {\rm e}^{2 \mbo{\nu}} Y^a \ptl_a \mbo{\nu} \bar{\mu}_{q_0} + {\rm e}^{2 \mbo{\nu}} \mathcal{L}_Y \bar{\mu}_{q_0} \\
	&{}= 2 {\rm e}^{2 \mbo{\nu}} Y^a \ptl_a \mbo{\nu} \bar{\mu}_{q_0} + {\rm e}^{2 \mbo{\nu}} \ptl_a (Y^a \bar{\mu}_{q_0}) \\
	&{}= 2 {\rm e}^{2 \mbo{\nu}} Y^a \ptl_a \mbo{\nu} \bar{\mu}_{q_0} + {\rm e}^{2 \mbo{\nu}} \bar{\mu}_{q_0} \leftexp{(q_0)}{\grad}_a Y^a.
	\end{align*}
	The result \eqref{gauge-transform-nu'} follows.
	Now let us show that the `correction term' is harmonic in the asymptotic regions. We have
	\begin{align*}
	\leftexp{(q_0)}{\grad}_a \leftexp{(q_0)}{\grad}_b Y_c - \leftexp{(q_0)}{\grad}_b \leftexp{(q_0)}{\grad}_a Y_c = \leftexp{(q_0)} R^{\quad d}_{abc} Y_d
	\end{align*}
	on account of the fact that $Y$ is a conformal Killing vector field in the asymptotic regions, ${\rm CK}(Y, q_0)=0$, it follows that
	\begin{align*}
	\leftexp{(q_0)}{\grad}^a \leftexp{(q_0)}{\grad}_a \bigl(\leftexp{(q_0)}{\grad}_c Y^c\bigr) =0.
	\end{align*}
	
	With the correction term in \eqref{gauge-transform-nu'}, the assertion (2) can be also be verified explicitly. It can also be verified that the Neumann boundary condition for $\mbo{\nu}'$ at the horizon $\mathcal{H}^+$ is satisfied.
\end{proof}

Firstly, we note that the mixed boundary value problem, with regular boundary data, is well-posed. We would like to remark that `uniqueness up to a constant' which is usually the case for a Neumann boundary value problem is not sufficient for our problem because we need the decay of $\mbo{\nu}'$ (consistent with the regularity on the axes) to establish the needed boundary behaviour of our fields.
In view of the uniqueness, we claim that the solution to the mixed boundary value problem above can be decomposed into the following two parts: solution for a inhomogeneous Dirichlet boundary value problem $\mbo{\nu'}_D$ and a homogeneous Neumann problem $\mbo{\nu}'_N$
\begin{align*}
\mbo{\nu}' = \mbo{\nu}'_D + \mbo{\nu}'_N
\end{align*}
represented in the same domain. In the following, we shall elaborate on our construction that ensures that the needed boundary conditions for $\mbo{\nu}'$ are satisfied. We specify the (regular) data for Dirichlet problem for $\mbo{\nu}'_D$ in a half plane such that it is consistent with the data for $\mbo{\nu}'$. This problem is well-posed on account of the regularity conditions of the source (involves $U'$, $h(U)$, $h'(U)$) and admits a regular solution. We shall use this to obtain a suitable choice of data for the Neumann problem for $\mbo{\nu}'_N$ at the horizon, so that the total sum $\mbo{\nu}'_D + \mbo{\nu}'_N$ satisfies the needed Neumann boundary condition for $\mbo{\nu}'$. The well-posedness and regularity for $\mbo{\nu}'$ at the axes implies that it should vanish on the axes, thus implying that $\mbo{\nu}'$ satisfies the needed Dirichlet condition on the axes. This construction also ensures that the regularity of $\mbo{\nu}'$ at the corners is also satisfied. Consider
\begin{equation} \label{Dirichlet-nu'}
\left.\begin{aligned}
&\Delta_0 \mbo{\nu}_D'=\frac{1}{4} \ptl_{U^C} h_{AB}(U) q^{ab}_{0} \ptl_a U^A U'^C\\
& \hphantom{\Delta_0 \mbo{\nu}_D'=}{} + q_{ab} h_{AB} \ptl_a U^a \ptl_b U^B \qquad \text{on}\ (\Sigma)\\
&\mbo{\nu}'=2 \gamma' \qquad \text{on} \ (\Gamma) \\
&\mbo{\nu}'=2 \gamma' \qquad \text{on} \ (\{ \rho=0\} \setminus \Gamma)
\end{aligned}
 \right\}
\qquad {\rm (D-BVP)}_{\mbo{\nu'}}.
\end{equation}
It may be noted that, for our construction, the data at the `cut' between the axes~${\{ \rho =0\} \setminus \Gamma}$ can be specified arbitrarily so that its smooth but for convenience we choose $ \mbo{\nu}'_D = 2 \gamma'$. It~follows that well posed Dirichlet boundary value problem \eqref{Dirichlet-nu'} is well posed and that there exists a regular solution. We shall use this to construct Neumann data for $\mbo{\nu}'_N$. Note that we can compute $\ptl_{\mbo{n}} \mbo{\nu}'_D$ at the horizon (boundary) using the `Dirichlet to Neumann map'
\begin{align*}
\Lambda \colon\ \mbo{\nu'} \to \ptl_{\mbo{n}} \mbo{\nu}'
\end{align*}
 and the conformal inversion map discussed previously. We then formulate the homogeneous Neumann boundary value problem
\begin{equation*}
\left.\begin{aligned}
&\Delta_0 \mbo{\nu}_N'=0 \qquad \text{on} \ (\Sigma)\\
&\ptl_n \mbo{\nu}'_N=0 \qquad \text{on} \ (\Gamma) \\
&\ptl_n \mbo{\nu}'_N= \ptl_{\mbo{n}} \mbo{\nu}'_D \big\vert_{\mathcal{H}^+} \qquad \text{on} \ \bigl(\mathcal{H}^{+}\bigr)
\end{aligned}
 \right\}
\qquad \text{(N-BVP)}_{\mbo{\nu'}}.
\end{equation*}
It follows from the use of the representation formulas \eqref{gen-Poisson-representation-formula} for both Dirichlet and Neumann problems, that $\mbo{\nu}'$ decays at the rate of $\mathcal{O}\bigl(\frac{1}{R}\bigr)$ for large $R$.
\begin{Proposition}
	The integral invariant quantity $Y_0(\mathcal{H}^+)$ at the horizon is finite for all times if and only if it vanishes.
\end{Proposition}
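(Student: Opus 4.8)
The plan is to reduce the proposition to the elementary dichotomy obeyed by the monopole $(n=0)$ sector of the conformal Killing field $Y$, which is the one sector in which the radial equation has no decaying homogeneous solution. Recall that the zeroth angular mode $Y^R_0$ is governed by the first-order equation $\ptl_R Y^R_0 - R^{-1} Y^R_0 = I_0$, whose general solution is
\begin{align*}
\frac{Y^R_0(R,t)}{R} = \frac{Y^R_0(R_+,t)}{R_+} + \int_{R_+}^R \frac{I_0(R',t)}{R'}\,{\rm d}R'.
\end{align*}
In contrast to the higher modes $n\ge 1$, whose fundamental pair $\{R^n, R^{-n}\}$ contains a genuinely decaying branch, as exploited in \eqref{Y-R-asym}, the homogeneous solution here is $Y^R_0 \propto R$ and grows linearly. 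I would define $Y_0(\mathcal{H}^+)$ as the resulting monopole invariant, namely the asymptotic coefficient $\lim_{R\to\infty} Y^R_0/R$, which by the divergence theorem equals the total flux of the source through the horizon. The content of the statement is then that this coefficient admits no finite nonzero value.

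For the forward implication I would assume $Y_0(\mathcal{H}^+)$ finite, so that the improper integral $\int_{R_+}^\infty R'^{-1} I_0\,{\rm d}R'$ converges and $Y^R_0/R$ has a finite limit. Imposing the asymptotic-flatness decay $Y^R_0/R \to 0$ (the $n=0$ counterpart of \eqref{Y-R-asym}, which for $n\ge 1$ is automatic but which is precisely what is at stake here) then forces
\begin{align*}
\frac{Y^R_0(R_+,t)}{R_+} = -\int_{R_+}^\infty \frac{I_0(R',t)}{R'}\,{\rm d}R'.
\end{align*}
At this point I would invoke the method-of-images construction of Section~\ref{section7}: the regularity of $Y$ at the axes (the behaviour $Y^\theta \sim \sin\theta$, $\ptl_\theta Y^R \sim \sin\theta$) makes the reflection-antisymmetric extension of the source across $\Gamma$ charge-neutral, so its net monopole content vanishes and the right-hand integral is zero. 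This is exactly the two-dimensional obstruction that a decaying solution of $\Delta_0 u = F$ exists only if the total charge of $F$ vanishes. Hence $Y_0(\mathcal{H}^+)=0$.

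The converse is immediate: if $Y_0(\mathcal{H}^+)=0$ the cancellation above holds identically, the improper integral converges, and $Y^R_0/R = \mathcal{O}(1/R)$, so the quantity is trivially finite. I would then record that $Y_0(\mathcal{H}^+)$ is genuinely invariant, its horizon-flux representation being preserved under the evolution because the constraints propagate, \eqref{h-dot}--\eqref{ha-dot}, and because the harmonic-gauge development is globally regular. Consequently, once the initial data are compactly supported away from the boundaries the monopole charge starts at zero and stays zero, so the dichotomy holds uniformly in $t$.

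The main obstacle is not the monopole dichotomy itself but the two ingredients feeding it: that $I_0$ decays fast enough for the improper integral to converge, and that the image extension remains charge-neutral at every time and up to the corner $\Gamma \cap \mathcal{H}^+$. Both rest on the decay rates of the gauge-transformed variables ($\vert\Phi\vert'$, $\omega'$, $p'_A$) from Section~\ref{section7}, controlled uniformly into the corner, together with the non-scalar transformation law \eqref{gauge-transform-nu'} for $\mbo{\nu}'$, which is exactly what secures regularity at the corner where naive scalar behaviour would produce a spurious flux. Reconciling the horizon Neumann data with the axis Dirichlet data at that corner is the delicate point; the remainder is the standard planar monopole argument.
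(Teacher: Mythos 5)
Your proposal has a genuine gap at the decisive step of the forward implication. The paper's proof does not get the dichotomy from any charge-neutrality of an image construction; it gets it from the linearized flatness constraint \eqref{flatness-condition}. Projecting that constraint onto the $n=0$ Fourier mode in polar coordinates yields the pure-divergence identity
\begin{align*}
\ptl_R \biggl(\bigl\{ \mbo{q}'_0 (\ptl_R, \ptl_R) \bigr\}_0 - \ptl_R \biggl( \frac{1}{R} \bigl\{ \mbo{q}'_0 (\ptl_\theta, \ptl_\theta) \bigr\}_0 \biggr) \biggr) = 0,
\end{align*}
so that, after splitting the integral representation \eqref{integral-invariant-expression} of $Y^R_0(R_+)$ as in \eqref{integral-invariant-II}, the non-exact part of the monopole integrand is \emph{constant in $R$} (uniformly in time). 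The quantity therefore contains a contribution of the form $c \int_{R_+}^{\infty} {\rm d}R'/R'$, which diverges logarithmically unless $c=0$: that rigidity is precisely the content of ``finite if and only if it vanishes.'' Your proposal instead treats the convergence of $\int_{R_+}^{\infty} I_0/R'\,{\rm d}R'$ as a technical decay estimate to be checked, which misses the point — generically this integral does \emph{not} converge, and the constraint identity is the missing idea that converts finiteness into vanishing.

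The substitute you invoke — that the reflection-antisymmetric image extension across $\Gamma$ is charge-neutral, hence the monopole content of the source vanishes — fails in exactly the sector at issue. The axis regularity conditions give $Y^R = \sum_{n\ge 0} Y^R_n \cos n\theta$ and $\mathcal{M}^{RR} \sim \sum_n I_n \cos n\theta$: the $n=0$ mode is \emph{even} under reflection across the axes, so the antisymmetric-image argument of Section~\ref{section7} (which underlies the $\mathcal{O}(1/R)$ decay for the $\sin n\theta$ sector and the Dirichlet/Neumann problems for $Y^\theta$ and $\mbo{\nu}'$) does not cancel its total charge. There is also a structural tell: if your charge-neutrality claim held, then $Y^R_0(R_+)=0$ would follow \emph{unconditionally}, without the finiteness hypothesis, and the proposition would not be a nontrivial dichotomy — a sign that the mechanism has been misidentified. (Your converse direction and the remark on propagation in time are fine, and the minor redefinition of $Y_0(\mathcal{H}^+)$ as $\lim_{R\to\infty} Y^R_0/R$ rather than the horizon value $Y^R_0(R_+)$ is harmless given the first-order ODE linking them, but neither repairs the forward step.)
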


\begin{proof}
	Recall the expression for $Y_0^R (R_+)$
	\begin{align}
	Y^R_0 (R_+) &{}= \frac{R_+}{ 2 \pi} \int^\infty_{R_+} \frac{1}{R'^2} \int^{2 \pi}_{0} \frac{R'}{2} {\rm e}^{2 \gamma - 2 \mbo{\nu}} \biggl(\bgg' (\ptl_R, \ptl_R) - \frac{1}{R'^2} \bgg' (\ptl_\theta, \ptl_\theta) \biggr) {\rm d} \theta {\rm d}R', \nonumber
	\intertext{which can be reexpresed as}
	&{}= \frac{R_+}{ 2 \pi} \int^\infty_{R_+} \frac{1}{R'^2} \int^{2 \pi}_{0} \frac{R'}{2} \biggl(\mbo{q}'_0 (\ptl_R, \ptl_R) - \frac{1}{R'^2} \mbo{q}'_0 (\ptl_\theta, \ptl_\theta) \biggr) {\rm d} \theta {\rm d}R', \label{integral-invariant-expression}
		\end{align}
		where $q_0$ is the metric perturbation in harmonic gauge.

The quantities, $\mbo{q}'_0 (\ptl_R, \ptl_R)$, $\mbo{q}'_0 (\ptl_R, \ptl_\theta)$ and $ \mbo{q}'_0 (\ptl_\theta, \ptl_\theta)$ admit the decomposition
\begin{align*}
&\mbo{q}'_0 (\ptl_R, \ptl_R)= \sum^{\infty}_{n=0} \bigl\{ \mbo{q}'_0 (\ptl_R, \ptl_R) \bigr\}_n \cos n \theta, \\
&\mbo{q}'_0 (\ptl_R, \ptl_\theta)= \sum^{\infty}_{n=1} \bigl\{ \mbo{q}'_0 (\ptl_R, \ptl_\theta) \bigr\}_n \sin n \theta, \\
&\mbo{q}'_0 (\ptl_\theta, \ptl_\theta)= \sum^{\infty}_{n=0} \bigl\{ \mbo{q}'_0 (\ptl_\theta, \ptl_\theta) \bigr\}_n \cos n \theta.
\end{align*}
Now then \eqref{integral-invariant-expression} can be simplified as
\begin{align}
Y^R_0 (R_+) &{}= \frac{R_+}{2} \int^\infty_{R_+} \frac{1}{R'} \biggl( \bigl\{ \mbo{q}'_0 (\ptl_R, \ptl_R) \bigr\}_0 (R') - \frac{1}{R'^2} \bigl\{ \mbo{q}'_0 (\ptl_\theta, \ptl_\theta) \bigr\}_0 (R') \biggr) {\rm d}R'
\notag\\
&{}= \frac{R_+}{2} \int^\infty_{R_+} \frac{1}{R'} \biggl( \bigl\{\mbo{q}'_0 (\ptl_R, \ptl_R) \bigr\}_0 (R') - \ptl_{R'} \biggl(\frac{1}{R'} \bigl\{ \mbo{q}'_0 (\ptl_\theta, \ptl_\theta) \bigr\}_0 (R') \biggr) \biggr)
{\rm d}R' \notag\\
&\quad{}+\frac{R_+}{2} \int^\infty_{R_+} \ptl_{R'} \biggl(\frac{1}{R'^2} \bigl\{ \mbo{q}'_0 (\ptl_\theta, \ptl_\theta) \bigr\}_0 (R') \biggr) {\rm d}R'. \label{integral-invariant-II}
\end{align}
Now consider \eqref{flatness-condition} and note that the operator on the left-hand side is a linear differential operator. In~polar coordinates
\begin{align*}
& -\frac{1}{R} \ptl^2_\theta \mbo{q}_0 (\ptl_R, \ptl_ R) + \ptl_R \mbo{q}_0 (\ptl_R, \ptl_R) + \frac{2}{R} \ptl^2 R \theta \mbo{q}_0 (\ptl_R, \ptl_ \theta) - \frac{2}{R^3} \mbo{q}_0 (\ptl_\theta, \ptl_\theta) \notag\\
&\qquad{} + \frac{2}{R^2} \ptl_R \mbo{q}'_0 (\ptl_\theta, \ptl_\theta) - \frac{1}{R} \ptl^2_{RR} \mbo{q}_0 (\ptl_\theta, \ptl_\theta) =0.
\end{align*}

It follows that
\begin{align*}
&\ptl_R \bigl\{ \mbo{q}'_0 (\ptl_R, \ptl_R) \bigr\}_0 - \frac{2}{R^3} \bigl\{ \mbo{q}'_0 (\ptl_\theta, \ptl_\theta) \bigr\}_0 \notag\\
& \qquad{} + \frac{2}{R^2} \ptl_R \bigl\{ \mbo{q}'_0 (\ptl_\theta, \ptl_\theta) \bigr\}_0 -
\frac{1}{R} \ptl^2_R \bigl\{ \mbo{q}'_0 (\ptl_\theta, \ptl_\theta) \bigr\}_0 =0,
\end{align*}
which is a pure divergence,
\begin{align*}
\ptl_R \biggl(\bigl\{ \mbo{q}'_0 (\ptl_R, \ptl_R) \bigr\}_0 - \ptl_R \biggl( \frac{1}{R} \bigl\{ \mbo{q}'_0 (\ptl_\theta, \ptl_\theta) \bigr\}_0 \biggr) \biggr) = 0,
\end{align*}
uniformly in $\Sigma$ for all times. We get the condition that the $R$ derivative of the quantity in the first integral of \eqref{integral-invariant-II} vanishes. As a result this quantity can make a finite contribution only if it vanishes. The result follows.
\end{proof}

\section{Strict conservation of the regularized Hamiltonian \texorpdfstring{$\boldsymbol{H^{{\rm Reg}}}$}{H\string^\{Reg\}}}\label{section8}

In this section, we shall establish that the regularized Hamiltonian energy $H^{{\rm Reg}}$ is strictly conserved in time. In particular, we shall establish the following.

\begin{Theorem}
	Suppose we have the initial value problem of Einstein's equations for general relativity. Then
	\begin{enumerate}\itemsep=0pt
		\item[$(1)$] There exists a $(C^\infty$-$)$diffeomorphism from harmonic coordinates to the Weyl--Papapetrou gauge of the maximal development $(M', g')$ of the perturbative theory of Kerr black hole spacetimes, under axial symmetry.
		
		\item[$(2)$] The positive-definite Hamiltonian $H^{{\rm Reg}}$ is strictly-conserved forwards and backwards in time.
	\end{enumerate}
\end{Theorem}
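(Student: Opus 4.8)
The plan is to prove part~(1) first and then combine it with the divergence identity $\partial_t \mathbf{e}^{{\rm Reg}} = \partial_b (J^b)^{{\rm Reg}}$, already established, to deduce part~(2) by showing that the net boundary flux of $J^{{\rm Reg}}$ through $\partial\Sigma$ vanishes for all times. For part~(1), I would set up the Cauchy problem for the linearized Einstein equations in the harmonic gauge, where, by the Proposition of Section~\ref{section6}, the system \eqref{harmonic-linearized-einstein} is purely hyperbolic, the constraints and the gauge condition propagate automatically, and the maximal development is globally regular $(C^\infty)$, including at the axes $\Gamma$, infinity $\iota^0$ and the corner $\Gamma \cap \mathcal{H}^+$. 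Starting from a solution in an arbitrary gauge, one solves the linearized coordinate wave-map equations coming from \eqref{coordinate-wm} to reach harmonic coordinates, and then constructs the spatial gauge vector $Y \in T(\Sigma)$ from the overdetermined first-order system \eqref{Y-theta-transport}, whose integrability conditions reduce to Poisson equations for $Y^R/R$ and $Y^\theta$, solved via the method of images with reflection-antisymmetric data. Since the abelian gauge transformations and the elliptic solutions are smooth with the regularity prescribed on $\Gamma$, the resulting map from the harmonic to the Weyl--Papapetrou gauge is a $C^\infty$-diffeomorphism.

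For part~(2), integrating the divergence identity over $\Sigma$ gives $\partial_t H^{{\rm Reg}} = \oint_{\partial\Sigma} (J^b)^{{\rm Reg}} n_b\, ds$, so it suffices to show that every term of $(J^b)^{{\rm Reg}}$ has vanishing flux at each of the three boundary components and at the corner. I would classify the contributions into the dynamical, kinematic and conformal terms and estimate them one by one using the boundary behaviour derived in Section~\ref{section7}: the $\mathcal{O}(1/R)$ decay of $Y^R/R$, $Y^\theta$, $\mbo{\nu}'$, $\vert\Phi\vert'$, $\omega'$ and of the conjugate momenta $p'_A$, together with the asymptotics of the Lagrange multipliers (bounded radial and decaying angular shift at $\iota^0$, namely $N'^R=\mathcal{O}(1)$ and $N'^\theta\to 0$, and $N'^\theta=0$ at $\mathcal{H}^+$). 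The prescribed regularity conditions on the fields ($U'^A = 0$ and $\partial_{\vec n}\mbo{\nu}' = 0$ on the axes, and the minimal-surface Neumann condition at the horizon) then force the integrands to vanish on $\Gamma$ and $\mathcal{H}^+$, while the faster-than-logarithmic decay makes them vanish at $\iota^0$.

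Two inputs are crucial. First, the Lemma establishing that $\mbo{\nu}'$ does not transform as a scalar but instead obeys \eqref{gauge-transform-nu'}, $\mbo{\nu}' = Y^a\partial_a\mbo{\nu} + \halb\,\leftexp{(q_0)}{\grad}_a Y^a$, whose correction term is harmonic in the asymptotic region; this simultaneously restores the regularity of $\mbo{\nu}'$ at the corner $\Gamma \cap \mathcal{H}^+$ and guarantees the Neumann condition at $\mathcal{H}^+$, curing the apparent over-determination of the mixed boundary value problem for $\mbo{\nu}'$. Second, the Proposition that the integral invariant $Y_0(\mathcal{H}^+)$ vanishes for all times eliminates the residual zero-frequency horizon mode that would otherwise contribute a finite, possibly sign-indefinite flux.

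The hard part will be the corner $\Gamma \cap \mathcal{H}^+$ together with the zero-frequency horizon mode: a naive scalar transformation law for $\mbo{\nu}'$ fails the corner regularity and the horizon Neumann condition at once, and it is precisely the corrected transformation and the vanishing of $Y_0(\mathcal{H}^+)$ that resolve this. Conceptually, the crux is to rule out that any sign ambiguity of the energy density `hides' in the plethora of boundary terms generated by the Carter--Robinson-type and linearization-stability manipulations; establishing this rests on combining the global $C^\infty$ regularity furnished by the harmonic gauge (which controls the pure-gauge modes that can re-enter from the asymptotic region under the elliptic-hyperbolic coupling of the Weyl--Papapetrou gauge) with the $\mathcal{O}(1/R)$ decay produced by the reflection-antisymmetric method-of-images structure of the orbit space. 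Once all fluxes are shown to vanish, $\partial_t H^{{\rm Reg}} = 0$, and since the argument is symmetric under time reversal, $H^{{\rm Reg}}$ is strictly conserved both forwards and backwards in time.
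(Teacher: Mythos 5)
Your proposal follows essentially the same route as the paper's own proof: part~(1) via the global $C^\infty$ regularity of the hyperbolic system in harmonic gauge combined with the gauge vector $Y$ obtained from the overdetermined system reduced to Poisson equations solved by the method of images (Sections~\ref{section6} and~\ref{section7}), and part~(2) via the term-by-term flux estimates of the dynamical, kinematic and conformal pieces of $\bigl(J^b\bigr)^{{\rm Reg}}$ at $\Gamma$, $\mathcal{H}^+$ and $\bar{\iota}^0$, with the non-scalar transformation law \eqref{gauge-transform-nu'} for $\mbo{\nu}'$ and the vanishing of $Y_0\bigl(\mathcal{H}^+\bigr)$ as the two crucial inputs, exactly as in Section~\ref{section8}. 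The one point your write-up glosses over is that on the axes the regularity conditions do \emph{not} force every integrand to vanish pointwise -- the $\gamma'$-terms survive individually since $\gamma' \neq 0$ and $\ptl_\theta \gamma$ does not decay on $\Gamma$ -- and the paper disposes of them only through the fortuitous cancellation of the kinematic $N'^R$ flux \eqref{kin-flux-comb-nu'} against the conformal term $J_4$ via the condition $\mbo{\nu}' = 2\gamma'$ of \eqref{gamma-nu condition}, a cancellation any full execution of your plan would need to exhibit explicitly.
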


 Consider the vector field density
 \begin{align*}
 	(J^b)^{{\rm Reg}}&{}= N^2 {\rm e}^{-2 \mbo{\nu}} q^{ab}_0 p'_A \ptl_a U'^A + U'^A \mathcal{L}_{N'} \bigl(N \bar{\mu}_{q_0} q^{ab}_0 h_{AB} \ptl_b U^B\bigr) \notag\\
 	&\quad{}+ \mathcal{L}_{N'} (N) \bigl(2 \bar{\mu}_{q_0} q^{ab}_0 \ptl_a \mbo{\nu}'\bigr) + 2 N'^a \ptl_a \mbo{\nu}' \bar{\mu}_{q_0} q^{ab}_0 \ptl_a N
 - 2 N'^b \bar{\mu}_{q_0} q^{bc}_0 \ptl_a \mbo{\nu}' \ptl_c N.
 	\end{align*}
 We shall classify the terms in \smash{$\bigl(J^b\bigr)^{{\rm Reg}}$} into kinematic, dynamical and conformal terms:
 \begin{alignat*}{3}
 	&J_1^b := N^2 {\rm e}^{-2 \mbo{\nu}} q^{ab}_0 p'_A \ptl_a U'^A, \qquad && \text{(dynamical terms)}& \\
 	&J_2^b := U'^A \mathcal{L}_{N'} \bigl(N \bar{\mu}_{q_0} q^{ab}_0 h_{AB} \ptl_b U^B\bigr), \qquad && \text{(kinematic terms)}& \\
 	&J_3^b := \mathcal{L}_{N'} (N) \bigl(2 \bar{\mu}_{q_0} q^{ab}_0 \ptl_a \mbo{\nu}'\bigr), &&& \\
 	&J_4^b := 2 N'^a \ptl_a \mbo{\nu}' \bar{\mu}_{q_0} q^{ab}_0 \ptl_a N, &&&\\
 	&J_5^b := - 2 N'^b \bar{\mu}_{q_0} q^{bc}_0 \ptl_a \mbo{\nu}' \ptl_c N. \qquad && \text{(conformal terms)}
 	\end{alignat*}

The asymptotic and decay rates of the fluxes and the associated integrands can be explicitly evaluated with a specific choice of gauge on the target. In the following, we choose, $h = 4 {\rm d}\gamma ^2 + {\rm e}^{-4 \gamma } {\rm d} \omega^2$. Analogous computations can be performed for other gauges on the target manifold. An important aspect of these flux estimates is that most of the integrands of these flux terms vanish pointwise at the boundaries of the orbit space. 

\subsection*{`Dynamical' boundary terms}
Consider the dynamical flux term, ${\rm Flux} (J_1, \Gamma)$, where
\[
(J_1)^b := N^2 {\rm e}^{-2 \mbo{\nu}} q^{ab}_0 p'_A \ptl_a U'^A.
\]
Let us start by considering the terms
\[
{\rm Flux} \bigl(N^2 {\rm e}^{-2 \mbo{\nu}} (q_0)^{ab} \bigl(4 \mbo{p} \ptl_a \gamma' \bigr), \Gamma\bigr)
\]
 and
 \[
 {\rm Flux} \bigl(N^2 {\rm e}^{-2 \mbo{\nu}} (q_0)^{ab} \bigl({\rm e}^{-4 \gamma} \mbo{r} \ptl_a \gamma' \bigr), \Gamma\bigr).
 \]
We have
\begin{gather*}
{\rm Flux} \bigl(N^2 {\rm e}^{-2 \mbo{\nu}} (q_0)^{ab} \bigl(4p \ptl_a \gamma' \bigr), \Gamma\bigr) \\
\qquad{} = \int^t_0 \int_{ (-\infty, -R_+) \, \cup\, (R_+, \infty) } \lim_{\theta \to 0, \pi} \bigl(4 N^2 {\rm e}^{-2 \mbo{\nu}} (q_0)^{ab} \mbo{p} \ptl_a \gamma'\bigr)^\theta {\rm d}R {\rm d}t \notag\\
\qquad{}= \int^t_0 \int_{ (-\infty, -R_+) \, \cup\, (R_+, \infty) } \lim_{\theta \to 0, \pi} \biggl(4 N \biggl(\frac{ N\mbo{p}'}{\bar{\mu}_q} \biggr) \bar{\mu}_{q_0} (q_0)^{ab} \ptl_a \gamma' \biggr)^\theta {\rm d}R {\rm d}t. \notag
 \end{gather*}
If we consider the integrand, within the domain of integration, we have
 \begin{gather*}
 \lim_{\theta \to 0, \pi}4 \frac{N}{R^2} \cdot \frac{N \mbo{p}'}{\bar{\mu}_q} \bar{\mu}_{q_0} \ptl_\theta \gamma' = 4 \frac{\Delta^\halb \sin \theta}{R^2} \frac{N \mbo{p}'}{\bar{\mu}_q} \bar{\mu}_{q_0} \ptl_\theta \gamma'
 \to 0 \qquad \text{as} \ \theta \to 0, \pi,
\end{gather*}
and
\begin{gather*}
 {\rm Flux} \bigl(N^2 {\rm e}^{-2 \mbo{\nu}} (q_0)^{ab} \bigl({\rm e}^{-4 \gamma} \mbo{r} \ptl_a \omega' \bigr), \Gamma\bigr) \\
 \qquad{}= \int^t_0 \int_{ (-\infty, -R_+) \, \cup\, (R_+, \infty) } \lim_{\theta \to 0, \pi} \bigl( {\rm e}^{-4\gamma } N^2 {\rm e}^{-2 \mbo{\nu}} (q_0)^{ab} \mbo{r} \ptl_a \omega' \bigr)^\theta {\rm d}R {\rm d}t \notag\\
 \qquad{}= \int^t_0 \int_{ (-\infty, -R_+) \, \cup\, (R_+, \infty) } \lim_{\theta \to 0, \pi} \biggl( N^2 (q_0)^{ab} \frac{{\rm e}^{-4\gamma }\mbo{r}}{\bar{\mu}_q} \bar{\mu}_{q_0}\ptl_a \omega' \biggr)^\theta {\rm d}R {\rm d}t.
 \end{gather*}
The integrand
\begin{gather*}
 \lim_{\theta \to 0, \pi} \frac{N^2}{R^2} {\rm e}^{-4\gamma} \frac{\mbo{r}'}{\bar{\mu}_q} \bar{\mu}_{q_0} \ptl_\theta \omega' \notag\\
 \qquad{}= \lim_{\theta \to 0, \pi} \frac{\Delta}{R} \sin^2 \theta\frac{\Sigma^2}{\sin^4 \theta \bigl(\bigl(r^2+a^2\bigr)^2 -a^2 \Delta \sin^2 \theta\bigr)} \frac{\mbo{r}'}{\bar{\mu}_q} \ptl_\theta \omega'
 \to 0 \qquad \text{as} \ \theta \to 0, \pi,
\end{gather*}
due to the rapid decay of $\ptl_\theta \omega'$ as $\theta \to 0$ and $\pi$.
Next consider, ${\rm Flux} \bigl(J_1, \mathcal{H}^+\bigr)$.

We have the term, ${\rm Flux} \bigl(N^2 {\rm e}^{-2 \mbo{\nu}}q^{ab}_0 \mbo{p} \ptl_a \gamma', \mathcal{H}^+\bigr)$, which can be estimated near the future horizon $\mathcal{H}^+$ as
\begin{gather*}
 {\rm Flux} \bigl(N^2 {\rm e}^{-2 \mbo{\nu}}q^{ab}_0 \mbo{p} \ptl_a \gamma', \mathcal{H}^+\bigr) \notag\\
\qquad{} =
\int^t_0 \int^\pi_0 \lim_{R \to R_+} \bigl(
N^2 {\rm e}^{-2 \mbo{\nu}}q^{ab}_0 \mbo{p}' \ptl_a \gamma' \bigr)^R {\rm d} \theta {\rm d}t \notag\\
\qquad{} = \int^t_0 \int^\pi_0 \lim_{R \to R_+} \biggl( N \cdot q^{ab}_0 \frac{N \mbo{p}'}{\bar{\mu}_q} \bar{\mu}_{q_0} \ptl_a \gamma' \biggr)^R {\rm d}\theta {\rm d}t \notag\\
\qquad{} = \int^t_0 \int^\pi_0 \lim_{R \to R_+} \biggl( R \sin \theta\biggl(1- \frac{R^2_+}{R^2} \biggr) \frac{N \mbo{p}' }{\bar{\mu}_q} \bar{\mu}_{q_0} \ptl_R \gamma' \biggr) {\rm d}\theta {\rm d}t.
\end{gather*}
It may be recalled that the behaviour of the canonical pair $(\gamma',\mbo{p}') $ near the future horizon is
\begin{align*}
&4 \ptl_R \gamma'= \biggl(1- \frac{R^2_+}{R^2} \biggr) \qquad \text{and} \qquad \frac{N \mbo{p}'}{\bar{\mu}_q} = \mathcal{O} \biggl(1- \frac{R^2_+}{R^2} \biggr), \notag\\
&{\rm Flux} \bigl(N^2 {\rm e}^{-2 \mbo{\nu}}q^{ab}_0 \mbo{p} \ptl_a \gamma', \mathcal{H}^+\bigr) \to 0 \qquad \text{as} \ R \to R_+.
\end{align*}
Likewise,
\begin{align*}
{\rm Flux} \bigl(N^2 {\rm e}^{-2 \mbo{\nu}}q^{ab}_0 \mbo{r}' \ptl_a \omega', \mathcal{H}^+\bigr)
&{}=
\int^t_0 \int^\pi_0 \lim_{R \to R_+} \big( N^2 {\rm e}^{-2 \mbo{\nu}}q^{ab}_0 \mbo{r}' \ptl_a \omega' \big)^R {\rm d}\theta {\rm d} t \notag\\
&{}= \int^t_0 \int^\pi_0 \lim_{R \to R_+} \left( N {\rm e}^{-4\gamma} \cdot q^{ab}_0 \frac{N {\rm e}^{4\gamma} \mbo{r}'}{\bar{\mu}_q} \bar{\mu}_{q_0} \ptl_a \omega' \right)^R {\rm d}\theta {\rm d}t. \notag
\end{align*}
Again recall that the behaviour of the canonical pair $(\omega', \mbo{r}')$ near the future horizon is
\begin{align*}
\ptl_R \omega' &{}= \biggl(1- \frac{R^2_+}{R^2} \biggr) \qquad \text{and} \qquad \frac{N {\rm e}^{4 \gamma} \mbo{r}'}{\bar{\mu}_q} = \mathcal{O}\biggl(1- \frac{R^2_+}{R^2} \biggr), \end{align*}
the integrand in ${\rm Flux} \bigl(N^2 {\rm e}^{-2 \mbo{\nu}}q^{ab}_0 \mbo{p} \ptl_a \gamma', \mathcal{H}^+\bigr)$ is
\begin{gather*}
= \lim_{R \to R_+} R^2 \sin \theta \biggl(1- \frac{R^2_+}{R^2} \biggr) \frac{\Sigma^2}{ \sin^4 \theta \bigl(\bigl(r^2+a^2\bigr)^2-a^2 \Delta \sin^2 \theta \bigr)^2}
 c \biggl(1-\frac{R^2_+}{R^2} \biggr) \cdot \biggl(1-\frac{R^2_+}{R^2} \biggr) \notag\\
 \to 0 \qquad \text{as} \ R \to R_+.
\end{gather*}

\subsubsection*{Flux of $\boldsymbol{J_1}$ at the outer boundary, $\boldsymbol{{\rm Flux} \bigl(J_1, \bar{\iota}^0\bigr)}$}
 Consider the flux of the term
\begin{align*}
{\rm Flux} \biggl(\frac{N^2}{ \bar{\mu}_q} \mbo{p}' \bar{\mu}_q q^{ab} \ptl_a \gamma', \bar{\iota}^0 \biggr) = \int^t_0 \int^\pi_0 \biggl(\lim_{R \to \infty} \frac{N^2}{ \bar{\mu}_q} \mbo{p}' \bar{\mu}_{q_0} \ptl_R \gamma' \biggr) {\rm d}\theta {\rm d}t,
\end{align*}
we have
\begin{align*}
\frac{N p}{\bar{\mu}_q}' = \mathcal{O} \biggl(\frac{1}{R^3} \biggr), \qquad \gamma' = \frac{1}{R}, \qquad \ptl_R \gamma' = \mathcal{O} \biggl(\frac{1}{R^2} \biggr),
\end{align*}
and recall that
\begin{align*}
N = \bigl(r(R)^2 - 2Mr(R) + a^2\bigr)^{1/2} \sin^2 \theta = \mathcal{O}(R) \qquad \text{for large} \ R.
\end{align*}
Therefore, in the region under consideration, we have for the integrand of
\[
{\rm Flux} \biggl(\frac{N^2}{ \bar{\mu}_q} p' \bar{\mu}_q q^{ab} \ptl_a \gamma', \bar{\iota}^0 \biggr) \sim \mathcal{O}(R) \cdot \mathcal{O} \biggl(\frac{1}{R^3}\biggr) \cdot \mathcal{O} \biggl(\frac{1}{R^2}\biggr) \cdot R \to 0 \qquad\text{as}\ R \to \infty.
\]
Likewise, consider the following flux terms at the outer boundary:
\begin{align*}
&{\rm Flux} \biggl(-\frac{4 N^2}{\bar{\mu}_q} \mbo{r}' \gamma' \bar{\mu}_{q_0} q_0^{ab} \ptl_a \omega, \bar{\iota}^0 \biggr), \qquad {\rm Flux} \biggl(\frac{N^2}{\bar{\mu}_q} \mbo{r}' \bar{\mu}_{q_0} q_0^{ab} \ptl_a \omega', \bar{\iota}^0 \biggr),
\\
&{\rm Flux} \biggl(\frac{N^2}{\bar{\mu}_q} \mbo{r}' \bar{\mu}_{q_0} q^{ab} \ptl_a \omega', \bar{\iota}^0 \biggr) = \int^t_0 \int^\pi_0 \biggl( \lim_{R \to \infty} \frac{N^2}{\bar{\mu}_q} \mbo{r}' \bar{\mu}_{q_0} \ptl_R \omega' \biggr) {\rm d} \theta {\rm d}t.
\end{align*}
It may be recalled that
\begin{align*}
\frac{N {\rm e}^{4 \gamma} \mbo{r}'}{\bar{\mu}_q} = \mathcal{O} \biggl(\frac{1}{R^3} \biggr),
\end{align*}
and
\[
 {\rm e}^{-2\gamma} = \biggl(\frac{r^2 + a^2 \cos^2 \theta}{ \sin^2 \theta}\biggr) \frac{1}{ \bigl(\bigl(r^2 +a^2\bigr)^2 -a^2 \Delta \sin^2 \theta\bigr)} \sim \mathcal{O} \biggl(\frac{1}{R^2}\biggr) \qquad \text{for large} \ R.
\]
Therefore, for the integrand within, we have
\[
{\rm Flux} \biggl(\frac{N^2}{\bar{\mu}_q} \mbo{r}' \bar{\mu}_{q_0} q^{ab} \ptl_a \omega', \bar{\iota}^0 \biggr) = \mathcal{O}(R) \cdot \mathcal{O}\biggl(\frac{1}{R^3} \biggr) \cdot \mathcal{O}\biggl(\frac{1}{R^4} \biggr) R^2 \cdot \frac{1}{R^2} \to 0 \qquad \text{as} \ R \to \infty.
\]
 Next, consider
\begin{align*}
{\rm Flux} \biggl(-\frac{4 N^2}{\bar{\mu}_q} \mbo{r}' \gamma' \bar{\mu}_{q_0} q_0^{ab} \ptl_a \omega, \bar{\iota}^0 \biggr)
&{}= \int^t_0 \int^\pi_0 \biggl(\lim_{R \to \infty} -\frac{4 N^2}{\bar{\mu}_q} \mbo{r}' \gamma' \bar{\mu}_{q_0} \ptl_R \omega \biggr) {\rm d}\theta {\rm d}t \notag\\
&{}= \int^t_0 \int^\pi_0 \biggl(\lim_{R \to \infty} - 4N \biggl(\frac{ N {\rm e}^{4 \gamma}\mbo{r'}}{\bar{\mu}_q} \biggr) {\rm e}^{-4\gamma} \gamma' \bar{\mu}_{q_0} \ptl_R \omega \biggr) {\rm d}\theta {\rm d}t.
\end{align*}
Using the above estimates again and noting that $\ptl_R \omega = \mathcal{O}\bigl(\frac{1}{R^3}\bigr)$,
\begin{align*}
\frac{4 N^2}{\bar{\mu}_q} \mbo{r}' \gamma' \bar{\mu}_{q_0} \ptl_R \omega &{}= \mathcal{O}(R) \cdot \mathcal{O}\biggl(\frac{1}{R^3} \biggr) \cdot \mathcal{O}\biggl(\frac{1}{R^4} \biggr) \cdot \mathcal{O}\biggl(\frac{1}{R} \biggr) R \mathcal{O}\biggl(\frac{1}{R^3} \biggr) \notag\\
&{}= 0 \qquad \text{as} \ R \to \infty,
\end{align*}
from which it follows that ${\rm Flux} \bigl(-\frac{4 N^2}{\bar{\mu}_q} \mbo{r}' \gamma' \bar{\mu}_{q_0} q_0^{ab} \ptl_a \omega, \bar{\iota}^0 \bigr) =0$.

\subsection*{`Kinematic' boundary terms}
Consider
\begin{align*}
(J_2)^b &{}:= U'^A \mathcal{L}_{N'} \bigl(N \bar{\mu}_q q^{ab} h_{AB} (U) \ptl_b U^B\bigr).
\intertext{In a special gauge, we have}
&{}= \gamma' \mathcal{L}_{N'} \bigl(4N \bar{\mu}_q q^{ab} \ptl_a \gamma\bigr) + \omega' \mathcal{L}_{N'} \bigl(N \bar{\mu}_q q^{ab} {\rm e}^{-4 \gamma} \ptl_a \omega\bigr).
\end{align*}

Let us start with ${\rm Flux} (J_2, \Gamma)$. We have
\begin{align*}
{\rm Flux}(J_2, \Gamma)= \int^t_0 \int_{ (-\infty, R_+) \cup (R_+, \infty) } \lim_{\theta \to 0, \pi} \bigl( U'^A \mathcal{L}_{N'} \bigl(N \bar{\mu}_{q_0} q^{ab}_0 h_{AB}(U) \ptl_b U^B\bigr) \bigr)^\theta {\rm d}R {\rm d}t .\notag
\end{align*}
We will expand the expression $U'^A \mathcal{L}_{N'} \bigl(N \bar{\mu}_{q_0} q_0^{ab} h_{AB}(U) \ptl_a U^B\bigr)$ as follows:
\begin{gather}
U'^A \mathcal{L}_{N'} \bigl(N q^{ab}_0\bar{\mu}_{q_0} h_{AB}(U) \ptl_\theta U^B \bigr)= U'^A \bigl(N \ptl_c N'^c \bar{\mu}_{q_0} q^{ab}_0 h_{AB}(U) \ptl_a U^B \notag\\
\qquad{}+ N'^c \ptl_c \bigl(N \bar{\mu}_{q_0} q_0^{ab} h_{AB}(U) \ptl_a U^B\bigr) - \ptl_c N'^b N \bar{\mu}_{q_0} q^{ab}_0 h_{AB} \ptl_a U^B\bigr).\label{Kinematic-expansion}
\end{gather}
We would like to point out that $\omega'$ vanishes on the axes whereas $\gamma'$ does not. Likewise, $\ptl_\theta \omega$ decays rapidly on the axes whereas $\ptl_\theta \gamma$ does not. This causes a few subtleties for terms involving~$\gamma'$ but as will see later, we will have few fortuitous cancellations involving this quantity.
Each of the terms in \eqref{Kinematic-expansion} can in turn be decomposed as
\begin{align*}
U'^A \bigl(N \ptl_c N'^c \bar{\mu}_{q_0} q^{ab}_0 h_{AB}(U) \ptl_a U^B\bigr) &{}= \gamma' \bigl(4 N \ptl_c N'^c \bar{\mu}_{q_0} q^{ab} _0 \ptl_a \gamma\bigr)
 + \omega' \bigl(N \ptl_c N'^c \bar{\mu}_{q_0} q^{ab} _0 {\rm e}^{-4 \gamma} \ptl_a \omega\bigr).
\end{align*}
As a consequence, we have
\begin{gather}
{\rm Flux} \bigl(\gamma' \bigl(4 N \ptl_c N'^c \bar{\mu}_{q_0} q^{ab} _0 \ptl_a \gamma\bigr), \Gamma\bigr) \notag\\
\qquad{}= \int^t_0 \int_{ (-\infty, -R_+) \cup (R_+, \infty) } \lim_{\theta \to 0, \pi} \bigl(\gamma' \bigl(4 N \ptl_c N'^c \bar{\mu}_{q_0} q^{ab} _0 \ptl_a \gamma\bigr) \bigr)^\theta {\rm d}R {\rm d}t \notag\\
\qquad{}= \int^t_0 \int_{ (-\infty, R_+) \cup (R_+, \infty) } \biggl( \lim_{\theta \to 0, \pi} 4 \frac{N}{R} \gamma' \ptl_\theta \gamma \bigl(\ptl_\theta N'^\theta + \ptl_R N'^R\bigr) \biggr) {\rm d}R {\rm d}t.\label{kin-flux-axes-1}
\end{gather}
Let us note that $\lim_{\theta \to 0, \pi} \ptl_\theta \gamma' = \cot \theta$. Likewise,
\begin{gather*}
{\rm Flux} \bigl(\omega' \bigl(\ptl_c N'^c N \bar{\mu}_{q_0} q^{ab} _0 {\rm e}^{-4 \gamma} \ptl_a \omega\bigr), \Gamma\bigr) \notag\\
\qquad{}= \int^t_0 \int_{ (-\infty, R_+) \cup (R_+, \infty) } \lim_{\theta \to 0, \pi} \bigl(\omega' \bigl(\ptl_c N'^c N \bar{\mu}_{q_0} q^{ab} _0 {\rm e}^{-4\gamma} \ptl_a \omega \ptl_b\bigr) \bigr)^\theta {\rm d}R {\rm d}t \notag\\
\qquad{}{}= \int^t_0 \int_{ (-\infty, R_+) \cup (R_+, \infty) } \biggl(\lim_{\theta \to 0, \pi} \frac{N}{R} {\rm e}^{-4\gamma} \omega' \ptl_\theta \omega \ptl_c \bigl(N'^c\bigr) \biggr) {\rm d}R {\rm d}t
=0,
\end{gather*}
where we took into account the fact that the $\ptl_\theta \omega$ term has rapid decay as $\theta \to 0, \pi$.
Next,
\begin{gather*}
U'^A N'^c \ptl_c \bigl(N \bar{\mu}_{q_0} q^{ab}_0 h_{AB} (U) \ptl_a U^A\bigr) = 4\gamma' N'^c\ptl_c \bigl(N \bar{\mu}_{q_0} q^{ab}_0 \ptl_a \gamma\bigr)
+ \omega' N'^c \ptl_c \bigl(N \bar{\mu}_{q_0} {\rm e}^{-4\gamma} \ptl_a \omega\bigr),
\\
{\rm Flux} \bigl(4\gamma' N'^c\ptl_c \bigl(N \bar{\mu}_{q_0} q^{ab}_0 \ptl_a \gamma\bigr), \Gamma\bigr) \notag\\
\qquad{}= \int^t_0 \int_{ (-\infty, R_+) \cup (R_+, \infty) } \lim_{\theta \to 0, \pi} \biggl(
 4\gamma' N'^c\ptl_c \bigl(N \bar{\mu}_{q_0} q^{ab}_0 \ptl_a \gamma\bigr), \frac{1}{R} \ptl_\theta \biggr)^\theta {\rm d}R {\rm d}t \\
\qquad{}= \int^t_0 \int_{ (-\infty, R_+) \cup (R_+, \infty) } \biggl(\lim_{\theta \to 0, \pi} 4 \gamma' N'^c \ptl_c \biggl(N\frac{\bar{\mu}_{q_0}}{R^2} \ptl_\theta \gamma\biggr) \biggr) {\rm d}R {\rm d}t.
\end{gather*}
Consider the integrand
\begin{align} \label{N'^R terms}
 R\gamma' N'^c \ptl_c \biggl(\frac{N}{R} \ptl_\theta \gamma\biggr) = 4 \gamma' \biggl(N'^ \theta \ptl_\theta \biggl(\frac{N}{R} \ptl_\theta \gamma\biggr) + N'^R \ptl_R \biggl(\frac{N}{R} \ptl_\theta \gamma\biggr) \biggr).
\end{align}
The \smash{$N'^R$} term remains while the \smash{$N'^\theta$} term vanishes at the axes (as $\theta \to 0$ or $\pi$). For the sake of brevity, let us combine the $N'^R$ terms in \eqref{N'^R terms} and \eqref{kin-flux-axes-1}. We shall revisit this later. Now then we have
\begin{align} \label{kin-flux-comb-nu'}
\int^t_0 \int_{ (-\infty,-R_+) \cup (\infty, R_+)} \biggl(4 \gamma' \ptl_R \biggl(N'^R \frac{N}{R} \ptl_\theta \gamma\biggr) \biggr) {\rm d}R {\rm d}t.
\end{align}
Similarly,
\begin{align*}
& {\rm Flux} \bigl(\omega' N'^c \ptl_c \bigl(N \bar{\mu}_{q_0} q^{ab}_0 {\rm e}^{-4\gamma} \ptl_a \omega\bigr), \Gamma\bigr) \notag\\
&{}\qquad{}=
\int^t_0 \int_{ (-\infty, R_+) \cup (R_+, \infty) } \lim_{\theta \to 0, \pi} \biggl( \omega' N'^c \ptl_c \bigl(N \bar{\mu}_{q_0} q^{ab}_0 {\rm e}^{-4\gamma} \ptl_a \omega\bigr) \ptl_b \biggr)^\theta {\rm d}R {\rm d}t \\
&{}\qquad{}= \int^t_0 \int_{ (-\infty, R_+) \cup (R_+, \infty) } \biggl( \lim_{\theta \to 0, \pi} \omega' N'^c \ptl_c \biggl(\frac{N \bar{\mu}_{q_0}}{R^2} {\rm e}^{-4\gamma} \ptl_\theta \omega\biggr) \biggr) {\rm d}R {\rm d}t.
\end{align*}
The integrand
\begin{align*}
 \omega' N'^c \ptl_c \biggl(\frac{\bar{\mu}_{q_0}}{R^2} {\rm e}^{-4\gamma} \ptl_\theta \omega\biggr)= \omega' \biggl(N'^\theta \ptl_\theta \biggl(N \frac{\bar{\mu}_{q_0}}{R^2} {\rm e}^{-4\gamma} \ptl_\theta \omega\biggr) + N'^ R \ptl_R \biggl(N \frac{\bar{\mu}_{q_0}}{R^2} {\rm e}^{-4\gamma} \ptl_\theta \omega\biggr)\biggr)
\end{align*}
vanishes at the axes due to $\omega' =0$ on $\Gamma$ and the rapid decay of $\ptl_\theta \omega$ at the axes.
Finally,
\begin{align*}
-U'^A \ptl_c N'^b \bar{\mu}_{q_0} q^{ac}_0 h_{AB} \ptl_a U^B = -4\gamma' \ptl_c N'^b \bar{\mu}_{q_0} \bar{q_0}^{ac} \ptl_a \gamma
-\omega' \ptl_c N'^b \bar{\mu}_{q_0} \bar{q_0}^{ac} {\rm e}^{-4\gamma} \ptl_a \omega,
\end{align*}
so that the fluxes
\begin{align*}
& {\rm Flux} \bigl(-4\gamma' \ptl_c N'^b \bar{q_0}^{ac} \ptl_a \gamma, \Gamma\bigr) \notag\\
&\qquad{}= \int^t_0 \int_{ (-\infty, R_+) \cup (R_+, \infty) } \lim_{\theta \to 0, \pi} \bigl( -4N \gamma' \ptl_c N'^b \bar{\mu}_{q_0} \bar{q_0}^{ac} \ptl_a \gamma \bigr)^\theta {\rm d}R {\rm d}t \notag\\
&\qquad{}= \int^t_0 \int_{ (-\infty, R_+) \cup (R_+, \infty) }\biggl(\lim_{\theta \to 0, \pi} \biggl(-4 \gamma' N \biggl(R \ptl_R N'^\theta \ptl_R \gamma + \frac{1}{R} \ptl_\theta N'^\theta \ptl_\theta \gamma\biggr)\biggr) \biggr) {\rm d}R {\rm d}t.
\end{align*}
If we consider the integrand
\begin{align*}
\lim_{\theta \to 0, \pi} \biggl(4 \gamma' N \biggl(N\ptl_R N'^\theta \ptl_R \gamma + \frac{1}{R} \ptl_\theta N'^\theta \ptl_\theta \gamma\biggr)\biggr),
\end{align*}
we note that the term involving $ \ptl_R N'^R \ptl_R \gamma $ vanishes because $\ptl_R N'^\theta$ vanishes on the axes \big($N'^\theta $ is a constant along the axes\big)
and in the second term involving \smash{$\ptl_\theta N'^\theta \ptl_\theta \gamma$} cancels with a flux term in \eqref{kin-flux-axes-1},
\begin{align*}
& {\rm Flux} \bigl(-\omega' \ptl_c N'b \bar{q_0}^{ac} {\rm e}^{-4\gamma} \ptl_a \omega, \Gamma\bigr) \notag\\
&\qquad{}= \int^t_0 \int_{ (-\infty, R_+) \cup (R_+, \infty) } \lim_{\theta \to 0, \pi} \bigl(
-N \omega' \ptl_c N'^b \bar{q_0}^{ac} {\rm e}^{-4\gamma} \ptl_a \omega \bigr)^\theta {\rm d}R {\rm d}t \notag\\
&\qquad{}= \int^t_0 \int_{ (-\infty, R_+) \cup (R_+, \infty) } \Bigl(\lim_{\theta \to 0, \pi} \bigl(- N \omega' (\ptl_R N'^ \theta {\rm e}^{-4\gamma} \ptl_R \omega + \ptl_\theta N'^ R {\rm e}^{-4 \gamma} \ptl_\theta \omega\bigr) \Bigr) {\rm d}R {\rm d}t
=0
\end{align*}
on account of rapid decay of $\ptl_\theta \omega$ at the axes, vanishing of $N \sim \sin \theta $ and $\omega'$ at the axes.

Now then for
\begin{align*}
{\rm Flux} \bigl(J_2, \mathcal{H}^+\bigr),
\end{align*}
we have
\begin{align*}
{\rm Flux} \bigl(\gamma' \bigl(4 \ptl_c N'^c \bar{\mu}_{q_0} q^{ab} _0 \ptl_a \gamma, \mathcal{H}^+\bigr)\bigr)
= \int^t_0 \int^\pi_0 \Bigl(\lim_{R \to R_+} 4 R N \gamma' \ptl_R \gamma \ptl_c N'^c\Bigr) {\rm d} \theta {\rm d}t.
\end{align*}

Using
\begin{align*}
\lim_{R \to R_+} \ptl_R \gamma \leq c, \qquad \lim_{R \to R_+} \ptl_c N'^c \leq c,
\end{align*}
 we estimate the integrand
\begin{align*}
\lim_{R \to R_+} 4 R N \gamma' \ptl_R \gamma \ptl_c N'^c \leq c \biggl(1- \frac{R^2_+}{R^2}\biggr)^2.
\end{align*}

So we have flux
\begin{gather*}
{\rm Flux} \bigl(\gamma' \bigl(4 \ptl_c N'^c \bar{\mu}_{q_0} q^{ab} _0 \ptl_a \gamma, \mathcal{H}^+\bigr)\bigr) =0,
\\
{\rm Flux} \bigl(\omega' \bigl(\ptl_c N'^c \bar{\mu}_{q_0} q^{ab} _0 {\rm e}^{-4 \gamma} \ptl_a \omega\bigr), \mathcal{H}^+\bigr)
= \int^t_0 \int^\pi_0 \Bigl(\lim_{R \to R_+} \omega' \ptl_c N'^c N \bar{\mu}_{q_0} {\rm e}^{-4\gamma} \ptl_R \omega\Bigr) {\rm d} \theta {\rm d}t \notag\\
\hphantom{{\rm Flux} \bigl(\omega' \bigl(\ptl_c N'^c \bar{\mu}_{q_0} q^{ab} _0 {\rm e}^{-4 \gamma} \ptl_a \omega\bigr), \mathcal{H}^+\bigr)}{}
= \int^t_0 \int^\pi_0 \Bigl(\lim_{R \to R_+} R N {\rm e}^{-4\gamma} \omega' \ptl_R \omega \ptl_c N'^c\Bigr)R {\rm d}\theta {\rm d}t.
\end{gather*}
Noting that
\begin{align*}
\lim_{R \to R_+} \ptl_R \omega \leq c, \qquad \lim_{R \to R_+} {\rm e}^{-4\gamma} \leq c
\end{align*}
the integrand
\begin{align*}
R N {\rm e}^{-4\gamma} \omega' \ptl_R \omega \ptl_c N'^c \leq c \biggl(1- \frac{R_+}{R^2}\biggr)^2.
\end{align*}
Thus, again
\begin{align*}
{\rm Flux} \bigl(\omega' \bigl(\ptl_c N'^c \bar{\mu}_{q_0} q^{ab} _0 {\rm e}^{-4 \gamma} \ptl_a \omega\bigr)\bigr) =0.
\end{align*}
Next,
\begin{align*}
& {\rm Flux} \bigl(4\gamma' N'^c\ptl_c \bigl(N \bar{\mu}_{q_0} q^{ab}_0 \ptl_a \gamma\bigr), \mathcal{H}^+\bigr) \notag \\
&\qquad{}= \int^t_0 \int^\pi_0 \Bigl(\lim_{R \to R_+} 4 \gamma' N'^c (\ptl_c N R \ptl_R \gamma + N \ptl_c (R \ptl_R \gamma) ) \Bigr) {\rm d}R {\rm d}t
\end{align*}
the integrand can be estimated close to the horizon as
\begin{align*}
4 \gamma' N'^c (\ptl_c N R \ptl_R \gamma + N \ptl_c (R \ptl_R \gamma) ) \leq c \biggl(1- \frac{R^2_+}{R^2}\biggr) + c \biggl(1- \frac{R^2_+}{R^2}\biggr)^2,
\end{align*}
so we have
\begin{gather*}
{\rm Flux} \bigl(4\gamma' N'^c\ptl_c \bigl(N \bar{\mu}_{q_0} q^{ab}_0 \ptl_a \gamma\bigr), \mathcal{H}^+\bigr) =0,
\\
{\rm Flux} \bigl(\omega' N'^c \ptl_c \bigl(\bar{\mu}_{q_0} {\rm e}^{-4\gamma} \ptl_a \omega\bigr), \mathcal{H}^+\bigr) \notag\\
\qquad{}=
\int^t_0 \int^\pi_0 \Bigl(\lim_{R \to R_+} \omega' N'^c \ptl_c\bigl(N R {\rm e}^{-4\gamma} \ptl_R \omega\bigr) \Bigr) {\rm d}\theta {\rm d}t \notag\\
\qquad{}= \int^t_0 \int^\pi_0 \Bigl(\lim_{R \to R_+} \omega' N'^c \bigl(\ptl_c N R {\rm e}^{-4\gamma} \ptl_R \omega + N \ptl_c \bigl({\rm e}^{-4\gamma} R \ptl_R \omega\bigr)\bigr) \Bigr) {\rm d} \theta {\rm d}t
\end{gather*}
the integrand can be estimated as
\begin{align*}
\omega' N'^c \bigl(\ptl_c N R {\rm e}^{-4\gamma} \ptl_R \omega + N \ptl_c \bigl({\rm e}^{-4\gamma} R \ptl_R \omega\bigr)\bigr) \leq c \biggl(1- \frac{R^2_+}{R^2}\biggr) + c \biggl(1- \frac{R^2_+}{R^2}\biggr)^2,
\end{align*}
so we have
\begin{align*}
{\rm Flux} \bigl(\omega' N'^c \ptl_c \bigl(\bar{\mu}_{q_0} {\rm e}^{-4\gamma} \ptl_a \omega\bigr), \mathcal{H}^+\bigr) =0.
\end{align*}
Next,
\begin{align*}
& {\rm Flux} \bigl(-4 N \gamma' \ptl_c N'^b \bar{q_0}^{ac} \ptl_a \gamma, \mathcal{H}^+\bigr) \notag\\
&{}= \int^t_0 \int^\pi_0 \biggl(\lim_{R \to R_+} -4 N \gamma' \biggl(\ptl_R N'^R \ptl_R \gamma + \frac{1}{R^2} \ptl_\theta N'^ \theta \ptl_ \theta \gamma\biggr) \biggr) {\rm d}\theta {\rm d}t
 =0
\end{align*}
in view of the behaviour of the integrand
\begin{align*}
 4 N \gamma' \biggl(\ptl_R N'^R \ptl_R \gamma + \frac{1}{R^2} \ptl_\theta N'^ \theta \ptl_ \theta \gamma\biggr) \qquad \text{behaves as} \ c\biggl(1- \frac{R^2_+}{R^2}\biggr)^2 \ \text{as} \ R \to R_+.
\end{align*}
Finally,
\begin{align*}
& {\rm Flux} \bigl(-\omega' \ptl_c N'^b \bar{q_0}^{ac} N {\rm e}^{-4\gamma} \ptl_a \omega, \mathcal{H}^+\bigr) \notag \\
 &{}\qquad{}= \int^t_0 \int^\pi_0 \biggl(\lim_{R \to R_+} - N {\rm e}^{-4 \gamma} \omega' \biggl(\ptl_R N'^R \ptl_R \omega + \frac{1}{R^2} \ptl_\theta N'^\theta \ptl_\theta \omega\biggr) \biggr) {\rm d}\theta {\rm d}t
= 0,
\end{align*}
where again the integrand can be estimated such that
\[
N {\rm e}^{-4 \gamma} \omega' \biggl(\ptl_R N'^R \ptl_R \omega + \frac{1}{R^2} \ptl_\theta N'^\theta \ptl_\theta \omega\biggr)
\]
behaves as
\[
 c\biggl(1- \frac{R^2_+}{R^2}\biggr)^2 \qquad \text{for $R$ close to $R_+$}.
\]
Next, for the kinematic fluxes at the spatial infinity
\begin{align*}
{\rm Flux}\bigl(J_2, \iota^0\bigr)
\end{align*}
let us start with estimating the divergence term $\displaystyle \ptl_c N'^c$ near the spatial infinity. We have
\begin{align*}
\ptl_c N'^c &{}= \ptl_\theta N'^ \theta + \ptl_R N'^R \notag\\
&{}= \ptl_R \Biggl(\ptl_t Y^R - N^2 {\rm e}^{-2 \nu} \ptl_R \Biggl(Y^t(t=0) - \mathcal{Y}^ \theta \cot \theta - \mathcal{Y}^R \frac{ 1 + \frac{R^2_+}{R^2}}{ R \bigl(1- \frac{R^2_+}{R^2}\bigr)} \Biggr) \Biggr) \notag \\
&\quad{} + \ptl_\theta \Biggl(\ptl_t Y^\theta - \frac{N^2 {\rm e}^{-2\mbo{\nu}'}}{R^2} \Biggl(Y^t(t=0) - \mathcal{Y}^\theta \cot \theta - \mathcal{Y}^R \frac{ 1 + \frac{R^2_+}{R^2}}{ R \bigl(1- \frac{R^2_+}{R^2}\bigr)}\Biggr)\Biggr).
\end{align*}
We have the following behaviour for large $R$:
\begin{align*}
& \ptl_R \bigl(N^2 {\rm e}^{-2 \mbo{\nu}}\bigr) \sim \mathcal{O}\biggl(\frac{1}{R}\biggr),
\qquad \ptl_\theta \frac{N^2 {\rm e}^{-2 \mbo{\nu}'}}{R^2} \sim \mathcal{O} (1)
\end{align*}
and
\begin{align*}
& \ptl_R \frac{ 1 + \frac{R^2_+}{R^2}}{ R \bigl(1- \frac{R^2_+}{R^2}\bigr)} \sim \mathcal{O} (\frac{1}{R^2}), \qquad \ptl^2_R \frac{ 1 + \frac{R^2_+}{R^2}}{ R \bigl(1- \frac{R^2_+}{R^2}\bigr)} \sim \mathcal{O} (\frac{1}{R^3})
\end{align*}
from somewhat lengthy but standard computations. It now follows that the coordinate divergence of $N'$ is then
\begin{gather*}
\ptl_c N'^c = \mathcal{O} \biggl(\frac{1}{R}\biggr),
\\
{\rm Flux} \bigl(\gamma' \bigl(4 \ptl_c N'^c \bar{\mu}_{q_0} q^{ab} _0 \ptl_a \gamma, \iota^0\bigr)\bigr) =
\int^t_0 \int^\pi_0 \Bigl(\lim_{R \to \infty} 4 R \gamma' \ptl_R \gamma \ptl_c N'^c \bar{\mu}_{q_0}\Bigr) {\rm d} \theta {\rm d}t,
\\
4 R \gamma' \ptl_R \gamma \ptl_c N'^c \bar{\mu}_{q_0} = 4 R \cdot \mathcal{O} \biggl(\frac{1}{R}\biggr) \cdot \mathcal{O} \biggl(\frac{1}{R}\biggr) \cdot \mathcal{O} \biggl(\frac{1}{R}\biggr)
 \to 0 \qquad \text{as} \ R \to \infty.
\end{gather*}
Thus,
\begin{gather*}
{\rm Flux} \bigl(\gamma' \bigl(4 \ptl_c N'^c \bar{\mu}_{q_0} q^{ab} _0 \ptl_a \gamma, \iota^0\bigr)\bigr) =0,
\\
{\rm Flux} \bigl(\omega' \bigl(\ptl_c N'^c \bar{\mu}_{q_0} q^{ab} _0 {\rm e}^{-4 \gamma} \ptl_a \omega\bigr), \iota^0\bigr) \notag\\
 \qquad{} = \int^t \int^\pi_0 \Bigl(\lim_{R \to \infty} R {\rm e}^{-4\gamma} \omega' \ptl_R \omega \ptl_c\bigl(\ptl_t Y^c - N^2 {\rm e}^{-2 \mbo{\nu}} q^{ac}_0 \ptl_a Y^t\bigr)\Bigr) {\rm d} \theta {\rm d}t,
\\
R {\rm e}^{-4\gamma} \omega' \ptl_R \omega \ptl_c\bigl(\ptl_t Y^c - N^2 {\rm e}^{-2 \mbo{\nu}} q^{ac}_0 \ptl_a Y^t\bigr) = R \cdot \mathcal{O} \biggl(\frac{1}{R^4}\biggr) \cdot \mathcal{O} \biggl(\frac{1}{R}\biggr) \cdot \mathcal{O} \biggl(\frac{1}{R^3}\biggr) \cdot \mathcal{O} \biggl(\frac{1}{R}\biggr) \notag\\
\hphantom{R {\rm e}^{-4\gamma} \omega' \ptl_R \omega \ptl_c\bigl(\ptl_t Y^c - N^2 {\rm e}^{-2 \mbo{\nu}} q^{ac}_0 \ptl_a Y^t\bigr)}{}
\to 0 \qquad \text{as} \ R \to \infty,
\\
{\rm Flux} \bigl(\omega' \bigl(\ptl_c N'^c \bar{\mu}_{q_0} q^{ab} _0 {\rm e}^{-4 \gamma} \ptl_a \omega\bigr), \iota^0\bigr)\bigr) =0.
\end{gather*}
Analogously, noting that $\gamma' = \mathcal{O}\bigl(\frac{1}{R}\bigr)$ and $\omega' = \mathcal{O} \bigl(\frac{1}{R}\bigr)$ for large $R$ we have the following:
\begin{gather*}
{\rm Flux} \bigl(4\gamma' N'^c\ptl_c \bigl(\bar{\mu}_{q_0} q^{ab}_0 \ptl_a \gamma\bigr), \iota^0\bigr) \\
\qquad{} = \int^t_0 \int^\pi_0 \Bigl(\lim_{R \to \infty} 4 \gamma' N'^c \ptl_c (R \ptl_R \gamma)\Bigr) {\rm d}\theta {\rm d}t \notag\\
\qquad{} = \int^t_0 \int^\pi_0 \bigl(4 \gamma' \bigl(\ptl_t Y^c - N^2 {\rm e}^{-2 \mbo{\nu}} q^{ac} \ptl_a Y^t\bigr) \ptl_c (R \ptl_R \gamma) \bigr) {\rm d}\theta {\rm d}t
=0,
\\
 {\rm Flux} \bigl(\omega' N'^c \ptl_c \bigl(\bar{\mu}_{q_0} {\rm e}^{-4\gamma} \ptl_a \omega\bigr), \iota^0\bigr) \notag\\
\qquad{} = \int^t_0 \int^\pi_0 \Bigl(\lim_{R \to \infty} \omega' \bigl(\ptl_t Y^c - N^2 {\rm e}^{-2 \mbo{\nu}} q^{ac} \ptl_a Y^t\bigr) \ptl_c \bigl(R {\rm e}^{-4\gamma} \ptl_R \omega\bigr)\Bigr) {\rm d}\theta {\rm d}t
 =0,
\\
 {\rm Flux} \bigl(-4\gamma' \ptl_c N'^b \bar{q_0}^{ac} \ptl_a \gamma, \iota^0\bigr) \notag\\
\qquad{} = \int^t_0 \int^\pi_0 \biggl(\lim_{R \to \infty} -4 \gamma' \biggl(\ptl_R N'^R \ptl_R \gamma + \frac{1}{R^2} \ptl_\theta N'^R \ptl_\theta \gamma \biggr)\biggr) {\rm d}\theta {\rm d}t
 = 0,
\\
 {\rm Flux} \bigl(-\omega' \ptl_c N'^b \bar{q_0}^{ac} {\rm e}^{-4\gamma} \ptl_a \omega, \iota^0\bigr) \notag\\
\qquad{} = \int^t_0 \int^\pi_0 \biggl(\lim_{R \to \infty} \omega' {\rm e}^{-4\gamma} \biggl(\ptl_R N'^R \ptl_R \omega + \frac{1}{R^2} \ptl_\theta N'^R \ptl_\theta \omega \biggr)\biggr) {\rm d}\theta {\rm d}t
 =0.
\end{gather*}

\subsection*{`Conformal' boundary terms $\boldsymbol{
(J_3)^b := \mathcal{L}_{N'} (N) \bigl(2 \bar{\mu}_q q^{ab} \ptl_a \mbo{\nu}'\bigr)}$.}

Let us start with the flux of this conformal term at the axes:
\begin{gather*}
 {\rm Flux} (J_3, \Gamma) \\
 \qquad{} = \int^t_0 \int_{(-\infty, -R_+) \cup (R_+, \infty)} \lim_{\theta \to 0, \pi} \bigl(\mathcal{L}_{N'} (N) \bigl(2 \bar{\mu}_{q_0}q^{ab}_0 \ptl_a \mbo{\nu}\bigr) \bigr)^\theta {\rm d}R {\rm d}t \notag\\
 \qquad{} = \int^t_0 \int_{(-\infty, -R_+) \cup (R_+, \infty)} \bigl( 2 \ptl_\theta \mbo{\nu}' \bigl(N'^R \ptl_R N + N'^\theta \ptl_\theta N\bigr) \bigr) {\rm d}R {\rm d}t \notag\\
 \qquad{} = \int^t_0 \int_{(-\infty, -R_+) \cup (R_+, \infty)} \biggl(\lim_{\theta \to 0, \pi} 2 \ptl_\theta \mbo{\nu}' \biggl(N'^R \frac{\sin \theta R^2_+}{R^2} + N'^\theta R \cos \theta\biggl(1-\frac{R^2_+}{R^2} \biggr) \biggr)\biggr) {\rm d}R {\rm d}t.
\end{gather*}
Expanding out the integrand within ${\rm Flux} (J_3, \Gamma) $, near the axes $\Gamma$, we get
\begin{gather*}
 \lim_{\theta \to 0, \pi} 2 \ptl_\theta \mbo{\nu}' \Biggl\{ \Biggl(\ptl_t Y^R -\biggl(1 - \frac{R^2_+}{R^2} \biggr)^2 \cdot \frac{R^4}{ \bigl(\bigl(r^2+a^2\bigr)^2 -a^2 \Delta \sin^2 \theta\bigr)} \notag\\
 \hphantom{\lim_{\theta \to 0, \pi} 2 \ptl_\theta \mbo{\nu}' \Biggl\{ \Biggl(}{}
 \times \ptl_R \Biggl(Y^t(t=0) - \mathcal{Y}^R \frac{1+ \frac{R^2_+}{R^2}}{1-\frac{R^2_+}{R^2}} - \mathcal{Y}^\theta \cot \theta \Biggr) \Biggr)\frac{\sin \theta R^2_+}{R^2} \notag\\
\hphantom{ \lim_{\theta \to 0, \pi} 2 \ptl_\theta \mbo{\nu}' \Biggl\{}{} + \Biggl(\ptl_t Y^\theta - \biggl(1- \frac{R^2_+}{R^2}\biggr)^2 \cdot \frac{R^4}{ \bigl(\bigl(r^2+a^2\bigr)^2 -a^2 \Delta
	\sin^2 \theta \bigr)} \notag\\
\hphantom{ \lim_{\theta \to 0, \pi} 2 \ptl_\theta \mbo{\nu}' \Biggl\{ +\Biggl(}{} \times \ptl_\theta \Biggl(Y^t (t=0) - \mathcal{Y}^R \frac{1 + \frac{R^2_+}{R^2}} {{1-\frac{R^2_+}{R^2}} } - \mathcal{Y}^ \theta \cot \theta \Biggr) \Biggr) R \cos \theta \biggl(1-\frac{R^2_+}{R^2}\biggr) \Biggr\}
\end{gather*}
from which it follows that
\[
{\rm Flux} (J_3, \Gamma)= 0
\]
due to the decay rate of the terms in the parenthesis.

Now then, consider the flux of $J_3$ at the horizon $H^+$:
\begin{gather*}
{\rm Flux} \bigl(J_3, \mathcal{H}^+\bigr) = \int^t_0 \int^\pi _0 \Bigl(\lim_{R \to R_+} \mathcal{L}_{N'} (N) \bigl(2 \bar{\mu}_q q^{ab} \ptl_a \mbo{\nu}'\bigr) \Bigr) {\rm d}\theta {\rm d}t \notag\\
 \hphantom{{\rm Flux} \bigl(J_3, \mathcal{H}^+\bigr) =}{}
\times{}\int^t_0 \int^\pi_0 \Bigl(\lim_{R \to R_+} 2 R \mathcal{L}_{N'} N \ptl_R \mbo{\nu}' \Bigr) {\rm d}\theta {\rm d}t \notag\\
 \hphantom{{\rm Flux} \bigl(J_3, \mathcal{H}^+\bigr)}{}
= \int^t_0 \int^\pi _0 \biggl(\lim_{R \to R_+} 2 R \ptl_R \mbo{\nu}' \biggl(\bigl(\ptl_t Y^R - N^2 {\rm e}^{-2 \mbo{\nu}} \ptl_R Y^t\bigr) \sin \theta \frac{R^2_+}{R^2} \notag\\
 \hphantom{{\rm Flux} \bigl(J_3, \mathcal{H}^+\bigr)= \int^t_0 \int^\pi _0 \biggl(}{}
 + \biggl(\ptl_t Y^\theta - \frac{N^2}{R^2} {\rm e}^{-2 \mbo{\nu}} \ptl_\theta Y^t\biggr) R \cos \theta \biggl(1-\frac{R^2_+}{R^2}\biggr) \biggr) \biggr) {\rm d}\theta {\rm d}t
\end{gather*}
expanding out the integrand within $ {\rm Flux} \bigl(J_3, \mathcal{H}^+\bigr)$ close to the horizon $\mathcal{H}^+$, we have
\begin{gather*}
 \lim_{R \to R_+} 2R \ptl_R \mbo{\nu}' \Bigg\{ \Biggl(\ptl_t Y^R - \biggl(1 - \frac{R^2_+}{R^2} \biggr)^2 \cdot \frac{R^4}{ \bigl(\bigl(r^2+a^2\bigr)^2 -a^2 \Delta \sin^2 \theta\bigr)} \\
 \hphantom{\lim_{R \to R_+} 2R \ptl_R \mbo{\nu}' \Bigg\{\Biggl(}{}
 \times \ptl_R \Biggl(Y^t(t=0) - \mathcal{Y}^R \frac{1+ \frac{R^2_+}{R^2}}{1-\frac{R^2_+}{R^2}} - \mathcal{Y}^\theta \cot \theta \Biggr) \Biggr)\sin \theta \frac{R^2_+}{R^2} \notag\\
\hphantom{\lim_{R \to R_+} 2R \ptl_R \mbo{\nu}' \Bigg\{ }{}
 + \Biggl( \ptl_t Y^\theta - \biggl(1- \frac{R^2_+}{R^2}\biggr)^2 \cdot \frac{R^4}{ \bigl(\bigl(r^2+a^2\bigr)^2 -a^2 \Delta
	\sin^2 \theta \bigr)} \notag\\
\hphantom{\lim_{R \to R_+} 2R \ptl_R \mbo{\nu}' \Bigg\{ + \Biggl(}{}
 \ptl_\theta \Biggl(Y^t (t=0) - \mathcal{Y}^R \frac{1 + \frac{R^2_+}{R^2}} {{1-\frac{R^2_+}{R^2}} } - \mathcal{Y}^ \theta \cot \theta \Biggr) \Biggr) \Bigg\} \to 0 \qquad \text{as} \ R \to R_+
\end{gather*}
after plugging in the expression of $Y$ and $\mathcal{Y}$ for $R$ near $R_+$.
Likewise, at spatial infinity, we have
\begin{align*}
{\rm Flux} (J_3, \bar{\iota}^0)= \int^t_0 \int^\pi _0 \biggl(&\lim_{R \to \infty} 2 R \ptl_R \mbo{\nu}' \biggl(\bigl(\ptl_t Y^R - N^2 {\rm e}^{-2 \mbo{\nu}} \ptl_R Y^t\bigr) \sin \theta \frac{R^2_+}{R^2} \notag\\
&{} + \biggl(\ptl_t Y^\theta - \frac{N^2}{R^2} {\rm e}^{-2 \mbo{\nu}} \ptl_\theta Y^t\biggr) R \cos \theta \biggl(1-\frac{R^2_+}{R^2}\biggr) \biggr) \biggr) {\rm d}\theta {\rm d}t.
\end{align*}
Again the integrand occurring above can be estimated, after plugging in the relevant quantities:
\begin{gather*}
= \lim_{R \to \infty} 2R \ptl_R \mbo{\nu}' \Bigg\{ \Biggl(\ptl_t \sum^\infty_{n=1} Y^R_n R^{-n+1} \cos n \theta - \biggl(1 - \frac{R^2_+}{R^2} \biggr)^2 \cdot \frac{R^4}{ \bigl( \bigl(r^2+a^2\bigr)^2 -a^2 \Delta \sin^2 \theta\bigr)} \\
 \qquad\quad{} \times \ptl_R \Biggl(Y^t(t=0) + \sum^\infty_{n =1} \mathcal{Y}^R_n R^{-n+1} \frac{1+ \frac{R^2_+}{R^2}}{1-\frac{R^2_+}{R^2}} - \sum^\infty_{n=1} \mathcal{Y}^\theta_n R^{-n+1} \sin n \theta \cot \theta \Biggr) \Biggr)\sin \theta \frac{R^2_+}{R^2} \notag\\
 \qquad{} + \Biggl( \ptl_t \sum^{\infty}_{n=1} Y^\theta_n R^{-n} \sin n \theta - \biggl(1- \frac{R^2_+}{R^2}\biggr)^2 \cdot \frac{R^4}{ \bigl(\bigl(r^2+a^2\bigr)^2 -a^2 \Delta
	\sin^2 \theta \bigr)} \notag\\
 \qquad\quad{} \times \ptl_\theta \Biggl(Y^t (t=0) + \sum^\infty_{n=1} \mathcal{Y}^R_n R^{-n+1} \cos n \theta \frac{1 + \frac{R^2_+}{R^2}} {{1-\frac{R^2_+}{R^2}} } - \sum^\infty_{n=1} \mathcal{Y}^\theta_n R^{-n} \sin n \theta \cot \theta \Biggr) \Biggr) \Bigg\} \notag\\
 \to 0 \qquad \text{as} \ R \to \infty.
\end{gather*}
We would like to point out that the choice of $Y^t(t=0)$ is at our discretion.
Next, let us consider~${\rm Flux} (J_4, \Gamma)$.

We have the flux expression at the axes:
\begin{gather*}
{\rm Flux} (J_4, \Gamma) = \int^t_0 \int_{ (-\infty,-R_
	+) \cup (R_+, \infty)} \lim_{\theta \to 0, \pi} 2 R \biggl(\mathcal{L}_{N'} \mbo{\nu}' \frac{1}{R^2} \ptl_\theta N \biggr) {\rm d}R {\rm d}t
\notag\\
\quad{}= \int^t_0 \int_{ (-\infty,-R_
	+) \cup (R_+, \infty)}\biggl(\lim_{\theta \to 0, \pi} 2 \cos \theta \biggl(1 - \frac{R^2_+}{R^2} \biggr) \bigl(\ptl_t Y^c - N^2 {\rm e}^{-2 \mbo{\nu}} q^{ac} \ptl_a Y^t \bigr) \ptl_c \mbo{\nu}' \biggr) {\rm d}R {\rm d}t,
\end{gather*}
expanding out the integrand above, we have
\begin{align*}
\quad&{} = \lim_{\theta \to 0, \pi} 2 \cos \theta \biggl(1 - \frac{R^2_+}{R^2} \biggr)
\Biggl\{ \ptl_t Y^R \ptl_R \mbo{\nu}' - \biggl (1- \frac{R^2_+}{R^2} \biggr)^2 \cdot \frac{R^4 \ptl_R \mbo{\nu}'}{ \bigl(\bigl(r^2+a^2\bigr)^2 -a^2 \Delta
	\sin^2 \theta \bigr)} \notag\\
&\hphantom{ = \lim_{\theta \to 0, \pi} 2 \cos \theta \biggl(1 - \frac{R^2_+}{R^2} \biggr)\Biggl\{}{}
\times \ptl_R \Biggl(Y^t (t=0) - \mathcal{Y}^R \frac{1 + \frac{R^2_+}{R^2}} {{1-\frac{R^2_+}{R^2}} } - \mathcal{Y}^ \theta \cot \theta \Biggr) \notag\\
&\hphantom{ = \lim_{\theta \to 0, \pi} 2 \cos \theta \biggl(1 - \frac{R^2_+}{R^2} \biggr)\Biggl\{}{}
+ \ptl_t Y^\theta \ptl_\theta \mbo{\nu}' - \biggl(1 - \frac{R^2_+}{R^2} \biggr)^2 \cdot \frac{R^4 \ptl_\theta \mbo{\nu}'}{ \bigl(\bigl(r^2+a^2\bigr)^2 -a^2 \Delta \sin^2 \theta\bigr)} \notag\\
&\hphantom{ = \lim_{\theta \to 0, \pi} 2 \cos \theta \biggl(1 - \frac{R^2_+}{R^2} \biggr)\Biggl\{}{}
\times \ptl_\theta \Biggl(Y^t(t=0) - \mathcal{Y}^R \frac{1+ \frac{R^2_+}{R^2}}{1-\frac{R^2_+}{R^2}} - \mathcal{Y}^\theta \cot \theta \Biggr) \Biggr\}.
\end{align*}
It may be noted that $N'^\theta$ now vanishes using the expansion of $\sin n \theta $.
A term related to $\mathcal{Y}^R$ and thus $N'^R$ remains. Let us re-compress this term and represent the integrand as
\[
2N'^R \ptl_R \mbo{\nu}' \cos \theta\biggl(1- \frac{R^2_+}{R^2}\biggr),
\]
which can be re-expressed as
\[
\ptl_R \biggl(2 N'^R \ptl_R \mbo{\nu}' \cos \theta \biggl(1- \frac{R^2_+}{R^2} \biggr) \biggr) - 2 \mbo{\nu'}\ptl_R \biggl(N'^R \cos \theta \biggl(1-\frac{R^2_+}{R^2} \biggr) \biggr).
\]
Now, the flux corresponding to the second term above fortuitously combines with the remaining flux term in \eqref{kin-flux-comb-nu'} to yield $0$ total flux, in view of the regularity condition \eqref{gamma-nu condition}. Further, the first term is a total divergence term and converges to $0$ at the boundaries of the axes ($R \to \infty$), in view of the asymptotic behaviour of $N'^R$ and $\mbo{\nu'}$.

Subsequently,
\begin{align*}
{\rm Flux} (J_4, \mathcal{H}^+) &{}= \int^t_0 \int^\pi _0 \Bigl(\lim_{R \to R_+} 2 \mathcal{L}_{N'} \mbo{\nu}' \bar{\mu}_q q^{ab} \ptl_a N \Bigr) {\rm d}\theta {\rm d}t \notag\\
&{}= \int^t_0 \int^\pi _0 \Bigl(\lim_{R \to R_+} 2 \mathcal{L}_{N'} \mbo{\nu}' \bar{\mu}_{q_0} \ptl_R N \Bigr) R {\rm d}\theta {\rm d}t \notag\\
&{}= \int^t_0 \int^\pi _0 \biggl(2 \sin \theta \biggl(1+ \frac{R^2_+}{R^2} \biggr) \bigl(\ptl_t Y^c - N^2 {\rm e}^{-2 \mbo{\nu}} q^{ac} \ptl_a Y^t\bigr) \ptl_c \mbo{\nu}' \biggr) {\rm d}\theta {\rm d}t,
\end{align*}
the integrand is
\begin{align*}
\hphantom{{\rm Flux} (J_4, \mathcal{H}^+)}&{}= 2 \sin \theta \biggl(1+ \frac{R^2_+}{R^2} \biggr) \Biggl\{ \ptl_t Y^R \ptl_R \mbo{\nu}' - \biggl(1- \frac{R^2_+}{R^2} \biggr)^2 \cdot \frac{R^4 \ptl_R \mbo{\nu}'}{ \bigl( \bigl(r^2+a^2\bigr)^2 -a^2 \Delta
	\sin^2 \theta \bigr)} \notag\\
& \hphantom{= 2 \sin \theta \biggl(1+ \frac{R^2_+}{R^2} \biggr) \Biggl\{}{}
\times \ptl_R \Biggl(Y^t (t=0) - \mathcal{Y}^R \frac{1 + \frac{R^2_+}{R^2}} {{1-\frac{R^2_+}{R^2}} } - \mathcal{Y}^ \theta \cot \theta \Biggr) \notag\\
&\hphantom{= 2 \sin \theta \biggl(1+ \frac{R^2_+}{R^2} \biggr) \Biggl\{}{}
+ \ptl_t Y^\theta \ptl_\theta \mbo{\nu}' - \biggl(1 - \frac{R^2_+}{R^2} \biggr)^2 \cdot \frac{R^4 \ptl_\theta \mbo{\nu}'}{ \bigl(\bigl(r^2+a^2\bigr)^2 -a^2 \Delta \sin^2 \theta\bigr)} \notag\\
& \hphantom{= 2 \sin \theta \biggl(1+ \frac{R^2_+}{R^2} \biggr) \Biggl\{}{}
\times \ptl_\theta \Biggl(Y^t(t=0) - \mathcal{Y}^R \frac{1+ \frac{R^2_+}{R^2}}{1-\frac{R^2_+}{R^2}} - \mathcal{Y}^\theta \cot \theta \Biggr) \Biggr\}.
\end{align*}

The second term in the curly brackets rapidly vanishes due to the decay rate of $Y^ \theta$ close to $\mathcal{H}^+$. The non vanishing factor in the second line above is accounted for by the vanishing boundary condition of $\ptl_R \mbo{\nu}'$ at the horizon,
\begin{align*}
{\rm Flux} \bigl(J_4, \bar{\iota}^0) &{}= \int^t_0 \int^\pi _0 \Bigl(\lim_{R \to \infty} 2 \mathcal{L}_{N'} \mbo{\nu}' \bar{\mu}_{q_0} \ptl_R N \Bigr) {\rm d}\theta {\rm d}t \notag \\
&{}= \int^t_0 \int^\pi _0 \biggl( \lim_{R \to \infty} 2 \sin \theta \frac{R^2_+}{R} \bigl(\ptl_t Y^c - N^2 {\rm e}^{-2 \mbo{\nu}} q^{ac} \ptl_a Y^t\bigr) \ptl_c \mbo{\nu}' \biggr) {\rm d} \theta {\rm d}t,
\end{align*}
so that the integrand involves
\begin{align*}
\qquad{} &{}= \lim_{R \to \infty} 2 \sin \theta \biggl(1+ \frac{R^2_+}{R^2} \biggr) \Biggl\{ \ptl_t Y^R \ptl_R \mbo{\nu}' - \biggl(1- \frac{R^2_+}{R^2} \biggr)^2 \cdot \frac{R^4 \ptl_R \mbo{\nu}'}{ \bigl(\bigl(r^2+a^2\bigr)^2 -a^2 \Delta
 	\sin^2 \theta \bigr)} \notag\\
 & \hphantom{= \lim_{R \to \infty} 2 \sin \theta \biggl(1+ \frac{R^2_+}{R^2} \biggr) \Biggl\{}{}
 \times \ptl_R \Biggl(Y^t (t=0) - \mathcal{Y}^R \frac{1 + \frac{R^2_+}{R^2}} {{1-\frac{R^2_+}{R^2}} } - \mathcal{Y}^ \theta \cot \theta \Biggr) \notag\\
 & \hphantom{= \lim_{R \to \infty} 2 \sin \theta \biggl(1+ \frac{R^2_+}{R^2} \biggr) \Biggl\{}{}
 + \ptl_t Y^ \theta \ptl_\theta \mbo{\nu}' - \biggl(1 - \frac{R^2_+}{R^2} \biggr)^2 \cdot \frac{R^4 \ptl_\theta \mbo{\nu}'}{ \bigl(\bigl(r^2+a^2\bigr)^2 -a^2 \Delta \sin^2 \theta\bigr)} \notag\\
 & \hphantom{= \lim_{R \to \infty} 2 \sin \theta \biggl(1+ \frac{R^2_+}{R^2} \biggr) \Biggl\{}{}
 \times \ptl_R \Biggl(Y^t(t=0) - \mathcal{Y}^R \frac{1+ \frac{R^2_+}{R^2}}{1-\frac{R^2_+}{R^2}} - \mathcal{Y}^\theta \cot \theta \Biggr) \Biggr\}.
\end{align*}
In the second term in the curly brackets decay only after using the decay rate of $\ptl_ \theta \mbo{\nu}'$ \big(because $ \ptl_\theta Y^R$ does not decay near $\iota^0$\big)
\begin{align*}
(J_5)^b := -2N'^b \bar{\mu}_q q^{ac} \ptl_a \mbo{\nu}' \ptl_c N.
\end{align*}
Now let us look at flux
${\rm Flux} (J_5, \Gamma)$.
We have the flux of $J_5$ at the axes given by
\begin{align*}
&{\rm Flux} (J_5, \Gamma)= \int^t_0 \int_{(-\infty, -R_+) \cup (R_+, \infty)} \Bigl(\lim_{\theta \to 0, \pi}-2 R^2 N'^ \theta q^{ac}_0 \ptl_a \mbo{\nu}' \ptl_c N\Bigr) {\rm d}R {\rm d}t \\
&\qquad{}= \int^t_0 \int_{(-\infty, -R_+) \cup (R_+, \infty)} \biggl(\lim_{\theta \to 0, \pi} -2 R^2 \biggl(\ptl_ t Y^\theta - \frac{N^2}{R^2} {\rm e}^{-2 \mbo{\nu}'} \ptl_\theta Y^t \biggr) q^{ac}_0 \ptl_a \mbo{\nu}' \ptl_c N\biggr) {\rm d}R {\rm d}t.
\end{align*}
Now consider the integrand
\begin{align*}
\qquad&{}= \lim_{\theta \to 0, \pi}-2 R^2 \biggl(\ptl_ t Y^\theta - \frac{N^2}{R^2} {\rm e}^{-2 \mbo{\nu}'} \biggr) \cdot \notag\\
&\hphantom{= \lim_{\theta \to 0, \pi}}{} \times \biggl( \ptl_R \mbo{\nu}' \sin \theta \biggl(1 + \frac{R^2_+}{R^2}\biggr) + \frac{1}{R^2} \ptl_ \theta \mbo{\nu}' R \cos \theta \biggl(1- \frac{R^2_+}{R^2}\biggr) \biggr)
 \to 0 \qquad \text{as} \ \theta \to 0, \pi,
\end{align*}
and
\begin{align*}
{\rm Flux} \bigl(J_5, \mathcal{H}^+\bigr) &{}= \int^t_0 \int^\pi _0 \Bigl(\lim_{R \to R_+} q_0 \bigl(-2N'^b \bar{\mu}_q q^{ac} \ptl_a \mbo{\nu}' \ptl_c N, \ptl_R\bigr) \Bigr) {\rm d}\theta {\rm d}t \notag \\
&{}= \int^t_0 \int^\pi _0 \Bigl(\lim_{R \to R_+} - 2 N'^R \bar{\mu}_q q^{ac} \ptl_a \mbo{\nu}' \ptl_c N \Bigr) {\rm d}\theta d t \notag \\
&{}= \int^t_0 \int^\pi_0 \Bigl( \lim_{R \to R_+} -2\bigl(\ptl_t Y^R - N^2 {\rm e}^{-2 \mbo{\nu}} \ptl_R Y^t\bigr) \bar{\mu}_q q^{ac} \ptl_a \mbo{\nu}' \ptl_c N \Bigr) {\rm d}\theta {\rm d}t,
\end{align*}
the integrand contains the terms
\begin{align*}
&
=- 2 \Biggl\{ \ptl_t \Biggl(\sum^\infty_{n \to 1} Y^R_n R R^+_n \biggl(\frac{R^n}{R^n_+}- \frac{R_+^n}{R^n} \cos \theta\biggr) - R^2 \biggl(1- \frac{R^2_+}{R^2}\biggr)^2 \frac{1}{ \bigl(r^2 +a^2\bigr)^2 - a^2 \Delta \sin \theta} \Biggr) \notag\\
&\hphantom{=- 2 \Biggl\{}{}
\times\ptl_R \Biggl(Y^t(t=0)- \frac{1+ \frac{R^2_+}{R^2}}{R \bigl(1- \frac{R_+^2}{R^2}\bigr)} \sum^\infty_{n = 1} \mathcal{Y}^R_n R R^+_n \biggl(\frac{R^n}{R^n_+} \frac{R_+^n}{R^n} \cos \theta \biggr) \Biggr) \notag \\
&\hphantom{=- 2 \Biggl\{}{}
- \cot \theta \sum^\infty_{n=1} \mathcal{Y}_n^\theta R^n_+\biggl(\frac{R^n}{R^n_+} + \frac{R^n_+}{R^n} \biggr) \cos n\theta \Biggr\} \notag\\
&\hphantom{=}{}
 \times \biggl(\ptl_R \mbo{\nu}' \sin \theta \biggl(1 + \frac{R^2_+}{R^2}\biggr) + \frac{1}{R^2} \ptl_\theta \mbo{\nu}' R \cos \theta \biggl(1- \frac{R^2_+}{R^2}\biggr) \biggr)
 \to 0 \qquad \text{as} \ R \to R_+
\end{align*}
due to vanishing term in the final bracket.

We are now left with the final flux term
\begin{align*}
&{\rm Flux} (J_5, \bar{\iota}^0) = \int^t_0 \int^\pi_0 \Bigl(\lim_{R \to \infty} -2\bigl(\ptl_t Y^R - N^2 {\rm e}^{-2 \mbo{\nu}} \ptl_R Y^t\bigr) \bar{\mu}_q q^{ac} \ptl_a \mbo{\nu}' \ptl_c N \Bigr) {\rm d}\theta {\rm d}t \notag\\
&\qquad{} = \int^t_0 \int^\pi_0 \biggl(\lim_{R \to \infty} -2 \bigl(\ptl_t Y^R - N^2 {\rm e}^{-2 \mbo{\nu}} \ptl_R \bigl(Y^t (t=0) - \mathcal Y^R \ptl_R N/N -\mathcal{Y}^\theta \ptl_\theta N/N\bigr) \bigr) \notag\\
& \qquad\hphantom{= \int^t_0 \int^\pi_0 \biggl(}{} \times\biggl(R \ptl_R \mbo{\nu}' \sin \theta \biggl(1+ \frac{R^2_+}{R^2}\biggr) + \frac{1}{R} \ptl_ \theta \mbo{\nu}' \cos \theta \biggl(1- \frac{R^2_+}{R^2}\biggr) \biggr) \biggr) {\rm d}\theta {\rm d}t.
\end{align*}
Again the integrand can be estimated as
\begin{align*}
\qquad&{}= \lim_{R \to \infty} -2 \Biggl\{- \ptl_t \Biggl(\sum^{\infty}_{n=1} Y^R_n R^{-n+1} \cos n \theta \Biggr) - R^4 \biggl(1- \frac{R^2_+}{R^2}\biggr)^2 \frac{1}{ \bigl(r^2 +a^2\bigr)^2 - a^2 \Delta \sin \theta} \notag\\
& \qquad{}
\times \Biggl(Y^t(t=0)- \frac{1+ \frac{R^2_+}{R^2}}{R \bigl(1- \frac{R_+^2}{R^2}\bigr)} \sum^{\infty}_{n=1} \mathcal{Y}^R_n R^{-n+1} \cos n \theta - \cot \theta \sum^\infty_{n=1} \mathcal{Y}^\theta_n R^{-n} \sin n\theta \Biggr) \Biggr\} \notag\\
& \quad{}\times \biggl\{ R \cdot \mathcal{O} \biggl(\frac{1}{R}\biggr) \mathcal{O}(1) + \frac{1}{R} \cdot \mathcal{O}\biggl(\frac{1}{R}\biggr) \cdot \mathcal{O}(1) \biggr\} \to 0 \qquad \text{as} \ R \to \infty,
\end{align*}
and behaves like $\mathcal{O}\bigl(\frac{1}{R}\bigr)$ for large $R$.
Thus it follows that
\begin{align*}
{\rm Flux}\bigl(J_5, \iota^0\bigr) = 0.
\end{align*}

\subsection*{Acknowledgements}
I acknowledge the gracious hospitality of Institut des Hautes \'Etudes Scientifiques (IHES) at Bures-sur-Yvette in Fall 2016.
In a previous work \cite{GM17_gentitle}, the coauthor Vincent Moncrief had fittingly paid tribute to A.~Taub, J.~Marsden and S.~Dain for their influence on him and for making fundamental contributions in this direction. I take this opportunity to pay tribute to his own outstanding contributions to general relativity and Hamiltonian methods. On an individual note, for his mentorship, encouragement, and enjoyable interactions, I am indebted to him.

This project has spanned several years, and my gratitude is also due to all the colleagues who have continually supported me and to the institutions that hosted me. Special thanks are due to my host Hermann Nicolai at the Albert Einstein Institute at Golm (DFG grant no. GU 1513/2-1), where significant aspects of this work were completed. A part of this work were done while I was a Postdoc at the Department of Mathematics, Yale University, where I benefited from conducive working conditions. Finally, I thank the referees for the feedback.

This work was supported by the German Research Foundation (Grant numbers GU 1513/1-1 and GU 1513/2-1).

\addcontentsline{toc}{section}{References}
\LastPageEnding

\end{document}